\documentclass[reqno]{amsbook}
\usepackage{hyperref}
\usepackage[T1]{fontenc}
\usepackage[latin1]{inputenc}  
\usepackage{ae,aecompl}

\usepackage{amssymb,amscd}
\usepackage{url}
\usepackage{amsmath}
\usepackage{amsthm,txfonts,amsxtra}
\usepackage{bbm}
\usepackage{yfonts}
\usepackage{enumitem}
\usepackage{nomencl}
  
\usepackage{graphicx}

\numberwithin{equation}{chapter}
\numberwithin{table}{chapter}
\numberwithin{figure}{chapter}
\numberwithin{section}{chapter}

\newtheoremstyle{bold}
{.5\baselineskip}{.5\baselineskip}{\itshape}{}{\bfseries}{.}{.5em}{}
\newtheoremstyle{shy}
{.5\baselineskip}{.5\baselineskip}{}{}{\bfseries}{.}{.5em}{}

\makeatletter
\def\@captionfont{\small}

\renewenvironment{proof}[1][\proofname]{\par
  \pushQED{\qed}%
  \normalfont \topsep6\p@\@plus6\p@\relax
  \trivlist
  \item[\hskip\labelsep
        \scshape
    #1\@addpunct{.}]\ignorespaces
}{%
  \popQED\endtrivlist\@endpefalse
}

\def\chapter{%
  \if@openright\cleardoublepage\else\clearpage\fi
  \thispagestyle{empty}\global\@topnum\z@
  \@afterindenttrue \secdef\@chapter\@schapter}

\def\section{\@startsection{section}{1}%
  \z@{.9\linespacing\@plus\linespacing}{.5\linespacing}%
  {\large\bfseries\boldmath\centering}}

\def\subsection{\@startsection{subsection}{2}%
  \z@{.7\linespacing\@plus\linespacing}{.5\linespacing}%
  {\normalfont\scshape\centering}}

\def\theindex{\@restonecoltrue\if@twocolumn\@restonecolfalse\fi
  \columnseprule\z@ \columnsep 35\p@
  \@indextitlestyle
  \thispagestyle{empty}%
  \let\item\@idxitem
  \parindent\z@  \parskip\z@\@plus.3\p@\relax
  \raggedright
  \hyphenpenalty\@M
  \footnotesize}

\renewcommand{\@bibtitlestyle}{%
  \@xp\section\@xp*\@xp{\bibname}%
}
\renewcommand{\tocchapter}[3]{%
  \indentlabel{\@ifnotempty{#2}{\ignorespaces#1 #2.\quad}}#3}

\renewcommand{\tocsection}[3]{%
  \indentlabel{\@ifnotempty{#2}{\makebox[3.2em][l]{\ignorespaces#1 #2.}}}#3}

\makeatother

\renewcommand{\bibname}{References}

\theoremstyle{bold}
\newtheorem{theorem}{Theorem}[chapter]
\newtheorem{proposition}[theorem]{Proposition}
\newtheorem{lemma}[theorem]{Lemma}
\newtheorem{corollary}[theorem]{Corollary}
\newtheorem{conjecture}[theorem]{Conjecture}

  \newtheorem{assumption}[theorem]{Assumption}
\theoremstyle{shy}
\newtheorem{definition}[theorem]{Definition}
\newtheorem{remark}[theorem]{Remark}

\def\sumint{\int\hspace{-1.3em}\sum}

\newcommand{\dd}{\ts\mathrm{d}\ts}
\newcommand{\ee}{\ts\mathrm{e}\ts}

\newcommand{\ts}{\hspace{0.5pt}}

\makeindex

\def\a{\alpha}
\def\b{\beta}
\def\d{\delta}
\def\e{\epsilon}
\def\ve{\varepsilon}
\def\f{\phi}
\def\g{\gamma}

\def\l{\lambda}

\def\s{\sigma}
\def\t{\tau}
\def\th{\theta}

\def\o{\omega}
\def\D{\Delta}
\def\L{\Lambda}
\def\G{\Gamma}

\def\S{\Sigma}

\def\del #1{\frac{\partial^{#1}}{\partial\l^{#1}}}

\def\eee{\mathrm{e}}

\def\eu{{1\kern-.25em\rm{I}}}
\def\f1{{1\kern-.25em\rm{I}}}
\def\R{{\mathbb R}}  
\def\N{{\mathbb N}}  
\def\P{{\mathbb P}}  
\def\E{{\mathbb E}}  
\def\M{{\mathbb M}}  

\def\bx {\mathbf{x}}  

\def\bz{\mathbf{z}}

\newcommand{\DD}{\mathbb{D}}

\def\del{\partial}

\def\FF{{\mathfrak F}}

\def\HH{{\mathcal H}}

\def\LL{{\mathcal L}}
\def\MM{{\mathcal M}}
\def\NN{{\mathcal N}}
\def\OO{{\mathcal O}}
\def\PP{{\mathcal P}}

\def\UU{{\mathcal U}}

\def\XX{{\mathcal X}}

\newcommand{\cG}{\mathcal{G}}

\newcommand{\cL}{\mathcal{L}}

\def\wt{\widetilde}
\def\wh{\widehat}

\def\TH(#1){\label{#1}}\def\thv(#1){\ref{#1}}
\def\Eq(#1){\label{#1}}\def\eqv(#1){(\ref{#1})}

\def\var{\hbox{\rm var}}
\def\sfrac#1#2{{\textstyle{#1\over #2}}}

\def\1{\mathbbm{1}}

\def\be{\begin{equation}}
\def\ee{\end{equation}}
\newcommand{\bea}{\begin{eqnarray}}
\newcommand{\eea}{\end{eqnarray}}

\def\supp{\hbox{\rm supp}}

\def\as{\hbox{\rm a.s.}}

\makeatletter
\newenvironment{chapquote}[2][1em]
  {\setlength{\@tempdima}{#1}%
   \def\chapquote@author{#2}%
   \parshape 1 \@tempdima \dimexpr\textwidth-2\@tempdima\relax%
   \itshape}
  {\par\normalfont\hfill--\ \chapquote@author\hspace*{\@tempdima}\par\bigskip}
\makeatother

\begin{document}
\hypersetup{linkbordercolor=1 1 1} 
\frontmatter
\title{Stochastic individual-based models}

\author{Anton Bovier} 
\address{University of Bonn}\email {bovier@uni-bonn.de}
\author{Anna Kraut}

\address{St. Olaf College}\email{kraut1@stolaf.edu}
\maketitle

\tableofcontents
\chapter*{Preface}\label{chap0}
These are lecture notes for an advanced topics course in the master's programme at Bonn University.
It aims to give a concise review of some of the work that has been done around the 
topic of adaptive dynamics from a rigorous stochastic point of view over the last 25 years and to make this
area accessible to students with a good knowledge of probability in general, and the theory of Markov 
processes in particular. 

Our emphasis is on the issue of emerging scaling limits, where scaling parameters are time, population size, mutation rates, and mutation step size. These allow to exhibit within a fairly simple class of models a variety of biologically relevant phenomena. These notes are organised in three parts. Part 1 presents some historical background as well as the mathematical setting and main tools. 
Part 2 is the core of the notes and discusses the various scaling regimes and scaling limits. Part 3 looks at some extensions on the basic model, notably diploid models, phenotypic plasticity, and effects of environmental changes over time.

We are deeply grateful to my collaborators on the matters of these lectures, Martina Baar,
Nicolas Champagnat,  Loren Coquille, Manuel Esser, Hannah Mayer, Charline Smadi, Rebecca Ströfer, Shi-Dong Wang,
and colleagues from the life sciences, Nicole Glodde, Michael Hölzel, Thomas Tüting, and Meri Rogova. Finally, we want to express our appreciation for Sylvie Méléard for animating this field of research relentlessly for more than twenty years. The present notes have some overlap with her books \cite{BansayeMel2015} and \cite{Meleard2016}.

\chapter*{Acknowledgements}\label{chap00}

This work was funded by the Deutsche Forschungsgemeinschaft (DFG, German Research Foundation) under Germany's Excellence Strategy - GZ 2047/1, Project-ID 390685813,  GZ 2151 - Project-ID 390873048,
 Project-ID 211504053 - SFB 1060,
and through the Priority Programme 1590 \emph{Probabilistic Structures in Evolution}.

\mainmatter
\part{Background and mathematical tools}
\chapter{Background and history}\label{chapter1}

\begin{chapquote}
{Erasmus Darwin, \emph{The Temple of Nature: Or, The Origin of Society}}
{\frakfamily\fraklines {By firm immutable immortal laws:\\
Impress'd on Nature by the Great First Cause,\\
Say, Muse! how rose from elemental strife\\
Organic forms:, and kindled into life;\\
How Love and Sympathy with potent charm\\
Warm the cold heart, the lifted hand disarm;\\
Allure with pleasures:, and alarm with pains:,\\
And bind Society in golden chains:.}  }  
\end{chapquote}

\def\Evol{\emph{Evolution} }

\index{evolution}
The  topic of these lecture notes is the mathematical understanding of the principal mechanisms 
of \emph{Evolution} based on elementary principles. It is the 
understanding of modern 
science that the bafflingly complex world of biology that we witness on our 
planet has emerged through 
an essentially chemical process which is termed \emph{Evolution}, 
essentially because certain 
molecules based  on carbon strings have the possibility to replicate and 
modify themselves. It 
appears out of  contemporary reach to understand in any detail how 
this could have happened, and how this 
could produce  amazing structures, such as even a bacterium, let alone 
a human being. 
Our ambition here is far more limited. We ask whether we can understand, starting from 
a set of elementary hypotheses
in  a simplified mathematical  model that contains what are believed to be 
the fundamental mechanisms 
characterising biological systems (or populations, as we will call them), 
how key features of \Evol 
emerge and how something like a complex, structured population can 
come about? 

The theory of \emph{Evolution} is usually traced back to Charles Darwin 
and his groundbreaking treatise
\emph{The Origin of Species} \cite{Darwin1859}, which was published in 
1859. There certainly have been
precursors of his theory (among others, by Darwin's grandfather, Erasmus 
Darwin, who wrote a long poem \emph{The Temple of Nature: Or, The 
Origin of Society.} \cite{erasmus1804}), see, e.g.
 the nice account in 
Siddhartha Mukherjee's brilliant book \emph{The Gene} \cite{mukharjee2016}.  In any case, it is probably fair to assert 
that it was Charles Darwin's work that laid the foundations of the 
modern scientific approach to 
the theory of \emph{Evolution}. 

The key features inherent in any biological system that are seen as the driving 
forces of \emph{Evolution}
 identified already by
Darwin are:
\begin{itemize}
\item \emph{birth and death}: individuals die and reproduce \index{birth}\index{death}
\item \emph{heredity}: the offspring of individuals inherit properties \index{heredity}
(\emph{traits}) of their ancestors
\item \emph{variation}: heredity is not perfect, and sometimes the traits of  \index{mutation} the offspring differ from those of their
ancestors
\item \emph{selection}, or \emph{the survival of the fittest}, a concept inspired by the  1798  essay \emph{An Essay on the Principle of Population} by Thomas Malthus \cite{malthus1798}.  \index{selection}
\end{itemize}

Selection results from the interaction between individuals, notably the competition for resources,
but also many other effects (predation, symbiosis, parasitism, etc.). \index{competition}

How heredity actually works and how variations arise have long remained mysterious, and conflicting opinions have existed. 
Jean-Baptiste  Lamarck \cite{lamarck}  thought that individuals change their traits 
gradually to adapt to changing environmental conditions, and that these changed traits could be inherited by their offspring. In fact, that animals change their traits if conditions change appears
a priori quite plausible and is observable. How these changes can be inherited appears much 
less plausible, but at the time may have seemed a necessary hypothesis. 
The molecular mechanism of heredity was discovered only much later to be rooted in a single molecule, 
the DNA, by Francis Crick and  James Watson \cite{crickwatson1953},
  Maurice Wilkins \cite{wilkins1953}, and Rosalind Franklin \cite{franklin1953}. This allowed a much better understanding of the details of how heredity and mutation come 
about; however, for the larger picture of how \emph{Evolution} works, this should not be too important, and for the larger part 
(but not all) of these lectures, we will ignore this. Our goal is to build mathematical models of 
populations that exhibit these key features and try to derive the fate of these populations.

In the remainder of this chapter, we give a brief overview of the classical (mathematical) theories of 
 \emph{Evolution}. They can be broadly \index{population dynamics}
classified into two branches: \emph{population dynamics}, which focuses on \emph{ecology}, i.e.\ on aspects 
of competition and other interactions between different species, and \emph{population genetics},
which focuses on heredity and the genealogical structure of populations.  \index{population genetics}

\section{Population dynamics}
Population dynamics can indeed be traced back to Malthus' essay \cite{malthus1798}, where 
he lays out that an unrestrained population will grow exponentially, but that in all real \emph{states}
this growth must be restrained by the limited amounts of food that are available.  In modern terms, this 
leads to a simple differential equation describing the time evolution of  the size $n(t)$ of a (monomorphic) 
population
\be\Eq(lv.1)
\frac d{dt}n(t) = r n(t) - cn(t)^2,
\ee
where $r=b-d$  is the difference between the birth-rate $b$ and the death-rate $d$, and $c$ is a measure 
of the competitive pressure two individuals exert on each other. 
Note that here $n$ is not the number of individuals (which would need to be an integer), but 
rather a rescaled measure for the mass of the population when the mass of an individual tends to zero 
while the number of individuals tends to infinity at the same rate. We will come to understand these
equations as limits of discrete, finite population models later on.

 For positive $r$ and positive $c$, 
this equation has two fixpoints, $0$ and $r/c$, where $0$ is unstable and $r/c$ is stable. Any population 
that at time $t=0$ has a positive size $r/c>n_0>0$ will grow eventually towards the size $r/c$, while 
a population with $n_0>r/c$ will shrink towards this same value.  

\begin{figure}[h]
\begin{center}
\includegraphics[width=0.6\textwidth]{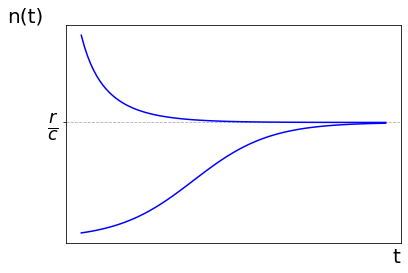}
\end{center}
\caption{Solutions $n(t)$ approaching $r/c$ from above or below depending on their initial population size $n_0$.}
\end{figure}

The analysis of differential equations of this type goes back to the work of Alfred Lotka 
\cite{lotka1912}
and Vito Volterra \cite{volterra1928}.  One calls systems of differential equations of the form 
\be\Eq(lv.2)
\frac d{dt}n_i(t) = n_i(t) \left(r_i - \sum_{j=1}^kc_{ij}n_j(t)\right),  \quad i=1,\dots,k,
\ee
\emph{competitive Lotka-Volterra equations} if all coefficients $c_{ij}$ are non-negative, and
simply Lotka-Volterra equations in the general case.  \index{Lotka-Volterra equations}
 \index{Lotka-Volterra equations!competitive}
 
A particularly famous example is the \emph{predator-prey system}, where $k=2$, $r_1>0$, $r_2<0$, $c_{12}>0$, $c_{2,1}<0$, and $c_{11}=c_{22}=0$. 
The interesting feature of the predator-prey system is that it admits periodic solutions.

\begin{figure}[h]
\begin{center}
\includegraphics[width=0.6\textwidth]{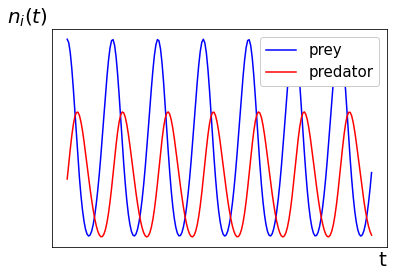}
\end{center}
\caption{Exemplary periodic solution of a two-dimensional predator-prey system.}
\end{figure}

It is interesting to note that, as shown by Smale \cite{smale76},
without further assumptions on the coefficients, the system \eqv(lv.2) can have arbitrarily complex long-time behaviour if $k\geq 5$.  In fact, for $k\geq 4$, the behaviour of Lotka-Volterra systems has not been fully analysed and seems out of reach. However, a complete analysis of three-dimensional \emph{competitive} Lotka-Volterra systems has been given by Zeeman \cite{Zee1993}.

Population dynamics has developed into a huge area. For a modern treatise, see Hofbauer and Sigmund \cite{hofbauersigmund}.  
Lotka-Volterra equations play an important role throughout this book, and we will come back to them in 
more detail in Chapter \ref{chapter3}.

\section{Population genetics} 

The roots of population genetics can be traced to the work of Gregor Mendel in  1865 and 
1869 \cite{mendel1865,mendel1869}. Published in an obscure Moravian journal, Mendel's 
path-breaking work remained unnoticed until 1900, when it was discovered by, among others, Hugo de Vries \cite{deVries1901}.
Mendel introduced the concept of \emph{genes} and \emph{alleles}, which make up the \emph{genotype} of an 
individual. \emph{Heredity} is due to the transfer of parental alleles to their offspring by some random 
mechanism. \Evol is the resulting change of the allele composition, or genotype, of the 
individuals in the population. The genotype of an individual then determines its appearance, or 
\emph{phenotype}.  \index{genotype}\index{phenotype} \index{heredity}

A key step towards a mathematisation of population genetics came about around 1920 with the work of 
Ronald Aylmer Fisher \cite{fisher18},  John Burdon Sanderson Haldane \cite{haldane24a,haldane24b}, 
and Sewall Green Wright
\cite{wright31}. A still widely used mathematical model in population genetics is the Wright-Fisher model.  In its simplest version, it goes as follows. A population consists of a fixed number, $N$, of 
individuals. For simplicity, we look only at a single gene to characterise its genotype.
We also assume that we are in a \emph{homozygotous} case, i.e.\ there is only a single \emph{allele} 
for this gene. Assume further that this allele can be of one of two types, $a$ or $b$. 
Such a population is assumed to evolve in discrete time (e.g. daily or annual cycles). When a 
new generation appears at time $t+1$, $t\in \N$, \index{gene} \index{allele} \index{homozygotous}
each individual picks one of the individuals present at time $t$ uniformly at 
random and adopts its allele type.

\begin{figure}[h]
\begin{center}
\includegraphics[width=0.5\textwidth]{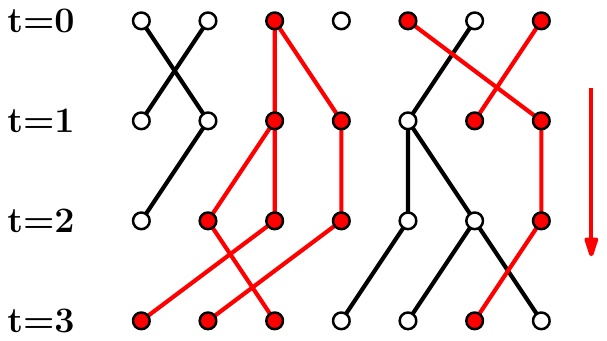}
\end{center}
\caption{Forward-in-time propagation of the $a$-allele (red).}
\end{figure}

Clearly, the number $N_a$ of $a$-type alleles in this model 
is  \index{Wright-Fischer model}
a discrete time Markov chain with transition rates
\be
\Eq(fisher.1)
\P(N_a(t+1)=n|N_a(t)=m) = { N\choose n} \left(\frac mN\right)^n\left(1-\frac mN\right)^{N-n}.
\ee
This allows for the computation of all kinds of interesting quantities, such as the distribution of the 
time until the allele $a$ is eliminated from the population. Even though the model is fully 
neutral, i.e.\ there is no selective advantage for any of the alleles, the distribution of the number of $a$-alleles
changes due to purely random fluctuations. This phenomenon is termed \emph{genetic drift} in 
population genetics, although it ought to be called genetic diffusion, and its discovery was a big deal. 
In fact, the number of $a$-alleles, $N_a(t)$, in this model is a martingale, i.e.  \index{genetic drift}
\be\Eq(fisher.2)
\E[N_a(t+1)|N_a(t)]=N_a(t).
\ee
This reflects the famous Hardy-Weinberg theorem of population genetics \cite{Hardy1908,Weinberg1908}, which states that in \index{Hardy-Weinberg theorem}
 infinite populations with no selection, the allele frequencies remain constant. 
 
 Also, the variance of one time step is 
 given by 
 \be\Eq(fisher.3)
 \E\left[(N_a(t+1)-N_a(t))^2\big |N_a(t)\right] = N \frac{N_a(t)}{N}\left(1-\frac{N_a(t)}{N}\right),
 \ee
 which means that the variance of the \emph{frequencies}, $N_a/N$, tends to zero like $1/N$. This suggests the existence of a non-trivial \emph{scaling limit}, namely 
 \be\Eq(fisher.4)
 \lim_{N\uparrow \infty} \frac {N_a([t N])-N_a(0)}{N} = X(t),
 \ee
 where now $t\in \R_+$ and $[tN]$ denotes the integer part of $tN$.
  $X$ is a diffusion process, called \emph{Wright-Fisher diffusion}, 
 which is a solution of the stochastic differential equation \index{diffusion!Wright-Fisher}
 \be
\Eq(fisher.5)
dX(t) =\sqrt {X(t)(1-X(t))} dB(t),
\ee 
where $B$ is standard Brownian motion. This model is the basic \emph{diffusion approximation } model 
in population genetics. It was first introduced by Motoo Kimura \cite{Kimura55} in 1955 and derived 
by Ethier and Norman \cite{EthNor77} in 1977.

\index{selection}
\emph{Selection} can be incorporated in the Wright-Fisher model by introducing a bias in the 
selection of the parent according to its allele type. This can be achieved by increasing the
probability to choose an $a$-allele as parent at time $t+1$ to $p(N_a(t))\equiv(1+s)(N_a(t))/(sN_a(t)+N)$, 
for some $s>0$, so that the transition rates become 
\be
\Eq(fisher.6)
\P(N_a(t+1)=n|N_a(t)=m) = { N\choose n} \left({p(m)}\right)^n\left(1-{p(m)}\right)^{N-n}.
\ee
Then 
\be\Eq(fisher.7)
\E[N_a(t+1)|N_a(t)]= p(N_a(t))N=N_a(t)\frac {1+s}{1+sN_a(t)/N},
\ee
which is larger than $N_a(t)$, as long as $0<N_a(t)<N$.
To obtain a non-trivial diffusion limit, one must assume a small selective effect, namely $s=r/N$. In that 
case, the diffusion approximation gives the stochastic differential equation
 \be
\Eq(fisher.8)
dX(t) = rX(t)(1-X(t))dt+\sqrt {X(t)(1-X(t))} dB(t).
\ee 
Notice that, if the selective advantage is larger than $O(1/N)$, then one obtains as a scaling limit 
(under appropriate time-rescaling) the deterministic equation
 \be
\Eq(fisher.8.1)
\frac {d}{dt}X(t) = rX(t)(1-X(t)).
\ee

 The Wright-Fisher model is, however, much richer than just the allele number process. The fact that 
 each individual alive at time $t$ chooses a parent at time $t-1$ implies a \emph{genealogical structure} 
 of the population in the past.
 {Tracing the genealogy of a subset of individuals at time $t$ back in time corresponds to the study of coalescing random walks. This yields a duality relation between the forward-in-time process $N_a(t)$ and the backwards-in-time coalescent that is very useful in the analysis of the system.}
 
\begin{figure}[h]
\begin{center}
\includegraphics[width=0.5\textwidth]{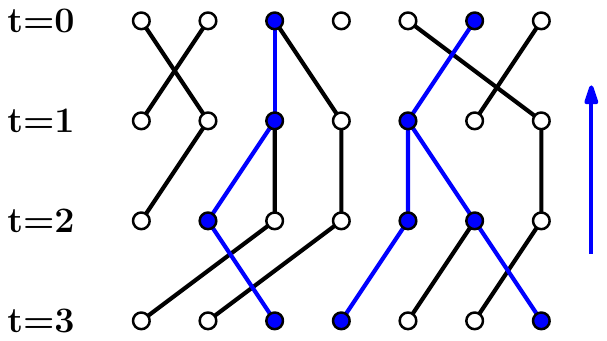}
\end{center}
\caption{Backwards-in-time tracing of (common) ancestries (blue).}
\end{figure}
 
 The Wright-Fisher models are discrete-time models with synchronous updates of generations. 
 The analogous continuous time model is called the \emph{Moran model} \cite{Moran1958}.
 Here, each individual chooses, after an exponential time, another individual from the current population uniformly at random and adopts the allele type of that individual. \index{Moran model}
The process $N_a(t)$ is then a continuous time Markov chain with transition rates 
\be\Eq(moran.1)
c(n,n\pm 1) = \frac nN\frac {N-n}N,\quad c(n,n)=-2 \frac nN\frac {N-n}N,
\ee
and all other rates are zero. Clearly, selection can be incorporated as in the Wright-Fisher model.

For a modern exposition of population genetics, see the monograph by Ewens \cite{ewens04}.

\section {The F-KPP equation} \index{F-KPP equation}
In 1937, Fisher  \cite{fisher1937} considered the problem of how an advantageous allele spreads in a spatially structured 
population. Taking as space  $\R$, he considered $u(t,x)$ as the frequency of the advantageous allele 
$a$ in the population at the location $x$ at time $t$. Assuming that the allele spreads in space in a diffusive 
way, he arrived at the partial differential equation 
\be\Eq(fisher.9)
\frac \del{\del t}u(t,x)= \frac {\s^2}{2}\frac {\del^2}{\del x^2} u(t,x) +r u(t,x)(1-u(t,x)).
\ee
The standardised version of this equation has $\s=r=1$.
The same equation was also proposed by Kolmogorov, Petrovsky, and Piscounov in the same year 
\cite{kpp} and is now known as the 
Fisher-Kolmogorov-Petrovsky-Piscounov, short F-KPP, equation. 
The equation can also be given a population dynamic interpretation. Consider a 
one-dimensional habitat and a species that has the same a priori positive fitness at any locus. 
Assume further
that individuals compete with each other only if they are in the same location. If these individuals move
diffusively in space, $u(t,x)$ can be interpreted as the mass of the population 
 at time $t$ at $x$. It is quite interesting to see that the same equation describes two very different 
 scenarios, with the term $u(1-u)$ being the selective drift in the population genetic interpretation and
 the logistic growth in the population dynamic one. 
 
 While in both interpretations, the scenario modelled does not appear very realistic, 
 Equation  \eqv(fisher.9) has the interesting feature that it admits \emph{travelling wave solutions} that describe
 the progression of a favourable allele into a population, or the advance of a population into accessible 
 territory.  
A travelling wave moving with speed $\l$ would be a solution, $u$, of the 
F-KPP equation such that
\be\Eq(fisher.10)
\frac d{dt} u(t,x+\l t) =0.
\ee
A simple computation shows that this implies that $u(t,x+\l t)=w_\l(x)$, where
\be\Eq(fisher.11)
\frac 12\frac{\del}{ \del {x}^2} w_\l (x) +\l \frac{\del}{\del x} w_\l(x) +w_\l(x)(1-w_\l(x))=0.
\ee
Fisher  \cite{fisher1937}, as well 
Kolmogorov and coauthors \cite{kpp},  and  later Uchiyama \cite{uchiyama} have shown that
such solutions exist for $\l\geq \sqrt 2$ and that they are unique up to translation. 
Bramson, in his seminal monograph \cite{bramson_monograph}, has shown, among other things,
that for a class of initial conditions that includes the Heaviside initial condition,\be\Eq(fisher.12)
u(t, x+m(t)) \to w_{\sqrt 2}(x), \quad t\uparrow\infty
\ee
uniformly in $x$, where
\be\Eq(fisher.13) 
m(t) = \sqrt{2} t -\frac{3}{2\sqrt{2}} \ln t.
\ee
This equation has sparked continued interest both in the pde and the stochastic community,
also due to its connection with branching Brownian motion (which is, however, not the subject of
 this book).
 
 Note that the deterministic F-KPP equation is the same in the population dynamics and population genetics 
 settings. However, this changes when the selection strength is lowered, and stochastic effects remain. In the 
 population genetic setting, this leads to the stochastic F-KPP equation in the form (see 
 Doering \cite{Doering2003})
 \be\Eq(fisher.14)
du(t,x)= \frac {\s^2}{2}\frac {\del^2}{\del x^2} u(t,x) dt+
r u(t,x)(1-u(t,x))dt +\sqrt {X(t)(1-X(t))} dB(t).
\ee
In the population dynamic setting, the equation has a different noise term 
and reads
 \be\Eq(fisher.15)
du(t,x)= \frac {\s^2}{2}\frac {\del^2}{\del x^2} u(t,x) dt+r u(t,x)(1-u(t,x))dt +
\sqrt {X(t)} dB(t).
\ee
The derivation of the noise terms from a microscopic model is, however, not rigorous and 
has been criticised, e.g.\ by Houchmandzadeh and Vallade \cite{houchmandzadeh}.
\index{F-KPP equation!stochastic}

\section{Adaptive dynamics} \index{adaptive dynamics}
\emph{Adaptive dynamics} is somewhat of an outgrowth of both population dynamics and population 
genetics \cite{MG96}.  Johan A. J. (Hans) Metz, one of its prominent protagonists, describes it in his essay \emph{Adaptive 
Dynamics} \cite{metz2012}
as ``a simplified theoretical approach to \emph{meso-evolution}, defined here as evolutionary changes in 
the values of traits of representative individuals and concomitant patterns of taxonomic diversification".
Further, ``Trait changes result
from the micro-evolutionary process of mutant substitutions taking place against the
backdrop of a genetic architecture and developmental system as deliverers of mutational
variation, internal selection keeping the machinery of a body in concert, and ecological
selection due to the interactions of individuals with their conspecifics, resources,
predators, parasites and diseases. Adaptive dynamics focuses on these encompassing mechanisms". 

An important assumption of adaptive dynamics, which we will encounter in the analysis of the mathematical models later 
on, is the separation of the time scales of ecology and evolution. This means, in essence, that the population reaches an equilibrium state due to ecological interaction before a new mutation arises. As Metz points out, ``In reality, this 
assumption holds good only rather rarely. The idea is that yet
arguments based on it may often lead to outcomes that are fair approximations, provided
one applies them selectively and takes a sufficiently gross look at reality". 
Note that this assumption implies an effectively low rate of (trait-changing, advantageous) mutations.

Adaptive dynamics thus deals with our fundamental objective, describing how populations (rather than individuals\index{trait}\index{trait space}\index{fitness}), characterised by some homogeneous traits, evolve in time into multifaceted families of 
populations exhibiting a wide variety of traits. A fundamental concept characterising 
this \emph{trait space} is \emph{fitness}. Fitness has different meanings. 
In population dynamics, fitness is the initial exponential growth rate of a population; in population 
genetics, it is often referred to as the probability of an individual to reach maturity, i.e.\ to produce 
offspring. In adaptive dynamics, fitness is viewed in a more \emph{dynamic} fashion. The fitness of a 
population with a specific trait does not only depend on this trait, but also on the state of the 
entire population, as effects of competition or other interactions play a significant role. 
{As a result, the fitness of a certain trait varies over time, as the composition of the population changes to adapt to its environment.}
In fact, if a population is in ecological equilibrium, then all co-existing traits have zero fitness (i.e.\ they 
do not grow or shrink). Under the assumption of separation of time scales, the 
only relevant fitness parameter is then the so-called \emph{invasion fitness}, \index{invasion fitness}
which is the initial exponential growth rate of a mutant born with a given trait in the presence of 
the current equilibrium population. Mutation rates and invasion fitness (and chance)
 then decide the further evolution of the population.
 
 A resulting fundamental concept of adaptive dynamics is  that of an \index{evolutionary stable condition}
 \emph{evolutionary stable
 condition} (ESC). This is a population in ecological equilibrium such that 
 all mutant traits have negative 
 invasion fitness. The fate of evolution is reached if a population is in 
 such a condition. 
 Clearly, for a given environment, there may be many ESCs. The task of 
 adaptive dynamics would 
 then be to identify all ESCs and to decide the ways to reach such ESCs. 
 On the way towards 
 ESCs, adaptive dynamics identifies two key mechanisms: the \index{canonical equation (CEAD)}
 \emph{canonical equation of 
 adaptive dynamics} (CEAD), which describes how a (monomorphic) 
 population moves in trait space 
 (under the further assumption of small mutation steps), and \index{adaptive speciation}
 \emph{adaptive speciation}, which describes how polymorphism emerges 
 as a population arrives
 at a state where several directions of mutations are viable and a bifurcation can happen.
 
 Many of the conclusions of the theory of adaptive dynamics have been derived in a mathematically rigorous way in the context of stochastic, individual-based models.
This was initiated in the works of Bolker and Pacala \cite{BolPac1,BolPac2}, Dieckman
and Law \cite{DieLaw,DL96}, and Fournier and Méléard \cite{FM04}. Another approach, 
coming from evolutionary game theory, is due to Hofbauer and Sigmund \cite{HS90}.

 \section{Adaptive walks}

 A certain variant of adaptive dynamics goes by the name of \emph{adaptive walks}. The concept was  \index{adaptive walks}
 introduced initially by  John 
 Maynard Smith \cite{Maynard62,Maynard70}. He emphasised the fact that mutations take place in the \index{protein space}
 space of genotypes, or, as he put it, \emph{protein space}. He pointed out that because mutations 
 are rare,  from a given gene encoding a protein, only relatively few neighbouring proteins are reachable, and 
 that
 among these, there must be a \emph{functional protein} if evolution is to continue from this point. Indeed, 
 such a protein should be \emph{fitter} than the original one, if the population is to shift to this mutant. 
 In his words ``It follows that if evolution by natural selection is to
occur, functional proteins must form a continuous network
which can be traversed by unit mutational steps without
passing through nonfunctional intermediates."

The ideas of Maynard Smith were largely ignored until the late
1980s when Kauffman and Levine \cite{KaLe87,Kau92} started a systematic study of what was then called 
\emph{adaptive walks}. This went along with the introduction of various random fitness landscapes, notably
Kauffman's NK models, which provided a framework for analytic and numerical studies of key properties of such 
walks, which is still ongoing. 
The general assumption of adaptive walks is that steps are only possible to nearest neighbours and only 
to sites of at least equal fitness. Different choices are considered when it comes to the specifics of
the probabilities with which possible neighbours of higher fitness are selected. Two popular and important examples are \textit{natural} adaptive walks, where transition probabilities are proportional to the increase in fitness, and \textit{greedy} adaptive walks, which always transition to the fittest neighbouring trait.

    Important objects of study are the accessibility of certain traits \cite{SchKr14,BeBruShi16,BeBruShi17}, the distribution of local fitness maxima \cite{NoKr15}, and the average length of paths before reaching such a local maximum \cite{KrugKarl03,Orr03}. These depend on two sources of randomness - the random fitness landscape and the random transitions between traits.

An adaptive walk comes to a halt when a site is reached whose fitness is higher than all of its neighbours, i.e.\ that constitutes a local fitness maximum.
However, already Maynard Smith asked ``How often, if ever, has evolution passed through a
non-functional sequence? If so, has this been achieved
by the random walk of genes rendered redundant by
duplication, or by the chance concurrence of two or more
mutations?" \cite{Maynard70}. More specifically, Gillespie \cite{gillespie1984molecular}
considered the problem of the crossing of a fitness valley: ``Obviously, given enough time, the double mutant will eventually
appear and will become fixed in the population.
We require an estimate of the
time required for this event to occur in order to fully appreciate the extent of the stagnation."
 He then presented estimates of the mean time for this to happen. 

An extension of the classical adaptive walks, in which transition probabilities are determined by balancing a high increase in fitness and a low number of mutation steps, allows for longer jumps that can skip over such fitness valleys \cite{JainKrug05,JainKrug07,Jain07}. More recently, the concept of \textit{adaptive flights} has been introduced, where the random walk transitions directly between local fitness maxima \cite{NeidKrug11}. An extensive overview of current results on adaptive walks and flight
can be found in Joachim Krug's article \cite{Krug2021}.

\chapter{Mathematical preliminaries}\label{chapter2}

\begin{chapquote}
{Thomas Malthus, \emph{Essay on the Principle of Population}}
{\frakfamily\fraklines {
Assuming then, my postulate as: granted,\\
I say, that the power of population is:\\
indefinitely greater than the power in the\\
earth to produce subsistence for man.}}
\end{chapquote}

\begin{chapquote}
{Charles Darwin, \emph{The Origin of Species}}
{Although some species may be now increasing, more or less rapidly, \\ in numbers, all cannot do so, for the world would not hold them.}
\end{chapquote}

This chapter provides the basic mathematical tools needed in this book. Although we try to be reasonably self-contained,
we must assume some background in probability theory and, in particular, the 
theory of Markov processes. For this, we recommend the textbooks by Rogers and Williamson \cite{Rogers1, Rogers2} and 
Ethier and Kurtz \cite{EthKur1986}.

\section{The individual based model of adaptive dynamics} \index{individual based models}
In this section, we introduce the basic stochastic, individual-based model of adaptive dynamics. This model was 
introduced by Bolker and Pacala \cite{BolPac1,BolPac2} and Dieckmann and Law \cite{DieLaw}. 
The rigorous analysis of the model was initiated by Méléard and collaborators, see, e.g. \cite{FM04,Cha06,C_ME,CM11}.      
Note that the notion of \emph{individual-based models} in a wider sense was introduced in ecology
    and is widely used there. For a historic overview, see the book by Grimm and Railsback \cite{Grimm2005} or the more recent review
by Grimm and DeAngelis \cite{DeAngelisGrimm}.

We consider a population of a single asexual species that is composed of a finite number of individuals, each of them characterised by a phenotypic trait that takes values in some space $ \XX$, called the  \textit{trait space}. $ \XX$ is, 
in the most general setting, an arbitrary Polish space.  \index{trait space}
\emph{Evolution} will take place in this space. All of biology is then encoded in the following functions:

\begin{enumerate}[label=(\roman*)]
\setlength{\itemsep}{6pt}
\item{$b(x)\in\mathbb R_+$ is the \textit{rate of birth} of an individual with trait $x\in \XX$.}
\item{$d(x)\in\mathbb R_+$ is the \textit{rate of natural death} of an individual with trait $x\in \XX$.}
\item{$c(x,y)\in\mathbb R_+$ is the \textit{competition kernel} \index{competition kernel}
		which models the competitive pressure felt by an individual with trait $x\in \XX$ from an individual with trait $y\in \XX$.}
\item{ $ p(x)\in [0,1]$ is the \textit{probability that a mutation occurs at birth} from an individual with trait $x\in \XX$.}
    \item{$m(x,dy)$ is the \textit{mutation law}. If the mutant is born from an individual with trait $x$, then the mutant trait is given by $y\in  \XX$, where $y$ is a random variable with law $m(x,dy)$.  }
\end{enumerate}
\nomenclature{$b(x)$}{birth rate}%
\nomenclature{$d(x)$}{death rate}%
\nomenclature{$c(x,y)$}{competition kernel}%
\nomenclature{$p(x)$}{mutation probability}%
\nomenclature{$m(x,dy)$}{mutation law}%
\nomenclature{$ \XX$}{trait space}%
\nomenclature{$\mathbb R_+$}{non-negative real numbers}%
\begin{remark} We consider the events listed above as the minimal features to obtain a model that  
can reflect the main features of an evolving population. We will stick with this through the bulk of this book. One may, if desired, 
add further features to the model, and, in Part 3, we will look at some examples.
\end{remark}

At any time $t$, we consider a finite number, $N_t$, of individuals, each of them having  a trait value 
$x_i(t)\in  \XX$. 
\nomenclature{$N_t$}{population size at time $t$}%
It is convenient to represent the population state at time $t$ by a point measure,  
\index{point measure}
$\nu_t$,
\be
 	\nu _t=  \sum_{i=1}^{N_t}\delta_{x_i(t)}.
 \ee 
 \nomenclature{$x_i(t)$}{trait of an individual $i$ at time $t$}%
 \nomenclature{$\nu_t$}{point measure representing the population at time $t$}%
Let $\langle \nu , f \rangle$ denote the integral of a measurable function $f$ with respect to the measure $\nu $. 
Then $\langle\nu _t,\1 \rangle=N_t $ and for any $x\in \XX$, 
the non-negative number $\langle\nu _t,\1_{\{x\}}\rangle$ is called the \textit{density of trait $x$ at time $t$}. 
With this notation, an individual with trait $x$ in the population $\nu _t$ dies due to age or competition with rate
\nomenclature{$\langle \nu,f\rangle$}{integral of the function $f$ with respect to the measure $\nu$}%
\nomenclature{$\d_x$}{Dirac measure at $x$}%
\be
 d(x)+\int_{ \XX}\ c(x,y)\nu_t(dy).
\ee
Let 
\be
	 \MM( \XX)\equiv\left\{ \sum_{i=1}^{n}\delta_{x_i}\,:\, n\geq 0,\; x_1,...,x_n\in  \XX\right\},
 \ee
 \nomenclature{$ \MM( \XX)$}{space of point measures on $ \XX$}%
 denote the set of finite, nonnegative point measures on $ \XX$, equipped with the vague topology, 
 The population process, $(\nu_t)_{t\geq0}$, is a 
$ \MM( \XX)$-valued Markov process with infinitesimal generator, $\LL$, 
defined, for any bounded measurable function $f$ from $ \MM( \XX)$ to $\mathbb R$ and for all $\nu\in  \MM( \XX)$, by \index{generator} \index{Markov process}
\nomenclature{$\LL$}{generator of a Markopv process}%
\bea\Eq(generator.1)
 	\nonumber
(\LL f)(\nu) &=& \int_{ \XX}\biggl(f\Bigl(\nu + {\delta_x}\Bigr)-f(\nu )\biggr)\bigl(1- p(x)\bigr)b(x) \nu (dx)\\ \nonumber
						&&+\int_{ \XX}\int_{\XX}\biggl(f\Bigl(\nu+{\delta_y} \Bigr)-f(\nu )\biggr) 	
							p(x)b(x) m(x,dy)  \nu (dx)\\
						&&+\int_{ \XX}\biggl(f\Bigl(\nu -{\delta_x} \Bigr)-f(\nu )\biggr)\Bigl(d(x)+
						\int_{ \XX}c(x,y)\nu (dy)\Bigr) \nu (dx).
\eea
The first and second terms are linear (in $\nu$) and describe the births (without and with mutation), 
but the third term is non-linear and describes the deaths due to age or competition. 
The density-dependent non-linearity of the third term models the competition in the population, 
and hence drives the selection process.
\begin{assumption}\label{ass} We will make the following assumptions on the parameters of the model:
\begin{enumerate}
\renewcommand{\labelenumi}{(\roman{enumi})}
\item $b$, $d$ and $c$ are measurable functions, and there exist $\overline b,\overline d,\overline c<\infty$ such that
\begin{center}$0\leq b(.)\leq \overline b,\quad 0\leq d(.)\leq \overline d\quad$ and $\quad 0\leq c(. ,.)\leq \overline c.$\end{center}
\item{
There exists $\underline c>0$ such that, 
for all $x\in \XX$,
$\underline c\leq c(x,x)$.}\vspace{0,5em}
\item{The support of $m(x,\:.\:)$ is a subset of $ \XX\setminus \{x\}$ and uniformly bounded for all $x\in  \XX$. }
%
\end{enumerate} 
\end{assumption}
Assumptions (i) and (iii) allow us to deduce the existence and uniqueness in law of a process on $\mathbb D(\mathbb R_+, \MM( \XX))$ with infinitesimal generator $ \LL$ (cf. \cite{FM04}). 
Assumption (ii) prevents the population from exploding or becoming extinct too fast. 

\section{Construction of the process}

The continuous-time process corresponding to the generator $\LL$ can be constructed fairly explicitly as a \index{jump process} \index{Markov process!construction}\index{jump process}
Markov jump process. To do this, we first construct a discrete-time Markov process  $\tilde \nu_n, n\in \N$ with stationary transition probabilities (see, e.g. \cite{EthKur1986}). These are given 
as follows. 
\index{transition probability}
If the process is in a state $\tilde \nu= \sum_{i=1}^n\d_{x_i}$, then it can move to the following states in one step:
\bea
\tilde \nu^{-i}&\equiv&\tilde \nu -\d_{x_i},\quad i=1,\dots n,\\
\tilde \nu^{+i}&\equiv&\tilde \nu+\d_{x_i}, \quad i=1,\dots n,\\
\tilde \nu^{i,z}&\equiv&\tilde \nu +\d_{z},\quad i=1,\dots n,\; z\in \supp(m(x_i,\cdot).
\eea
To compute the respective probabilities, we proceed as follows
\begin{itemize}
\item[(i)]  For  $i\in\{1,\dots,n\}$ compute 
\be
R(i)\equiv b(x_i)+ d(x_i)+\int_{ \XX}c(x_i,y)\tilde\nu (dy)
\ee
and $R(\tilde\nu)\equiv\sum_{i=1}^n R(i)$. 
\item[(ii)] Choose  the $i$ where a move is made with probability $R(i)/R(\tilde \nu)$. 
\item[(iii)] Conditional on having chosen $i$, $\tilde \nu$ goes to 
\bea
&&\tilde \nu^{-i},\; \text {with probability}\;  \frac{ d(x_i)+\int_{ \XX}c(x_i,y)\tilde\nu (dy)}{R(i)},\\
&&\tilde \nu^{+i},\; \text {with probability}\;  \frac{b(x_i)(1-p(x_i))}{R(i)},\\
&&\tilde \nu^{i,z},\; \text {with probability} \; \frac{b(x_i)p(x_i)}{R(i)}  m(x_i,dz).
\eea
\end{itemize}
Then the transition probabilities of all possible moves are given by
\bea
P(\tilde \nu, \tilde \nu^{-i})&=& \frac {R(i)}{R(\tilde\nu)} \frac{ d(x_i)+\int_{ \XX}c(x_i,y)\tilde\nu (dy)}{R(i)},\\
P(\tilde \nu, \tilde \nu^{+i})&=& \frac {R(i)}{R(\tilde\nu)}  \frac{b(x_i)(1-p(x_i))}{R(i)},\\
 P(\tilde \nu, \tilde \nu^{i,+z})&=& \frac {R(i)}{R(\tilde\nu)}\frac{b(x_i)p(x_i)}{R(i)}  m(x_i,dz),
 \eea
 for $i=1,\dots, n\equiv \tilde \nu(\1)$, and all other transitions have probability zero.
Together with the initial distribution of $\tilde \nu_0$,  this fixes the discrete-time process $\tilde \nu_k, k\in \N_0$.

Next, we define the clock process \index{clock process}
\be
S(k) \equiv \sum_{\ell=0}^{k-1} e_\ell /R(\tilde \nu_\ell),
\ee
where $e_\ell, \ell\in\N_0$ are iid standard exponential random variables.
Then define  the \emph{time change} of $\tilde\nu$ by $S$, namely,
\be
\nu_t=\tilde \nu_{S^{-1}(t)}.
\ee
Here $S^{-1}$ is the right-continuous inverse of $S$, i.e.
\be\Eq(inverse-rc)
S^{-1}(t)\equiv \sup\left(k:S(k)\leq t\right).
\ee
$\nu_t$ is then a continuous-time Markov process with generator \index{generator}
\be \Eq(disc-gen)
(\LL f) (\nu) =R(\nu) ((P-\1)f)(\nu).
\ee 
Existence of this process is obvious if $R$ is bounded from above. Under Assumptions \thv(ass), 
this is guaranteed, as long as the total population size remains finite. We show this later in Lemma 
\thv(total.1).

The construction of the process given above is also the recipe for the numerical simulation of the process
by what is called a Gillespie algorithm \cite{Gillespie1976}.  \index{Gillespie algorithm}
There is an alternative popular way to construct the process using Poisson point processes (see \cite{FM04}).

A useful tool to analyse this process is given by the martingale problem formulation. \index{martingale problem}
 For a detailed exposition of the theory, see the textbook by Ethier and Kurtz \cite{EthKur1986}. 
%

\begin{definition} \TH(martingale.1)
A stochastic process $\nu$ with state space $\MM(\XX)$ and continuous time is called a solution of the martingale problem associated to $\LL$ if, for any function  $f:\MM(X)\rightarrow \R$ in the domain of $\LL$,
\be\Eq(martingale.2)
M^{f,\nu}_t\equiv f(\nu_t)-f(\nu_0)-\int_0^t (\LL f)(\nu_s)ds
\ee
is a martingale. 
\end{definition}

\begin{remark} For our purposes, it is enough to know that 
the Markov process generated by $\LL$ that we constructed before is a solution to the martingale problem for $\LL$. 
This can be verified quite easily. 
\end{remark}

In the sequel, we will be interested in particular in linear functions of the form
\be
f(\nu)=\int\nu(dx)h(x)\equiv \langle\nu,h\rangle,
\ee
 and their martingale properties. 
To shorten the notation, we write 
\be\Eq(martingale.10)
(\LL f)(\nu) \equiv \int \nu(dx) \sumint c^\pm(x,dy,i;\nu) \left(f\left(\nu\pm \d_y\right)-f(\nu)\right),
\ee
where $\sum\hspace{-1em}\int$ is a sum of terms involving integrals over $dy$ and the coefficients may depend on
 $\nu$. Then 
 \be\Eq(martingale.11)
(\LL  \langle\nu,h\rangle) \equiv \int \nu(dx) \sumint c^\pm(x,dy,i;\nu) (\pm h (y)), 
\ee
and 
\be\Eq(martingale.12)
(\LL  \langle\nu,h\rangle^2) \equiv \int \nu(dx) \sumint c^\pm(x,dy,i;\nu)
\left( h(y)^2\pm 2h(y)\langle \nu,h\rangle\right). 
\ee
Associated to $\langle \nu_t,h\rangle$ is the martingale 
\be \Eq(martingale.13)
M_{1,t}^h\equiv \langle\nu_t,h\rangle-\langle\nu_0,h\rangle-
\int_0^tds \int \nu_s(dx) \sumint c^\pm(x,dy,i;\nu_s) (\pm h (y)),
\ee
which is the pure fluctuation part of this process.

\begin{remark} Note that the measure $\nu_t$ is determined by its values on all bounded measurable functions
of the form 
\be f(\nu)=\langle \nu,h_1\rangle\times\dots\times\langle \nu,h_n\rangle,
\ee
$n\in\N$ and $h_i$ as above. 
 From this, one can deduce that the martingale problem restricted to this set of functions is sufficient to uniquely determine the process $\nu$ (for details see, e.g., the lecture notes by Dawson \cite{Dawson2017}).
\end{remark}

\index{martingale}
\index{increasing process}
\index{bracket!of martingale}
To get control of this martingale, we compute its bracket. The bracket of a martingale $M_t$ is the unique increasing process 
$[M]_t$, such that $M_t^2-[M]_t$ is again a martingale (see \cite{EthKur1986}). 
 This means that we must decompose the square 
of $M_{1,t}^h$ into a martingale and an increasing process. 
 To do so, we note that there is a natural martingale 
\bea \Eq(martingale.14)
M_{2,t}^h&\equiv& \langle\nu_t,h\rangle^2-\langle\nu_0,h\rangle^2\\\nonumber
&&-\int_0^t ds
 \int \nu_s(dx) \sumint c^\pm(x,dy,i;\nu_s)
\left( h(y)^2\pm  2h(y)\langle \nu_s,h\rangle\right). 
\eea
Writing  out the square of $M_{1,t}^h$ gives
\bea\Eq(martingale.15)\nonumber
\left(M_{1,t}^h\right)^2&=& 
 \langle\nu_t,h\rangle^2+\langle\nu_0,h\rangle^2+
\left(\int_0^tds \int \nu_s(dx) \sumint c^\pm(x,dy,i;\nu_s) (\pm h (y))\right)^2\\
&&- 2\langle\nu_t,h\rangle\langle\nu_0,h\rangle-
2\langle\nu_t,h\rangle \int_0^tds \int \nu_s(dx) \sumint c^\pm(x,dy,i;\nu_s) (\pm h(y))
\nonumber\\
&&+2\langle\nu_0,h\rangle \int_0^tds \int \nu_s(dx) \sumint c^\pm(x,dy,i;\nu_s) (\pm h(y)).
\eea
Clearly, we hope to discover the two martingales that we already know in this expression. We see that, 
compared to $M_{2,t}^h$, the $\langle\nu_0,h\rangle^2$ has the wrong sign, so we add and subtract it twice
and slightly rearrange terms to get
\bea\Eq(martingale.16)\nonumber
&&\left(M_{1,t}^h\right)^2
=\langle\nu_t,h\rangle^2-\langle\nu_0,h\rangle^2\\\nonumber
&&+2 \langle\nu_0,h\rangle^2 -2\langle\nu_0,h\rangle\langle\nu_t,h\rangle
+2\langle\nu_0,h\rangle\int_0^tds \int \nu_s(dx) \sumint c^\pm(x,dy,i;\nu_s) (\pm h(y))\\\nonumber
&&-2\langle\nu_t,h\rangle\int_0^tds \int \nu_s(dx) \sumint c^\pm(x,dy,i;\nu_s) (\pm h(y))
\\
&&+\left(\int_0^tds \int \nu_s(dx) \sumint c^\pm(x,dy,i;\nu_s) (\pm h(y))\right)^2.
\eea
The terms in the second line are a martingale, $-2\langle\nu_0,h\rangle M_{1,t}^h$. Next, 
we want to write the first and third line as $M_{2,t}^h$ plus something. We see that this involves 
a term with $h(y)^2$ in \eqv(martingale.14), that has to be added, and the following term,  that has to be 
subtracted:
\bea\Eq(martingale.17)
&&2\int_0^tds   \left(\langle\nu_t,h\rangle-\langle\nu_s,h\rangle\right) \int   \nu_s(dx) \sumint 
c^\pm(x,dy,i;\nu_s) (\pm h(y))\nonumber\\
&&=\int_0^tds  \int_s^t   dr  \int   \nu_r(dx) \sumint c^\pm(x,dy,i;\nu_r) (\pm h(y))   \nonumber\\
&&\quad\times 2 \int   \nu_s(dx) \sumint 
c^\pm(x,dy,i;\nu_s) (\pm h(y))\nonumber\\
&&\quad+ 
2\int_0^tds    \left(M^h_{1,t}-M^h_{1,s}\right)   \int   \nu_s(dx) \sumint c^\pm(x,dy,i;\nu_s) (\pm h(y))\nonumber
\\ 
&&=  \left(\int_0^tds \int \nu_s(dx) \sumint c^\pm(x,dy,i;\nu_s) (\pm h(y))\right)^2\nonumber\\
&&\quad+ 2
\int_0^tds    \left(M^h_{1,t}-M^h_{1,s}\right)   \int   \nu_s(dx) \sumint c^\pm(x,dy,i;\nu_s) (\pm h(y)), 
\eea
where the last term is also a martingale. The square in the one-but-last line conveniently 
cancels the square in the last line of \eqv(martingale.16). 
Therefore, we arrive at 
\bea\Eq(martingale.18)\nonumber
\left(M_{1,t}^h\right)^2 &=& M_{2,t}^h+ <\nu_0,h> M^h_{1,t})\\\nonumber
&-& 
\int_0^tds    \left(M^h_{1,t}-M^h_{1,s}\right)   \int   \nu_s(dx) \sumint c^\pm(x,dy,i;\nu_s) (\pm h(y))
\\
&+&
\int_0^t ds
 \int \nu_s(dx) \sumint c^\pm(x,dy,i;\nu_s)h(y)^2. 
\eea
With the terms in the first two lines being martingales and the term in the last line being an increasing process,
we have achieved  our goal and obtained that
%
\be\Eq(martingale.20)
\left[M_1^h\right]_t=\int_0^t ds
 \int \nu_s(dx) \sumint c^\pm(x,dy,i;\nu_s) h(y)^2,
\ee 
or, explicitly,
\bea
\Eq(martingale.21)
\left[M_1^h\right]_t&=&\int_0^t ds
 \int \nu_s(dx)\Biggl( \left(b(x)(1-p(x))+d(x)+\int \nu_s(dy) c(x,y) \right)h(x)^2 \nonumber\\
 &&+b(x)p(x)\left(\int m(x,dy) 
 h(y)^2\right)\Biggr).
 \eea

We can use these observations to show that the total size of the population does not explode. 

\begin{lemma} \TH(total.1)
Assume that $\E \left[\langle \nu_0,\1\rangle^2\right] <\infty$. 
Then, for any $T<\infty$, 
\be\Eq(total.2)
\E \left[\sup_{t\leq T}\langle \nu_t,\1\rangle^2\right] <\infty.
\ee 
\end{lemma}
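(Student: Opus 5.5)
The idea is to dominate the total population size $N_t=\langle\nu_t,\1\rangle$ by a pure birth process and then run a localisation argument with stopping times. Since deaths only decrease $N_t$, and every individual gives birth (with or without mutation) at rate $b(x)\le\overline b$, each upward jump of $N$ has size one. The rate of upward jumps in state $\nu$ is at most $\overline b N$. This suggests using the martingale problem with the function $f(\nu)=\langle\nu,\1\rangle^2$. To avoid integrability issues before we know that the process does not explode, we introduce the stopping times $\tau_n\equiv\inf\{t:\langle\nu_t,\1\rangle\ge n\}$. We then work with the stopped processes, for which $R$ is bounded, so everything is well defined and the relevant local martingales are true martingales.

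First, apply \eqv(martingale.12) with $h=\1$. Up to time $t\wedge\tau_n$,
\be
\LL\langle\nu,\1\rangle^2=\int\nu(dx)\Bigl[b(x)\bigl(1+2\langle\nu,\1\rangle\bigr)+\Bigl(d(x)+\int c(x,y)\nu(dy)\Bigr)\bigl(1-2\langle\nu,\1\rangle\bigr)\Bigr].
\ee
The death term is nonpositive whenever $\langle\nu,\1\rangle\ge1$, and it vanishes when $\nu=0$. Hence
\be
\LL\langle\nu,\1\rangle^2\le \overline b\,\langle\nu,\1\rangle\bigl(1+2\langle\nu,\1\rangle\bigr)\le 3\overline b\,\langle\nu,\1\rangle^2 .
\ee
Taking expectations in the stopped martingale \eqv(martingale.14) gives
\be
\E\bigl[N_{t\wedge\tau_n}^2\bigr]\le \E[N_0^2]+3\overline b\int_0^t\E\bigl[N_{s\wedge\tau_n}^2\bigr]ds,
\ee
and Gronwall's lemma yields $\E[N_{t\wedge\tau_n}^2]\le \E[N_0^2]e^{3\overline b t}$ uniformly in $n$. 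In particular $\P(\tau_n\le T)\le \E[N_0^2]e^{3\overline bT}/n^2\to0$. So $\tau_n\to\infty$ almost surely, which gives non-explosion and justifies the construction of the process.

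Second, we upgrade this to the supremum. Write
\be
N^2_{t\wedge\tau_n}=N_0^2+\int_0^{t\wedge\tau_n}\LL\langle\nu_s,\1\rangle^2\,ds+M^{\1}_{2,t\wedge\tau_n}.
\ee
The supremum over $t\le T$ of the drift part is bounded by $3\overline b\int_0^T N^2_{s\wedge\tau_n}ds$, whose expectation is controlled by the first step.

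For the martingale part I would use Doob's $L^2$ (or Burkholder--Davis--Gundy) inequality. The bracket of $M_2^{\1}$ is computed exactly as in \eqv(martingale.20): it is $\int_0^t\int\nu_s(dx)\sumint c^\pm\bigl((N\pm1)^2-N^2\bigr)^2$, which involves $N^3$ together with the competition term $N^4$. This fourth moment is the main obstacle, since it would require higher moments of $N_0$ than we are given.

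To avoid it, I would instead use the pure-birth domination directly. Couple $N_t$ with a Yule process $Y_t$ with rate $\overline b$ started from $Y_0=N_0$, for example via the Poisson point process construction of \cite{FM04}, so that $\sup_{t\le T}N_t\le Y_T$. Conditional on $Y_0=k$, $Y_T$ is a sum of $k$ i.i.d.\ geometric variables with mean $e^{\overline bT}$. Hence
\be
\E\bigl[Y_T^2\mid Y_0=k\bigr]\le C_T\,k^2 ,
\ee
and therefore $\E\bigl[\sup_{t\le T}N_t^2\bigr]\le C_T\,\E[N_0^2]<\infty$.

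Alternatively, without any coupling, one can use the submartingale $A_t=N_0+\int_0^t\int\nu_s(dx)b(x)ds+M^{\1}_{1,t}$. It dominates $N_t$, since the death contributions are subtracted. By Doob's inequality applied to $M^{\1}_1$, whose bracket \eqv(martingale.21) restricted to births has expectation at most $\overline b\int_0^T\E[N_s]ds$, we get $\E\bigl[\sup_{t\le T\wedge\tau_n}(\text{birth part})^2\bigr]<\infty$ uniformly in $n$. Monotone convergence as $n\to\infty$ then concludes the proof.
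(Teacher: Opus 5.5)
Your proof is correct, and your main route differs from the paper's. The paper never applies the generator to $\langle\nu,\1\rangle^2$. It uses $\LL\langle\nu,\1\rangle\le\bar r\langle\nu,\1\rangle$ to get $N_t\le N_0+\bar r\int_0^tN_s\,ds+M^{\1}_{1,t}$. It then squares, takes the supremum, and applies Doob's $L^2$ inequality to $M^{\1}_1$, whose bracket \eqv(martingale.20) is at most $C\int_0^t\E[N_s^2]\,ds$ (competition enters it only as $\bar cN_s^2$). Gronwall applied to $t\mapsto\E[\sup_{s\le t}N_s^2]$ finishes the argument. So the fourth-moment obstacle you hit is an artifact of going through $M^{\1}_2$; the linear martingale avoids it.

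Your ``alternative'' paragraph is essentially the paper's proof, except that your fixed-time bounds from the first step replace the Gronwall step on the supremum. There is a small slip there. If $A_t$ is built from the full $M^{\1}_1$, its bracket is not ``restricted to births''; it also contains $d$ and the competition term. Either use the compensated birth-counting martingale, giving $\sup_{t\le T}N_t\le N_0+B_T$ with $B_T$ the number of births, or keep $M^{\1}_1$ and control the $\bar cN^2$ part of its bracket by your step-one bound. Both close.

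The Yule domination is the more elementary and sharper route. Because every death, competitive or not, only lowers $N$, you get $\sup_{t\le T}N_t\le Y_T$ pathwise. Hence $\E[\sup_{t\le T}N_t^2]\le C_T\E[N_0^2]$ with explicit $C_T$, with no Gronwall, no stochastic calculus, and non-explosion for free. Your localisation by $\tau_n$ also fixes a point the paper glosses over. Its Gronwall and Doob steps tacitly assume finiteness and true-martingale properties, and the lemma is then used to justify constructing the very process it concerns.

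What the paper's route buys is that it stays entirely within the martingale-problem formalism, the template reused for Lemma \thv(total.7) and the law of large numbers. Yours needs a joint pathwise construction, e.g.\ the Poisson point measure representation of \cite{FM04}, which the generator alone does not supply.
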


\begin{proof}
We have that
\be
\langle \nu_t,\1\rangle=\langle \nu_0,\1\rangle
+\int_0^t \LL\langle\nu_s,\1\rangle ds +M^{\1}_{1,t}.
\ee
Now
\be\Eq(total.3)
\LL\langle\nu,\1\rangle =
\int \nu(dx) \left(b(x)-d(x)- \int c(x,y)\nu(dy)\right)
\leq \max_x (b(x)-d(x)) \langle\nu,\1\rangle.
\ee
Thus, setting $\bar r\equiv \max_x (b(x)-d(x))$,
\be
\langle \nu_t,\1\rangle\leq \langle \nu_0,\1\rangle
+\bar r\int_0^t \langle\nu_s,\1\rangle ds +M^{\1}_{1,t}.
\ee
Since $\langle \nu_t,\1\rangle\geq 0$, it follows that
\bea\nonumber
\langle \nu_t,\1\rangle^2&\leq& 4\langle \nu_0,\1\rangle^2
+4\bar r^2\left(\int_0^t \langle\nu_s,\1\rangle ds\right)^2 +4(M^{\1}_{1,t})^2
\\&\leq&4\langle \nu_0,\1\rangle^2
+4t\bar r^2\int_0^t \langle\nu_s,\1\rangle^2ds +4(M^{\1}_{1,t})^2.
\eea
Next,
\be
\sup_{s\leq t}\langle \nu_s,\1\rangle^2\leq 4\langle \nu_0,\1\rangle^2
+4t\bar r^2\int_0^t\sup_{u\leq s} \langle\nu_u,\1\rangle^2ds +4 \sup_{s\leq t} (M^{\1}_{1,s})^2.
\ee
Thus, for $t\leq T$,
\be
\E\left[\sup_{s\leq t}\langle \nu_s,\1\rangle^2\right]\leq 4\E\left[\langle \nu_0,\1\rangle^2\right]
+4T\bar r^2\int_0^t \E\left[\sup_{u\leq s} \langle\nu_u,\1\rangle^2\right]ds +4
\E\left[ \sup_{s\leq t} (M^{\1}_{1,s})^2\right].
\ee
Finally, by Doob's $L^2$-inequality and \eqv(martingale.20)
\bea
\E\left[ \left( \sup_{s\leq t}M^{\1}_{1,s}\right)^2\right]
&\leq& 2 \sup_{s\leq t}
\E[M^{\1}_1]_s\leq C \int_0^t \E \langle \nu_s,\1\rangle^2ds\nonumber\\
&\leq& C \int_0^t \E\left[\sup_{u\leq s} \langle\nu_u,\1\rangle^2\right]ds,
\eea
for some constant $C<\infty$.
Thus, we get, for all $t\leq T$,
\be
\E\left[\sup_{s\leq t}\langle \nu_s,\1\rangle^2\right]\leq 4\E\left[\langle \nu_0,\1\rangle^2\right]
+(4T\bar r^2+4C) \int_0^t \E\left[\sup_{u\leq s} \langle\nu_u,\1\rangle^2\right]ds.
\ee
%
The assertion of the lemma now follows by applying Gronwall's inequality. \index{Gronwall's inequality}
\end{proof}

\begin{remark} 
The same result holds if $2$ is replaced by $p>2$, using the Burkholder-Davis-Gundy inequality (see \cite{Rogers1}) to control the
expectation of the $p$-th power of $M^{\1}_{1,t}$. 
\end{remark}

The bound on the total number of individuals given in Lemma \thv(total.1) implies that, for any time $T<\infty$, the rates $R(\nu_t)$ remain almost surely finite and hence the process can  be constructed as 
indicated above.

\section{Scaling} All the beauty of the model described above arises from the fact that the rates of the different 
elementary processes that take place can be scaled. 
The three scaling  parameters of the model are 
 the \emph{population size}, controlled by the scaling parameter $K$, \nomenclature{$K$}{carrying capacity}
the \emph{mutation probability}, controlled by the scaling parameter $\mu$, 
\nomenclature{$\mu$}{mutation rate}
the \emph{ mutation size}, controlled by the scaling parameter $\sigma$.
\nomenclature{$\s$}{mutation step size}
These scaling parameters are implemented by:
\begin{itemize}
\item[(i)] replacing $c(x,y)$ by $K^{-1} c(x,y)$. This implies that to change the death rate by an amount of 
order one requites the presence of $O(K)$ individuals, so the population size can reach of order $K$ 
individuals. This requires also consider the rescaled point measures $K^{-1} \sum_{i=1}^{n_t} \d_{x_i(t)}$,
and to consider processes with state spaces
\be\Eq(space.1)
	 \MM^K( \XX)\equiv\left\{ \frac 1K\sum_{i=1}^{n}\delta_{x_i}\,:\, n\geq 0,\; x_1,...,x_n\in  \XX\right\}.
 \ee
 $K$ is called the \emph{carrying capacity};
\item [(ii)] replacing the mutation rate $p(x)$ by $\mu p(x)$;
\item [(iii)] replacing the mutation term  $ \int m(x,dy) f(y)$ by a term
\be\Eq(mutation-scaled)
\int  m_\s(x,dh) f(x+h)=\int  m(x,dh)(f(x+\s h),
\ee
 for some probability kernel $m$.  Of course, this requires assuming some linear structure on the trait space.
 \end{itemize} 

It is often biologically reasonable to assume that $K$ is large, and both $\mu$ and $\d$ are small. We will 
see that a variety of limits can be constructed on different time scales. These limits will be seen to be 
related to some of the key mechanisms of adaptive dynamics. 

The generator with the scaling parameters takes the form
\bea\Eq(generator-K.1) 
\left(\LL^K f\right)(\nu^K )&=& \int_{ \XX}\biggl(f\Bigl(\nu^K +\sfrac {\delta_x}{K}\Bigr)-f(\nu^K )\biggr)\bigl(1- \mu p(x)\bigr)b(x) K\nu^K (dx)\\ \nonumber
						&+&\int_{ \XX}\int_{ \XX}\biggl(f\Bigl(\nu^K +\sfrac{\delta_{x+\s y}} K\Bigr)-f(\nu^K )\biggr)	
				\:	\mu		p(x)b(x)  m_\s(x,dy)  K\nu^K (dx)\\\nonumber
						&+&\int_{ \XX}\biggl(f\Bigl(\nu^K -\sfrac {\delta_x} K\Bigr)-f(\nu^K )\biggr)\left(d(x)+
						\int_{ \XX}c(x,y)\nu^K (dy)\right)K\nu^K (dx).
\eea
Notice that with the notation introduced in \eqv(martingale.10), this can be cast into the general form 
\be\Eq(generator-K.2)
(\LL^K f)(\nu^K) \equiv \int \nu^K(dx) \sumint  Kc^\pm(x,dy,i;\nu^K) \left(f\left(\nu^K\pm K^{-1}\d_{x+\s y}\right)-f(\nu^K)\right),
\ee
where $c^\pm$ depends on $u$ and $\s$, but not explicitly on $K$. Thus, the scaled processes are accelerated by a factor $K$ while the step-size is scaled down by a factor $1/K$. This suggests 
that a law of large numbers is to be expected.

\section {The law of large numbers}\label{known_results}
The first limiting result we discuss is the \emph{law of large numbers}, which emerges as
the size of the population tends to infinity while all other parameters, including the time horizon, are kept finite.
This fundamental result connects the stochastic model discussed here to the deterministic models, such as the 
Lotka-Volterra equations, which we have already seen in Chapter \ref{chapter1}. In this context, the law of large numbers goes back to
Tom Kurts \cite{Kurtz1970,Kurtz1971}, see also the exposition in Chapters 10 and 11 of  \cite{EthKur1986}. A different proof in a slightly more general setting was given by  N. Fournier and S. M\'el\'eard \cite{FM04}. We largely follow their approach.

\subsection{The general case}
We start with a theorem which describes the behaviour of the 
population process, for fixed $\mu$ and $\sigma$, when $K\uparrow\infty$. To simplify notation, we drop $\mu$ and $\s$ for the time being. We denote by $\MM_F(\XX)$ the space of finite measures on $\XX$.
\begin{theorem}[Theorem 5.3 in \cite{FM04}] \TH(lil.1)
Fix $\mu$ and $\sigma$.
Let Assumption \ref{ass} hold and assume in addition that 
the initial conditions $\nu_0^K$ converge,  as $K\uparrow\infty$, in law and for the weak topology on 
$ \MM_F( \XX)$ to some deterministic finite measure $\xi_0\in  \MM_FF( \XX)$ 
and that $\sup_K \mathbb E\left [ \langle \nu^{K}_0, \1\rangle^2\right] <\infty$. 
Then for all $T > 0$, the sequence $\nu^K$ converges, as $K\uparrow\infty$, in law, 
in  $\mathbb D([0,T ], \MM_F (  \XX))$, to a deterministic continuous function 
$\xi\!\in\! C([0,T ], \MM_F (  \XX))$.
This measure-valued function $\xi$ is the unique solution, 
satisfying $\sup_{t\in[0,T ]}\langle \xi_t, \1\rangle \!<\infty $, 
of the integro-differential equation written in its weak form:
for all bounded and measurable functions, $h: \XX \to\mathbb R$,
\bea\Eq(lln.1)
&&\int_{ \XX} \xi_t(dx)h(x)
	-\int_{ \XX} \xi_0(dx)h(x) \\\nonumber
		&&=\!\int_0^t ds \int_{ \XX }\xi_s(dx) p(x)b(x) \int_{\mathbb Z} m(x,dy) h(y)
						  \\ \nonumber
	&&+\!\int_0^t  ds \int_{ \XX }\xi_s(dx) h(x)\Big(\left(1\!-\! p(x)\right)b(x)
							\!-\!d(x)\!-\!\int_{ \XX}\xi_s(dy)c(x,y) \Big).
\eea
\end{theorem}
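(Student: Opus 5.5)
The proof follows the standard compactness–identification–uniqueness scheme of Fournier and Méléard.

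\emph{Step 1: uniform moment bounds.} I first establish a uniform moment bound,
\be
\sup_K \E\left[\sup_{t\leq T}\langle \nu^K_t,\1\rangle^2\right]<\infty .
\ee
This is the argument of Lemma \thv(total.1) applied to the rescaled generator \eqv(generator-K.1). The drift of $\langle\nu^K,\1\rangle$ is bounded by $\bar r\langle \nu^K,\1\rangle$. By \eqv(martingale.21), with the $K$-scaling in \eqv(generator-K.2), the bracket of the martingale $M^{K,h}_{1,t}$ equals
\be
\frac 1K\int_0^t ds\int\nu^K_s(dx)\Big( (b(x)(1-\mu p(x))+d(x)+\langle\nu^K_s,c(x,\cdot)\rangle)h(x)^2+\mu p(x)b(x)\textstyle\int m(x,dy)h(y)^2\Big).
\ee
For bounded $h$ this is at most $C\|h\|_\infty^2 K^{-1}\int_0^t(\langle\nu^K_s,\1\rangle+\langle\nu^K_s,\1\rangle^2)ds$. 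Gronwall then gives the bound uniformly in $K$, and I also record that
\be
\E\left[\sup_{t\le T}|M^{K,h}_{1,t}|^2\right]\le C_T\|h\|_\infty^2/K\to0 .
\ee

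\emph{Step 2: tightness.} I prove tightness of the laws of $\nu^K$ in $\DD([0,T],\MM_F(\XX))$. I would first work with the vague topology, using Roelly's criterion: it suffices to show tightness of $\langle\nu^K,h\rangle$ in $\DD([0,T],\R)$ for $h$ in a countable dense set of continuous functions. Each such $\langle\nu^K_t,h\rangle$ is a finite-variation drift plus $M^{K,h}_{1,t}$. The drift has integrand bounded by $C(\langle\nu^K,\1\rangle+\langle\nu^K,\1\rangle^2)$, so Aldous' criterion holds via Step 1, and the martingale part vanishes. Since jumps have size $1/K$, every limit point is continuous.

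To pass from vague to weak convergence one must exclude loss of mass at infinity. I would do this by also showing that $\langle\nu^K,\1\rangle$ converges to $\langle\xi,\1\rangle$, which is the approach of \cite{FM04}. If needed I would use the bounded support of mutations (Assumption \ref{ass}(iii)) together with tightness of $\nu^K_0$ to control tails $\langle\nu^K_t,\1_{|x|>R}\rangle$ on $[0,T]$.

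\emph{Step 3: identification of the limit.} For a limit point $\xi$ and bounded continuous $h$, consider the functional
\be
\Psi_t(\nu)=\langle\nu_t,h\rangle-\langle\nu_0,h\rangle-\int_0^t(\text{right side of }\eqv(lln.1))\,ds .
\ee
Note that the rescaled generator applied to $\langle\nu,h\rangle$ gives exactly the integrand in \eqv(lln.1), with $\mu p$ in place of $p$ and $m_\s$ in place of $m$ (absorbed into the notation). Hence $\Psi_t(\nu^K)=M^{K,h}_{1,t}\to0$ in $L^2$. The map $\Psi_t$ is continuous at continuous paths, though the quadratic competition term makes it unbounded. I would therefore truncate and use uniform integrability from Step 1, concluding that $\E|\Psi_t(\xi)|=0$. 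A monotone-class argument then extends \eqv(lln.1) to bounded measurable $h$.

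\emph{Step 4: uniqueness.} Finally I prove uniqueness of solutions of \eqv(lln.1) with bounded mass. Take two solutions $\xi,\bar\xi$ and use the total variation norm $\|\xi_t-\bar\xi_t\|=\sup_{\|h\|_\infty\le1}|\langle \xi_t-\bar\xi_t,h\rangle|$. Using the bounds $\bar b$, $\bar d$, $\bar c$ and the mass bound $A$, each term on the right of \eqv(lln.1) is Lipschitz: the quadratic term satisfies $|\langle\xi,h\langle\xi,c\rangle\rangle-\langle\bar\xi,h\langle\bar\xi,c\rangle\rangle|\le 2A\bar c\|\xi-\bar\xi\|$. This gives
\be
\|\xi_t-\bar\xi_t\|\le C\int_0^t\|\xi_s-\bar\xi_s\|ds ,
\ee
and Gronwall closes the argument. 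Uniqueness of the limit point then yields convergence in law of the whole sequence, to a deterministic limit. I expect Step 2, in particular the passage from vague to weak topology via control of mass escaping to infinity on a general Polish space, to be the main technical obstacle; the rest is routine.
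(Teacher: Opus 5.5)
Your proposal follows essentially the same route as the paper's proof: a uniform second-moment bound obtained by rerunning Lemma \ref{total.1}, Aldous tightness of $\langle\nu^K,h\rangle$, identification of limit points via $\Psi_t(\nu^K)=M^{h,K}_{1,t}\to0$ in $L^2$ (Doob), and uniqueness by a total-variation Gronwall estimate. Your compact-containment (vague-to-weak) step and the truncation in the identification are legitimate refinements of points the paper passes over quickly.

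One hypothesis should be made explicit. Continuity of $\Psi_t$ at continuous paths needs $b,d,c,p$ continuous and $x\mapsto\int m(x,dy)h(y)$ continuous for continuous $h$, as assumed in \cite{FM04}. With the merely measurable coefficients of Assumption \ref{ass}, this step fails, and so does the statement itself. For example, take $\XX=[0,1]$, $b=\1_{\mathbb Q}$, $d=p=0$, $c\equiv\frac12$ and $\nu^K_0=\frac1K\sum_{j=1}^K\delta_{j/K}$. The particle mass then grows towards $2$, while the unique solution of \eqref{lln.1} keeps $\xi_t(\mathbb Q)=0$ and has decreasing mass.
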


\begin{proof} 
First note that the equation \eqv(martingale.13)
in the rescaled model reads 
\be \Eq(martingale.30)
M_{1,t}^{h,K}\equiv \langle\nu^K_t,h\rangle-\langle\nu^K_0,h\rangle-
\int_0^tds \int \nu^K_s(dx) \sumint c^\pm(x,dy,i;\nu^K_s) (\pm h (y)).
\ee
Note that this is a function of $\nu^K$ and the acceleration factor $K$ that appears in the general form of the generator has cancelled against the $1/K$ in the difference $f(\nu^K\pm K^{-1}\d_{\cdot})-f(\nu^K)$.
Its bracket in the rescaled version takes the form
\be\Eq(martingale.31.1)
\left[M_1^{h,K}\right]_t=\frac 1K\int_0^t ds
 \int \nu^K_s(dx) \sumint c^\pm(x,dy,i;\nu^K_s) h(y)^2,
\ee 
where now a second $1/K$ appears, which is crucial for the LLN.  
One can easily check that the total mass in the rescaled models is bounded  \emph{uniformly} in $K$.

\begin{lemma} \TH(total.7)
Assume that $\sup_{K} \E \left[\langle  \nu^K_0,\1\rangle^2\right] <\infty$. 
Then, 
\be\Eq(total.2.1)
\sup_{K}\E \left[\sup_{t\leq T}\langle \nu^K_t,\1\rangle^2\right] <\infty.
\ee 
\end{lemma}
\begin{proof} 
The proof is a rerun of that of Lemma \eqv(total.1), taking care of the $K$-dependences.
\end{proof}

The crucial fact that leads to the proof of the LLN is that, in the rescaled model, the bracket of the martingale $M^{h,K}_1$ is given by \eqv(martingale.31.1), and thus is of order $1/K$. 
From this, it follows that any limit point of the sequence $\nu^K, K\in\N$  yields a solution of the 
deterministic equation \eqv(lln.1).

\begin{lemma} \TH(lln.2)
Assume that for some subsequence $K_\ell$, $\nu^{K_\ell} \rightarrow \xi$, weakly. Then 
$\xi$ solves \eqv(lln.1).
\end{lemma}

\begin{proof} 
First note that, with our notation, equation \eqv(lln.1)
reads 
\be\Eq(lln.3)
\langle  \xi_t,h\rangle
	-\langle  \xi_0,h\rangle 
	-\!\int_0^t ds \int_{ \XX }\xi_s(dx) 
		\sumint c^{\pm}(x,dy,i,\xi_s)(\pm h(y))\equiv F(\langle \xi,h\rangle)=0.
		\ee
However, we know that 
\be\Eq(lln.4)
F(\langle \nu^K,h\rangle) =M^{h,K}_{1,t}.
\ee
But by Doob's $L^2$-inequality,  
\be
\Eq(lln.5)
\E \left[\sup_{t\leq T} \left(M^{h,K}_{1,t}\right)^2\right] \leq 4 \E \left[M_{1}^{h,K}\right]_T.
\ee
Lemma \thv(total.7) and some straightforward estimates show that 
\be\Eq(lln.6)
\E \left[M_1^{h,K}\right]_T\leq \frac 1K C_T,
\ee
for some finite $C_T$, independent of $K$. Hence, whenever $K_\ell\uparrow\infty$ and 
$\nu^{K_\ell} \rightarrow \xi$,  $F(\langle \nu^{K_\ell},h\rangle)$ converges to zero in $L^2$, uniformly in $t\leq T$. 
In particular, $\xi$ is a solution to \eqv(lln.1). 
\end{proof}

To conclude the proof of the LLN, we need two more facts.  First, we need to show that Equation \eqv(lln.1) has a unique 
solution. Second, we must show that limit points exist, 
i.e. that the sequence $\nu^K$ is tight.

\begin{lemma}
 \TH(lln.7)
Assume that $\langle\xi_0,\1\rangle<\infty$. Then Equation \eqv(lln.1) has a unique solution for the given initial condition $\xi_0$.
\end{lemma}

\begin{proof} The proof is a rather straightforward application of the Gronwall lemma.
First, we show that the total mass of the process $\xi$ remains finite. This follows (as earlier) 
from the inequality
\be\Eq(tot.1)
0\leq \langle \xi_t,\1\rangle
\leq  \langle \xi_0,\1\rangle+\bar r \int_0^t ds \langle \xi_s,\1\rangle,
\ee
hence by the Gronwall lemma
\be\Eq(tot.2)
0\leq \langle \xi_t,\1\rangle \leq \langle \xi_0,\1\rangle \eee^{\bar r t}\equiv C_t.
\ee
 Next, define the total variation norm on the space of signed measures
\be\Eq(lln.8)
\| \xi-\xi'\|_{TV} \equiv \sup_{h:\|h\|_\infty\leq1} |\langle (\xi-\xi'),h\rangle|.
\ee
Now for $\xi_0=\xi'_0$, 
\bea\Eq(tot.2.1)
\langle (\xi-\xi'),h\rangle &=&
\int_0^t ds\int  \left(\xi_s(dx)-\xi'_s(dx)\right)
\Bigl((b(x)(1-p(x))-d(x))h(x)\\
&&\quad+b(x)p(x) \int m(x,dy) h(y)\Bigr)\nonumber\\
&+&\nonumber
\int_0^t ds\int  \left(\xi_s(dx)-\xi'_s(dx)\right)\int\left( \xi_s'(y) c(x,y)h(x)
+\xi(y) c(y,x)h(y)\right).
\eea
Due to the assumptions on the coefficients, for $h$ s.t. $\|h\|_\infty \leq 1$,
we get that 
\be\Eq(tot.3)
|\langle (\xi-\xi'),h\rangle |
\leq \left (\bar b+\bar d +2\bar c C_T\right) \int_0^t ds \|\xi_s-\xi'_s\|_{TV},
\ee
and hence
\be\Eq(tot.4)
\|\xi-\xi'\|_{TV}
\leq \left (\bar b+\bar d +2\bar c C_T\right) \int_0^t ds \|\xi_s-\xi'_s\|_{TV},
\ee
Gronwall now implies that $\| \xi_t-\xi'_t\|_{TV}=0$ if $\| \xi_0-\xi'_0\|_{TV}=0$, and hence the claimed uniqueness.
\end{proof}

Finally,  prove tightness. To show the tightness of the sequence of measures with respect to the weak 
topology, we only need to show that for any bounded continuous  function $h:\XX\rightarrow \R$
the laws of sequences $\langle \nu^K_t,h\rangle_{t\in [0,T]}$ are tight. For this, we use 
the following \emph{ Aldous criterion} \cite{Aldous78}:

\begin{theorem}
Assume that the following hold for the sequence $(\nu^K_t)_{t\leq T}$:
\begin{itemize}
\item[(i)] For any $t\in [0,T]$,  the sequence of the laws of $\langle \nu^K_t,h\rangle$ is tight.
\item[(ii)] For all $\e>0$ and $\eta >0$, there exists $K_0\in \N$ and $\d>0$, such that
for any sequence of stopping times $\t_K\leq T$,
\be\Eq(aldous.1)
\sup_{K\geq K_0} \sup_{0\leq \th\leq \d} \P\left[\left|\langle \nu^K_{\t_K},h\rangle
-\langle \nu^K_{\t_K+\th},h\rangle\right| >\eta\right]\leq \e.
\ee
\end{itemize}
Then the sequence  $(\langle\nu^K_t,h\rangle)_{t\leq T}$ is tight in the Skorokhod topology.
\end{theorem}

Both conditions are easily verified by calculations very similar to those proving Lemma \thv(total.1).
\end{proof}

\subsection{Finite trait space}
We will see later that it is interesting to consider the case when only finitely many traits are alive.
So assume that $\XX=\{x_1,\dots,x_n\}$. 
In this case, the process can be realised effectively as a Markov process with state space 
$\R_+^n$ with generator  $L^K$ acting on functions $f:\R_+^n\rightarrow\R$ as 
\bea
(L^Kf)(\bz)&=&K\sum_{i=1}^n n_i \Bigl[b(x_i)(1-p(x_i)) \left(f(\bz +\eee_i/K)-f(\bz)\right)\nonumber\\
&&\quad+ \sum_{j\neq i}b(x_i)p(x_i)m(x_i,x_j) \left(f(\bz+\eee_j/K)-f(\bz)\right)\nonumber\\
&&\quad +\Bigl(d(x_i)+\sum_{j=1}^n c(x_i,x_j) n_j\Bigr)\left(f(\bz-\eee_i/K)-f(\bz)\right)\Bigr]
\eea
For such processes, a large deviation principle is known to hold (see in particular the book by Dupuis and Ellis 
\cite{DuEll}, Chapter 10).  It states that, under Lipshitz conditions on the transition rates, 
the laws of the
family of processes on compact time intervals $[0,T]$, as measures on the Skorokhod space $\DD([0,T),\R^n)$
satisfy an LDP with rate $K$ and good rate function (action functional)
\be\Eq(rate.1)
I(\phi)=\begin{cases} \int_0^T \LL(\phi(t),\phi'(t))dt, &\;\;\text{if $\phi$  a.c. },\\
+\infty,&\;  \text {else},\end{cases}
\ee
where $\LL$ is a Lagrangian and zero action curves are solutions to \eqv(LV.1). 
From this, one can deduce that the convergence in the LLN holds even almost surely. The
 test functions $h$ can be limited to 
the $n$ indicator functions $\1_{x_i}$. We write $\langle \xi_t,\1_{x_i}\rangle\equiv n_i(t)$.
The equations \eqv(lln.1) then become the $n$ ordinary differential equations
\bea
\Eq(LV.1)
\frac d{dt} n_i(t)&=& n_i(t)\left ( b(x_i)-d(x_i)-\sum_{j=1}^n c(x_i,x_j)n_j(t)\right)\nonumber\\
&& -b(x_i)p(x_i)n_i(t)+\sum_{j\neq i} b(x_j)p(x_j) m(x_j,x_i) n_j(t), 
\eea
for $i=1,\dots, n$. Here $m(x_j,x_i)$ is the probability that a mutant from type $x_j$ is of type $x_i$. 
These are Lotka-Volterra equations with mutations. We devote the next Chapter to Lotka-Volterra equations.


\section{Diffusion limits} \index{diffusion limits}

In probability theory, when you see a law of large numbers, you naturally expect this to be followed by a central limit theorem.
So, in our context, one would expect that with different scalings, one would see non-deterministic limit processes that refine the 
deterministic processes discussed in the previous section. We have seen this already in the context of the Fisher-Wright model in 
Chapter \ref{chapter1}.

There are several ways in which non-deterministic processes can be obtained in the $K\uparrow \infty$ limit. We discuss this for completeness, but this topic will not be returned to in the remainder of this book.

\subsection{An Ornstein-Uhlenbeck process} \index{Ornstein-Uhlenbeck process}
One is to construct the usual Ornstein-Uhlenbeck-type limit by considering the
blow-up of the difference between the processes $\nu^K$ and their limit $\xi$, i.e. to set
\be\Eq(ou.1)
Y^K_t\equiv \sqrt K\left(\nu^K_t-\xi_t\right).
\ee
A simple computation then shows that
\bea\Eq(ou.2)
\langle Y^K_t,h\rangle -\langle Y^K_0,h\rangle&=&
\int_0^t ds\int Y^K_s(dx)\, \Biggl(  (b(x)(1-p(x))-d(x)) h(x) \nonumber\\
&&+b(x)p(x) \int m(x,dy)h(y)\nonumber\\
&&-  \int \xi_s(dy)\left( h(x)c(x,y)+c(y,x)h(y)\right)\nonumber\\
&&-\frac 1{\sqrt K}  \int Y^K_s(dy) c(x,y) h(y)\Biggr)\nonumber\\
&&+\sqrt KM^{K,h}_{1,t}.
\eea
The term in the one-but-last line vanishes in the limit $K\uparrow \infty$. The bracket of the last line is
\be\Eq(ou.3)
\left[\sqrt KM^{K,h}_{1}\right]_t=\int_0^t ds
 \int \nu^K_s(dx) \sumint c^\pm(x,dy,i;\nu^K_s) h(y)^2.
\ee 
As $K\uparrow \infty$, this bracket converges to 
\be\Eq(ou.4)
\left[M^{\infty,h}_{1}\right]_t=\int_0^t ds
 \int \xi_s(dx) \sumint c^\pm(x,dy,i;\xi_s) h(y)^2,
\ee 
and hence the martingale converges to the continuous martingale
$M^{\infty,h}_{1,t}$ with this bracket, which is a time change of Brownian motion.
Note also that the co-brackets are given by
\be\Eq(ou-co.4)
\left[M^{\infty,h}_{1}, M^{\infty,g}_1\right]_t=\int_0^t ds
 \int \xi_s(dx) \sumint c^\pm(x,dy,i;\xi_s) h(y)g(y).
\ee 
In particular, the two martingales are independent if $g$ and $h$ have disjoint support.

Consequently, one obtains the following convergence result.

\begin{theorem}[\cite{kurtz1978,Wang-Theses}] \TH(ou.5)
Assume that the second moment condition for the initial distribution holds and that 
$Y_0^K\rightarrow \g$ for some finite measure $\g$. 
Then the process $Y^K$ converges in law with respect to the Skorokhod topology to a continuous process  $Y$ 
with 
values in the space of distributions on $\XX$ that satisfies, for any smooth function $h$, 
\bea\Eq(ou.6)
\langle Y_t,h\rangle -\langle \g,h\rangle&=&
\int_0^t ds\int Y_s(dx)\,\Bigl((b(x)(1-p(x))-d(x)) h(x) \nonumber\\
&&+b(x)p(x) \int m(x,dy)h(y)\nonumber\\
&&-  \int \xi_s(dy) \left( h(x)c(x,y)+c(y,x)h(y)\right)\Bigr)\nonumber\\
&&+M^{\infty,h}_{1,t}.
\eea
\end{theorem}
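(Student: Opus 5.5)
We follow the same three-step scheme used for the law of large numbers (Theorem \thv(lil.1)): derive a semimartingale decomposition for $\langle Y^K_t,h\rangle$, prove tightness of the laws of $Y^K$, and identify every limit point as the unique solution of \eqv(ou.6). The difference from the LLN is that $Y^K$ no longer has bounded total mass. It takes values in signed measures of mass $O(\sqrt K)$ a priori, so tightness must be proved in a space of distributions. The natural choice is the dual of a weighted Sobolev space $H^{-s}$ on $\XX$ (here one needs $\XX\subset\R^d$, or at least a Hilbert space of test functions with a Hilbert--Schmidt embedding into $C_b$). One then works with $\langle Y^K,h\rangle$ for $h$ in a countable orthonormal basis of $H^{s}$.

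\emph{Step 1: decomposition.} We subtract the deterministic equation \eqv(lln.1) from the martingale identity \eqv(martingale.30) and multiply by $\sqrt K$. The only nonlinear term is the competition term, and it splits as
\[
\nu^K\otimes\nu^K-\xi\otimes\xi=K^{-1/2}\bigl(Y^K\otimes\xi+\xi\otimes Y^K\bigr)+K^{-1}Y^K\otimes Y^K .
\]
This gives exactly \eqv(ou.2). The martingale $\sqrt K M^{K,h}_1$ has bracket \eqv(ou.3), which is bounded in $L^1$ uniformly in $K$ by Lemma \thv(total.7).

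\emph{Step 2: uniform moment bound.} We aim for
\[
\sup_K\E\Bigl[\sup_{t\le T}\|Y^K_t\|_{-s}^2\Bigr]<\infty .
\]
To get it, we apply It\^o's formula to $\sum_k\langle Y^K_t,h_k\rangle^2$. We bound the linear drift terms by $C\|Y^K\|_{-s}^2$, using that multiplication by the bounded coefficients $b,d,p$ and the kernels $m$, $c(\cdot,y)$ are bounded operators on $H^s$. This is where regularity assumptions on $b,d,c,m$ enter. The martingale part is controlled by Doob's inequality and the trace of the bracket, $\sum_k\int\nu^K(dx)\sumint c^\pm h_k(y)^2\le C\langle\nu^K,\1\rangle$, using $\sup_x\sum_k h_k(x)^2<\infty$.

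The main obstacle is the quadratic term $K^{-1/2}\langle Y^K\otimes Y^K,c\,h\rangle$. I would bound it as $K^{-1/2}\|Y^K\|^2$ times a constant, and write it in the original variables as $\sqrt K\,\langle(\nu^K-\xi)^{\otimes2},ch\rangle$. It can then be controlled using only the total-mass bounds of Lemma \thv(total.7), via $|\langle \nu^K-\xi,\cdot\rangle|\le \|Y^K\|K^{-1/2}$ on one factor and bounded mass on the other. The estimate then closes by Gronwall's inequality, and it also shows that this term vanishes in $L^1$ as $K\to\infty$. If the direct estimate fails to close, I would localize with the stopping times $\tau_R=\inf\{t:\langle\nu^K_t,\1\rangle>R\}$, whose probability of being reached before $T$ is small uniformly in $K$ by Lemma \thv(total.7).

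\emph{Step 3: tightness and identification.} Tightness in $\DD([0,T],H^{-s'})$ for $s'>s$ follows from Step 2 combined with the Aldous criterion applied to each $\langle Y^K,h_k\rangle$. The drift increments over $[\t_K,\t_K+\th]$ are $O(\th)$ in $L^1$, and the martingale increments are $O(\th)$ in $L^2$ by \eqv(ou.3). Jumps have size $K^{-1/2}$, so every limit is continuous. By the LLN, the bracket \eqv(ou.3) and the co-brackets converge in probability to \eqv(ou.4) and \eqv(ou-co.4). The martingale CLT (e.g.\ Ethier--Kurtz, Ch.~7) then gives convergence of $\sqrt K M^{K,h}_1$ to the Gaussian martingale $M^{\infty,h}_1$, jointly with $Y^K$. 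Passing to the limit in \eqv(ou.2) shows that every limit point solves \eqv(ou.6). Since \eqv(ou.6) is linear with bounded coefficients and additive noise, the difference of two solutions driven by the same noise satisfies a linear Gronwall inequality in $\|\cdot\|_{-s}$ and is zero. This gives pathwise uniqueness, hence uniqueness in law, so the whole sequence converges.
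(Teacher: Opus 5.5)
Your proposal is correct and follows the paper's own route. The paper only sketches it: the decomposition \eqv(ou.2), the vanishing of the $K^{-1/2}$ quadratic competition term, and convergence of the bracket \eqv(ou.3) to \eqv(ou.4) via the law of large numbers, so that $\sqrt K M^{K,h}_1$ converges to a continuous Gaussian martingale. Your tightness argument in a Sobolev dual and your uniqueness argument are exactly what the paper leaves to \cite{kurtz1978,Wang-Theses}; as you note, they need smoothness of $b,d,p$, of $x\mapsto m(x,\cdot)$ and of $c$ beyond Assumption \ref{ass}, because multiplication by a merely bounded function is not a bounded operator on $H^s$.
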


\begin{remark}  
In the case of finite (or discrete) state space $\XX=\{x_1,\dots, x_n\}$,
we may choose as test functions just the set of indicator functions $h(x)=\1_{x_i}(x)$. 
Setting $\langle Y_t,\1_{x_i}\rangle \equiv Y_i(t)$, 
Eq. \eqv(ou.6)
 reduces to the system of stochastic differential equations.
\bea
dY_i(t)&=& Y_i(t)(b(x_i)(1-p(x_i))-d(x_i)) dt \nonumber\\
&&+ \sum_{j\neq i} m(x_j,x_i))b(x_j)p(x_j) Y_j(t) dt\nonumber\\
&&-  \sum_{j=1}^nc(x_i,x_j)\left( Y_i(t)n_t(x_j)+Yjt(t)n_i(t)\right)dt\nonumber\\
&&+dM^{\infty,x_i}_{1,t},
\eea
where the martingales are independent and have brackets
\bea\Eq(bracket.101)
[M^{\infty,x_i}_1]_t&=& \int_0^t ds\Bigl( n_i(s)(b(x_i)(1-p(x_i))+d(x_i)+\sum_{j=1}^nc(x_i,x_j)n_j(s)\nonumber\\
&&+\sum_{j\neq i} n_j(s)b(x_j)p(x_j)m(x_j,x_i)\Bigr).
\eea
Hence, 
\bea\Eq(bracket.102)
dM^{\infty,x_i}_{1,t} =&\Bigl( n_i(t)(b(x_i)(1-p(x_i))+d(x_i)+\sum_{j=1}^nc(x_i,x_j)n_j(t))
\nonumber\\ &+\sum_{j\neq i} n_j(t)b(x_j)p(x_j)m(x_j,x_i)\Bigr)^{1/2}dB_{i,t},
\eea
where $B_{i}$ are independent standard Brownian motions. Note that it is easy to verify that 
bracket of $M^{\infty,x_i}_{1,t}$ defined by \eqv(bracket.102) is given by \eqv(bracket.101). The converse is a
deep result in stochastic analysis.

In the case $\XX$ is a singleton, this reduces to the sde
\be
dY(t)= Y(t)\, (b-d) dt 
-  2n(t) c(x,x)Y(t)dt\nonumber\\
+dM^{\infty}_{1,t},
\ee
where 
\be
[M^{\infty}_{1}]_t=\int_0^t ds n(s)(b+d+cn(s)),
\ee
so that $dM^\infty_{1,t} =\sqrt {n(t)(b+d+cn(t))}dB_t$. 
\end{remark}

\subsection{Feller-type diffusions} 

The convergence to a deterministic process in the law of large numbers hinges on the fact that the martingale 
$M^{K,h}_1$ converges to zero. This is ultimately because the total number of events that occur in 
time of order one is of order $K$. Fournier and Méléard \cite{FM04} constructed a superprocess limit by increasing the number of 
events to $K^2$ while making sure that the mean effect of these events is just of order one.
This can be achieved by adding equally large birth- and death rates of order $K$. From the calculations before, 
this does not change the drift term, but blows up the martingale term.
To be precise, consider the family of generators defined as
\bea\Eq(generator-bu.1) \nonumber
\left(\wt\LL^K f\right)(\nu^K )&=& \int_{ \XX}\biggl(f\Bigl(\nu^K +\frac {\delta_x}{K}\Bigr)-f(\nu^K )\biggr)\left(\bigl(1- p(x)\bigr)b(x)+K\rho(x)\right) K\nu^K (dx)\\ 
					&&	\hspace{-8mm}+\int_{ \XX}\int_{ \XX}\biggl(f\Bigl(\nu^K +\frac{\delta_{y}} K\Bigr)-f(\nu^K )\biggr)		
							p(x)b(x) m(x,dy)  K\nu^K (dx)\\\nonumber
				&&	\hspace{-8mm}	+\int_{ \XX}\biggl(f\Bigl(\nu^K -\frac {\delta_x} K\Bigr)-f(\nu^K )\biggr)\Bigl(d(x)+K\rho (x)
						+\int_{ \XX}c(x,y)\nu^K (dy)\Bigr) K\nu^K (dx).
\eea
for some function $\rho$. 
In this case, we also get Equation \eqv(martingale.30) with the same $c^\pm$, but the bracket of the martingale
is now 
\be\Eq(martingale.31)
\left[M_1^{h,K}\right]_t= \int_0^t ds
 \int \nu^K_s(dx) \left(2\rho(x) h(x)^2+
\frac 1K
  \sumint c^\pm(x,dy,i;\nu^K_s) h(y)^2\right).
\ee 
Therefore, if $\nu^K$ converges to some distribution $\nu$, then for any test function $h$,
\bea
\langle \nu_t,h\rangle -\langle \nu_0,h\rangle&=&
\int_0^t ds\int \nu_s(dx)\,\Bigl( (b(x)(1-p(x))-d(x)) h(x) \nonumber\\
&&+b(x)p(x) \int m(x,dy)h(x+y)\nonumber\\
&&-  \int \nu_s(dy)  h(x)c(x,y)\Bigr)+\wh M^{\infty,h}_{1,t},
\eea
where 
\be 
\left[\wh M^{\infty,h}_{1}\right]_t=\int_0^t ds
 \int \nu_s(dx) 2\rho(x) h(x)^2.
 \ee
The formal differential form of this reads then
\bea
dX_t(x)&=& X_t(x)\, (b(x)(1-p(x))-d(x)) dt \nonumber\\
&&+ \int m(x-y,dy)b(x-y)p(x-y) X_t(x-y) dt \nonumber\\
&&-  \int (X_t(x)  X_t(y) c(x,y)dydt
+d\wh M^{\infty,x}_{1,t}.
\eea
This limit is the superprocess \index{superprocess} introduced by Alison Etheridge in \cite{Etheridge2004}.

In the single-site case, this reduces to 
\be
dX_t= X_t\,(b-d 
-   cX_t)dt
+d\wh M^{\infty}_{1,t},
\ee
where 
\be
[\wh M^{\infty}_{1}]_t=\int_0^t 2\rho X_sds.
\ee
This implies that
\be
d\wh M^{\infty}_{1,t}= \sqrt {2\rho X_t}dB_t,
\ee
and so in this case, we recover a Feller diffusion equation with competition, \index{Feller diffusion}
\be\Eq(fell.1)
dX_t= X_t\, (b-d 
-   cX_t)dt+\sqrt {2\rho X_t}dB_t.
\ee
Note the difference between the two cases: here we have a true non-linear sde, whereas in the previous case, we had an
 Ornstein-Uhlenbeck type process. \index{Ornstein-Uhlenbeck process}

\section{Branching processes}

Branching processes are used as tools for the analysis of the more complicated models we consider here. \index{branching process}
In this section, we review some of their elementary properties. 
A classical source for branching process theory is the textbook by Athreya and Ney \cite{AN}, but see also 
\cite{EthKur1986} and \cite{Dawson2017}.

\subsection{Discrete time Galton-Watson processes}
The classical \emph{Galton-Watson} branching process describes the evolution of the population size \index{Galton-Watson process}
of reproducing individuals without any interaction or competition in discrete time.
It is easily constructed via a family $\{\xi_k^n\}_{n,k\in\N}$ of iid random variables taking values in 
$\N_0$. Let in addition $Z_0$ be a random variable with values in $\N$, independent of the $\xi$. 
Then define 
\be\Eq(br.1)
Z_n=\sum_{k=1}^{Z_{n-1}} \xi_k^n, \quad n\in \N.
\ee
Then $Z_n $ is the size of the population in generation $n$ when $Z_0$ is the size of the population at time $0$ and $\xi_k^n$ is the number of offspring of individual $k$ in generation $n-1$. 
The law of the  $\xi^n_k$ is called the \emph{offspring distribution}. We will always 
exclude the trivial case where $\P(\xi^k_l=1)=1$ where no branching happens.
A key parameter of the offspring distribution is the  \emph{mean offspring size}, $m\equiv \E \left[\xi_k^n\right]$.
According to whether $m<1, m=1, m>1$, the process is called \emph{subcritical, critical}, or \emph{supercritical}.
In fact, $m$ plays the role of the growth rate of the process, as shown in the following theorem.

\begin{theorem}[\cite{EthKur1986}] \TH(br.2)
The process $Z$ defined by \eqv(br.1)
satisfies, whenever $m<\infty$,  for all $n\in \N$,
\be\Eq(br.3)
\lim_{Z_0\uparrow\infty} \frac {Z_n}{Z_0} =m^n, \as.
\ee
Moreover, if $\var(\xi^n_k)=\s^2<\infty$,
the joint distributions of 
\be\Eq(br,4)
W_n\equiv Z_0^{-1/2}(Z_n-m^nZ_0),
\ee
converge to those of 
\be\Eq(br,5)
W^\infty_n\equiv \sum_{\ell=1}^n m^{n-(\ell+1)/2}V_\ell,
\ee
where $V_\ell $ are iid $\NN(0,\s^2)$ distributed.
\end{theorem}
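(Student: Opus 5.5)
Both statements follow from the branching property: conditionally on $Z_{n-1}$, the variable $Z_n$ is a sum of $Z_{n-1}$ iid copies of $\xi$. This turns the classical strong law and CLT into statements about a Markov chain indexed by $Z_0$. I will argue by induction on $n$. To make the limit $Z_0\uparrow\infty$ meaningful almost surely, I realise all initial sizes on one probability space, using the fixed array $\{\xi^n_k\}$ and a deterministic sequence $Z_0=N\to\infty$. Definition \eqv(br.1) already does this.

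\emph{Step 1 (LLN).} For $n=1$, $Z_1/Z_0=Z_0^{-1}\sum_{k=1}^{Z_0}\xi^1_k\to m$ a.s. by the strong law of large numbers. Assume $Z_{n-1}/Z_0\to m^{n-1}$ a.s.

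\emph{Case $m>0$.} Then $Z_{n-1}\to\infty$ a.s. as $Z_0\to\infty$. The sequence $S_j\equiv j^{-1}\sum_{k=1}^j\xi^n_k$ converges to $m$ a.s. as $j\to\infty$, and evaluating it along the random indices $j=Z_{n-1}\to\infty$ preserves the a.s. limit. Hence
\[
Z_n/Z_0=S_{Z_{n-1}}\,Z_{n-1}/Z_0\to m\cdot m^{n-1}.
\]

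\emph{Case $m=0$.} Here $\xi\equiv0$, so the claim is trivial.

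\emph{Step 2 (CLT).} I decompose
\[
Z_n-m^nZ_0=\sum_{\ell=1}^n m^{n-\ell}\bigl(Z_\ell-mZ_{\ell-1}\bigr),
\]
which is a telescoping identity. Define
\[
U_\ell\equiv Z_{\ell-1}^{-1/2}\sum_{k=1}^{Z_{\ell-1}}\bigl(\xi^\ell_k-m\bigr).
\]
This gives
\[
W_n=\sum_{\ell=1}^n m^{n-\ell}\,(Z_{\ell-1}/Z_0)^{1/2}\,U_\ell .
\]
By Step 1, $(Z_{\ell-1}/Z_0)^{1/2}\to m^{(\ell-1)/2}$ a.s., so the coefficient of $U_\ell$ converges to $m^{n-\ell}m^{(\ell-1)/2}=m^{n-(\ell+1)/2}$, which matches \eqv(br,5). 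By Slutsky's lemma it then suffices to prove that $(U_1,\dots,U_n)$ converges jointly in law to iid $\NN(0,\s^2)$ variables $(V_1,\dots,V_n)$. This also gives the joint convergence of $(W_1,\dots,W_n)$, since each is a fixed linear combination of the $U_\ell$ with a.s. convergent coefficients.

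\emph{Step 3 (joint convergence of the $U_\ell$).} I use characteristic functions and condition successively on $\FF_{\ell-1}=\s(Z_0,\xi^j_k:j\le \ell-1)$. Given $\FF_{\ell-1}$, $U_\ell$ is a normalised sum of $Z_{\ell-1}$ fresh iid centred variables with variance $\s^2$. Because $Z_{\ell-1}\to\infty$ a.s., the classical CLT, applied uniformly along deterministic $j\to\infty$, gives
\[
\E\bigl[e^{itU_\ell}\mid\FF_{\ell-1}\bigr]\to e^{-\s^2t^2/2}\quad\text{a.s.}
\]
Peeling off the last factor and using bounded convergence, together with induction over $\ell$, yields
\[
\E\Bigl[\prod_\ell e^{it_\ell U_\ell}\Bigr]\to\prod_\ell e^{-\s^2t_\ell^2/2}.
\]

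\emph{Main obstacle.} The delicate point is making the conditional CLT rigorous when the number of summands $Z_{\ell-1}$ is random and depends on the past. The key is that the conditional characteristic function equals $\phi_{Z_{\ell-1}}(t)$, where $\phi_j$ is the deterministic characteristic function of the normalised $j$-sum. Since $\phi_j(t)\to e^{-\s^2t^2/2}$ as $j\to\infty$ and $Z_{\ell-1}\to\infty$ a.s., the conditional statement follows with no uniformity issue, because the summands $\xi^\ell_k$ are independent of $\FF_{\ell-1}$. If $m<1$, the random index $Z_{\ell-1}$ still diverges whenever $m>0$. The case $m=0$ is degenerate because then $\s^2=0$.
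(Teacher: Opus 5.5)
Your proof is correct and follows the same route as the paper. The first claim is proved by induction using the strong law of large numbers. The second uses the telescoping decomposition $W_n=\sum_{\ell=1}^n m^{n-\ell}(Z_{\ell-1}/Z_0)^{1/2}U_\ell$ together with the central limit theorem. Your random-index argument in Step~1 and your conditional characteristic-function argument in Step~3 supply details the paper leaves implicit, in particular the joint convergence of the $U_\ell$ to independent Gaussians, which the paper simply attributes to ``the central limit theorem''.
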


\begin{proof}
The proof of the first assertion is by induction and the standard law of large numbers.
For $n=1$,  the LLN gives 
\be\Eq(br.6)
\lim_{Z_0\uparrow \infty} \frac {Z_1}{Z_0} =\lim_{Z_0\uparrow \infty} \frac {1}{Z_0} 
\sum_{k=1}^{Z_{0}} \xi_k^0=m,\quad\as.
\ee
Assuming \eqv(br.2)
for $n-1$, we get \eqv(br.2) for $n$ in just the same way.

To prove the second assertion, we write $W_n$ by telescopic expansion as
\bea\Eq(br.7)
\nonumber
W_n&=& Z_0^{-1/2}\sum_{\ell=1}^n\left(m^{n-\ell}Z_\ell -m^{n-\ell+1}Z_{\ell-1}\right)\\
&=&\sum_{\ell=1}^n m^{n-\ell}\left(\frac {Z_{\ell-1}}{Z_0}\right)^{1/2}Z_{\ell-1}^{-1/2}\sum_{k=1}^{Z_{\ell-1}}
(\xi^\ell_k-m).
\eea
We know that the coefficients $\left(\frac {Z_{\ell-1}}{Z_0}\right)^{1/2}$ converge a.s. to $m^{(\ell-1)/2}$.
The random variables
\be\Eq(br.8)
U_\ell\equiv Z_{\ell-1}^{-1/2}\sum_{k=1}^{Z_{\ell-1}}
(\xi^\ell_k-m),
\ee
converge to iid $\NN(0,\s^2)$ r.v.'s by the central limit theorem.
\end{proof}

We now turn this result around.

\begin{theorem} \TH(br.9)
Let $\var (Z_0)<\infty$ and $\E[ \xi^1_1]=m<\infty$. 
$W_n\equiv Z_n m^{-n} $ is a  $L^1$-bounded martingale and $\E [W_n|\FF_0]=Z_0$. 
If $m>1$ and $\s^2<\infty$, $W_n$ is uniformly integrable. Hence, $W_n$ converges a.s. to some r.v. $W_\infty$ 
and $W_\infty$ is non-zero with $\E[W_\infty]=\E [Z_0]$ if $m>1$. 
\end{theorem}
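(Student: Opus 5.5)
The plan is to work with the natural filtration $\FF_n\equiv\sigma(Z_0,\xi^k_\ell: k\leq n, \ell\in\N)$ and to proceed in three steps: first the martingale property, then an $L^2$ bound giving uniform integrability, and finally convergence together with identification of the mean.

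\textbf{Martingale property.} Since the $\xi^n_k$ are independent of $\FF_{n-1}$, conditioning on $Z_{n-1}$ in \eqv(br.1) gives
\begin{equation*}
\E[Z_n|\FF_{n-1}]=mZ_{n-1}.
\end{equation*}
Dividing by $m^n$ yields $\E[W_n|\FF_{n-1}]=W_{n-1}$. Because $W_n\geq0$, it follows that $\E|W_n|=\E[W_n]=\E[Z_0]<\infty$, so $W$ is $L^1$-bounded. Iterating the tower property gives $\E[W_n|\FF_0]=W_0=Z_0$. The nonnegative martingale convergence theorem then already gives $W_n\to W_\infty$ a.s. for every $m$.

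\textbf{$L^2$ bound for $m>1$.} This is the heart of the proof. Conditional independence of the offspring numbers gives
\begin{equation*}
\var(Z_n|\FF_{n-1})=\s^2Z_{n-1},
\end{equation*}
so
\begin{equation*}
\E[(W_n-W_{n-1})^2|\FF_{n-1}]=m^{-2n}\s^2Z_{n-1}.
\end{equation*}
Taking expectations and using $\E Z_{n-1}=m^{n-1}\E Z_0$, orthogonality of martingale increments yields
\begin{equation*}
\E[W_n^2]=\E[Z_0^2]+\s^2\E[Z_0]\sum_{\ell=1}^n m^{-\ell-1}.
\end{equation*}
The geometric series converges because $m>1$, and $\E[Z_0^2]<\infty$ by the assumption $\var(Z_0)<\infty$. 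Hence $\sup_n\E W_n^2<\infty$, so $W$ is bounded in $L^2$ and in particular uniformly integrable.

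\textbf{Convergence and conclusion.} By the $L^2$ martingale convergence theorem (or Vitali's theorem combined with the a.s. convergence above), $W_n\to W_\infty$ both a.s. and in $L^1$. Therefore
\begin{equation*}
\E W_\infty=\lim_n\E W_n=\E Z_0.
\end{equation*}
Provided $\E Z_0>0$, which holds because $Z_0$ takes values in $\N$, this shows that $W_\infty$ is not identically zero.

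I read ``non-zero'' as ``not a.s.\ zero''. The stronger statement that $\{W_\infty>0\}$ coincides a.s.\ with the survival event would require in addition the standard zero–one argument: $\P(W_\infty=0)$ is a fixed point of the generating function, hence equals the extinction probability. I would mention this only as a remark, since the claim as stated needs just the mean identity.
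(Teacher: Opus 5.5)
Your proof is correct and follows the paper's route. The martingale property comes from $\E[Z_n|\FF_{n-1}]=mZ_{n-1}$, and uniform integrability for $m>1$ comes from an $L^2$ bound; your orthogonal-increment computation with $\var(Z_n|\FF_{n-1})=\s^2Z_{n-1}$ is the paper's recursion $\var(Z_n)=m^2\var(Z_{n-1})+m^{n-1}\s^2$ in another form, and it gives $\var(W_n)=\frac{1-m^{-n}}{m^2-m}\s^2$ when $Z_0=1$. You also treat a random $Z_0$ directly rather than reducing to $Z_0=1$, and you write out the $L^1$ convergence step giving $\E W_\infty=\E Z_0$, which the paper leaves implicit.
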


\begin{proof} Clearly, 
\be
\Eq(br.10)
\E \left[Z_n|\FF_{n-1}\right] =Z_{n-1} \E\left[\xi_1^n\right] =mZ_{n-1},
\ee
and iterating, for any $n>k$, 
\be\Eq(br.11)
\E \left[Z_n|\FF_{k}\right] =Z_{k} m^{n-k}.
\ee
Hence $W$ is $L^1$-bounded martingale. 
A simple calculation shows that (without loss, let $Z_0=1$), 
\be\Eq(br.12)
\var(Z_n)=m^2\var(Z_{n-1)}+m^{n-1}\s^2,
\ee
from which, by iteration, we get, for $m\neq 1$, 
\be\Eq(br.13)
\var(Z_n)=m^n\frac{m^n-1}{m^2-m}\s^2,
\ee
or 
\be\Eq(br.13.1)
\var(W_n)=\frac{1-m^{-n}}{m^2-m}\s^2,
\ee
and  for $m=1$,
\be\Eq(br.13.2)
\var(Z_n)=\var (W_n)=n\s^2.
\ee
Hence $W$ is bounded in  $L^2$ if and only if $m>1$, and hence uniformly integrable in that case.
\end{proof}
       
       \index{branching process!generating function}
 If $m\leq 1$, the process dies out almost surely. To see this, one uses the generating functions 
 $g_n(s)\equiv \E\left[s^{Z_n}\right]$. Henceforth, we assume $Z_0=1$. 
Note that $g_1(s)=g(s)=\E\left[s^{\xi_1^1}\right]$.  
We have the recursion 
\be
\Eq(br.14)
g_n(s)=g\circ g_{n-1}(s)=g_{n-1}\circ g(s).
\ee
We define the extinction probability $q\equiv \P\left(\exists_n Z_n=0\right)$ and 
$q_n= \P\left(Z_n=0\right)$.  
 \index{branching process!extinction probability}

\begin{theorem} \TH(br.15) $q$ is the smallest non-negative solution of $g(s)=s$. 
It is one if $m\leq 1$ and strictly smaller than one if $m>1$.
\end{theorem}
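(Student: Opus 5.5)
Since $Z_0=1$, the extinction events $\{Z_n=0\}$ increase in $n$, so $q_n=\P(Z_n=0)=g_n(0)$ increases to $q=\P(\exists_n Z_n=0)$ by continuity of measure. The recursion \eqv(br.14) gives $q_n=g(q_{n-1})$ with $q_0=0$. The generating function $g$ is continuous on $[0,1]$, being a power series with non-negative coefficients summing to one. Passing to the limit therefore yields $q=g(q)$, so $q$ is a fixed point of $g$ in $[0,1]$.

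Next, $q$ is the \emph{smallest} non-negative fixed point. Let $s^*\in[0,1]$ be any solution of $g(s^*)=s^*$; such a solution exists, since $g(1)=1$. Because $g$ is non-decreasing on $[0,1]$, induction gives $q_n\le s^*$ for all $n$: indeed $q_0=0\le s^*$, and $q_n=g(q_{n-1})\le g(s^*)=s^*$. Letting $n\uparrow\infty$ gives $q\le s^*$. Fixed points greater than $1$ play no role, since $q\le 1$ in any case.

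The remaining task is to locate the fixed points, which requires convexity. On $[0,1]$, $g$ is convex, because $g''(s)=\E[\xi(\xi-1)s^{\xi-2}]\ge0$, and $g'(1^-)=m$ by monotone convergence (possibly $m=+\infty$, which falls under the case $m>1$).

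Suppose first that $m\le1$. If $\P(\xi\le1)=1$, the non-trivial case forces $\P(\xi=0)>0$, so $g(s)=p_0+p_1s$ with $p_1<1$ and the only fixed point is $1$. Otherwise $g$ is strictly convex. Then for $s<1$ we have $g(s)>g(1)+g'(1)(s-1)=1-m(1-s)\ge s$. Hence $g(s)>s$ on $[0,1)$, and $q=1$.

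Suppose now that $m>1$. Since $g'(1^-)>1$, we have $g(s)<s$ for $s$ slightly less than $1$. If $p_0=g(0)>0$, then $g(0)-0>0$, and the intermediate value theorem gives a fixed point in $(0,1)$. If $p_0=0$, then $0$ is itself a fixed point. In either case the smallest fixed point is strictly less than $1$, so $q<1$.

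The main obstacle is the careful treatment of the degenerate cases: excluding $\P(\xi=1)=1$, handling a linear $g$, and treating $m=\infty$. Strict convexity is exactly what makes the case $m=1$ still give $q=1$.
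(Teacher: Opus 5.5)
Your proof is correct and follows the same route as the paper. Both arguments use monotone convergence of $q_n=g(q_{n-1})$ together with continuity of $g$ to get $g(q)=q$, the monotone induction $q_n\le s^*$ for minimality, and convexity of $g$ to locate the fixed points according to whether $m\le 1$ or $m>1$. Your version is somewhat more careful than the paper's on three points: you replace the paper's claim that a convex $g$ has at most two roots on $[0,1]$ by an explicit tangent-line inequality at $s=1$; you use the intermediate value theorem to produce a fixed point below $1$ when $m>1$; and you treat the linear case, $p_0=0$, and $m=\infty$ explicitly.
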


\begin{proof} Note that $q_n\leq 1$ and monotone increasing, so it converges to some $q\leq 1$. 
Also, $q_n=g_n(0)$ and so $q_n=g(q_{n-1})$, and so by continuity of $g$, $g(q)=q$. Moreover, 
$q$ is the smallest root of this equation. Namely, if $p$ is another non-negative root, then 
$q_1=g(0)\leq g(p)=p$, since $g$ is non-decreasing. By iteration ist follows that 
$q_n=g(q_{n-1})\leq g(p)=p$, and so $q\leq p$. 
Since $g$ is convex on $[0,1]$, it can have a most two roots on $[0,1]$. Clearly, $1$ is such a root,
so $q $ must be the other one. If $g'(1)\leq 1$, then by convexity, $g(s)$ lies above $s$  for all $s\in [0,1)$ and thus $q=1$, unless $g(s)=s$, which is only the case in the trivial situation when 
$\P(\xi^1_1=1)=1$. 
 In the opposite case, a second root $q<1$ must exist.  \end{proof}

\subsection{The continuous time tree.}
From the discrete-time Galton-Watson tree with synchronised generations, we can 
pass to a continuous time tree where each individual that is born dies after an exponential time 
and produces several children that have the same distribution as $\xi^1_1$. 
In the specific case when  the number of children can only be zero or two, the resulting size process is
a birth-and-death chain with transition rates
\be\Eq(yule.1)
		q(n,n+1)= n b,\qquad q(n,n-1) =n d,
\ee
that is a continuous-time Markov process with state space $\N_0$ and generator $L$ acting
as
\be
(Lf)(n)\equiv n b(f(n+1)-f(n))+nd(f(n-1)-f(n)).
\ee
Note that this process corresponds precisely to the case of a mutation and competition case of our
stochastic models with monomorphic initial conditions.
The nice thing here is that it is very easy to compute probabilities of the type
\be\Eq(yule.2)
\P(\t_N<\t_0|X(0)=n)\equiv p_N(n),
\ee
as solutions of the boundary value problem 
\bea\Eq(yule.3)\nonumber
&&nb p_N(n+1)+ndp_N(n-1)=n(d+b)p_N(n).\quad  n\in \{1,\dots, N-1\},\\
&& p_N(0)=0,\nonumber \\
&& p_N(N)=1.
\eea
The solution of this equation is easily found to be (if $d\neq b$)
\be
\Eq(yule.4)
p_N(n)= \frac {\left(\frac db\right)^n-1} {\left(\frac db\right)^N-1}.
\ee
In particular, we have 
\be
\Eq(yule.5)
p_N(1)= \frac {\left(\frac db\right)-1} {\left(\frac db\right)^N-1}\rightarrow
\begin{cases} \frac {b-d}{b},\;\text{if} \; b>d,\\
0,\;\text{if} \; b<d,
\end{cases},\quad N\uparrow\infty.
\ee
 In the critical case $d=b$ we get $p_N(n)=n/N$,
and again
\be\Eq(yule.6)
p_N(1)=1/N\rightarrow 0, \quad N\uparrow\infty.
\ee

\begin{remark}
The form of the fixation probability $\frac {b-d}{b}$ may look counterintuitive since $d$ is in
the denominator. But rewriting it as $r/ (r+d)$, it is natural that for fixed growth rate $r$, 
the probability to die out increases with the death rate!
\end{remark}

Finally, we are interested in the time it takes for a branching process to reach a prescribed level or to die out.

\begin{lemma}
  \label{prop:BP}
  Let $(Z_t,t\geq 0)$ be a (binary) branching process with birth rate $b$ and death rate $d$. 
  Let $\tau_i$ be
  the first hitting time of level $i$ by $Z$. Let also $\P_j$
  denote the law of $Z$ conditionally on $Z_0=j$, and $\E_j$ the
  corresponding expectation. Then
  \begin{itemize}
  \item[(i)] In the supercritical case $b>d$,
  \be
  \frac {\ln k}{b} -O(1)\leq \E_1(\tau_k\wedge\tau_0)\leq\frac{1+\ln
      k}{b}. \label{eq:BP-2}
  \ee
  \item[(ii)] In the subcritical case $b<d$, 
  \be\Eq(subtime.1)
   \E_1(\tau_k\wedge\tau_0)\leq \frac 1{d-b}.
   \ee
   \item[(iii)] In the critical case $b=d$, 
    \be\Eq(subtime.2)
   \frac {\ln k }b-O(1)\leq \E_1(\tau_k\wedge\tau_0)\leq \frac {\ln k}{b}.
   \ee
   \end{itemize}
\end{lemma}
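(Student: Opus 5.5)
The plan is to compute $T(j)\equiv\E_j(\tau_k\wedge\tau_0)$, $0\le j\le k$, exactly through the Poisson equation for the embedded birth--death chain. By the strong Markov property and one-step analysis, $T$ solves
\begin{equation*}
j(b+d)T(j)-jb\,T(j+1)-jd\,T(j-1)=1,\qquad 1\le j\le k-1,
\end{equation*}
with $T(0)=T(k)=0$. This is a second-order linear recursion. Setting $\Delta(j)\equiv T(j)-T(j-1)$ gives the first-order relation
\begin{equation*}
b\,\Delta(j+1)=d\,\Delta(j)-\frac1j.
\end{equation*}
Iterating from $\Delta(1)=T(1)$ yields
\begin{equation*}
\Delta(j+1)=\Big(\frac db\Big)^{j}T(1)-\sum_{i=1}^{j}\frac{1}{ib}\Big(\frac db\Big)^{j-i}.
\end{equation*}
Summing $\Delta$ from $1$ to $k$ and imposing $T(k)=0$ then determines $T(1)$ explicitly, as a ratio of double sums. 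This mirrors the computation of $p_N(n)$ in \eqv(yule.4).

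An alternative, more robust route is the Green function representation
\begin{equation*}
T(1)=\sum_{j=1}^{k-1}\frac{G(1,j)}{j(b+d)},
\end{equation*}
where $G(1,j)$ is the expected number of visits of the embedded jump chain to $j$ before absorption in $\{0,k\}$. The factor $1/(j(b+d))$ is the mean holding time at $j$. The Green function itself comes from the gambler's-ruin probabilities \eqv(yule.4): with $\rho=d/b$ and $p=b/(b+d)$, we have
\begin{equation*}
G(1,j)=\frac{\P_1(\text{hit } j \text{ before } 0)}{\P_j(\text{no return to } j \text{ before absorption})}.
\end{equation*}
Both factors are explicit in $\rho$, so the case analysis reduces to estimating the sum $\sum_j G(1,j)/(j(b+d))$ in each of the three regimes.

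\emph{Supercritical case.} For $j$ large, the probability of reaching $j$ from $1$ is close to $1-\rho$. The escape probability from $j$ is of order $(1-\rho)$ times a factor accounting for excursions below $j$. As a result, $G(1,j)/(j(b+d))\approx 1/(jb)$ up to corrections that are summable in $j$ (geometric in $\rho^j$). Summing gives $\ln k/b+O(1)$.

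For the clean upper bound $(1+\ln k)/b$, I would compare directly with the pure birth (Yule) process. Set $\sigma_j$ to be the holding time at the last visit to $j$, and argue that the expected total time spent at level $j$ is at most $1/(jb)$. The justification is that each visit at $j$ ends in an upward step with probability $p$, and the expected number of visits to $j$, conditioned on not dying, is at most $1/(p\cdot\text{escape})$. If that fails, I would simply bound the harmonic sum $\sum_{j\le k}1/j\le 1+\ln k$ after showing $G(1,j)\le (b+d)/b$.

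\emph{Subcritical case.} Drop the upper barrier: $\tau_k\wedge\tau_0\le\tau_0$. Then use the martingale $Z_t-Z_0-(b-d)\int_0^t Z_s\,ds$ together with $Z_s\ge1$ before $\tau_0$. Optional stopping gives $(d-b)\E_1\tau_0\le 1$, i.e. \eqv(subtime.1).

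\emph{Critical case.} Here $\rho=1$, and gambler's ruin is symmetric, which gives the explicit formula
\begin{equation*}
G(1,j)=\frac{2(k-j)}{k}\quad(\text{in units of visits}).
\end{equation*}
Hence
\begin{equation*}
T(1)=\sum_{j=1}^{k-1}\frac{2(k-j)}{k\cdot 2jb}=\frac1b\Big(H_{k-1}-\frac{k-1}{k}\Big),
\end{equation*}
where $H_{k-1}$ is the harmonic number. This is $\ln k/b-O(1)$ and at most $\ln k/b$, proving \eqv(subtime.2).

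The main obstacle is the supercritical upper bound with the exact constant $1+\ln k$ (rather than $C\ln k$). It requires careful control of the visit counts $G(1,j)$ uniformly in $j$ and $\rho$, and I expect the bound $G(1,j)\,(b+d)^{-1}\le 1/b$ to be the delicate point.
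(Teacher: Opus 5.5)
Your overall route (solving the Poisson equation for $T(j)=\E_j(\tau_k\wedge\tau_0)$, or equivalently writing $T(1)$ through the Green function of the embedded chain) is the paper's. Your critical case is fine: it reproduces the paper's formula $\frac1b\sum_{i=1}^{k-1}(\frac1i-\frac1k)$.

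The gap is the supercritical case. You explicitly leave the upper bound $(1+\ln k)/b$ open, and neither of your two sketches proves it as stated: the Yule comparison, and the bound on visits ``conditioned on not dying'' (which in any case is not $1/(p\cdot\text{escape})$). Your lower bound is only heuristic, and its description of the error is wrong. The $\rho^j$-corrections cancel between the hitting probability and the escape probability. What remains is a correction in $\rho^{k-j}$ coming from the upper barrier.

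The missing step is just to finish the computation you already set up. With $\rho=d/b$ and $p=b/(b+d)$, gambler's ruin gives $\P_1(\text{hit } j \text{ before } 0)=\frac{1-\rho}{1-\rho^j}$ and
\[
\P_j(\text{no return to } j)=p\,\frac{1-\rho}{1-\rho^{k-j}}+(1-p)\,\frac{\rho^{j-1}(1-\rho)}{1-\rho^{j}}=\frac{p\,(1-\rho)(1-\rho^k)}{(1-\rho^j)(1-\rho^{k-j})},
\]
so that
\[
G(1,j)=\frac{b+d}{b}\cdot\frac{1-\rho^{k-j}}{1-\rho^k},\qquad
T(1)=\frac1b\sum_{j=1}^{k-1}\frac1j\cdot\frac{1-\rho^{k-j}}{1-\rho^k}.
\]
This is exactly the paper's closed form for $e_1$, and the same thing you get by summing your $\Delta(j)$ and imposing $T(k)=0$.

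For $\rho<1$ we have $\rho^{k-j}\geq\rho^k$, so every factor is at most $1$. That is precisely $G(1,j)\le (b+d)/b$, and it gives $T(1)\le H_{k-1}/b\le (1+\ln k)/b$; so the ``delicate point'' is immediate. For the lower bound, keep only $j\le k/2$, where the factor is at least $1-\rho^{k/2}$. This gives $T(1)\geq (1-\rho^{k/2})\frac{\ln k-\ln 2}{b}=\frac{\ln k}{b}-O(1)$.

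Your subcritical argument, by contrast, is a genuinely different and correct route. You bound $\tau_k\wedge\tau_0\le\tau_0$, stop the martingale $Z_t-Z_0-(b-d)\int_0^t Z_s\,ds$ at $\tau_0\wedge t$, use $Z_s\geq 1$ before $\tau_0$, and let $t\to\infty$ by monotone convergence. This yields $\E_1\tau_0\le 1/(d-b)$ without the explicit formula. The paper instead rewrites the closed form in powers of $b/d$ and sums a geometric series.
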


\begin{proof}
  Set $e_j=\E_j(\tau_k\wedge\tau_0)$, and $e_0=e_k=0$.
  $e$ satisfies the equations
  \be
 j b(e_{j+1}-e_j) +jd(e_{j-1}-e_j)j=1, \quad \forall
  j\in\{1,\ldots,k-1\}.
  \ee
   From this, one finds that
  \be\Eq(ej.1)
  e_{j+1}-e_j=\left(\frac{d}{b}\right)^je_1
  -\frac{1}{b}\sum_{i=1}^j\frac{1}{i}\left(\frac{d}{b}\right)^{j-i},
  \ee
  from which it follows by summing over $j\in\{0,\ldots,k-1\}$ that
  \bea\Eq(eeee.1)
  e_1&=&\frac{\frac{1}{b}\sum_{i=1}^{k-1}\frac{1}{i}
    \sum_{j=i}^{k-1}\left(\frac{d}{b}\right)^{j-i}}
  {\sum_{j=0}^{k-1}\left(\frac{d}{b}\right)^j}\\\nonumber
&=&  \frac{\frac{1}{b}\sum_{i=1}^{k-1}\frac{1}{i}
    \left(1-    \left(\frac{d}{b}\right)^{k-i}\right) }
    {1-\left(\frac{d}{b}\right)^k},
    \eea
    if $b\neq d$, and 
    \be\Eq(critical.1)
      e_1=\frac 1{b}\sum_{i=1}^{k-1}\left(\frac{1}i-\frac 1k\right),
      \ee
      if $b=d$.
  In Case (i), we can use that $   \left(1-    \left(\frac{d}{b}\right)^{k-i}\right)\leq 1-\left(\frac{d}{b}\right)^k$
  to bound 
  \be
  e_1\leq\frac{1}{b}\sum_{i=1}^{k-1}\frac{1}{i}   \leq\frac{1+\log k}{b}.
  \ee
  For a lower bound, we can use 
 \bea
 e_1&\geq& \frac{1- \left(\frac{d}{b}\right)^{k/2}}{1-
  \left(\frac{d}{b}\right)^k} \frac1b \sum_{i=1}^{k/2} \frac 1i \geq  \frac 1{1+ \left(\frac{d}{b}\right)^{k/2}}
   \frac{\ln k-\ln 2}b\nonumber\\
   &\geq& (1-(d/b)^{k/2})     \frac{\ln k-\ln 2}b \approx    \frac{\ln k}b +O(1),
   \eea
   for large $k$.
  In Case (ii), we write 
  \bea
 \Eq(supoerdooper.1)
&& \frac{\frac{1}{b}\sum_{i=1}^{k-1}\frac{1}{i}
    \left(1-    \left(\frac{d}{b}\right)^{k-i}\right) }
    {1-\left(\frac{d}{b}\right)^k}= \frac{\frac{1}{b}\sum_{i=1}^{k-1}\frac{1}{i}\left(\frac bd \right)^i
    \left(1-    \left(\frac{b}{d}\right)^{k-i}\right) }
    {1-\left(\frac{b}{d}\right)^k}\nonumber\\
    &&\leq \frac{1}{b}\sum_{i=1}^{\infty}\left(\frac bd \right)^i=\frac 1{d-b}.
    \eea
    which is a good bound if $d>b$. If $d-b$ goes to zero, one can improve the bound to $\ln k$. 
  Finally, in the critical case,
  from \eqv(critical.1)
  we get 
  \be
  e_1=\frac 1b\left(\sum_{i=1}^{k-1} \frac 1i -1 +\frac1k\right),
  \ee
  from which the claimed bounds follow. This
  ends the proof.
\end{proof}

%
%
\index{branching process!extinction time}

We will also need the mean time to hit zero or $k$ when starting in $j$. From \eqv(ej.1) we get 
\bea\Eq(ej.2)
  e_{j}&=&\sum_{\ell=0}^{j-1}\left(\frac{d}{b}\right)^\ell e_1
  -\frac{1}{b} \sum_{\ell=0}^{j-1}\sum_{i=1}^{\ell}\frac{1}{i}\left(\frac{d}{b}\right)^{\ell-i}\nonumber\\
  &=&\sum_{\ell=0}^{j-1}\left(\frac{d}{b}\right)^\ell e_1
  -\frac{1}{b} \sum_{i=1}^{j-1}\frac{1}{i}\sum_{\ell=i}^{j-1}\left(\frac{d}{b}\right)^{\ell-i}.
  \eea
If we denote by $e_j^m\equiv  \E_j(\tau_m\wedge\tau_0)$, we can write this as
\be\Eq(ej.3)
e_j^k= \left(e_1^k-e_1^j\right)\sum_{\ell=0}^{j-1}\left(\frac{d}{b}\right)^\ell.
\ee
In the subcritical case  (with possibly $k$-dependent difference between birth and dearth rates) and   $j$  (possibly depending on $k$)  so large that $(b/d)^j\downarrow 0$,
one finds from this that 
\be
\Eq(ej.4)
e^k_j=\frac 1{d-b}\left(\sum_{i=1}^{j-1} \frac 1i+\sum_{i=j}^{k-1} \frac 1i \left(\frac bd\right)^{i-j} \right)\left(1+o(1)\right).
\ee
In the critical  case, one finds 
\be
e_j^k=\frac 1b\left(\sum_{i=j}^{k-1} \frac ji-\frac{k-j}{k}\right)\sim \frac 1b j\ln (k/j).
\ee


\chapter{Competitive Lotka-Volterra systems }\label{chapter3}

\begin{chapquote}
{John Maynard Smith, \emph{Natural Selection and the Concept of Protein Space}}
{I do not want to discuss the problem of
the origin of life, but only to point out that it is a quite
different problem from that of the mechanism of evolution. }
\end{chapquote}


 \index{Lotka-Volterra equations!competitive}
In the absence of mutations and if initially there exists a finite number of phenotypes in the population,  
the system of equations \eqv(LV.1) reduces to the competitive system of Lotka-Volterra equations.
They play a crucial role in the sequel. 
 In this chapter, we provide some necessary background on these classical models.
\begin{definition}\label{cor} A system of $k$ ordinary differential equations of the form 
\be\label{LV-System}
\frac {d n_i(t)}{dt}= n_i(t)\biggl(b(x_i)-d(x_i)-\sum_{j=1}^k c(x_i,x_j)n_j(t)
\biggr), \qquad 1\leq i\leq k.
\ee
is called a \emph{competitive system of Lotka-Volterra 
equations} for the $k$ traits $x_1,\dots,x_k$. We denote the system \eqv(LV-System) by $LV(k,(x_1,...,x_k))$.
\end{definition}

\nomenclature{$n_i(t)$}{population size of type $i$ at time $t$}
\section{Coexistence and invasion fitness}

\nomenclature{$f(y,x)$}{invasion fitness}

\nomenclature{$LV(k,\bx)$}{Lotka-Volterra system with $k$ types}

We begin by introducing the notions of coexisting traits and invasion fitness (see \cite{CM11}).
\index{invasion fitness}
\index{coexisting traits}
\begin{definition}
We say that the distinct traits $x$ and $y$ \emph{coexist} if the system $LV(2,(x,y))$ admits a non-trivial equilibrium, named 
$\bar n(x,y)\! \in\! (0,\infty)^2$, which is locally stable in the sense that the eigenvalues of the Jacobian matrix of the system 
$LV(2, (x,y))$ at $ \bar n(x,y)$ are all not positive. We call this equilibrium \emph{strictly stable} if all eigenvalues are striclty negative. 
\end{definition}
The invasion of a single mutant trait in a monomorphic population which is close to its equilibrium is governed by its initial growth rate. Therefore, it is convenient to define the fitness of a mutant trait by its initial growth rate.
\begin{definition} If the resident population has the trait  $x\in\mathcal X$, then we call the following function \emph{invasion fitness} of the mutant trait $y$:
\begin{align}
f(y, x)=b(y)-d(y)-c(y,x)\bar n(x).
\end{align}
\end{definition}
\begin{remark}
The unique strictly stable equilibrium of $LV(1,x)$ is $ \bar n(x)=\frac{b(x)-d(x)}{c(x,x)}$, and hence $f(x,x)=0$, for all $x\in\mathcal{X}$.
\end{remark}

 Coexistence and invasion fitness are closely related.
\begin{proposition}\label{pro_coex}
There is coexistence between traits $x$ and $y$ if and only if
\be\Eq(coex.1)
f(x,y)> 0\; \text{and}\; f (y,x) > 0,\; \text {or}\; f (x,y) =  f (y,x)=0.
\ee
\end{proposition}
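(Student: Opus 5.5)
The plan is to compute everything explicitly for the planar system $LV(2,(x,y))$. We write $r_x=b(x)-d(x)$, $r_y=b(y)-d(y)$ and $c_{xx}=c(x,x)$, $c_{xy}=c(x,y)$, and so on. Implicitly we assume $\bar n(x),\bar n(y)>0$, i.e.\ $r_x,r_y>0$, which is the setting where the invasion fitness is defined. A non-trivial equilibrium in $(0,\infty)^2$ must solve the linear system
\[
c_{xx}n_x+c_{xy}n_y=r_x,\qquad c_{yx}n_x+c_{yy}n_y=r_y .
\]
Set $D=c_{xx}c_{yy}-c_{xy}c_{yx}$. If $D\neq0$, Cramer's rule gives
\[
n_x=\frac{r_xc_{yy}-r_yc_{xy}}{D}=\frac{c_{yy}\,f(x,y)}{D},\qquad n_y=\frac{c_{xx}\,f(y,x)}{D},
\]
using $f(x,y)=r_x-c_{xy}r_y/c_{yy}$. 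So the key identity is that the numerators are, up to the positive factors $c_{yy},c_{xx}$ (Assumption \ref{ass}(ii)), exactly the two invasion fitnesses.

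Next I would linearise at an interior equilibrium. There the bracket terms vanish, so the Jacobian equals $-\mathrm{diag}(n_x,n_y)\,C$, where $C$ is the competition matrix. Its trace is $-(n_xc_{xx}+n_yc_{yy})<0$ and its determinant is $n_xn_yD$. Hence both eigenvalues have non-positive real part (in fact negative real part) if and only if $D>0$; if $D<0$ there is a positive eigenvalue, giving a saddle.

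Now the two directions. \emph{Sufficiency:} suppose $f(x,y)>0$ and $f(y,x)>0$. Then $r_xc_{yy}>r_yc_{xy}$ and $r_yc_{xx}>r_xc_{yx}$. Multiplying, $r_xr_yc_{xx}c_{yy}>r_xr_yc_{xy}c_{yx}$, so $D>0$. The formulas then give a positive equilibrium, which is strictly stable. \emph{Necessity:} suppose a locally stable interior equilibrium exists. If $D\neq0$, then by the trace/determinant step we need $D\ge0$, hence $D>0$. Positivity of $n_x,n_y$ then forces $f(x,y)>0$ and $f(y,x)>0$.

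The main obstacle is the degenerate case $D=0$, which is where the alternative $f(x,y)=f(y,x)=0$ comes from. When $D=0$ the two rows of $C$ are proportional. A solution of the linear system then exists only if $(r_x,r_y)$ is proportional to those rows in the same way, i.e.\ $r_x/r_y=c_{xx}/c_{yx}=c_{xy}/c_{yy}$. This is precisely $f(x,y)=f(y,x)=0$. In that case there is a whole segment of equilibria in the open quadrant. At each of them the Jacobian has determinant $0$ and negative trace, so its eigenvalues are $0$ and a negative number, both non-positive. This gives coexistence in the (non-strict) sense of the definition. Conversely, if $D=0$ and an interior equilibrium exists, the compatibility condition forces $f(x,y)=f(y,x)=0$.

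Finally, combining the cases gives \eqv(coex.1). I would also check that the mixed case (one fitness zero, the other positive) cannot produce an interior equilibrium. If $D\neq0$, a zero numerator puts the equilibrium on the boundary. If $D=0$, the compatibility argument above already forces both fitnesses to vanish.
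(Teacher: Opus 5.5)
Your argument is correct, and it follows a different route from the paper's. The paper writes a fixed point as $(\alpha\bar n(x),\beta\bar n(y))$ with $\alpha,\beta\in[0,1]$. It recasts the fixed-point equations in terms of $f(x,y)$ and $f(y,x)$, solves explicitly for $\alpha$ and $\beta$, and reads positivity off these formulas.

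That proof never examines the Jacobian, so the stability clause in the definition of coexistence is not actually checked. Its claim that a negative invasion fitness pushes the solution out of the positive quadrant also fails when \emph{both} fitnesses are negative. In that case your $D$ is negative and Cramer's rule produces an interior equilibrium, which is a saddle. For example, $r_x=r_y=c_{xx}=c_{yy}=1$, $c_{xy}=c_{yx}=2$ gives the interior equilibrium $(1/3,1/3)$. Your trace--determinant step is precisely what rules out this bistable case.

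Cramer's rule also spares you the paper's division by $c(x,y)$ and $c(y,x)$, and it disposes of the mixed cases (one fitness zero) cleanly. What the paper's parametrisation offers in return is an explicit description of the coexistence equilibrium as fractions of the monomorphic equilibria.

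Two small points to tighten:
\begin{itemize}
\item Your ``if and only if $D>0$'' in the linearisation step holds for \emph{negative} real parts. For non-positive real parts the condition is $D\ge0$, and you use the $D=0$ case later anyway.
\item In the degenerate case, state explicitly that $f(x,y)=f(y,x)=0$ forces $D=0$, so the ``if'' direction does not silently assume it. This follows by multiplying $r_xc_{yy}=r_yc_{xy}$ with $r_yc_{xx}=r_xc_{yx}$, exactly as in your sufficiency step.
\end{itemize}
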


\begin{proof}
Note first that if $n$ is a fixpoint, then 
\be 
c(x,x)n(x) = r(x)- c(x,y) n(y) \leq r(x),
\ee
so that $n(x) \leq \bar n(x)$, and also $n(y)\leq \bar n(y)$, where $\bar n(x), \bar n(y)$ denote the monomorphic
fixpoints. Hence we can 
parametrise the fixpoint as $\bar n = (\a\bar n(x) , \b \bar n(y))$ with $\a,\b \in[0,1]$.
The fixpoint equations can then be written as
\bea\Eq(coex.5)
(1 -\a) r(x) -c(x,y)\b\bar n(y) &=& 0,\\
(1 -\b) r(y)-c(y, x)\a\bar n(x) &=& 0.
\eea
These equations can be conveniently written as
\bea\Eq(coex.6)
(1 -\a) f (x, y) -(\b-(1-\a)) c(x, y)\bar n(y) &=& 0,\\
(1 -\b) f (y, x) - (\a - (1 -\b))c(y, x) \bar n(x)& =& 0. \Eq(coex.6.1)
\eea
This yields the solutions
\be\Eq(coex.7)
\a=\frac{\left(\frac{f(y,x)}{c(y,x)\bar n(x)}+1
\right)
\frac{f(x,y)}{c(x,y)\bar n(y)}}{\left(\frac{f(y,x)}{c(y,x)\bar n(x)}+1\right)
\left(1+\frac{f (x,y)}{c(x,y)\bar n(y)}
\right)-1},
\ee
and
\be\Eq(coex.8)
\b=\frac{\left(\frac{f(x,y)}{c(x,y)\bar n(y)} + 1\right)
\frac{ f (y,x)}{c(y,x) \bar n(x)}}
{\left(\frac {f(y,x)}{c(y,x) \bar n(x)} + 1\right)\left(1 +\frac{f (x,y)}{c(x,y) \bar n(y)}\right)-1},
\ee
 provided the denominators are  not zero, which is the case whenever $f(x,y)>0$ and $f(y,x)>0$,
 In this case, both $\a$ and $\b$ are strictly positive. 
If both  $f (x,y)$ and $f (y, x)$ are equal to zero, then the denominator vanishes,
but equations \eqv(coex.6) and \eqv(coex.6.1) are satisfied whenever $\b = (1 -\a )$, so there is a
continuum of solutions with $\a\in  [0, 1]$ and $\b= (1 -\a )$. If 
$f (x,y)$ 
or if $f (y, x)$ is negative, then the solutions are not in the positive quadrant
and thus not admissible. Hence, only the monomorphic solutions exist in these
cases.
 \end{proof}
 
 \begin{remark} 
 We see that in the case $f(x,y)=f(y,x)=0$, there is a line of fixpoints, so that the condition of strict local 
 stability cannot hold. Indeed, the Jacobi matrices at these fixpoints have a zero eigenvalue.
 \end{remark}
 
\section{Polymorphism}  \index{polymorphic equilibria}
The concepts above can be extended readily to polymorphic equilibria.
 Recall that the equations for an equilibrium are
 \be\label{equi.1}
0= n_i(t)\biggl(b(x_i)-d(x_i)-\sum_{j=1}^k c(x_i,x_j)n_j(t)
\biggr), \qquad 1\leq i\leq k.
\ee
First, $0$ is always an equilibrium, but this will only be stable if, for all $i$, 
$d(x_i)>b(x_i)$, which of course we will assume not to be the case.
Next, we have the \emph{monomorphic} equilibria 
\be
\Eq(equi.2)
\bar n^j_i\equiv \begin {cases} \frac {b(x_j)-d(x_j)}{c(x_j,x_j)},&\text{if}\; i=j,\\
0,&\;\text{else},
\end{cases}
\ee
provided $b(x_j)-d(x_j)>0$. 
The stability of these equilibria is easy to analyse. The Jacobi matrix at 
the equilibrium $\bar x^j$ has components 
\be\Eq(equi.3)
H(\bar n^j)_{i,\ell}=\begin{cases} d(x_\ell)-b(x_\ell)-c(x_\ell,x_j)\bar x^j_j, &\, \text{if }
\ell=i, i\neq j,\\
-\bar n^j_j c(x_j,x_\ell),&\, \text{if } i=j.
\end{cases}
\ee
In terms of the concept of invasion fitness, this reads
\be\Eq(equi.4)
H(\bar n^j)_{i,\ell}=\begin{cases} f(x_\ell,x_j), &\, \text{if }
k=i, i\neq j,\\
-\bar n^j_j c(x_j,x_\ell),&\, \text{if } i=j.
\end{cases}
\ee
That is, 
\be\Eq(equi.3.1)
H(\bar n^j)=\left(\begin{matrix} 
f(x_1,x_j)&0&0&0&\dots&0\\
0&f(x_2,x_j)&0&0&\dots&0\\
\dots&\dots&\dots&\dots&\dots&\\
\dots&\dots&\dots&\dots&\dots&\\
-\bar n^j_jc(x_j,x_1)&-\bar n^j_jc(x_j,x_2)&\dots&-\bar n^j_jc(x_j,x_j)&\dots&-\bar n^j_jc(x_j,x_k)\\
\dots&\dots&\dots&\dots&\dots&\\
\dots&\dots&\dots&\dots&\dots&\\
0&0&0&0&0&f(x_k,x_j)
\end{matrix}\right)
\ee
Since the eigenvalues of $H(\bar n_j) $ are just the diagonal terms, a monomorphic fixed point is stable if all 
invasion fitnesses $ f(x_\ell,x_j)$ are negative. 

Polymorphic fixed points with a subset of $\ell \leq k$ positive components 
(let us take w.r.g. the first $\ell $ components) lead to the \emph{linear} equations 
 \be\label{equi.5}
b(x_i)-d(x_i)=\sum_{j=1}^\ell c(x_i,x_j)\bar n_j, \qquad 1\leq i\leq \ell,
\ee
for the equilibrium values $\bar n_j$, which in addition must be all 
strictly positive. The existence of such equilibria requires conditions 
on the parameters that are more difficult to verify. We have already discussed this for two traits before.

As a further illustration, 
take the simplest setting of equal competition when $c(x_i,x_j)=c$ for all $i,j=1,\dots, k$. Then the equations for a $k$-morphic 
equilibria are
\be
r(x_i)= c( \bar n_1+\dots +\bar n_k), \quad i=1,\dots,k,
\ee
which have only a solution if $r(x_i)=r$, independent of $i$. In this case, there is a continuum of solutions
given by all $\bar n\in \R_+^n$ such that 
$ \bar n_1+\dots +\bar n_k=r/c$. 
Barring such very symmetric situations, the simplest cases when polymorphic equilibria exist trivially are when 
there are subsets of sites between which the competition kernels are all zero.

In general, if  a fixed point exists, then 
\be\Eq(equi.6)
H(\bar n)_{i,\ell}=-\bar n_i c(x_i,x_\ell).
\ee
Assume that $\bar n\in \R_+^k$ is a stable fixed point with 
only strictly positive components. If a mutation to type $x_{k+1}$ appears, 
we must analyse the stability of the fixed point $(\bar n,0)$ in the $k+1$-dimensional system. The Jacobi matrix now takes the form
\be\Eq(equi.7)
H((\bar n,0))_{i,\ell}=\begin{cases} -\bar n_i c(x_i,x_\ell), &\, \text{if }
 i\leq k, \\
0, &\, \text{if } i=k+1, \ell\leq k,\\
b(x_{k+1})-d(x_{k+1}) -\sum_{j=1}^k c(x_{k+1},x_j)\bar n_j, &\text{if }
i=\ell=k+1.
\end{cases}
\ee
\be\Eq(equi.7.1)
H(\bar n^j)=\left(\begin{matrix} 
-\bar n_1c(x_1,x_1)&-\bar n_1c(x_1,x_2)&\dots&\dots&\dots&-\bar n_1c(x_1,x_{k+1}\\
-\bar n_2c(x_2,x_1)&-\bar n_2c(x_2,x_2)&\dots&\dots&\dots&-\bar n_2c(x_2,x_{k+1}\\
\dots&\dots&\dots&\dots&\dots&\\
\dots&\dots&\dots&\dots&\dots&\\
\dots&\dots&\dots&\dots&\dots&\\
-\bar n_kjc(x_\ell,x_1)&-\bar n_kjc(x_k,x_2)&\dots&\dots&\dots&-\bar n_kc(x_\ell,x_{k+1})\\
0&0&\dots&\dots&\dots&f(x_{k+1}, \bx)
\end{matrix}\right),
\ee
where 
\be
\Eq(equi.7.2)
f(x_{k+1}, \bx)\equiv b(x_{k+1})-d(x_{k+1}) -\sum_{j=1}^k c(x_{k+1},x_j)\bar n_j
\ee
is the invasion fitness of the trait $x_{k+1}$ in the presence of the equilibrium $\bar n$.
Again, by elementary linear algebra, the eigenvalues of $H((\bar n,0))$ are those of $H(\bar n)$ and 
$f(x_{k+1}, \bx)$.  Hence, if the invasion fitness is positive, then $H((\bar n,0))$ has a positive eigenvalue
and the fixed point $(\bar n,0)$  is unstable. Thus, if the mutant population 
manages to reach a macroscopic level $\e>0$, then the dynamical system 
will move the population away from the old fixed point towards a new equilibrium.

The discussion above justifies the following formal definition.
\begin{definition} 
For any $k\geq 2$, we say that the distinct traits  $(x_1,\ldots ,x_k)$ 
\emph{coexist} if the system $LV(k,  \bx)$ has a unique non-trivial 
equilibrium $\bar n (\bx)\in (0,\infty)^k$ which is locally strictly stable,  
in the sense  that all eigenvalues of the Jacobian matrix of the system 
$LV(k,  \mathbf x)$ at $\bar{n} (\mathbf x)$ have strictly negative real 
parts.

If the traits $(x_1,\ldots ,x_k)$  coexist, then the \emph{invasion fitness} 
of a mutant trait $ y$ which appears in the resident population is given by the function
\be\label{invasion.1} 
f(y,\bx)\equiv b(y)-d(y) -\sum_{j=1}^k c(y,x_j)\bar n_j.
\ee
\end{definition}

Under relatively mild conditions, one can even assert that a new equilibrium is the unique
 attractor of the $ k+1$-dimensional system.

\begin {proposition} \TH(unique.1) Assume that the competition kernel $c$ 
is reversible with respect to some strictly positive measure $\nu$, and that it 
defines a strictly positive quadratic form with respect to this measure. 
Then there exists a unique vector $\bar n\in \R^{k+1}_+$ such that for any initial condition $n(0)$ with $n_i(0)>0$, for all $i=1,\dots, k+1$, the solution 
of the competitive Lotka-Volterra system converges to $\bar n$.
\end{proposition}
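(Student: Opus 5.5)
The plan is to prove the statement with a Lyapunov function in the spirit of MacArthur and Hofbauer--Sigmund. Write $r_i\equiv b(x_i)-d(x_i)$, $c_{ij}\equiv c(x_i,x_j)$, and let $\nu_i>0$ be the weights with $\nu_i c_{ij}=\nu_j c_{ji}$. Set $A_{ij}\equiv \nu_i c_{ij}$. The hypotheses say precisely that $A$ is symmetric and positive definite. The Lotka--Volterra system then has gradient-like structure. Define
\be
V(n)\equiv \sum_{i=1}^{k+1}\nu_i r_i n_i-\frac12\sum_{i,j=1}^{k+1} A_{ij}n_in_j .
\ee
Then $\del_i V(n)=\nu_i\bigl(r_i-\sum_j c_{ij}n_j\bigr)$, so $\frac{d}{dt}n_i=\nu_i^{-1}n_i\,\del_iV(n)$. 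Along solutions this gives
\be
\frac{d}{dt}V(n(t))=\sum_i \nu_i^{-1}n_i\,(\del_iV)^2\geq 0 ,
\ee
with equality exactly at fixed points of the system (either $n_i=0$ or $\del_iV=0$ for each $i$).

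The first step is to produce the candidate $\bar n$. $V$ is strictly concave on $\R_+^{k+1}$ because $A$ is positive definite, and $V(n)\to-\infty$ as $|n|\to\infty$. Hence $V$ has a unique maximiser $\bar n$ on the closed convex set $\R_+^{k+1}$. It is characterised by the KKT conditions: $\bar n_i\geq0$, $\del_iV(\bar n)\le 0$, and $\bar n_i\del_iV(\bar n)=0$. In the language of the chapter, $\bar n$ is a fixed point of $LV(k+1,\bx)$ at which every absent trait has non-positive invasion fitness. Uniqueness follows from strict concavity, since any KKT point of a concave function is a global maximiser. This $\bar n$ is the vector claimed in the proposition.

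The second step is convergence. Solutions are bounded: this is the argument of \eqv(tot.1)--\eqv(tot.2), together with the fact that $c_{ii}>0$ gives logistic bounds. Solutions also stay in the open orthant when started there. LaSalle's invariance principle applied to $V$ then shows that the $\omega$-limit set lies in the set of fixed points on the level set $\{V=\sup_t V(n(t))\}$. The main obstacle is excluding convergence to a boundary fixed point that is not $\bar n$.

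To handle that obstacle I would add a relative-entropy term. Consider
\be
W(n)\equiv\sum_i \nu_i\bigl(n_i-\bar n_i-\bar n_i\ln (n_i/\bar n_i)\bigr),
\ee
with the convention that terms with $\bar n_i=0$ reduce to $\nu_i n_i$. Using $r_i=\sum_j c_{ij}\bar n_j-\del_iV(\bar n)$, one computes
\be
\frac{d}{dt}W=-(n-\bar n)^{T}A(n-\bar n)+\sum_{i:\bar n_i=0}\nu_i n_i\,\del_iV(\bar n)\leq -(n-\bar n)^TA(n-\bar n).
\ee
Since $A$ is positive definite, $W$ is a strict Lyapunov function on the open orthant. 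Therefore $n(t)\to\bar n$ for every initial condition with all $n_i(0)>0$, and this argument by itself already yields convergence. The only delicate points are checking the sign of the boundary term, which uses the KKT inequalities $\del_iV(\bar n)\le0$, and confirming that $W$ is finite and proper on the interior of the orthant.
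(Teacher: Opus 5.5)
Your argument is correct, but it follows a different route from the paper at the decisive step.

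\textbf{The paper's route.} The paper works only with the quadratic function $L=-V$. It shows $\frac{d}{dt}L\le 0$ and notes that the unconstrained minimiser of the convex function $L$ is the fixed point when it lies in the orthant. Otherwise it argues heuristically that trajectories ``hit'' the boundary, get stuck in a face, and keep lowering $L$ until the constrained minimiser is reached, and it defers to Jabin--Raoul for the details. As written, that step is not rigorous. Interior solutions never reach the boundary in finite time. Monotonicity of $L$ alone also does not rule out convergence to a boundary fixed point other than the constrained minimiser, which is exactly the obstacle you flagged.

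\textbf{Your route and what it buys.} You use $V$ only to \emph{locate} $\bar n$, as the unique KKT point, i.e.\ the unique saturated equilibrium. You then prove convergence with the Volterra--Goh relative-entropy function $W$. Its derivative is bounded by $-(n-\bar n)^TA(n-\bar n)$ plus a boundary term, and that term is non-positive precisely because of the KKT inequalities. This gives a complete proof. Moreover, the convergence step only needs $A+A^T$ to be positive definite, so reversibility is used only to obtain $\bar n$ variationally. The paper's approach, in turn, conveys the intuitive picture of the limit as the constrained minimiser of a single potential.

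\textbf{Bookkeeping fixes.}
\begin{itemize}
\item The identity should read $r_i=\sum_j c_{ij}\bar n_j+\nu_i^{-1}\partial_iV(\bar n)$, not $r_i=\sum_j c_{ij}\bar n_j-\partial_iV(\bar n)$. The boundary term in $\frac{d}{dt}W$ is then $\sum_{i:\bar n_i=0} n_i\,\partial_iV(\bar n)$. It has no factor $\nu_i$, and its sign is the same as in your display, so the conclusion is unaffected.
\item To pass from $\frac{d}{dt}W\le -\lambda_{\min}(A)\,|n-\bar n|^2$ to $n(t)\to\bar n$, add one line. $W\ge 0$ is non-increasing and coercive, so the trajectory is bounded. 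Hence $|n(t)-\bar n|^2$ is integrable on $[0,\infty)$ and uniformly continuous, so it tends to zero.
\item For boundedness, the total-mass Gronwall bound you cite only gives exponential growth control. What you actually need is the logistic bound from $c_{ij}\ge 0$, $c_{ii}>0$, or the coercivity of $W$.
\end{itemize}
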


\begin{proof}
This result can be found in \cite{JabinRaoul}. The proof is based on the fact 
that under the assumption made, there exists a convex Lyapunov \index{Lyapunov function}
function, namely
\be\Eq(unique.2)
L(n)=\frac 12 \sum_{i,j=1}^{k+1} \nu(x_i)c(x_i,x_j) n_in_j-\sum_{i=1}^{k+1} \nu(x_i)r(x_i)n_i.
\ee 
\nomenclature{L(n)}{Lypounov function}
To see this, 
note that 
\bea
\frac d{dt} L(n(t))&=&
\sum_{\ell} \frac {\del}{\del n_\ell} L(z(t))\frac {d}{dt}n_\ell(t)\\
=&&\hspace{-3mm}\sum_{\ell} \left(\frac 12\sum_{j}\left( \nu(x_\ell)c(x_\ell,x_j)+\nu(x_j)c(x_j,x_\ell)\right)n_j(t)
-\nu(x_\ell)r(x_\ell)\right)\frac {d}{dt}n_\ell(t)\nonumber\\\nonumber
=&&\hspace{-3mm}\sum_{\ell} \nu(x_\ell)\left(\sum_{j}c(x_\ell,x_j)n_j(t) -r(x_\ell)\right)
n_\ell(t)\left(r(x_\ell)-\sum_j c(x_\ell,x_j)n_j(t)\right)\nonumber\\
=&&\hspace{-3mm}-\sum_{\ell} \nu(x_\ell)n_\ell(t)\left(\sum_{j}c(x_\ell,x_j)n_j(t) -r(x_\ell)\right)^2.
\eea
Here, we used reversibility in the passage to the third line. 
The last expression is clearly negative, and convexity is ensured by the requirement that $\nu c$ defines 
a strictly positive definite quadratic form. The unique critical point and hence the minimum of $L$ is taken
for $n$ such that $\sum_{j}c(x_\ell,x_j)n_j(t) -r(x_\ell)=0$, for all $\ell=1,\dots, k+1$. 
If this solution lies in 
$\R^{k+1}_+$, then this is the unique fixed point $\bar n$ of the LV system.  Otherwise, if we start with initial conditions $n_i(0)>0$ for all $i$, the solutions move towards this fixpoint until one (or more) components hit
zero, i.e. the boundary of the positive quadrant. The solutions then get stuck in this boundary, 
but continue to 
lower $L$ and move towards the unique minimum of $L$ within this hyperplane (the restriction of a convex 
function to a linear submanifold is still convex). This continues until the unique relative global minimum of $L$
 in 
$\R_+^{k+1}$ is reached. 
 In this case, the new equilibrium will 
have fewer than $k+1$ non-zero components. 
\end{proof}

\begin{remark}
For more detailed expositions of Lyapunov's stability theory, see \cite{arnold, khalil}.
\end{remark}

\part{Limit processes}

\chapter{Small mutation limit in the deterministic system} \label{chapter4}

\begin{chapquote}
{Earl Nightingale, \emph{}}
{Never give up on a dream just because of the time it will take to accomplish it. The time will pass anyway.}
\end{chapquote}

In this chapter, we start to discuss the effect of scaling the mutation rate to zero.  This is biologically reasonable 
if we think only of advantageous mutations that change the phenotype to a fitter one. 
There are, of course, also non-coding mutations that may be much more frequent, but do not change the 
phenotype. We will ignore those here. We begin with the deterministic Lotka-Volterra equations with mutation and take the mutation rates to zero. Surprisingly, if we rescale time to infinity appropriately, one can obtain a non-trivial limit. 

\section{Convergence to a deterministic jump process}\label{section40}
It is easy to see (using Gronwall's lemma) that on finite time intervals, solutions converge, as $\mu\downarrow 0$, to those of the system with $\mu=0$.  
\begin{lemma} \TH(uto0.1)
Assume that  $\xi_0=\sum_{i=1}^N n_i(0)\delta_{x_i}$. Then the solutions to the deterministic equation
with $\mu>0$, $\xi^\mu_t$, converges weakly, uniformly in $t\leq T<\infty$, to 
\be\Eq(uto0.2)
\xi_t=\sum_{i=1}^N n_i(t)\delta_{x_i},
\ee
where $n(t)$ is the unique solution of \eqv(LV-System) with initial condition $n(0)$.
\end{lemma}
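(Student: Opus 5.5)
The plan is a Gronwall comparison in total variation, reusing the estimates from the proof of Lemma \thv(lln.7). Let $\xi^\mu$ be the unique solution of \eqv(lln.1) with $p$ replaced by $\mu p$ and initial condition $\xi_0$. Let $\xi_t=\sum_i n_i(t)\delta_{x_i}$, where $n$ solves \eqv(LV-System). This $\xi$ is exactly the solution of \eqv(lln.1) with $\mu=0$: without mutation no mass is created off $\{x_1,\dots,x_N\}$, so the weak equation with $h=\1_{x_i}$ reduces to \eqv(LV-System).

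First, uniform mass bounds. By \eqv(tot.1)--\eqv(tot.2), which do not depend on $\mu$ since the mutation term conserves mass,
\[
\langle\xi^\mu_t,\1\rangle\leq \langle\xi_0,\1\rangle \eee^{\bar r t}\equiv C_T \quad\text{for } t\leq T,
\]
uniformly in $\mu\in[0,1]$. The same bound holds for $\xi_t$.

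Next, the difference equation. For $h$ with $\|h\|_\infty\leq 1$, subtract the weak equations. Write $\Delta_s=\xi^\mu_s-\xi_s$. As in \eqv(tot.2.1), the linear birth/death terms and the quadratic competition term, split bilinearly, give contributions bounded by $(\bar b+\bar d+2\bar c C_T)\int_0^t\|\Delta_s\|_{TV}ds$. The only new piece is the mutation term present only in $\xi^\mu$:
\[
\mu\int_0^t ds\int \xi^\mu_s(dx)\,b(x)p(x)\Bigl(\int m(x,dy)h(y)-h(x)\Bigr),
\]
which is at most $2\mu\bar b\, C_T\, t$ in absolute value. Hence
\[
\|\Delta_t\|_{TV}\leq 2\mu\bar b C_T T+(\bar b+\bar d+2\bar c C_T)\int_0^t\|\Delta_s\|_{TV}ds .
\]
Gronwall then gives
\[
\sup_{t\leq T}\|\xi^\mu_t-\xi_t\|_{TV}\leq 2\mu \bar b C_T T\,\eee^{(\bar b+\bar d+2\bar c C_T)T}\to0 .
\]
Convergence in total variation implies weak convergence, uniformly in $t\leq T$.

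The only point needing care is the bookkeeping of the quadratic term, that is, writing
\[
\xi^\mu_s\otimes\xi^\mu_s-\xi_s\otimes\xi_s=\Delta_s\otimes\xi^\mu_s+\xi_s\otimes\Delta_s
\]
so that each factor is controlled by the $\mu$-uniform mass bound $C_T$. This is exactly as in Lemma \thv(lln.7), so I expect no real obstacle. Note also that $\xi^\mu_t$ charges traits outside $\{x_1,\dots,x_N\}$, but only with total mass $O(\mu)$, and this is already contained in the total variation estimate.
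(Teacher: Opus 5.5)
Your argument is correct and is essentially the paper's. The paper only asserts that the lemma follows from Gronwall's lemma: the total-mass bound does not depend on $\mu$, since mutation conserves mass, and the mutation term enters as an $O(\mu)$ forcing. Your total-variation estimate, modelled on Lemma~\ref{lln.7}, fills in exactly these details and even yields the quantitative rate $\sup_{t\le T}\|\xi^\mu_t-\xi_t\|_{TV}=O(\mu)$.
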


\index{Lotka-Volterra system}

Note that the Lemma holds whatever the state space $\XX$ is. 
The same is \emph{not} true if time tends to 
infinity as $\mu\downarrow 0$, as we will see shortly. 

We have seen that for finite time horizons, the limit of the deterministic equations as $\mu\downarrow 0$
is a mutation-free ecological equation. The reason for this is that the growth of solutions is at most 
exponential in time, and so anything seeded by a mutation term is proportional to $\mu$ and will vanish in 
the limit. This is no longer true if we consider longer time scales. Namely, if we wait a time of order $\ln 1/\mu$, 
a seed of order $\mu$ can grow to the size of order one. This was first observed in  \cite{BovWang2013} and elaborated in \cite{kraut2018}. In view of the scalings we shall encounter later, then $\mu$ is taken to be a 
function of the carrying capacity $K$, this corresponds to a situation when the mutation rate is 
\emph{large} (even though it tends to zero). Still, the adaptive dynamics paradigm that is maintained in that 
ecology happens at a time scale $1$ while the "mutation" effects require the divergent time scale $\ln (1/\mu)$.
In this regime, we want to make contact with the theory of \emph{ adaptive walks} and therefore chose the \index{adaptive walks} \index{small mutation limit}
base space to be a (finite) graph, mostly the hypercube $\HH_N\equiv \{0,1\}^N$. In the spirit of Maynard Smith,
we assume that mutations can only go along edges of the graph, to wit, towards nearest neighbours on the
hypercube.  For $x,y\in\HH_N$, we will write $|x-y|$ for the graph distance between $x$ and $y$.

To understand this, consider an initial \index{Lotka-Volterra equations!with mutation}
condition that is monomorphic and the simplest case the $\XX$ is just the set $\{1,2\}$. Then the deterministic system can be reduced to the two-dimensional Lotka-Volterra system with mutation,  
\bea\Eq(simex.1)
\frac {dn_1(t)}{dt}&=& n_1(t)(r_1-c(1,1)n_1(t)-c(1,2)n_2(t))-\mu m(1,2)n_1(t)+\mu m(2,1)n_2(t),\nonumber
\\
\frac {dn_2(t)}{dt}&=& n_2(t)(r_2-c(2,2)n_2(t)-c(2,1)n_1(t))-\mu m(2,1)n_2(t)+\mu m(1,2)n_1(t).\nonumber\\
\eea
To simplify the notation we set $m(i,j)=p(x_i)m(x_i,x_j)$.
Assume that $n_1(0)=\bar n_1$is strictly positive   and $n_2(0)=0$. 
Assume further that the invasion fitness of type two is positive, i.e. $r_2-c(2,1)\bar n_1>0$.
Then, at time $0$, we have 
\bea\Eq(simex.2)
\frac {dn_1(0)}{dt}&=& \bar n_1(r_1-c(1,1)\bar n_1)-\mu m(1,2)\bar n_1= -\mu m(1,2) \bar n_1,\nonumber
\\
\frac {dn_2(0)}{dt}&=& +\mu m(1,2)\bar n_1.
\eea
For $\mu$ small, this implies that at time $t=1$, 
\be\Eq(simex.3)
n_1(1) \sim \bar n_1(1-\mu m(1,2)),\quad n_2(1) \sim \mu m(1,2)\bar n_1.
\ee
Hence,   as long as $n_2(t)$ is small compared to $\bar n_1$,
\be\Eq(simex.4)
\frac {dn_2(t)}{dt}\geq     n_2(t)(r_2-c(2,1)\bar n_1).
 \ee
 and hence exponential growth at rate $(r_2-c(2,1)\bar n_1)\equiv R>0$ will set in, i.e 
 for $t>1$ and as long as $n_2(t)$ remains small compared to one, 
 \be\Eq(simex.5)
 n_2(t)\sim \mu m(1,2)\bar n_1 \eee^{(t-1)R},
 \ee 
 and so by time $t\sim \frac  1R \ln (um(1,2)\bar n_1 )$, $n_2$ will have reached a level $O(1)$ 
 that is independent of $\mu$. Then, for vanishing $\mu$, the system will evolve over times of order one, like 
 the \index{Lotka-Volterra system!competitive}
 mutation-free ($\mu=0$) competitive Lotka-Volterra system and approach its unique fixed point 
 $(0,\bar n_2)$. Thus, defining 
 \be\Eq(simex.7)
 Z^\mu(t)\equiv (n^\mu_1(t |\ln \mu|), n_2^\mu(t|\ln \mu|)), 
 \ee
 we see that  
 \be
 \Eq(simex.6)
 \lim_{\mu\downarrow 0} Z^\mu(t)= 
 \bar n_1 \1_{0\leq t<1/R}+\bar n_2\1_{t\geq 1/R}.
 \ee
 So, interestingly, on the time scale $\ln (1/\mu)$, the solution of the deterministic Lotka-Volterra system with 
 mutations converges to a \emph{deterministic jump process}.  \index{jump process!deterministic}
%
\index{polymorphic evolution sequence}
 What we observed in this simple example is generic and gives rise to the first example of a \emph{polymorphic evolution sequence} (PES), by which we mean a jump process between \index{polymorphic evolution sequence}
 equilibria of a sequence of competitive Lotka-Volterra systems. This can be described as follows.
 
 Let  $\XX\equiv \HH_N$.
 The system of differential equations we consider reads (for simplicity, we assume a constant
 mutation rate $\mu$. i.e. $p(x)\equiv 1$)
\bea\Eq(DE)\nonumber
&&\frac{d{n}^\mu_x(t)}{dt}=\left[r(x)-\sum_{y\in\HH_N}c(x,y)n^\mu_y(t)\right]n^\mu_x(t)+\mu\sum_{y\sim x}b(y)m(y,x)n^\mu_y(t)-\mu b(x)n^\mu_x(t),\\ &&\; x\in \HH_N.
\eea

  Let $\bx\subset \HH_N $ be a  subset of cardinality $m$, such that  
 the mutation free sub-system of \eqv(DE),
 $LV(m,\bx)$, namely
\be\Eq(DE.0)
\frac {d{n}_x(t)}{dt}=\left[r(x)-\sum_{y\in\HH_N}c(x,y)n_y(t)\right]n_x(t), \; x\in \bx,
\ee
has an equilibrium  $\bar n$, such that for all component of $\bar n$ are strictly positive.
Note that in this case, the extension of $\bar n$ to $\HH_N$, which is obtained by setting all other components 
equal to zero is an equilibrium of the full system \eqv(DE) with $\mu\equiv 0$. 
 
 Let us assume that we start with such a (pseudo) equilibrium as an initial condition. Then the following will happen.

 \noindent\textbf{Step 1:} 
  At time $1$, all the populations in $\bx$ remain very close to their equilibrium values.
   At all points $x\not\in \bx$, exponential growth can not yet substantially increase the population size.
   Thus, all populations at these sites are either of size 
 zero or of order $\mu^{\a_x}$ with $\a_x\in \N$. The populations at $x\in \bx$ remain close to their equilibrium 
 values. This remains true as long as none of the resident populations has reached a level $1\gg\e>0$
 (independent of $\mu$). 
 
  \noindent\textbf{Step 2:}  The populations at the sites $x\not \in \bx$ grow exponentially with rate 
  given by their invasion fitness with respect to the resident equilibrium until a time $T_{\e,1}$,
  which is the first time that one of the non-resident populations reaches the value $\e$. 
  Population growth also takes into account mutations. The system is, however, well approximated by a linear system.
  $T_{\e,1}$ is of order $\ln (1/\mu)$. 
  
    \noindent\textbf{Step 3:} At time $T_{\e,1}$, assume that the set  $J$ of sites $y$ for which at this time 
    $\lim_{\mu\downarrow 0} n_y(T_{\e,1})\neq 0$ is finite. (typically, this will be $\bx$ plus one new site.) 
    Then, in time of order one, the system will approach the equilibrium of $LV(|J|,J)$. Let $\bx'\subset J$ 
    be the subset on which this equilibrium is strictly positive. All sites outside $\bx'$ have a population size
    of some order $\mu^\a$. 
    
     \noindent\textbf{Step 4:} Restart as in Step 2 and iterate. Note that the initial condition is now not precisely
     of the form we started with, i.e. there are no components that have strictly zero populations. But this is 
     not important.
       \nomenclature{$LVE(\bx)$}{Lotka-Volterra equilibrium with types $\bx$}
       
       We now give a precise formulation of the results.
     
\begin{definition}
For a subset $\mathbf{x}\subset\HH_N$ we define the set of \emph{Lotka-Volterra equilibria} by 
\be
\text{LVE}(\mathbf{x}):=\left\{z\in(\R_{\geq0})^\mathbf{x}:\forall\ x\in\mathbf{x}:\ \Big[r(x)-\sum_{y\in\mathbf{x}}c(x,y)n_y\Big]n_x=0\right\}.\label{equil}
\ee
Moreover, define
\be
\text{LVE}_+(\mathbf{x}):=\left\{z\in(\R_{>0})^\mathbf{x}:\forall\ x\in\mathbf{x}:\ \Big[r(x)-\sum_{y\in\mathbf{x}}c(x,y)n_y\Big]n_x=0\right\}.\label{equil+}
\ee

 If $\text{LVE}_+(\mathbf{x})$ contains exactly one element, we denote it by
$\bar n_\mathbf{x}$, the \textit{equilibrium} of a population of coexisting traits $\mathbf{x}$.
\end{definition}
\begin{remark}
If $\text{LVE}_+(\mathbf{x})=\{\bar n_\mathbf{x}\}$, this implies $r(x)>0$ for all $x\in\mathbf{x}$. In the case where $\mathbf{x}=\{x\}$, we obtain $\bar n_x=\frac{r(x)}{c(x,x)}$. 
\end{remark}

\index{Lotka-Volterra!equilibrium}
We want to think of the limiting process, as $\mu\downarrow 0$, as a jump process on the set of LVEs. However, as long as $\mu>0$, a population will essentially never be a $LVE(\bx)$ with no population outside 
$\bx$. Therefore, we need the notion of a neighbourhood of such an equilibrium. It is clear that what happens on the time-scale $\ln (1/\mu)$ depends only on the exponent of $\mu$, more precisely, for 
any $n^\mu_y$ that tends to zero  with $\mu$, we keep track only of 
$\l_y\equiv \lim_{\mu\downarrow 0} \frac 1{\ln \mu} \ln n^\mu_y$.

\begin{definition}
Let $\bx\subset \HH_N$ and $\text{LVE}_+(\bx)\equiv \{\bar n_\bx\}$.
For $\eta>0$ and $\bar c\equiv \{c_y,C_y\}_{y\in \HH_N}$,  a sequence $n^\mu\in \MM(\HH_N)$ is said to 
satisfy $\text{IC}(\bx,\eta,\bar{c})$, or $n^\mu\in \text{IC}(\bx,\eta,\bar{c})$, if there exists $\l_y\geq 0$, such that,
for all $\mu$ small enough
\begin{align}
n^\mu_y\in[c_y\mu^{\lambda_y},C_y\mu^{\lambda_y}],\label{initial}
\end{align}
where
\begin{align}
\forall~y\in\bx:&\ \lambda_y=0,\ \bar n_\bx(y)-\eta\frac{\bar{c}}{|\bx|}\leq c_y,C_y\leq\bar n_\bx(y)+\eta\frac{\bar{c}}{|\bx|},\label{initialx}\\
\forall~y\in\HH_N\backslash\bx:&\ \lambda_y>0,\ 0\leq c_y,C_y<\infty\quad\text{or}\\
&\ \lambda_y=0,\ 0\leq c_y,C_y\leq\frac{\eta}{3},\ f(y,\bx)<0.
\end{align}
If $n^\mu_y(0)\equiv0$, we set $\lambda_y>\max_{z\in\HH_N: n^\mu_0(z)>0}\lambda_z+N$.
\end{definition}

The point of this definition is the following. If $n_0^\mu\in  \text{IC}(\bx,\eta,\bar{c})$,
and we remove all populations that have size of order $o(1)$, then under the mutation free 
Lotka-Volterra 
system, the population would approach the equilibrium $\text{LVE}_+(\bx)$. As $\mu\downarrow 0$, 
this is still true for $\mu$-independent times, that is, the population will be arbitrarily close to this 
state before a new population outside $\bx$ can grow. On the other hand, after the first population at such a 
site reaches a level of $O(1)$, the entire system will reach in finite time the set $\text{IC}(\bx',\eta,\bar{c})$,
for a new $\bx'$. 

Given an initial condition $n^\mu(0)\in  \text{IC}(\bx^0,\eta,\bar{c})$, we define
\begin{align}\Eq(ro.1)
\rho^0_y:=\min_{x\in\HH_N}[\lambda_x+|x-y|]
\end{align}

The point is that, if initially the population verifies $n_x(0)\sim \mu^{\l_x}$, for all $x$,  then in time one, 
due to mutations (and recall that growth does not have a significant effect in time $1$), the
population at $y$ will be (up to constant factors)
\be
n_y(1) \sim n_y(0) + \sum_{x\neq y} \mu^{|x-y|} n_x(0)  =
\mu^{\l_y}+\sum_{x\neq y} \mu^{|x-y|} \mu^{\l_x} \sim \mu^{ \rho^0_y}.
\ee
Here, we use the simple fact that 
\be
\lim_{\mu\downarrow 0} \frac 1{\ln \mu} \ln \left(\sum_x C_x\mu^{\l_x}\right)=\min_x \l_x.
\ee
  After that, in a time of order 
$\ln (1/\mu)$, the population will evolve by exponential growth until the first new population reaches a finite level, 
i.e. an exponent $0$. 

We will describe the evolution of the population in terms of the exponents of $\mu$ at each type. 
We define the following recursion:
\begin{align}
y^i_*&:=\arg\hspace{-5pt}\min_{\substack{y\in\HH_N:\\f(y,\mathbf{x}^{i-1})>0}}\frac{\rho^{i-1}_y}{f(y,\mathbf{x}^{i-1})},\\
T_i&:=T_{i-1}+\min_{\substack{y\in\HH_N:\\f(y,\mathbf{x}^{i-1})>0}}\frac{\rho^{i-1}_y}{f(y,\mathbf{x}^{i-1})},\\
\rho^i_y&:=\min_{z\in\HH_N}[\rho^{i-1}_z+|z-y|-(T_i-T_{i-1})f(z,\mathbf{x}^{i-1})],\\
\bx^i&=\{y\in \bx^{i-1}\cup y^i_*: \bar n_{ \bx^{i-1}\cup y^i_*}(y)>0\}.
\end{align}
with $\rho^0_y$ given by \eqv(ro.1) and $T_0=0$.

 Let us discuss the meaning of the terms appearing.
$ \frac{\rho^{i-1}_y}{f(y,\mathbf{x}^{i-1})}$ is the time (measured in units of $\ln (1/\mu)$ it takes for the population at $y$ that has initial size $\mu^{\rho^{i-1}_y}$ that grows with positive  rate $f(y,\mathbf{x}^{i-1})$
to reach a size of order $1$. $y^i_*$ is then the site where this happens first, and $T_i$ is the absolute time when this happens.  The formula for the new initial conditions, $\rho^i_y$, is tricky.  It takes into account that
three possible sources could dominate the new initial condition at $y$:
First,  the population at $y$ could have just continued to grow. This gives 
$\rho^i_y=\rho^{i-1}_y -(T_i-T_{i-1})f(y,\bx^{i-1})$. 
Second, it could come from mutants from the large populations in $z\in \bx^{i-1}$. This gives
$\rho^i_y=0+|z-y|$.  Finally, it could come from the mutants that have grown at any other site $z$, which have grown 
over the last period. This gives
$\rho^i_y=\rho^{i-1}_z+ |z-y|-(T_{i}-T_{i-1})f(z,\bx^{i-1})$.  Finally, $\bx^i$ is the set of types 
that have a positive population in the (unique) equilibrium on $\bx^{i-1}\cup y^i_*$. 

This recursion is well defined as long as there is a unique $y^i_*$ in each step. The recursion will 
stop if $f(y,\bx^{i-1})\leq0$ for all $y\in\HH_N\backslash\bx^{i-1}$. In that case, the population has reached \index{evolutionary stable condition}
an evolutionary stable condition at step $i-1$. Otherwise, uniqueness is the generic situation. What happens when there are several minimisers is slightly tricky, and to avoid
technicalities, we exclude this possibility by making the assumption:

\begin{itemize}
\item[\textbf{(C)}]
For every $i\geq 1$, there is either a unique minimiser
\begin{align}
y^i_*=\arg\hspace{-5pt}\min_{\substack{y\in\HH_N\\f(y,\bx^{i-1})>0}}\frac{\rho^{i-1}_y}{f(y,\bx^{i-1})}\label{unqmin}
\end{align}
or $f(y,\bx^{i-1})\leq0$ for all $y\in\HH_N\backslash\bx^{i-1}$.
\end{itemize}

%
%
%
%
%
%

Moreover, to guarantee that the conditions of $\text{IC}(\bx^i,\eta,\bar{c})$ are satisfied, we assume that the former resident traits of $\bx^{i-1}$, which are not part of the new resident traits $\bx^i$, have a negative invasion fitness.
\begin{itemize}
\item[\textbf{(D)}]
For every $i\geq1$ and for every $y\in(\bx^{i-1}\cup y^i_*)\backslash\bx^i$, $f(y,\bx^i)<0$.
\end{itemize}

The final result can be formulated as follows. 

     \begin{theorem} [\cite{kraut2018}] \TH(det-pes.1) 
Consider the system of differential equations (\ref{DE}) and assume (A), (B), (C), and (D). Let $n^\mu_0\in\text{IC}(\mathbf{x}^0,\eta,\bar{c})$, for $\eta$ small enough.
Let $\mathbf{x}^i$ be the support of the equilibrium state of the Lotka-Volterra system involving $\mathbf{x}^{i-1}\cup y^i_*$ and set $T_i:=\infty$, as soon as there exists no $y\in\HH_N$ such that $f(y,\mathbf{x}^{i-1})>0$. Then, for every $t\notin\{T_i,i\geq0\}$,
\begin{align}
\lim_{\mu\to0}\xi^\mu(t\ln(1/\mu))=\sum_{i=0}^\infty \1_{T_i\leq t< T_{i+1}}\sum_{x\in\mathbf{x}^i}\delta_{x}\bar n_{\mathbf{x}^i}(x).
\end{align}
\end{theorem}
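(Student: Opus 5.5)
The proof goes by induction over the phases $[T_{i-1},T_i)$. The inductive claim is that at a time shortly after $T_{i-1}\ln(1/\mu)$ the solution lies in $\text{IC}(\bx^{i-1},\eta,\bar c)$, with exponents equal to $\rho^{i-1}_y$ up to an error $\d$ that can be made arbitrarily small. Each phase is split into three parts, following Steps 1--4 above:
\begin{itemize}
\item[(a)] a short ecological relaxation of duration $O(1)$, during which mutations spread the exponents as in \eqv(ro.1);
\item[(b)] a long \emph{linear} phase of length $\approx (T_i-T_{i-1})\ln(1/\mu)$, during which all residents stay near $\bar n_{\bx^{i-1}}$ and all non-residents stay below $\e$;
\item[(c)] a further $O(1)$ ecological phase after $y^i_*$ reaches level $\e$, during which the system relaxes to $\bar n_{\bx^i}$.
\end{itemize}
Since phases (a) and (c) have length $o(\ln(1/\mu))$, they disappear on the time scale $t\ln(1/\mu)$. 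This yields the jump limit for every $t\notin\{T_i\}$.

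For phase (b), I would set $\l_y(s)\equiv \ln n^\mu_y(s\ln(1/\mu))/\ln\mu$ and prove two-sided bounds of the form $\mu^{\l_y(s)\pm\d}$. The residents satisfy a perturbed version of $LV(|\bx^{i-1}|,\bx^{i-1})$. The perturbation comes from mutation terms of size $O(\mu)$ and from competition with the non-residents, which is $O(\e)$. Since $\bar n_{\bx^{i-1}}$ is strictly stable, a linearisation/Lyapunov argument keeps the residents within $O(\e+\mu)$ of equilibrium for as long as all non-residents stay below $\e$. Each non-resident $y$ then has per-capita growth rate $f(y,\bx^{i-1})+O(\e)$. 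It also receives mutational input $\mu\sum_{z\sim y} b(z)m(z,y)n_z$ and loses mass at rate $\mu b(y)$. This is a linear, cooperative (Metzler) system, so comparison theorems for cooperative ODEs give upper and lower solutions.

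Solving this linear system by a variation-of-constants argument along paths in $\HH_N$ shows the following. The exponent at $y$ at rescaled time $s$ is the minimum, over $z$, of $\rho^{i-1}_z+|z-y|-sf(z,\bx^{i-1})$, capped below by $\rho^{i-1}_y-sf(y,\bx^{i-1})$. Mutational inflow along a path of length $k$ costs $\mu^k$ and takes only $O(1)$ time. Each site then grows at its own rate. The formula for $\rho^i_y$ in the recursion is exactly this minimum evaluated at $s=T_i-T_{i-1}$. The first exit time, when some exponent hits $0$, is then $T_i-T_{i-1}$, and by assumption (C) the site that exits is the unique $y^i_*$. A small gap in the minimum guarantees that no other non-resident reaches $\e$ at the same time.

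For phase (c), at exit the populations on $\bx^{i-1}\cup\{y^i_*\}$ are macroscopic, with $y^i_*$ at level $\e$, and everything else is $O(\mu^{\d'})$. By Lemma \thv(uto0.1), on finite time horizons the solution stays close to the mutation-free $LV(|\bx^{i-1}|+1,\bx^{i-1}\cup\{y^i_*\})$. This system converges to $\bar n_{\bx^i}$, which is part of the assumed equilibrium structure (A), (B); alternatively, Proposition \thv(unique.1) applies under its assumptions. Traits in $(\bx^{i-1}\cup y^i_*)\setminus\bx^i$ decay towards $0$. By (D) they then have strictly negative fitness, so afterwards they remain of size at most $\eta/3$ or drop to positive exponents, which matches $\text{IC}(\bx^i,\eta,\bar c)$. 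During this $O(1)$ time the small exponents change only by $O(1/\ln(1/\mu))$, so $\rho^i$ is inherited up to $\d$, and the induction closes. Finally, let $\e,\d,\eta\downarrow0$ after $\mu\downarrow0$.

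The main obstacle is phase (b). It requires uniform control, over a time of length $\ln(1/\mu)$, of the coupling between residents and small populations. Errors of relative size $O(\e)$ in the growth rates accumulate over this long time into exponent errors of order $O(\e)$. These errors must be absorbed by taking $\e\to0$ after $\mu\to0$, and this ordering of limits has to be compatible with the strict separation in (C). I would handle this with explicit exponential sub- and super-solutions, $\mu^{\rho\pm\d}e^{(f\pm C\e)t}$, propagated along the finitely many paths of the hypercube.
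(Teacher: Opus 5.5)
Your plan is correct and is essentially the paper's own route. The notes justify the theorem only through the heuristic Steps 1--4 (relaxation with mutational spreading, linear exponential growth until the first non-resident reaches $\e$, Lotka--Volterra relaxation to $\bar n_{\bx^i}$, then iteration) and defer the details to \cite{kraut2018}; your phases (a)--(c), with uniform-in-time control of the strictly stable resident equilibrium and exponential sub/super-solutions along paths, flesh out exactly that scheme. One small slip: $\rho^{i-1}_y-sf(y,\bx^{i-1})$ is just the $z=y$ term of your minimum, so it bounds the exponent from above, not below, and the path-minimum formula also relies on $\rho_y\le\rho_z+|z-y|$, which holds after phase (a) and is preserved by the recursion.
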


 This result should be plausible from the discussion above. A detailed proof is given in \cite{kraut2018}.

\section{Equal competition}\label{section41}
A special case arises if we assume equal competition between all types.
 In this case, one can simplify the description of the limit process.

We introduce the additional assumption
\begin{itemize}
\item[\textbf{(E)}] For every $x,y\in\HH_N$, $c(x,y)\equiv c>0$.
\end{itemize}

First, we see that coexistence is getting even more exceptional.
Namely, if the traits $\bx$ coexist, then we must have \index{coexistence}

\be\Eq(co.1)
r(y)-c\sum_{x\in \bx} \bar n_x =0, \quad \forall y\in \bx,
\ee
implying that $r$ is constant on $\bx$. Moreover, in that case, the equilibrium is degenerate, and any distribution of the population on these coexisting traits is an equilibrium. From a biological point of 
view, these coexisting traits are not distinguishable, so we might as well lump them together.
We will therefore assume that for all $x,y\in\HH_N$, $r(x)\neq r(y)$. This then implies that there are 
only monomorphic equilibria.

 The invasion fitness $f(x,y)$ then satisfies \index{invasion fitness}
\begin{align}
f(x,y)=r(x)-c(x,y)\bar n_y(y)=r(x)-r(y),
\end{align}
that is, the invasion fitness of $x$ is positive whenever $x$ has a higher a-priori fitness than $y$. 
This also implies 
\begin{align}
f(x,y)=-f(y,x)~\text{ and }~f(x,y)+f(y,z)=f(x,z).\label{trans}
\end{align}
%
Therefore, $r(x)$ now represents the absolute fitness of the trait $x$. Traits with larger $r$ are accessible from traits with smaller $r$. The competition kernel only regulates the total population size.

The case of constant competition leads to significant simplifications on the level of the Lotka-Volterra
systems. Namely, it becomes equivalent to a \emph{linear} system.  To see this, introduce the 
new functions
\be\Eq(lin.1)
w_x(t)\equiv n_x(t)\exp\left(c\int_0^t \sum_y n_y(s) ds\right). 
\ee
One readily shows that $w$ satisfies the linear equations
\be\Eq(lin.2)
\frac {d}{dt}w_x(t)=(r(x)-\mu b(x))w_x(t) +\mu\sum_{y\neq x}b(y)m(y,x) w_y(t)
\ee 
with the same initial conditions as $z$.
Note that the frequencies of both $n$ and $w$ are the same, since
\be\Eq(lin.4)
\frac{w_x(t)}{\sum_yw_y(t)}=\frac{n_x(t)}{\sum_yn_y(t)}.
\ee


As before, we assume (C) to ensure that there is always a unique mutant that reaches the threshold of order 1 first after an invasion. Starting with only a single trait at its equilibrium size, i.e.\ $\bx^0=\{x^0\}$, this also implies that we avoid coexistence and always maintain a monomorphic resident population. This is because an invading trait has to have a higher rate $r$ than the current resident trait, which prevents a polymorphic Lotka-Volterra equilibrium.

%

The limiting adaptive walk can be described in a simpler way.
\index{adaptive walk}

\begin{theorem}[\cite{kraut2018}]\label{EqComp}
Consider the system of differential equations (\ref{DE}) and assume (A), (C), and (E). Let $n^\mu_0\in\text{IC}(\{x^0\},\eta,\bar{c})$ such that $\lambda_y\geq |y-x^0|$, for all $y\in\HH_N$ and $\eta$ small enough. Define
\begin{align}
x^i&:=\arg\hspace{-5pt}\min_{y\in\HH_N: r(y)>r(x^{i-1})>0}\frac{|y-x^0|-|x^{i-1}-x^0|}{r(y)-r(x^{i-1})},\\
T_i&:=\frac{|x^i-x^0|-|x^{i-1}-x^0|}{r(x^i)-r(x^{i-1})}.\label{argmin}
\end{align}
Set $T_i:=\infty$, as soon as there exists no $y\in\HH_N$ such that $r(y)>r(x^{i-1})>0$.
Then, for every $t\notin\{T_i,i\geq0\}$,
\begin{align}
\lim_{\mu\to0}\xi^\mu({t\ln({1}/{\mu})})=\sum_{i=0}^\infty \1_{T_i\leq t< T_{i+1}}\delta_{x^i}\bar n_{x^i}(x^i).
\end{align}
%
%
\end{theorem}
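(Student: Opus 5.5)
The plan is to deduce Theorem \ref{EqComp} from Theorem \thv(det-pes.1) by showing that, under (E), the recursion for $(y^i_*,T_i,\rho^i,\bx^i)$ reduces to the stated one. Alternatively, one can argue directly using the linearisation \eqv(lin.1)--\eqv(lin.2). I would use the linear system as the main tool, because it makes the exponents transparent.

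\textbf{Step 1: monomorphic residents and assumption (D).} Under (E) with $r(x)\neq r(y)$ for $x\neq y$, every equilibrium is monomorphic. We have $f(y,x)=r(y)-r(x)$. Hence $\bx^i=\{x^i\}$, where $x^i=y^i_*$ is the invader. Condition (D) holds automatically, since the old resident $x^{i-1}$ satisfies $f(x^{i-1},x^i)=-f(x^i,x^{i-1})<0$. Condition (B), which presumably concerns uniqueness of the LV equilibria, also holds trivially for monomorphic equilibria.

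\textbf{Step 2: the exponents telescope.} Let $w$ solve \eqv(lin.2). Up to $\mu$-independent constants, $w_y(t\ln(1/\mu))$ behaves like
\[
\max_{\text{paths}}\mu^{\lambda_{z_0}+\#\text{jumps}}\,\mu^{-\sum_k s_k r(z_k)}.
\]
This is the exponent calculus underlying the $\rho$-recursion, and it can be made rigorous by expanding the linear ODE as a sum over mutation paths on $\HH_N$. Since $n_y/n_{x}=w_y/w_x$, the relevant quantity is the difference of exponents of $w_y$ and $w_{x^{i-1}}$. This difference is exactly $\rho^{i-1}_y$ in the original recursion.

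I then claim by induction that
\[
\rho^{i}_y=|y-x^0|-|x^i-x^0|-\Big(\textstyle\sum\text{ correction}\Big),
\]
and more precisely that the effective exponent of $w_y$ at time $T_i$ equals $|y-x^0|-T_i\,r(y)$ plus a $y$-independent shift. The reasoning is as follows. The hypothesis $\lambda_y\ge |y-x^0|$ ensures that the initial seeding is dominated by mutants of $x^0$. By the triangle inequality, later mutant flows from $z$ never improve on direct seeding from $x^0$, because the growth rates are additive, $f(y,x)+f(x,z)=f(y,z)$ by \eqv(trans). Consequently $w_y(t\ln 1/\mu)\approx \mu^{|y-x^0|}e^{t\ln(1/\mu)\, r(y)}$ for all $y$ with $r(y)$ large enough to matter.

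\textbf{Step 3: identifying the winner.} Type $y$ overtakes the current resident $x^{i-1}$ in frequency when
\[
|y-x^0|-t\,r(y)=|x^{i-1}-x^0|-t\,r(x^{i-1}),
\]
that is, at time $t=(|y-x^0|-|x^{i-1}-x^0|)/(r(y)-r(x^{i-1}))$. The first such $y$, which is unique by (C), gives $x^i$ and the switching time. Frequencies are governed by $w$ through \eqv(lin.4), and the total mass converges to $r(x^i)/c$ on $O(1)$ time scales. Therefore $\xi^\mu$ converges to $\delta_{x^i}\bar n_{x^i}(x^i)$ away from the switching times.

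\textbf{Main obstacle.} The hardest step is the induction in Step 2, namely justifying that no path through intermediate non-resident growth beats direct mutation from $x^0$. This requires $\lambda_y\ge|y-x^0|$ together with the additivity \eqv(trans). I would prove it by showing that $|y-x^0|-t\,r(y)$ is a supersolution of the min-plus recursion for $\rho$, using $|y-x^0|\le|z-x^0|+|z-y|$. The matching lower bound is trivial, because the direct path is one of the admissible paths.
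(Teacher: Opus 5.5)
There is a genuine gap in Step 2. Your overall architecture is sound and is a legitimate alternative to reducing the $\rho$-recursion of Theorem \ref{det-pes.1}: linearise via \eqv(lin.1), expand \eqv(lin.2) over mutation paths, and read off the resident as the site of smallest exponent. But the lemma you single out as the main step is false, and no supersolution argument can prove it. That lemma says direct seeding from $x^0$ dominates, so that $w_y$ has exponent $|y-x^0|-t\,r(y)$ up to a $y$-independent shift. Because \eqv(lin.2) has constant coefficients and nonnegative off-diagonal part, a path's exponent is linear in its holding times. An optimal path therefore spends the whole time at the fittest site it visits. Together with $\lambda_{x'}\ge|x'-x^0|$ this gives
\[
e_y(t):=\lim_{\mu\to0}\frac{\ln w_y(t\ln(1/\mu))}{\ln\mu}=\min_{z\in\HH_N}\bigl[\,|z-x^0|+|z-y|-t\,r(z)\bigr].
\]
Whenever $r(z)>r(y)$, the route ``grow at $z$, then mutate to $y$'' beats direct seeding as soon as $t\,(r(z)-r(y))>|z-x^0|+|z-y|-|y-x^0|$. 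The simplest instance is a neighbour $y$ of $x^0$ with $r(y)<r(x^0)$: it sits at mutation--selection balance, $e_y(t)=1-t\,r(x^0)$. Restricting to ``$y$ with $r(y)$ large enough to matter'' does not save the claim. Take $y$ slightly fitter than $x^0$ and a neighbour $z$ of $y$ with $|z-x^0|=|y-x^0|+1\ge3$ and very large $r(z)$. Then $y$ is fed mainly by mutants from $z$ during a nonempty time window before $z$ invades. The triangle inequality only controls the spatial cost; the time term $-t\,r(z)$ goes the wrong way, and the additivity \eqv(trans) does not help. Concretely, your supersolution property would require $|y-x^0|-T_i\,r(y)\le|z-x^0|+|z-y|-T_i\,r(z)$ for all $y,z$, which fails for such pairs. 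Since the min-plus recursion couples all sites, you cannot restrict the comparison to the ``relevant'' ones.

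What the theorem needs is only where the minimum sits. Taking $y=z$ gives $\min_y e_y(t)=\min_z\bigl[|z-x^0|-t\,r(z)\bigr]$. Let $z^*$ be the unique minimiser of these lines and take $y\neq z^*$. If the minimising $z$ in $e_y(t)$ is $z^*$, then $e_y(t)-e_{z^*}(t)\ge|z^*-y|\ge1$. Otherwise it is another $z$, whose line lies strictly above that of $z^*$. Hence, for $t\notin\{T_i\}$, the resident is the minimiser of the lines $t\mapsto|z-x^0|-t\,r(z)$, with a strict exponent gap to every other site. Successive minimisers of this lower envelope have increasing $r$. The switch away from $x^{i-1}$ happens at the first crossing by a steeper line, i.e.\ at the stated $x^i$ and absolute time $T_i$. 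With this replacement your Step 3 goes through. The strict gap absorbs the $(\ln(1/\mu))^k$ factors of the path expansion, so ``up to $\mu$-independent constants'' should read ``up to subpolynomial factors''. The total mass $\Sigma=\sum_y n_y$ solves $\dot\Sigma=\Sigma(\bar r-c\Sigma)+O(\mu\Sigma)$, where $\bar r$ is the frequency-weighted mean of $r$ and tends to $r(x^i)$. Hence the mass at $x^i$ converges to $\bar n_{x^i}(x^i)=r(x^i)/c$.
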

\begin{remark}
(i) The process described in Theorem \thv(EqComp) can be called a \emph{Trait Substitution Sequence} \index{trait substitution sequence}
(TSS) or an \emph{Adaptive Walk}. 
(ii) The adaptive walk in Theorem \ref{EqComp} continues as long as there is a trait with higher individual fitness, i.e.\ a higher rate $r$. As a result, the walk can cross arbitrarily large valleys in the fitness landscape (defined by $r$) and eventually reaches the global fitness maximum, where it remains.
(If we do not assume that $r$ does not take the same value twice,  this global maximum does not have to be unique. The adaptive walk then reaches a maximum that is closest to $x^0$ in $\HH_N$).

(iii) Every invasion step increases the distance on $\HH_N$ between the resident trait and $x_0$. This can be seen inductively. Consider the $(i+1)^\text{st}$ invasion. $x^i$ was a minimiser of $(|y-x^0|-|x^{i-1}-x^0|)/f(y,x^{i-1})$. If $y$ satisfies $f(y,x^i)>0$, then
\begin{align}
\frac{|y-x^0|-|x^{i-1}-x^0|}{f(y,x^{i-1})}\geq\frac{|x^i-x^0|-|x^{i-1}-x^0|}{f(x^i,x^{i-1})},
\end{align}
and since $f(y,x^{i-1})=f(y,x^i)+f(x^i,x^{i-1})>f(x^i,x^{i-1})$ and $|x^i-x^0|>|x^{i-1}-x^0|$, $|y-x^0|-|x^{i-1}-x^0|>|x^i-x^0|-|x^{i-1}-x^0|$, and hence $|y-x^0|>|x^i-x^0|$.   
  \end{remark}
     
     Theorem \thv(det-pes.1) provides a limiting process that describes the
     evolution under the effects of ecology and mutation/migration as predicted by adaptive dynamics. \index{adaptive dynamics}
     While it captures some interesting features of evolution, several unrealistic features can be seen as drawbacks.
     
     \begin{enumerate}
     \item In time $1$, all sites that can be reached from an initial population through finitely many mutations become populated and immediately start to grow if their invasion fitness is positive. This is
     also true if such a site can only be reached through a sequence of mutations that are all deleterious.
     The time for invasion of such sites is then only  of order $k\ln (1/\mu)$, where $k$ is the minimal number of steps required.
   \item The derivation of this limiting process from the underlying stochastic model is somewhat inconsistent.  Convergence to the deterministic limit as $K\uparrow \infty$ is proven for 
     given $\mu$ for finite time intervals $[0,T]$.  
   \item Even arbitrarily small populations grow exponentially, even though for a fixed $K$ population, 
     if $\mu^\a K<1$, there will be no individual and thus no growth can occur. 
     \end{enumerate} 
     
     Effectively, this means taking the limit $K\uparrow\infty$ first and $\mu\downarrow 0$ later,
     we are still dealing with ``very large" mutation rates, and this leads to unwanted effects.
     A more realistic treatment thus requires taking a joint limit when $K\uparrow \infty$ and $\mu=\mu_K\downarrow 0$ are taken simultaneously.

\section{Convergence for a limited radius of mutation}\label{section42}

In the setting above, already after an arbitrarily small time, mutation has induced a positive population size for every possible trait. These mutant populations have size of order $\mu$ to the power of the distance to $x^0$ on $\HH_N$. The next invading trait is then found balancing low initial $\mu$-power and high fitness. If one thinks of this 
as an approximation to a stochastic system with individuals, this is only good if $\mu^NK\gg 1$. If we think of $\mu$ 
as $K$-dependent rate, say, $\mu\sim K^{-1/\a}$, then there will be more than one mutant at distance $d$ 
if $\mu^dK=K^{1-d/\a}\gg 1$. At any further distance, there are no mutants at finite times. This can be 
mimicked by putting all populations below a given threshold into a dormant state where they can 
neither die nor reproduce. They can still grow, however, due to incoming mutations.
 This has the effect that mutants can be spread only over a finite distance. 
 We will return to the analysis of the stochastic system with such mutation rates later.

The corresponding differential equations are given by
\bea\label{DE'}
\frac{n^\mu_x(t)}{dt}&=&\left( r_\ell(x)-\sum_{y\in\HH_N}c(x,y)n^\mu_y(t)\right)n^\mu_x(t)\1_{n^\mu_x(t)\geq u^{\ell-1}}
\notag\\
 &+&\mu\sum_{y\sim x}n^\mu_y(t)b(y)m(y,x)\1_{n^\mu_y(t)\geq \mu^{\ell-1}}-\mu n^\mu_x(t)b(x)\1_{n^\mu_x(t)\geq \mu^{\ell-1}}.
\eea

For $\ell\geq k$, we just recover the original scenario of Theorem \ref{det-pes.1}. 

For $\ell=1$, we obtain the greedy adaptive walk of \cite{NoKr15}, where the process always jumps to the fittest direct neighbour of the current resident trait. We keep the assumptions of constant competition and monomorphic initial conditions.

The convergence to a greedy adaptive walk can be stated as follows. \index{adaptive walk!greedy}
\begin{theorem}\label{AW}
Consider the system of differential equations (\ref{DE'}) for $\ell=1$ and assume (A), (C), and (E). Let $n^\mu(0)\in\text{IC}(\{x^0\},\eta,\bar{c})$ such that $\lambda_y\geq 1$ for all $y\sim x^0$, $n^\mu_y(0)=0$,
 for $|y-x^0|\geq2$, and $\eta$ small enough. Define
%
\begin{align}
x^i&:=\arg \max_{y\sim x^{i-1}}r(y),\\
T_i&:=T_{i-1}+\frac{1}{r(x^i)-r(x^{i-1})}.
\end{align}
Set $T_i:=\infty$, as soon as there exists no $y\sim x^{i-1}$ such that $r(y)>r(x^{i-1})$.
Then, for every $t\notin\{T_i,i\geq0\}$,
\begin{align}
\lim_{\mu\to0}\xi^\mu({t\ln({1}/{\mu})}=\sum_{i=0}^\infty \1_{T_i\leq t< T_{i+1}}\delta_{x^i}\bar n_{x^i}(x^i).
\end{align}
\end{theorem}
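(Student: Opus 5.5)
Throughout, (\ref{DE'}) is taken with $\ell=1$, so that the threshold is $\mu^{\ell-1}=\mu^0=1$; in the display the factor written $u^{\ell-1}$ is read as $\mu^{\ell-1}$. Read literally with threshold $1$, however, the indicators would switch off every population below size one, including the resident. I therefore interpret the threshold as \emph{active iff $n^\mu_x$ is of order one}, i.e.\ $\lambda_x=0$, where the $\mu$-exponent $\lambda_x$ is $\lim_{\mu\downarrow0}\ln n^\mu_x/\ln\mu$ as in the definition of $\text{IC}$. In practice I would work with a cutoff $\e$ in place of $1$ and let $\e\downarrow0$ after $\mu\downarrow0$; this is the point of the argument I trust least.

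The proof goes by induction over the invasion steps, exploiting the linearisation (\ref{lin.2}) available under (E). Consider one phase $[T_{i-1}\ln(1/\mu),T_i\ln(1/\mu)]$, starting from a state in $\text{IC}(\{x^{i-1}\},\eta,\bar c)$. In this state the only active site is $x^{i-1}$, whose population lies near $\bar n_{x^{i-1}}=r(x^{i-1})/c$. Its neighbours carry populations of order $\mu$, and every other site is dormant at a level that is $o(\mu)$.

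During the phase, dormant sites neither grow nor die. They receive mutants only from active sites, i.e.\ from $x^{i-1}$, and do not emit any. The neighbours $y\sim x^{i-1}$ therefore obey $\dot n_y=\mu b(x^{i-1})m(x^{i-1},y)n_{x^{i-1}}+O(\mu n_y)$, so they sit at order $\mu$ uniformly on $\ln(1/\mu)$ time scales. Sites at distance $\geq2$ from $x^{i-1}$ receive nothing and stay at their previous small levels.

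The dynamics within the phase then proceed as follows.

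\emph{Onset of growth.} A neighbour becomes active once it reaches the threshold. With threshold $\e$ it needs exponent $0$, which forces a reinterpretation. The cleanest version, and the one I would fix from the start, is to let the threshold for growth be $\mu^{1}$: populations of order $\mu$ are allowed to grow, while mutation emission is only from sites of order one. This is exactly the \emph{radius one} mechanism described in the text.

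\emph{Growth phase.} Each neighbour with $r(y)>r(x^{i-1})$ grows at rate $f(y,x^{i-1})=r(y)-r(x^{i-1})$. I would establish this by Gronwall comparison on $w_y$, in the same way as Step~2 of the discussion preceding Theorem~\ref{det-pes.1}. Starting from size $\mu$, such a neighbour reaches size $\e$ at time $(1+o(1))\ln(1/\mu)/(r(y)-r(x^{i-1}))$. The first one to do so is the maximiser $x^i$ of $r$ over the neighbourhood; assumption (C), together with the distinctness of the values of $r$, ensures it is unique. This yields $T_i-T_{i-1}=1/(r(x^i)-r(x^{i-1}))$. Neighbours with negative fitness decay but are replenished by mutation, so they remain at order $\mu$.

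\emph{Ecological phase.} Once $x^i$ reaches level $\e$, the system is close to the two-type Lotka-Volterra system with equal competition. In time $O(1)$ this system converges to $\bar n_{x^i}$, because $f(x^{i-1},x^i)<0$. The former resident then decays exponentially at rate $r(x^i)-r(x^{i-1})$ over a time of order one, so it cannot reach exponent $1$; it only becomes dormant again. Meanwhile $x^i$ emits mutants, which bring its neighbours to order $\mu$ within time $O(1)$. The other neighbours $y$ of $x^{i-1}$ keep sizes at most of order $\mu^{1-(r(y)-r(x^{i-1}))/(r(x^i)-r(x^{i-1}))}\cdot$, which is $o(1)$, and they stop growing once $x^i$ is resident. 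This step is where I must verify that the configuration returns to $\text{IC}(\{x^i\},\eta,\bar c)$, up to the requirement that sites not adjacent to $x^i$ are dormant. Those sites may be of order $\mu^{\lambda}$ with $\lambda<1$, so the induction hypothesis has to be weakened accordingly: non-adjacent sites are merely dormant and $o(1)$.

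The induction then proceeds as above. The process stops when $r(x^{i-1})$ exceeds $r(y)$ for all $y\sim x^{i-1}$, and in that case $T_i=\infty$.

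The main obstacle is to make the dormancy threshold consistent with the stated initial condition while controlling the leftover populations from earlier phases. Such a leftover site $z$ could, on becoming adjacent to a new resident, already have a head start larger than $\mu$. That would break both the greedy choice and the formula for $T_i$. I expect to handle this by showing that dormant sites never grow. Then every site not adjacent to the current resident stays at order at most $\mu$, except for former residents, and former residents have $r$ below the current resident's, so they can never win. This monotonicity of the leftover sizes is the step I would check most carefully.
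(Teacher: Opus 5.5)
Your architecture is the right one: phase-by-phase induction, growth of each neighbour from exponent $1$ at rate $r(y)-r(x^{i-1})$ via (\ref{lin.2}), and an $O(1)$ Lotka--Volterra takeover. This is the Step 1--4 heuristic the paper relies on; it defers the actual proof to \cite{kraut2018}. Your final reading of (\ref{DE'}) for $\ell=1$ is also the one that produces a greedy walk: order-$\mu$ populations may grow, and only order-one populations emit mutants.

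The genuine gap is in the control of leftover populations, where your closing argument rests on a false invariant. At the start of each phase you assume that all non-neighbours of $x^{i-1}$ are $o(\mu)$. At the end you claim that, apart from former residents, every site not adjacent to the current resident stays at order at most $\mu$. Your own growth-phase computation contradicts this. Every fit loser $y\sim x^{i-1}$, i.e.\ $r(x^{i-1})<r(y)<r(x^i)$, ends the phase at exponent $1-\frac{r(y)-r(x^{i-1})}{r(x^i)-r(x^{i-1})}\in(0,1)$, so $n_y\gg\mu$. On $\HH_N$ this $y$ is at distance $2$ from $x^i$. It becomes adjacent to the next resident if $x^{i+1}$ is the second common neighbour of $x^i$ and $y$. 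Since it decays only at rate $r(x^i)-r(y)$, which may be small, it can still have size $\mu^{\lambda}$ with $\lambda<1$ at that moment. \emph{Dormant sites never grow} does not help, because such sites are above your growth threshold and are simply decaying slowly. As written, nothing in your induction stops a site with a head start from entering the argmax. That would destroy both the greedy choice and the formula $T_i-T_{i-1}=1/(r(x^i)-r(x^{i-1}))$.

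What closes the argument is an invariant that comes from the greedy rule, not from sizes.
First, only order-one populations emit, and the initial condition puts no mass at distance $\ge 2$ from $x^0$. Hence every site that ever carries mass is some $x^j$ or a neighbour of some $x^j$.
Second, $r(x^0)<r(x^1)<\cdots$, and every losing neighbour $y$ of $x^j$ has $r(y)<r(x^{j+1})$. So every site with a head start (a former resident or a former loser) is unfit against all later residents: it keeps decaying, never returns to order one, never emits, and never invades.
Third, suppose $y\sim x^{i-1}$ with $r(y)>r(x^{i-1})$ had been adjacent to an earlier resident $x^j$, $j\le i-2$. Then the argmax at step $j+1$ would force $r(y)\le r(x^{j+1})\le r(x^{i-1})$, a contradiction. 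So every fit neighbour of the current resident was identically zero before $x^{i-1}$ began to emit, and it starts at exponent exactly $1$ from fresh mutants.

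The induction hypothesis should therefore read as follows. Every $z\neq x^{i-1}$ with $r(z)>r(x^{i-1})$ is either a neighbour of $x^{i-1}$ carrying only fresh mutants, of size $O(\mu)$, or identically zero. All other sites have $r<r(x^{i-1})$ and, after an $O(1)$ time, do not emit. With this in place, the rest of your argument goes through.
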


\begin{remark}
(i) The adaptive walk in Theorem \ref{AW} stops as soon as it reaches a local maximum of the individual fitness $r$ since only direct neighbours of the resident trait can be reached. Local maxima do not need to be strict. However, as in the previous cases, mutants with invasion fitness $0$ cannot invade the resident population.

(ii) It is no longer the case that every step increases the distance to $x^0$. The walk could return to a trait close to $x^0$, which just could not be reached before because one had to go around a valley in the fitness landscape defined by $r$.
\end{remark}

In Chapter \ref{chapter5} we will show that the picture described in the last section arises from the stochastic models 
in the joint limit $K\uparrow \infty$ and $\mu_K\downarrow 0$ for appropriate choices of $\mu_K$.

 \chapter{Stochastic systems with very small mutation rates}\label{chapter5}

\begin{chapquote}
{Charles Darwin, \emph{The Origin of Species}}
%

{As many more individuals of each species are born than can possibly survive; and as, consequently, there is a frequently recurring struggle for existence, it follows that any being, if it vary however slightly in any manner profitable to itself, under the complex and sometimes varying conditions of life, will have a better chance of surviving, and thus be \emph{naturally selected}.}
\end{chapquote}

In this chapter, we look at limiting processes that arise if  \index{carrying capacity}
mutation rates tend to zero simultaneously as the carrying capacity tends to 
infinity. There are, of course, many ways how this can happen, and the 
behaviour of the system will depend on this. 
In this chapter, we impose conditions (first introduced in Champagnat \cite{Cha06}) that ensure that we can be sure that the basic postulate of 
adaptive dynamics, namely that the time scales of \emph{ecology} and 
\emph{evolution} \index{adaptive dynamics}
are well separated, holds in a strong sense. This is the case if the fate of an appearing 
mutant (and in fact of the entire population) is determined before a new mutant appears. If, as we will
also assume here, the evolutionary advantage of a mutant is positive, 
independent of $K$, the time for a single mutant to produce a number of
offspring of order $K$ will be of order $\ln K$, and the competition with the
resident population will lead close to a new equilibrium in time of order $1$, 
finally, an unfit resident will die out in time of order $\ln K$. 
Thus, to satisfy our assumption, the time between consecutive mutants 
must be larger than $\ln K$, which, given that there are $K$ individuals 
around, means that $\mu_K\ll (K\ln K)^{-1}$. The results in this chapter are 
based on the papers \cite{Cha06,C_ME,CM11} by Champagnat and co-workers.

%

\section{Heuristics} Under the conditions above, we can expect that the following picture holds for 
a population that started with an initial condition 
where only a finite number of phenotypes were present.
\begin{itemize}
\item[(i)] For almost all times, the population is very close to an 
ecological equilibrium where only a finite number $d$ of phenotypes are present. They then determine an (invasion)-fitness landscape. 
\item[(ii)] Mutants that are born from such an equilibrium at a phenotype where
the invasion fitness is negative, die out with probability one. 
\item[(iii)] Mutants that are born from such an equilibrium at a phenotype where
the invasion fitness is positive produce $\e K$ (with $1\ll\e>0$) offspring before they die out 
with strictly positive probability. If they produce this number of offspring, 
this takes time $O(\ln K)$.  
\item[(iv)] From the time when the mutant population has reached the level $\e K$, the population stays close to the solution of the
mutation-free deterministic Lotka-Volterra 
system of dimension $d+1$. Under mild hypotheses, this system reaches 
the $\e$-neighbourhood of 
a unique equilibrium with $k\leq d+1$ non-zero components.
\item[(v)] In time of order $\ln K$, the populations corresponding to the 
$d+1-k$ zero-components of this equilibrium die out. 
\end{itemize}

Mathematically, these statements are justified by different methods: 
\begin{itemize}
\item[(i)] To show that the system stays close to the initial equilibrium 
can be shown using Wentzell-Freidlin type large deviation principles
that show that the escape from  an $\e$-neighbourhood of a stable equilibrium takes time of order
 $\exp(K)$. 
Alternatively, at least in the case of monomorphic populations, one can use estimates on 
birth-and-death processes.
\item[(ii, iii)] The initial fate of a mutant can be determined by branching
process theory. Until the time when a mutant population is of order $\e$, 
it affects the resident population only by an $O(\e)$ shift to the position of 
the equilibrium. Its own behaviour can thus be controlled by branching processes.
\item[(iv)] The analysis of the deterministic system in dimension $d+1$ is 
difficult if not impossible. There are, however, results on competitive 
Lotka-Volterra systems that state when we get convergence to a unique new equilibrium. 
\item [(v)] The last step again is branching process theory.
\end{itemize}

The concept outlined above indicates that the population process that will 
emerge can be seen as a jump process between ecological equilibria of
systems of competitive Lotka-Volterra equations of various dimensions. 
An important and difficult question is what the nature of these equilibria will be.

\paragraph{Monomorphism}
We have seen that polymorphic equilibria require specific \index{equilibria! monomorphic}
 \index{equilibria!polymorphic}
conditions on the coefficients, while monomorphic equilibria always exist. If we 
start our population process with a monomorphic population at time zero, 
it will thus approach its ecological equilibrium and reach an 
$\e$-neighbourhood of it in finite time and will stay there with overwhelming 
probability until a mutant appears. If that mutant appears at a type that has
positive invasion fitness, we have seen above that the population will now 
move towards a unique fixed point of the $2$-dimensional system. 
Now there are two possibilities: either this fixed point is monomorphic, or
it is bi-morphic. If it is monomorphic, it must also be stable, so it cannot be
$(\bar n_1,0)$ (this is unstable by assumption), so it must be $(0, \bar 
n_2)$. If we assume that the mutants do not differ much from the residents, the 
ratios of the competition kernels in these equations should be very close 
to one.Then, unless $r(x_1)\approx r(x_2)$, both equations will not hold. 
Thus, the monomorphic fixed point $(0,\bar n_2)$ will be approached and the
population of type $x_1$ will die out. We call this a \emph{trait 
substitution}. This is the generic scenario. The opposite case, when we 
obtain two co-existing types, is called \emph{evolutionary branching}. 
It occurs only if either the two types have almost the same a-priori fitnesses
or it the cross-competition is very weak. 

Starting with a monomorphic initial condition, successive successful 
mutations will thus lead to a sequence of monomorphic populations 
evolving, in some sense, towards higher fitness until
a so-called \emph{evolutionary singularity}. \index{evolutionary singularity}
The precise convergence of the population process towards such a 
\emph{trait substitution sequence} was first derived rigorously by Nicolas 
Champagnat \cite{Cha06}. \index{trait substitution sequence}

There are two types of evolutionary singularities that can be met: either
a trait is reached and a mutation occurs, such that coexistence of the 
resident and mutant trait is possible, i.e. evolutionary branching occurs.
The other possibility is that the traits of all possible mutants have 
negative invasion fitness. In that case, the final monomorphic population 
that is reached represents an \emph{evolutionary stable condition} in the 
\index{evolutionary stable condition}
sense of adaptive dynamics. In that case, evolution appears to come to a 
halt, at least on the time scale of the trait substitution sequence. 

In the case that there is evolutionary branching, a likely scenario is that the
two types under further mutations perform trait substitutions by themselves 
and move away from each other in order to reduce competition. This will 
go on until one of the sub-populations reaches a further evolutionary 
singularity, where the same process is repeated, until, finally, the entire 
polymorphic population has reached an evolutionary stable condition.

\section {Rigorous results}

We now turn to the rigorous statement concerning the PES, following 
\cite{CM11}). \index{polymorphic evolution sequence}
We begin by defining a (strong) notion of coexisting traits. 
To ensure that the process jumps on the evolutionary time scale from one equilibrium to the next, we need an assumption to prevent cycles, unstable equilibria or chaotic dynamics in the  
deterministic system.

\begin{assumption}\label{conv_to_fixedpoint}
For any given traits $(x_1,\ldots ,x_d)\in \XX^d$ 
that coexist and for any mutant trait $y$  
such that $ f(y,\bx)>0$, there exists a neighbourhood $U$ of  $(\bar n(\bx),0)$
 such that all solutions of  $LV(d+1,  (\bx,y)$ with initial condition in 
 $U\cap (0,\infty)^{d+1}$
converge, as $t \uparrow  \infty$, to a unique locally strictly stable
 equilibrium in $\R_+^{d+1}$ denoted by $\bar n^{*}((\mathbf x,y))$.
\end{assumption} 

We write $\bar n^{*}$ and not $\bar n$ to emphasise that 
some components of $\bar n^{*}$ can be zero. Under our assumption, there exists a unique 
equilibrium $LV(d',\bx')$ with $d'\leq d+1$ and $\bx'\subseteq (\bx, y)$ where $\bx'$ are the points 
in $\bx\cup y$ for which $\bar n^{*}_x>0$. 

Assumption \ref{conv_to_fixedpoint}  does not have to hold for all traits, but  only for those traits $y$ which can appear
by mutation.

We now state the main theorem.

\begin{theorem}[Champagnat and Méléard \cite{CM11}]\label{PES.0}
Suppose that Assumption \ref{conv_to_fixedpoint} holds. 
Fix $x_1,\dots, x_d$ coexisting traits and assume that the initial conditions 
converge almost surely to $\bar n (\mathbf x)$. 
 Furthermore, assume  that 
\begin{align}\label{Conv_Cond}
\forall V>0, \qquad \exp(-VK)\ll u_K \ll \frac{1}{K\ln(K)}, \qquad \text{as } K\uparrow \infty.
\end{align}
Then, the sequence of the rescaled processes $(\nu^{K}_{ t/ Ku_K})_{t\geq 0}$ with initial state $\nu_0^K$, converges in the sense of finite-dimensional distributions to the measure-valued pure jump process 
$\Lambda$, which is defined as follows:
$\Lambda_0=\sum_{x\in\bx} \bar n_x (\mathbf x) \delta_{x}$
and the process $\:\Lambda\:$ jumps for all $y\not\in \bx$  from
 \be 
 \sum_{x\in\bx} \bar n_{x}(\bx) \delta_{x}
\quad \text{ to } \quad
  \sum_{x\in \bx\cup y)} \bar n^*_{x}(\bx\cup  y) \delta_{x}
  \ee
  with rate
  \be
  \sum_{x\in \bx} p(x)b(x) \bar n_{x} (\bx)    \frac{ f(y,\bx)_+}{b(y)}m(x, dy).
  \ee
  The process $\L$ is called the \emph{polymorphic evolution sequence (PES)}.
\end{theorem}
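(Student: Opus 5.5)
The proof follows Champagnat's three-phase analysis of a single mutation cycle, iterated via the strong Markov property. It suffices to analyse one cycle: starting from $\nu^K_0$ close to $\sum_{x\in\bx}\bar n_x(\bx)\delta_x$, I would show that, on the time scale $t/(Ku_K)$, the first successful invasion occurs after an asymptotically exponential time with the stated total rate. I would also show that the mutant trait $y$ has law proportional to $\sum_x p(x)b(x)\bar n_x(\bx)\frac{f(y,\bx)_+}{b(y)}m(x,dy)$. Finally, the process arrives in a small neighbourhood of $\sum_{x\in\bx\cup y}\bar n^*_x(\bx\cup y)\delta_x$ within a time $O(\ln K)=o(1/(Ku_K))$. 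Convergence of finite-dimensional distributions then follows by concatenating cycles. Since the limit jumps only finitely often on compacts, the set of times $t$ at which the rescaled process is in a transition phase has vanishing probability at any fixed $t$.

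\textbf{Step 1 (stability of the resident).} Before any mutation, the resident is a finite-type density-dependent process whose LLN limit is $LV(d,\bx)$ with strictly stable equilibrium $\bar n(\bx)$. I would use the large deviation principle quoted after Theorem \thv(lil.1) (Freidlin--Wentzell) to show that the exit time from an $\e$-neighbourhood of $\bar n(\bx)$ exceeds $\eee^{VK}$ with probability tending to one, for some $V>0$. The lower bound $u_K\gg \eee^{-VK}$ ensures that mutations happen well before such an exit. While in the neighbourhood, mutants are produced at total rate $Ku_K\sum_x p(x)b(x)\nu^K_t(x)$, which is within $O(\e)$ of $Ku_K\sum_x p(x)b(x)\bar n_x(\bx)$. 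Hence the first mutation time, rescaled by $Ku_K$, is sandwiched between exponentials with rates differing by $O(\e)$.

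\textbf{Step 2 (fate of a mutant).} As long as the mutant population is below $\e K$, the resident stays near $\bar n(\bx)$. The mutant birth rate is then $b(y)$ and its death rate is $d(y)+\sum_x c(y,x)\bar n_x\pm O(\e)$. I would therefore couple the mutant population between two binary branching processes with growth rates $f(y,\bx)\pm C\e$. By the survival probabilities derived for birth--death chains in Chapter~\ref{chapter2}, see \eqv(yule.5), the probability of reaching $\e K$ tends to $f(y,\bx)_+/b(y)$ up to $O(\e)$. By Lemma \ref{prop:BP}, either extinction or the level $\e K$ occurs within time $O(\ln K)$. The upper bound $u_K\ll 1/(K\ln K)$ ensures that no further mutation occurs during this $O(\ln K)$ window. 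Thinning the Poisson stream of mutations by success probabilities then yields the jump rate in the statement.

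\textbf{Step 3 (invasion and extinction).} Once the mutant reaches $\e K$, Theorem \thv(lil.1), applied on a finite time interval, shows that the process follows $LV(d+1,(\bx,y))$ started near $(\bar n(\bx),\e)$. By Assumption \ref{conv_to_fixedpoint}, it reaches an $\e$-neighbourhood of $\bar n^*$ in time $O(1)$. The components that are zero in $\bar n^*$ have negative invasion fitness against the new strictly stable equilibrium. Comparing each of them with a subcritical branching process, Lemma \ref{prop:BP}(ii) and \eqv(ej.4) give extinction within $O(\ln K)$, and during this time Step 1 keeps the survivors near $\bar n^*$. The new configuration is again coexisting, so the cycle restarts via the strong Markov property.

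\textbf{Main obstacle.} The hardest part is Step 2 together with the entry into Step 3. The branching coupling must remain valid uniformly while the resident is perturbed by the growing mutant, which requires a stopping-time argument that freezes the $O(\e)$ deviation. The errors must also be controlled as $\e\to0$ only after $K\to\infty$, with the neighbourhood $U$ of Assumption \ref{conv_to_fixedpoint} containing $(\bar n(\bx),\e)$ for $\e$ small. I would handle this first by working on the event that the resident stays in its $\e$-neighbourhood up to the stopping time at which the mutant hits $\e K$ or $0$, and bounding the complement using Step 1.
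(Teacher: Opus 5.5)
Your proposal is correct and follows essentially the same route as the paper, which itself follows Champagnat--M\'el\'eard. Both use Freidlin--Wentzell stability of the resident (Lemma~\ref{ldp.1}, including its perturbed form), a branching-process coupling giving success probability $f(y,\bx)_+/b(y)$ and duration $O(\ln K)$ (Lemmas~\ref{fixation.1} and \ref{prop:BP}), Poisson thinning for the rate and law of the successful mutant (Lemma~\ref{mutant.1}), and then the LLN with Assumption~\ref{conv_to_fixedpoint} followed by subcritical branching for the invasion and extinction phases. The only difference is cosmetic: the book's monomorphic fixation lemma first sharpens the resident control to $K^{-1/2+\a}$ to get the limit $f/b$ exactly for each fixed $\e$, whereas you carry $O(\e)$ errors and let $\e\to0$ after $K\to\infty$, which is all the polymorphic statement needs.
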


\index{jump process!measure valued}

\begin{remark} The PES is absorbed in the set of equilibrium states $LV(n,\bx)$ for which
$f(y,\bx)\leq 0$ for all $y$ that can be reached by a single mutation from $\bx$. These states are 
the \emph{ecologically stable conditions} of adaptive dynamics.
\end{remark}  \index{evolutionary stable condition}

\begin{remark}
The lower bound on $\mu_K$ is not a serious restriction. It is there to ensure that the population will not die out before a first mutation occurs.
\end{remark}
%

\subsection{Control of the resident populations} 
There are several ways to control the stability of the resident population over long periods of time. The simplest and most general one is to use the Wentzell-Freidlin
theory of large deviations and in particular, their results on the problem of exit from an attracting domain \cite{FW84}. This was used in \cite{CM11}. 
We quote their formulation of this result.

\begin{lemma}[Proposition A.2 in \cite{CM11}]\TH(ldp.1)
Fix $\bx \in\XX^d$ and let $\bar n(\bx)$ be the equilibrium.  
 Assume that, for $\e>0$ small enough, 
  $\nu^K_0=\sum_{j=1}^d\tilde{n}^j_0\delta_{x_j}$ such that
  \be\Eq(close.1)
  \left\|\bar n(\bx)- \tilde n(\bx)\right\| <\e/2.
  \ee
     Let $\tau$ be the first mutation time.Then there exist constants $V,c$ such that the exit time of $ \left(\langle \nu^K_t,\1_{x_i}\rangle\right)_{i=1,\dots, d}$ 
     from the $\e$-neighborhood of $\bar n(\bx)$ is larger than $\eee^V\land \t$.
     Moreover, the same conclusion is valid if the death rates of individuals with trait $x_i$ are perturbed by a random process that is uniformly bounded by $c\e$.
     \end{lemma}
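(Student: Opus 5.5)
We read the statement as: the exit time is larger than $\eee^{VK}\wedge\t$ with high probability. We plan to prove it via the Freidlin--Wentzell theory of exit from an attracting domain, applied to the mutation-free $d$-type process $Z^K_t=(\langle\nu^K_t,\1_{x_i}\rangle)_{i=1}^d$. Before $\t$ no mutant is present, so $\nu^K$ coincides with the process of finite trait space $\{x_1,\dots,x_d\}$ with generator $L^K$ with $p\equiv 0$. This is a density-dependent jump Markov process in the sense of Kurtz. We have already noted that its laws on $\DD([0,T],\R^d)$ satisfy an LDP with rate $K$ and action functional \eqv(rate.1), whose zero-action curves are solutions of $LV(d,\bx)$. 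The rates are Lipschitz and bounded on compacts, which is what Dupuis--Ellis require. Since $\bx$ coexist, $\bar n(\bx)$ is locally strictly stable, so for $\e$ small the ball $B_\e(\bar n(\bx))$ lies in $(0,\infty)^d$ and is contained in the basin of attraction.

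\textbf{Step 1 (quasipotential).} Define $V(z)=\inf\{I_{0,T}(\phi):\phi(0)=\bar n,\ \phi(T)=z,\ T>0\}$ and $\bar V=\inf_{|z-\bar n|=\e}V(z)$. We show $\bar V>0$ using strict stability. There is a quadratic Lyapunov function near $\bar n$, the Jacobian having strictly negative eigenvalues. Any path leaving $B_\e$ must fight the drift over a distance of order $\e$, and the Lagrangian is bounded below by $c|\phi'-F(\phi)|^2$ locally, so the cost is of order $\e^2$.

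\textbf{Step 2 (standard cycle argument).} We use a smaller ball $B_{\e/2}$, matching the initial condition $\|\bar n-\tilde n\|<\e/2$, and a tiny ball $B_\d$. From any point of $B_{\e/2}$ the LLN (Theorem~\thv(lil.1)), in its finite-space form, brings the process into $B_\d$ within a time $T_0$ with probability tending to $1$. Next, by the LDP upper bound, uniform over starting points in $B_\d$, the probability of exiting $B_\e$ during one excursion of length $T_1$ is at most $\exp(-K(\bar V-\eta))$. By the strong Markov property at successive returns to $B_\d$, the number of excursions before exit dominates a geometric variable with that parameter. Hence $\P(\text{exit before } \eee^{VK}\wedge\t)\to0$ for any $V<\bar V$.

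\textbf{Main obstacle.} The hard part is making the LDP estimates \emph{uniform in initial conditions} in $B_\d$, including the lattice discretisation $\frac1K\N^d$. The boundary $\{n_i=0\}$ is not a concern, since $B_\e$ stays away from it. We would cite Freidlin--Wentzell \cite{FW84} (Ch.~4, Thm.~4.1) together with the uniform LDP of \cite{DuEll}. Rather than reprove these, we would check their hypotheses: the rates are bounded and Lipschitz on a compact neighbourhood, and are bounded away from $0$ there because $n_i>0$, which gives non-degeneracy of the Lagrangian.

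\textbf{Perturbation clause.} If the death rates are perturbed by a process bounded by $c\e$, we couple by monotonicity with two autonomous processes whose death rates are $d(x_i)\pm c\e$. Their equilibria lie within $O(c\e)$ of $\bar n$ and are still strictly stable. For $c$ small, the $\e/2$ closeness leaves room so that Steps 1--2 apply to the perturbed drifts with $V$ uniform. Since the perturbation may depend on the path, we would rather apply the LDP directly with drift in an $O(c\e)$ tube: the exit cost remains at least $\bar V/2$ because the perturbed vector field still points inward on $\partial B_\e$.
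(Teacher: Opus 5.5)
Your proposal follows the same route as the paper, which gives no argument of its own beyond deferring to the Freidlin--Wentzell exit-from-an-attracting-domain theory \cite{FW84} as used in \cite{CM11}; your quasipotential bound $\bar V\asymp\e^2$ plus the cycle argument for the mutation-free $d$-type jump process is exactly that theory. Two points to tighten. First, take the neighbourhoods $B_{\e/2}\subset B_\e$ to be sublevel sets of your quadratic Lyapunov function: the Jacobian $-\mathrm{diag}(\bar n)\,(c(x_i,x_j))_{i,j}$ can be strongly non-normal, and the flow from a Euclidean $B_{\e/2}$ may then leave $B_\e$ before contracting. Second, for the perturbation clause drop the monotone sandwich, since raising one type's death rate relieves competition on the others, so a coordinatewise coupling with death rates $d(x_i)\pm c\e$ is not available in general for $d\ge 2$. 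Keep your tube version instead; because the perturbed process is not Markov, there is no LDP to apply directly, and you should obtain it from exponential-martingale upper bounds with the Hamiltonian shifted by $O(c\e|\theta|)$, taking the inner ball of radius a fixed fraction of $\e$ so that it contains the $O(c\e)$-shifted equilibrium.
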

%

In the case of a monomorphic equilibrium, one can get a sharper result that will be needed in the next chapter.

\begin{lemma}
  \label{lem:step-1}
  Fix $x,y_1\ldots,y_m\in{\XX}$. Assume that, for $\a\in(0,1/2)$, 
  $\nu^K_0=
  n_0\delta_{x}+\sum_{j=1}^m\tilde{z}^j_0\delta_{y_j}$ such that
  \begin{gather}
   0\leq \tilde{z}^j_0<K^{-\frac 12+\a}\quad\forall j\in\{1,\ldots,m\} \\ \mbox{and}\quad
    |n_0-\bar n_x|<K^{-\frac 12+\a}. \Eq(close.101)
  \end{gather}
   Let $\tau$ be the first mutation time,
  \begin{equation}
    \theta:=\inf\{t\geq 0:\exists
    j\in\{1,\ldots,m\},\langle\nu^K_t,\mathbf{1}_{\{y_j\}}\rangle>K^{-\frac 12+\a}\}
  \end{equation}
  and, for  $M>0$, 
  \begin{equation}
    \theta_M:=\inf\left\{t\geq 0: \left|\langle\nu^K_t,\mathbf{1}_{\{x\}}\rangle
        -\bar n_x\right|>MK^{-\frac 12+\a}\right\}.
  \end{equation}
  Then, there exist constants $C>0$ and $M>0$ that depend only on
  $x$ such that
  \begin{equation}
    \label{eq:lem-1}
    \lim_{K\uparrow \infty}\P\left(\theta_M<\eee^{CK^{2\a}}\wedge\theta\wedge\tau\right)=0.
  \end{equation}
\end{lemma}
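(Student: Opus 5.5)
We reduce the resident density to a one-dimensional birth-and-death chain and estimate its exit time from a $K^{-1/2+\a}$-neighbourhood of $\bar n_x$ by a potential-theoretic computation on the discrete scale.

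\emph{Comparison chains.} Up to time $\theta\wedge\tau$ no mutation has occurred, and each mutant type has density at most $K^{-\frac12+\a}$. Hence the number $N_t=K\langle\nu^K_t,\1_{\{x\}}\rangle$ of residents jumps $N\to N+1$ at rate $b(x)(1-\mu p(x))N$ and $N\to N-1$ at rate
\be
N\Bigl(d(x)+c(x,x)\tfrac NK+\textstyle\sum_j c(x,y_j)\tilde z^j_t\Bigr),
\ee
where the perturbation $\sum_j c(x,y_j)\tilde z^j_t$ lies in $[0,m\bar c K^{-\frac12+\a}]$. We couple $N$ monotonically between two birth-death chains $N^{\pm}$ with rates $b^\pm N$ and $N(d(x)+c(x,x)N/K\pm \d_K)$, where $\d_K=m\bar cK^{-\frac12+\a}+O(\mu_K)$. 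Since birth and death rates are monotone in $N$, the coupling can be done explicitly via the jump construction of Section~2.2. This coupling holds up to $\theta\wedge\tau$. The equilibria $\bar n^\pm$ of these chains satisfy $|\bar n^\pm-\bar n_x|\le C\d_K\le C'K^{-\frac12+\a}$. It therefore suffices to show that, for each comparison chain started within $K^{-\frac12+\a}$ of $\bar n_x$, the probability of exiting $[\bar n_x-MK^{-\frac12+\a},\bar n_x+MK^{-\frac12+\a}]$ (in density units) before time $\eee^{CK^{2\a}}$ tends to $0$, once $M$ is chosen large compared with $C'$.

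\emph{Exit estimate.} For a birth-death chain with rates $\l_n,\mu_n$, the probability starting from $n_0$ of reaching $n_0+L$ before returning to $n_0$ has an explicit expression via the scale function $\prod\mu_i/\l_i$. Write $n=K(\bar n+u)$. Then $\mu_n/\l_n=1+c(x,x)u/b+O(\d_K)$, so
\be
\ln\prod_{i}\mu_i/\l_i\approx K\int_0^{h} c\,u/b\,du \asymp K h^2 .
\ee
For $h=MK^{-\frac12+\a}$ this equals $c_1M^2K^{2\a}$, which dominates the drift error $K\d_K h\sim M K^{2\a}$ once $M$ is large. Thus each excursion from the centre exits the band with probability at most $\eee^{-c_1 M^2K^{2\a}/2}$, and the downward exit is handled symmetrically. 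The number of excursions from the centre in time $T$ is at most a Poisson variable with mean $O(KT)$, since the total jump rate is $O(K)$. A union bound with $T=\eee^{CK^{2\a}}$ and $C<c_1M^2/4$ gives
\be
\P\bigl(\text{exit before } T\bigr)\le O(K\eee^{CK^{2\a}})\,\eee^{-c_1M^2K^{2\a}/2}\to0 .
\ee
The initial condition, lying at distance $K^{-\frac12+\a}\ll MK^{-\frac12+\a}$, reaches the centre before exiting with probability tending to $1$ by the same scale-function bound. If $\a$ is small, the factor $K$ is absorbed because $K^{2\a}$ grows faster than $\ln K$; otherwise the excursion count can be sharpened.

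\emph{Main obstacle.} The delicate step is the quantitative scale-function estimate at the critical window $K^{-\frac12+\a}$. There the quadratic potential $Kh^2$ is only of order $K^{2\a}$, so it must be shown to beat both the linear bias coming from the mutant perturbation $\d_K$ and the polynomial number of excursion attempts. I would handle this by writing the product exactly as $\exp\sum_i\ln(\mu_i/\l_i)$, Taylor expanding to second order with uniform error $O(K h^3+K\d_K h)$, and choosing $M$ large and $C$ small depending only on $b(x),d(x),c(x,x)$. The coupling also has to be stopped at $\theta\wedge\tau$, and this is exactly why those stopping times appear in the statement.
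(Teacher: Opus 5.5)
Your proposal is correct and follows essentially the paper's route: compare the resident count with a birth-and-death chain with restoring drift, bound the probability that one excursion from the centre reaches distance $MK^{\frac12+\a}$ by $\exp(-cM^2K^{2\a})$ via the scale function, and count the excursions over time $\eee^{CK^{2\a}}$. The only differences are that you couple in continuous time and bound the number of attempts by the Poisson number of jumps, whereas the paper couples the embedded jump chain to a Markov chain with linearised up-probabilities $p_\pm(k)$ and counts returns to the centre as a geometric variable. Two small corrections: the death-rate perturbation for $N^\pm$ should read $\mp\d_K$, and the $O(\mu_K)$ term can be dropped entirely, because before $\tau$ the resident path coincides with that of the mutation-free process.
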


\begin{remark} The case $\a=0$ can be included with $K^{-\frac 12+\a}$ replaced by $0<\e\ll 1$.
\end{remark}

\begin{proof}
Define
\begin{equation}
\label{eq:lem-1.5}
X_t\equiv K\langle \nu_t^K,\1_x\rangle- [K \bar n_x].
\end{equation}
Then clearly 
\begin{equation}
\label{eq:lem-1.6}
\theta_M\geq \inf\left\{\tau\geq 0: X_t\geq MK^{1/2+\a}\right\}.
\end{equation}

Let us denote by $\tau_0$ the stopping time
\begin{equation}
\label{eq:lem-1.1}
\tau_0\equiv\inf \{t>0: X_t=0\}
\end{equation}
and
\begin{equation}
\label{eq:lem-1.2}
\tau_1\equiv\inf \{t>0: X_t\geq MK^{1/2+\a}\}.
\end{equation}

We will exploit a natural renewal structure by proving bounds on
probabilities 
\begin{equation}
\label{eq:lem-1.3}
\P_a \left( \tau_1<\tau_0, \tau_1\leq \theta\wedge\tau\right),
\end{equation}
where $\P_a$ denotes the law of the process with any initial
 condition such that $X_0=a$.

We can associate with the continuous time process $X_t$ a
 discrete time (non-Markovian) process $Y_n$ 
which records the sequence of values that $X_t$ takes (this can
be formally defined by introducing  sequences $\sigma_k$ of stopping
times which record the instances when $X_t\neq X_{t^-}$ and 
setting $Y_n=X_{\sigma_n}$. 
Clearly, the probabilities \eqref{eq:lem-1.3} can be computed with
 respect to the stopping times defined for the discrete time process $Y_n$.
Finally, we want to compare these probabilities to those of a Markov
 process.
To this end we first observe that while $t_1<\theta\wedge \tau$, we can
 find a constant, $c>0$, such that, for $0\leq k<K^{1/2+\a}$,
\begin{equation}
\label{eq:lem-1.10}
\P\left(Y_{n+1}=k+1|Y_n=k,\sigma_{n+1} <\theta\wedge\tau \right)\leq \frac
12-c k/ K + C K^{-1/2+\a}\equiv p_+(k),
\end{equation}
for some positive constants $c,C$.
Clearly,
\be
\label{eq:lem-1-4}
\P\left(Y_{n+1}=k-1|Y_n=k,\sigma_{n+1} <\theta\wedge\tau
\right)=1-\P\left(Y_{n+1}=k+1|Y_n=k,\sigma_{n+1} <\theta\wedge\tau
\right).
\ee
Note that the terms $CK^{-1/2+\a}$ takes into account the possible effect
  of the ``small'' populations at the loci $y_j, j=1,\dots, m$.
Now we introduce a coupling, i.e. we define  processes, $Z_n$, with the
following properties:
\begin{itemize}
\item[(i)]
$Z_0=Y_0$;
\item[(ii)] $\P\left(Z_{n+1}=k+1| Y_{n+1}=k+1, Y_n=Z_n=k\right)=1$;\\
\item[(iii)] 
 $\P\left(Z_{n+1}=k+1| Y_{n+1}=k-1, Y_n=Z^+_n=k\right)=$\hfill\break
$\ p_+(k)-
\P\left(Y_{n+1}=k+1|Y_n=k,\sigma_{n+1} <\theta\wedge\tau \right)
$;\\
\item[(iv)] 
$\P\left(Z_{n+1}=k+1| Y_n<Z_n=k\right)
=p_+(k)$;
\item[(v)] $\P\left(Z_{n+1}=k-1| Y_n<Z_n=k\right)
=1-p_+(k)$;
\end{itemize}
Note that by construction $Z_n\geq Y_n$, and so 
\be
\P_a\left(\tau_1^Y<\tau_0^Y\right)\leq 
\P_a\left(\tau_1^Z<\tau_0^Z\right).
\ee
 On the other hand, the marginal
distribution of $Z_n$ is a Markov chain with transition probabilities
\begin{equation}
\label{eq:lem-1.8}
\P(Z_{n+1} =y+1|Z_n=y)=1-\P(Z_{n+1} =y-1|Z_n=y)=p_+(y).
\end{equation}
Standard potential theoretic arguments show then that
\begin{equation}
\label{eq:lem-1.9}
\P_a\left(\tau^Z_{M K^{1/2+\a}} <\tau^Z_0\right)\leq \exp\left(-c\left(M^2
  K^{2\a}-a^2/K\right)\right),
\end{equation}
provided $M$ is large enough.

In the same way, we can construct a coupled process that  
with transition probabilities $\frac 12-c y/ K -m C K^{-1/+\a}\equiv p_-(y)$ stays below $Y_n$.

It is now elementary to see that the number of returns of the process
to $0$ before it reaches $\pm MK^{1/2+\a}$ is geometric with parameter 
$ \exp\left(-c M^2
  K^{2\a}\right)$, and taking into account that the mean of
  $\sigma_n-\sigma_{n-1}$ is  at most of order $K$, the claimed result
  follows easily.
\end{proof}

\subsection{Initial growth of the mutant}
If the resident population is monomorphic and of size $K\bar n_x(1\pm K^{-1/2+
\a})$, the sequence of birth events with mutation is a Poisson process of intensity $K\mu_K b(x) \bar n_x1\pm K^{-1/2+\a})$,
and so the time of the first occurrence of a mutant is an exponential random variable with this parameter.  Then this happens, the resident traits will with overwhelming probability still satisfy 
\eqv(close.1). 
Let the mutant appear at a trait $y$ such that $f(y,x)>0$.  
We now must control what happens until the population at $y$ reaches a size of order $\e K$;
 after that, we can use the LLN to control the evolution of the polymorphic population.
The  following lemma gives a precise value for the probability that the mutant at $y$ reaches
this level before it dies out.

\begin{lemma} \TH(fixation.1)
  Fix $x,y_1\ldots,y_m\in{\XX}$ such that $f(y,x)>0$ and assume that, for $\a\in(0,1/2)$, 
  $\nu^K_0=
  n_0\delta_{x}+K^{-1}\d_y$ such that
  \be\Eq(fixation.2)
    |n_0-\bar n_x|<K^{-\a}. 
  \ee
  Let $\t_z\equiv \inf\{ t>0: \langle \nu^K_t,\1_y\rangle = z/K\}$.
  Then, for any $\e>0$, 
  \be\Eq(fixation.3)
  \lim_{K\uparrow \infty}
  \P(\t_{\e}<\t_0)=\frac{f(y,x)}{b(y)}.
  \ee
%
\end{lemma}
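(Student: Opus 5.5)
The plan is to sandwich the mutant population between two linear birth--death processes, one slightly supercritical below and one above, and then read off hitting probabilities from \eqv(yule.4)--\eqv(yule.5). Note that $\t_\e$ should be read as the hitting time of $\lceil \e K\rceil$ individuals, i.e.\ $\langle \nu^K_t,\1_y\rangle\geq \e$; I write $\t_{\e K}$ for this in the integer count below.

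\textbf{Step 1: freeze the resident.} While the mutant count $Z_t\equiv K\langle\nu^K_t,\1_y\rangle$ stays below $\e K$, the resident density stays close to $\bar n_x$. Concretely, let $\t^{res}$ be the first time that $|\langle\nu^K_t,\1_x\rangle-\bar n_x|>C\e$. By Lemma \thv(ldp.1), in the version with death rates perturbed by a process bounded by $c\e$ (the mutant contributes at most $\bar c\e$ to the resident's death rate), we get $\P(\t^{res}<\eee^{VK}\wedge\t_\e)\to0$. Mutations from $y$ and further mutations from $x$ can be ignored: they only create individuals at other sites, and their total competitive effect is $O(\e)$ on the relevant time scale. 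This uses that $\mu$ is fixed here but the relevant times are short, as justified in Step 3.

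\textbf{Step 2: coupling.} On $\{t<\t^{res}\wedge\t_{\e}\}$, each mutant individual gives birth at rate $b(y)(1-\mu p(y))$ and dies at rate $d(y)+c(y,x)\langle\nu^K_t,\1_x\rangle+c(y,y)Z_t/K$. Up to an error $C'\e$, this death rate lies in $[d(y)+c(y,x)\bar n_x - C'\e,\; d(y)+c(y,x)\bar n_x+C'\e]$.

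\emph{Handling the mutation term.} The factor $(1-\mu p(y))$ needs care. If $\mu=\mu_K\to0$, as in this chapter's scaling, this term is negligible. If instead $\mu$ is fixed, births with mutation simply do not increase $Z$, so the birth rate is $b(y)(1-\mu p(y))$. Matching the stated formula $f(y,x)/b(y)$ then requires $\mu\to0$, so I assume this, consistent with the chapter setting.

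Hence one can construct, on one probability space, binary branching processes $Z^-_t\le Z_t\le Z^+_t$ up to time $\t^{res}\wedge\t_\e$. Here $Z^\pm$ has birth rate $b(y)$ and death rate $d(y)+c(y,x)\bar n_x\mp C'\e$, and both start from $Z^\pm_0=1$. The construction is the standard thinning argument: use Poisson clocks per individual, with the extra deaths of $Z^-$ attached to surplus particles. Therefore
\be
\P(\t^{+}_{\e K}<\t^+_0)+o(1)\;\ge\;\P(\t_\e<\t_0)\;\ge\;\P(\t^{-}_{\e K}<\t^-_0)-o(1),
\ee
where $\t^\pm$ denote hitting times for $Z^\pm$.

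\textbf{Step 3: evaluate and let the perturbation vanish.} By \eqv(yule.5), as $K\uparrow\infty$,
\be
\P(\t^{\pm}_{\e K}<\t^\pm_0)\to \frac{f(y,x)\pm C'\e}{b(y)}.
\ee
Here I use that $b(y)-d(y)-c(y,x)\bar n_x=f(y,x)>0$, and that $\e$ is small enough that $f(y,x)-C'\e>0$. By Lemma \ref{prop:BP}, the hitting time has mean $O(\ln K)\ll \eee^{VK}$, so the $o(1)$ error terms from Step 1 are legitimate.

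Finally, the event $\{\t_\e<\t_0\}$ is monotone in $\e$ in the appropriate sense. For $\e'<\e$, the process $Z$ must pass $\e' K$ before reaching $\e K$, and once $Z\ge\e' K$ it reaches $\e K$ with probability at least $1-(d/b)^{\e' K}\to1$ by the same comparison. Hence $\lim_K\P(\t_\e<\t_0)=\lim_K\P(\t_{\e'}<\t_0)$. Letting $\e'\downarrow0$ in the coupling constants gives $f(y,x)/b(y)$.

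The main obstacle is Step 1. There one must ensure the resident stays within $O(\e)$ of $\bar n_x$ up to $\t_\e$ while the mutant exerts an $O(\e)$ perturbation, and that the coupling errors are uniform. The decoupling of the constants $C'\e$ from the threshold $\e$ in Step 3 handles the rest.
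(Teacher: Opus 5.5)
Your argument is correct, but it removes the $\epsilon$-dependence differently from the paper.

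\emph{The paper's route.} It runs the coupling only up to the mesoscopic level of $K^{1/2+\alpha}$ mutants. Up to that level, the moderate-deviation control of Lemma~\ref{lem:step-1} keeps the resident within $MK^{-1/2+\alpha}$ of $\bar n_x$. The comparison birth--death processes then have death rates $d(y)+c(y,x)\bar n_x\pm O(K^{-1/2+\alpha})$, so the gambler's-ruin formula (\ref{yule.4}) gives probability $f(y,x)/b(y)+o(1)$ of reaching that level, with errors vanishing in $K$ at fixed $\epsilon$. The passage from $K^{1/2+\alpha}$ to $\epsilon K$ needs only the coarse $O(\epsilon)$ control. It succeeds with probability at least $1-\bigl(d^-/b(y)\bigr)^{K^{1/2+\alpha}}\to1$, where $d^-=d(y)+c(y,x)\bar n_x+C\epsilon<b(y)$.

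\emph{Your route.} You use only the large-deviation bound of Lemma~\ref{ldp.1}, so your intermediate level is macroscopic, $\epsilon'K$, with errors $\pm C'\epsilon'$. You remove them by monotonicity in the threshold and by letting $\epsilon'\downarrow0$ after $K\uparrow\infty$. Phrased with $\liminf/\limsup$ this is sound. The upper bound is just $\{\tau_\epsilon<\tau_0\}\subset\{\tau_{\epsilon'}<\tau_0\}$. The lower bound uses the strong Markov property at $\tau_{\epsilon'}$ and resident control up to $\tau_\epsilon$, which one application of Lemma~\ref{ldp.1} at level $\epsilon$ supplies.

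\emph{What each buys.} Your route is more elementary and uses the hypothesis $|n_0-\bar n_x|<K^{-\alpha}$ only in the weak form $n_0\to\bar n_x$. The paper's route gives an explicit, polynomially small error at fixed $K$. That is what is required when the invasion fitness itself vanishes with $K$, as in the regime of Section~\ref{section6.3}, where $f(y,x)=O(\sigma_K)$ with $\sigma_K\gg K^{-1/2+\alpha}$; there any $O(\epsilon)$ error would swamp $f$.

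\emph{One correction.} Drop the remark in Step~1 that ``$\mu$ is fixed''. With fixed $\mu$ the resident emits order $K$ mutants per unit time, and neither its equilibrium nor the mutant's rates are as you describe. What you actually use is the chapter's condition $K\mu_K\ln K\to0$. It ensures that, with high probability, no mutation at all occurs before $\tau_\epsilon\wedge\tau_0=O(\ln K)$.
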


\begin{proof} The proof uses Lemma  \thv(lem:step-1) and is similar to its proof. 
In order to get the sharp estimate in \eqv(fixation.3), we first chose $\a>0$ and show that, 
\be\Eq(fixation.4)
\frac{f(y,x)-MK^{-\a}}{b(y)}\leq
\P(t_{K^{-1/2+\a}}<\t_0)\leq\frac{f(y,x)+MK^{-1/2+\a}}{b(y)}.
\ee
This follows from coupling $\langle\nu^K_t,\1_y\rangle $ to    birth-and death processes,  and using 
the control on the resident population given in Lemma  \thv(lem:step-1), and  \eqv(yule.3). 
Finally, again by  \eqv(yule.3), once the level $K^{-1/2+\a}$ is reached, the probability to reach $\e$ tends to 
one even with the poorer  $O(\e)$ control on the resident. 
\end{proof}

We need one further result that ensures that the time to reach the $\e$ level is not larger than 
$O(\ln K)$. But this follows from   Lemma \thv(prop:BP).
This result ensures that the time of fixation happens surely before the occurrence of the next mutant,
due to the condition $\mu_K\ll 1/(K\ln K)$. 

Finally, we can compute the law of the first mutant that fixates. This depends on both of the 
mutation kernel and the invasion fitness in a very 
intuitive way.

\begin{lemma} \TH(mutant.1)
Let law of the trait of  the first mutant that reaches a level $\e$ is given by
\be\Eq(mutant.2)
\frac{
\frac {f(y,x)_+}{b(y)} m(x,dy)}{\int \frac {f(y,x)_+}{b(y)} m(x,dy)} .
\ee
The time of the first fixation of  a mutant trait (divided by $\mu_K K$) is exponentially distributed with a parameter  that converges to
\be\Eq(mutant.3)
 p(x) b(x)\bar n_x \int \frac {f(y,x)_+}{b(y)} m(x,dy),  \quad K\uparrow\infty.
\ee
\end{lemma}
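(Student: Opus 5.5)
The plan is a thinning of the Poisson stream of mutation events by independent fixation trials. Throughout, the resident population at $x$ is controlled by Lemma \thv(lem:step-1), applied with the perturbing small populations being the mutant lineages that are currently alive.

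First, the stream of mutant births. By Lemma \thv(lem:step-1), on a time interval of length $\eee^{CK^{2\a}}$ (much longer than the relevant scale $1/(K\mu_K)$ by the lower bound in \eqv(Conv_Cond)), the resident density stays in $[\bar n_x - MK^{-1/2+\a}, \bar n_x + MK^{-1/2+\a}]$ with probability tending to one. On this event, the mutation births form a point process whose intensity $K\mu_K p(x)b(x)\langle\nu^K_t,\1_x\rangle$ is sandwiched between the constant intensities $K\mu_K p(x)b(x)\bar n_x(1\pm MK^{-1/2+\a})$. Each mutant's trait is drawn independently from $m(x,dy)$. Coupling with these two homogeneous Poisson processes shows that, after rescaling time by $K\mu_K$, mutation events converge to a Poisson process of rate $p(x)b(x)\bar n_x$ with i.i.d.\ marks of law $m(x,dy)$.

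Second, the fate of each mutant. By Lemma \thv(fixation.1), a mutant of trait $y$ reaches level $\e$ before dying with probability $f(y,x)_+/b(y)+o(1)$; the value $0$ when $f(y,x)\le 0$ comes from comparison with a subcritical or critical birth--death process and \eqv(yule.5), \eqv(yule.6). By Lemma \thv(prop:BP), each lineage is decided, by hitting $0$ or $\e K$, within time $O(\ln K)$ with high probability. Since $\mu_K\ll 1/(K\ln K)$, the probability that a further mutation occurs while a lineage is still undecided tends to zero. Hence the trials are asymptotically independent, and each one runs against a resident that is still within $K^{-1/2+\a}$ of $\bar n_x$.

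Combining the two, the first successful mutant is the first point of a Poisson process thinned with probability $f(y,x)_+/b(y)$ depending on the mark. Standard thinning yields an exponential time with parameter \eqv(mutant.3) in the time scale $t/(K\mu_K)$, and trait law \eqv(mutant.2). The $O(\ln K)$ delay between birth and reaching level $\e$ vanishes after multiplication by $K\mu_K$.

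The main obstacle is the second step: justifying that the failed mutant lineages do not disturb either the resident control or the independence of subsequent trials. I would handle this by noting that a failed lineage stays below $K^{1/2+\a}$ individuals with high probability, so Lemma \thv(lem:step-1) still applies with it as one of the $y_j$. I would also bound the number of failed attempts before the first success by a geometric variable with parameter bounded away from zero, so that the errors summed over trials remain $o(1)$.
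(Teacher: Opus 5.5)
Your overall argument is the paper's: on the time scale $t/(K\mu_K)$ the mutation births form a Poisson stream of rate $p(x)b(x)\bar n_x$ with i.i.d.\ marks of law $m(x,dy)$. Each mark $y$ is retained independently with probability $f(y,x)_+/b(y)$, and the $O(\ln K)$ decision times are negligible because $K\mu_K\ln K\to0$. Your version is more careful than the paper's few lines, and it correctly keeps the factor $b(x)$ in the intensity, which the paper's proof text drops.

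One step fails as justified: controlling the resident over the whole time horizon.

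\begin{itemize}
\item Lemma \ref{lem:step-1} keeps the resident within $MK^{-1/2+\alpha}$ of $\bar n_x$ only up to time $\eee^{CK^{2\alpha}}$, with $2\alpha<1$.
\item The lower bound in \eqref{Conv_Cond}, $\mu_K\gg\eee^{-VK}$ for all $V>0$, does \emph{not} give $1/(K\mu_K)\ll\eee^{CK^{2\alpha}}$. For example, $\mu_K=\eee^{-K/\ln K}$ satisfies \eqref{Conv_Cond}, yet $1/(K\mu_K)\gg\eee^{CK^{2\alpha}}$ for every $\alpha<1/2$.
\item You need control up to times $T/(K\mu_K)$ with $T$ large. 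Using the sharp lemma would therefore require a hypothesis like $\mu_K\gg\eee^{-CK^{2\alpha}}$, which is imposed in Theorem \ref{cead3} precisely because that sharp control is needed there.
\item Lemma \ref{lem:step-1} is also stated only up to the first mutation time. Re-applying it at each new mutant's birth, so as to count the lineage among the $y_j$, would need the resident back within $K^{-1/2+\alpha}$ of equilibrium. The lemma only delivers $MK^{-1/2+\alpha}$.
\end{itemize}

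The repair is to run the stable phase with Lemma \ref{ldp.1} instead. Its exit time $\eee^{VK}$ from an $\varepsilon$-neighbourhood matches \eqref{Conv_Cond} exactly. Its perturbation clause absorbs the competition exerted by the alive mutant lineages, which have density below $\varepsilon$ by definition while they are undecided or failing. This also makes your $K^{1/2+\alpha}$ bound on failed lineages unnecessary.

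The price is a loss of precision, which you then remove with a second limit:
\begin{itemize}
\item The mutation intensity is only sandwiched as $K\mu_K p(x)b(x)\bar n_x(1\pm c\varepsilon)$.
\item A $y$-mutant's fixation probability is only bracketed between $(f(y,x)-C\varepsilon)_+/b(y)$ and $(f(y,x)+C\varepsilon)_+/b(y)$. For $f(y,x)=0$ the upper comparison process is slightly supercritical, so you get $O(\varepsilon)$ rather than $0$.
\item Your thinning argument then brackets $K\mu_K$ times the first fixation time between exponential variables whose parameters are within $O(\varepsilon)$ of \eqref{mutant.3}. The trait law is within $O(\varepsilon)$ of \eqref{mutant.2}, by dominated convergence in $m(x,dy)$.
\item Letting $\varepsilon\downarrow0$ after $K\uparrow\infty$ is precisely the sense in which the parameter ``converges''.
\end{itemize}
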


\begin{proof} 
Clearly $m(x,dy)$ is the probability that the first mutant has trait $y$, while 
$\frac {f(y,x)_+}{b(y)} $ is the probability that a mutant of type $y$ fixates.
Normalising gives the probability that the first fixating trait is of type $y$. This gives \eqv(mutant.2).
Next, mutations happen as a Poisson process with intensity $u_KKp(x) \bar n_x$
and their average success rate is $\int \frac {f(y,x)_+}{b(y)} m(x,dy)$. 
Thus the first successful mutation arises with rate 
$u_KKp(x) \bar n_x\int \frac {f(y,x)_+}{b(y)} m(x,dy)$. Since $\mu_K K\ll 1/\ln (K)$,
the claim \eqv(mutant.3) follows.
\end {proof}

\chapter{Small mutation steps}

\begin{chapquote}
{J.A.J. Metz et al.
\emph{Adaptive Dynamics,
a geometrical study of the consequences of
nearly faithful reproduction}}
{We always think of the world as intrinsically noisy. This not only does away with some
considerable mathematical complications, but it also
has the advantage of being realistic.}
\end{chapquote}

In this chapter, we turn to the third scaling parameter of the model, the size of a mutation step,
or the size of the evolutionary advantage of a mutant.  This assumption was inherent in Darwin's 
thinking of the evolution of continuous traits. More specifically, this assumption was made in the derivation of diffusion models, in particular in Fisher's 
paper from 1918 \cite{fisher18}. In the context of adaptive dynamics, the assumption of a continuous
evolution leads to the idea of the \emph{canonical equation of adaptive dynamics}, which
describes the continuous evolution of a (monomorphic) population in a fitness landscape under 
the influence of mutation and selection. So while this may not cover all possible biological 
situations, it is certainly relevant to analyse this limit in our models.
\index{continuous evolution}

\section{The canonical equation of adaptive dynamics}
\index{canonical equation of adaptive dynamics}

In this section, we discuss the convergence of the process to the so-called 
\emph{canonical equation of adaptive dynamics (CEAD)}.  
We think of a situation where, in the limit $K\uparrow\infty,\mu\downarrow0$, we observe, with
monomorphic initial conditions, convergence to a trait substitution sequence, i.e. the population remains 
monomorphic. If, moreover, the trait space is some nice metric space, say $\R^d$ or a subset thereof, we
can consider a further limit as $\s$, the step-size of a mutation, tends to zero. One then expects  
to obtain, in the limit, a smooth \emph{deterministic} motion of a population that remains monomorphic and follows the gradient of the invasion fitness. This motion will be governed by an integro-differential equation,
called \emph{canonical equation of adaptive dynamics}.  The CEAD drives populations towards singular points where 
the gradient of the invasion fitness vanishes. As we will discuss later, these are points when the population may
become polymorphic, and we say that \emph{evolutionary branching} occurs. After such a branching, 
subpopulations may resume evolving according to a CEAD-type of equation, but this gets more involved. 
\index{evolutionary branching}

Convergence to the CEAD was first
 proven when the limit $\s\downarrow 0$ was taken \emph{after} the limit $K\uparrow\infty$, see
\cite{C_CEAD} and for a trait space $\XX\subseteq \R$. 
%
%
%
%

\begin{theorem} \label{main_thm}
Consider the process generated by $\LL^K$ with mutation rate $\mu_K$ that satisfies the Champagnat 
assumptions and has mutation kernel $M_\s$, where $M$ has bounded support (recall \eqv(mutation-scaled). Assume further that all functions $b,d, c$ 
satisfy Assumption \thv(ass) and that they are in $C^2$.
Assume further that the initial conditions converge, as $K\uparrow \infty$ to $\bar n(x) \d_x$.  
Assume further that $\del_1f(x,x)\neq 0$  on $I\subset \XX$.

Then, for all $T>0$ such that no mass has left $I$ before time $T$, the sequence  of rescaled processes, $\big(\nu^{K}_{t / (K\mu_K \sigma^2)}\big)_{0\leq t\leq T}$, satisfies that
\be\Eq(cead.1)
\lim_{\s\downarrow 0} \lim_{K\uparrow \infty}\nu^{K}_{t / (K\mu_K \sigma^2)}
=\bar n({x_t})\delta_{x_t},
\ee 
where $(x_t)_{0\leq t\leq T}$ is given as a solution of the CEAD, 
\begin{equation}\label{CEAD}
\frac {d x_t}{dt}=\int h\:[h\:p(x_t)\:\bar n(x_t)\:\partial_1 f(x_t,x_t)]_+ m(x_t,dh),
\end{equation}
with initial condition $x_0$, and 
convergence is  in probability with respect to the Skorokhod topology 
on  $\DD([0,T],\mathcal M(\mathcal X))$.
\end{theorem}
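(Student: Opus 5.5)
The plan is to take the two limits in the stated order. The first limit $K\uparrow\infty$ yields a jump process on traits, namely the trait substitution sequence. The second limit $\s\downarrow 0$ is then a law-of-large-numbers statement for that jump process under diffusive scaling of time.

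\textbf{Step 1 (limit $K\uparrow\infty$).} Since the initial condition is monomorphic, I apply Theorem \ref{PES.0} with $d=1$, using the mutation kernel $m_\s(x,dh)$. I must check that, for $\s$ small and $x\in I$, a successful mutant $x+\s h$ replaces the resident rather than coexisting with it. A Taylor expansion, valid because $b,d,c\in C^2$, gives $f(x+\s h,x)=\s h\,\del_1 f(x,x)+O(\s^2)$. Because $f(x,x)=0$ and the fitness is smooth, also $f(x,x+\s h)=-\s h\,\del_1 f(x,x)+O(\s^2)$. Where $\del_1 f\neq 0$, these two invasion fitnesses have opposite signs for $\s$ small. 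By Proposition \ref{pro_coex} there is then no coexistence. By the two-dimensional analysis of Chapter \ref{chapter3}, Assumption \ref{conv_to_fixedpoint} holds and the limit equilibrium is $(0,\bar n(x+\s h))$. So the PES reduces to a Markov jump process $X^\s$ on $\XX$, the TSS, with $\Lambda_t=\bar n(X^\s_t)\d_{X^\s_t}$. On the time scale $t/(K\mu_K)$ it jumps from $x$ to $x+\s h$ at rate
\be
p(x)b(x)\bar n(x)\,\frac{[f(x+\s h,x)]_+}{b(x+\s h)}\,m(x,dh).
\ee
Theorem \ref{PES.0} gives only finite-dimensional convergence. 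For the Skorokhod statement on $[0,T]$ I would additionally use that jumps of the limit are separated by exponential times and that each transition is completed within time $O(\ln K)\ll 1/(K\mu_K)$. This would upgrade the convergence to convergence in probability for the Skorokhod topology when $\Lambda$ is regarded as a process, in the usual way following Champagnat.

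\textbf{Step 2 (limit $\s\downarrow0$).} Set $Y^\s_t\equiv X^\s_{t/\s^2}$, which corresponds to the time scale $t/(K\mu_K\s^2)$. Its generator acting on $\phi\in C^2$ is
\be
\frac1{\s^2}\int\bigl(\phi(x+\s h)-\phi(x)\bigr)p(x)b(x)\bar n(x)\frac{[f(x+\s h,x)]_+}{b(x+\s h)}m(x,dh).
\ee
Next I expand to leading order. The increment is $\phi(x+\s h)-\phi(x)=\s h\phi'(x)+O(\s^2)$. The rate factor is $\s[h\del_1 f(x,x)]_+/b(x)+O(\s^2)$, where the factor $b(x)$ cancels the $b(x)$ in front. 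Since $m$ has bounded support, this shows that the generator converges uniformly on compacts of $I$ to $\phi'(x)\int h[h\,p(x)\bar n(x)\del_1 f(x,x)]_+\,m(x,dh)$. This is the generator of the ODE \eqv(CEAD), using $\bar n>0$ to move it inside $[\cdot]_+$.

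By the same expansion, the predictable quadratic variation of the martingale part of $\phi(Y^\s)$ is $O(\s^{-2}\cdot\s\cdot\s^2)=O(\s)\to0$. Tightness in $\DD([0,T],\R)$ then follows from the Aldous criterion, as in the proof of Theorem \thv(lil.1), since the drift is bounded and the jumps are $O(\s)$. Every limit point therefore solves the CEAD. Uniqueness holds because the vector field is Lipschitz on $I$ when $\del_1 f\neq 0$ there: the $[\cdot]_+$ is then applied away from its kink for each fixed sign of $h$. This identifies the limit. Finally, continuity of $x\mapsto\bar n(x)\d_x$ transfers convergence from the trait to the measure, with the stopping at the exit from $I$ handled by localization.

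\textbf{Main obstacle.} I expect the hardest part to be the uniformity needed in Step 1. The TSS limit must hold on $[0,T/\s^2]$ uniformly along the path, with a non-coexistence check valid for all residents visited. I would handle this by localizing to a compact subset of $I$ on which $|\del_1 f|$ is bounded below, and by choosing $\s_0$ so that the Taylor remainders are dominated uniformly there. Since the $K$-limit is taken first for fixed $\s$, only finitely many jumps occur on $[0,T/\s^2]$ with high probability. This reduces the problem to the per-transition statements of Chapter \ref{chapter5}.
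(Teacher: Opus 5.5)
Your proposal is essentially the paper's proof. Both get non-coexistence for small $\sigma$ from $\partial_1 f(x,x)+\partial_2 f(x,x)=0$ and reduce to the TSS via Theorem \ref{PES.0}. Both then speed up the TSS generator by $\sigma^{-2}$, show that its drift converges to the CEAD vector field while the martingale bracket is $O(\sigma)$, and close with tightness. You apply the generator to a general $\phi\in C^2$ where the paper uses $\phi(x)=x$, but that is the same computation.

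One step you add is false, though dispensable: the upgrade of the inner limit $K\uparrow\infty$ to Skorokhod convergence cannot hold.
\begin{itemize}
\item The process $\nu^K_{t/(K\mu_K)}$ only has jumps of size $1/K$, so any Skorokhod ($J_1$) limit of it must have continuous paths.
\item For fixed $\sigma$, the TSS jumps from $\bar n(x)\delta_x$ to $\bar n(x+\sigma h)\delta_{x+\sigma h}$, a jump of size of order $\sigma>0$.
\item In $\nu^K$, each substitution is a gradual passage over a rescaled time window $O(K\mu_K\ln K)\to 0$. The $J_1$ topology does not collapse such a passage into a jump.
\end{itemize}
This is exactly why Theorem \ref{PES.0} and Champagnat's TSS result are stated only in the sense of finite-dimensional distributions, and why Chapter \ref{chapter7} uses f.d.d.\ or $L^p$-in-time convergence. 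Moreover, ``in probability'' is meaningless there, since $\Lambda$ is random. The Skorokhod, in-probability statement of the theorem can only refer to the outer limit $\sigma\downarrow 0$: there the TSS jumps are $O(\sigma)$ and the limit $\bar n(x_t)\delta_{x_t}$ is deterministic and continuous. That is precisely what your Step 2 delivers, so drop the upgrade and read the inner limit as f.d.d.\ convergence to the TSS.

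A minor point on uniqueness: $[\cdot]_+$ is globally $1$-Lipschitz, so the kink is not the issue. What you actually need is Lipschitz dependence of $p$ and of $x\mapsto m(x,dh)$ on $x$. This is not among the stated hypotheses, and the paper does not address uniqueness either.
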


\begin{proof} Since $f(x,x)\equiv 0$, $\del_1f(x,x)+\del_2f(x,x)=0$, for all $x\in I$.
Hence, for small enough $\s$, coexistence of $x$ and $x+\s h$ is not possible for any possible mutant. 
Therefore, in the limit $K\uparrow \infty$, we obtain convergence to 
the TSS  for all small enough $\s$ as lang as the process stays in $I]$. 
Thus, the limiting process is a measure-valued Markov process that takes values only in the 
measures of the form $\bar n(x) \d_x$ and is given by
\bea
\Eq(tss-mar.1)
&&(L^{TSS}_\s g)(\bar n(x)\d_x)
\\\nonumber&&=\int p(x)b(x) \bar n(x)\frac{f(x+\s h,x)_+}{b(x+\s h)} \left(g(\bar n({x+\s h})\d_{x+\s h})-
g(\bar n(x)\d_x)\right)m(x,dh).
\eea
The map $\bar n(x)\d_x\to x$ induces a jump process on $\R$ with generator 
\be
\Eq(tss-mar.2)
(\wt L_\s \phi)(x) =\int p(x)b(x) \bar n(x)\frac{f(x+\s h,x)_+}{b(x+\s h)} \left(\phi(x+\s h)-
\phi(x)\right)m(x,dh).
\ee
Simplified to the linear function $\phi(x)=x$, this reduces to 
\be
\Eq(tss-mar.3)
\wt L_\s x =\int p(x)b(x) \bar n(x)\frac{f(x+\s h,x)_+}{b(x+\s h)} \s h
m(x,dh).
\ee
Associated with this, we get the first martingale, 
\bea\Eq(tss-mar.4)
M^1_t&=& X(t)-X(0)\\\nonumber
&&-\int_0^t \int p(X(s))b(X(s)) \bar n(x)\frac{f(X(s)+\s h,X(s))_+}{b(X(s)+\s h)} \s h
m(X(s),dh).
\eea
Proceeding as in Chapter \ref{chapter2}, we can compute the bracket of this martingale,
\bea
\Eq(tss-mar.5)
[M^1]_t &=&\int_0^t  \left((\wt L X^2)(s)-2X(s)(\wh L X)(s)\right)ds\\
\nonumber
&=&
\int_0^t \int p(X(s))b(X(s)) \bar n(x)\frac{f(X(s)+\s h,X(s))_+}{b(X(s)+\s h)}( \s h)^2
m(X(s),dh).
\eea
Note further that, since $f(x,x)=0$, to leading order in $\s$, 
\be
\frac{f(x+\s h,x)_+}{b(x+\s h)} =\s  \frac{[h\del_1f(x,x)]_+}{b(x)} +O(\s^2).
\ee
Thus 
\bea
&&X(t/\s^2)-X(0)\\\nonumber&& -\int_0^t  p(X({s/\s^2}))\bar n(X(s/\s^2))
\int h [h\del_1f(X({s/\s^2}),X({s/\s^2}))]_+ m(X({s/\s^2}),dh) +O(\s)
\eea
is a martingale, and the bracket of this martingale is
\be
\Eq(tss-mar.10)
[M^1]_{t/\s^2} =
\s \int_0^t \int p(X(s/\s^2)) \bar n(X(s/\s^2))[h\del_1f(X(s/\s^2),X(s/\s^2))]_+ h^2
m(X(s/\s^2),dh).
\ee
From this, we get (adding straightforward tightness arguments as in Chapter \ref{chapter2}),
that the rescaled processes 
\be
\Eq(tss-mar.11)
X^\s(t)\equiv X(t/\s^2)
\ee 
converge, as advertised, to the solution of the integral equation
\bea\Eq(tss-mar.12)
&&X^0(t)-X^0(0) \\\nonumber
&&=
\int_0^t  p(X^0(s))\bar n({X^0(s)})
\int h[h \del_1f(X^0(s),X^0(s))]_+ m(X^0(s),dh).
\eea
But this is the integral form of the CEAD.
\end{proof}

The extension of this result to $\XX\subseteq \R^d$ is straightforward. The CEAD in this case reads 
\be\Eq(rd.1)
\frac {d x_t}{dt}=\int h\:[p(x_t)\:\bar n(x_t)\:\langle h,\nabla_1 f(x_t,x_t)]_+ m(x_t,dh).
\end{equation}

We see that the canonical equation stops when it approaches a point where
$\nabla_1 f(x,x)_+ =0$. 

\begin{figure}
\includegraphics[width=0.7\textwidth]{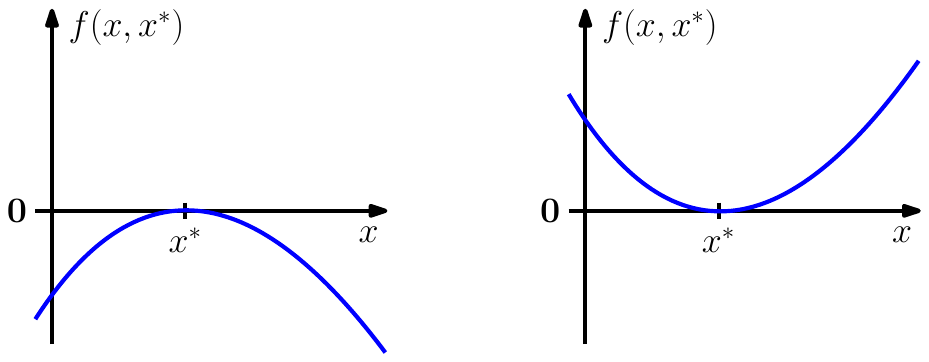}
\caption{Stable and unstable evolutionary singularity.}
\end{figure}

\section{Evolutionary branching} \index{evolutionary branching}
Under the smoothness assumptions on the parameters of the models, 
$\nabla_1 f(x,x)$ is continuous, and so the canonical equation will not reach 
an evolutionary singularity in finite time. Thus, the CEAD runs to a halt when such a point is 
approached. Such points are called \emph{evolutionary singularities}. The future fate of a 
population that approaches such a point $x^*$ depends on the nature of this singularity.
Let us focus on the case $\XX=\R$. \index{evolutionary singularity}

Evolutionary branching occurs at fixpoints of the canonical equation, i.e. when 
$\del_1f(x^*,x^*)=0$. Moreover, such a fixpoint must be unstable, in the sense that
the invasion fitness in both directions is positive, i.e. $f(x,x^*) $ must be convex as a function of
$x$ near $x^*$. We will assume therefore $\del_{11} f(x^*,x^*)>0$. 
Finally, such a fixpoint must be reachable by the CEAD. This means that, for $x$ 
in a neighbourhood of $x^*$, $\del_1 f(x,x)>0$, if $x<x^*$ and  $\del_1 f(x,x)<0$, if $x>x^*$.
Thus, the total derivative of $\del_1f(x,x)$ must be negative at $x^*$, 
which means that 
\be\Eq(seltsam.1)
\del_{11} f(x^*,x^*) +\del_{21} f(x^*,x^*)<0.
\ee
Moreover, since $f(x,x)\equiv 0$, we have the identities 
\bea 
\del_1f(x,x)+\del_2f(x,x)&=&0, 
\\
\del_{11}f(x,x)+2\del_{12}f(x,x) +\del_{22}f(x,x)&=&0.
\eea
This implies that \eqv(seltsam.1) is equivalent to 
\be
\Eq(seltsam.2)
\del_{11}f(x^*,x^*)<\del_{22}f(x^*,x^*).
\ee
In that case, one can 
show that there exists intervals $I^-,\subset (x^*-\e,x^*)$ and  $I^+,\subset (x^*,x^*+\e)$,
such that for  all $x\in I^-$ and $y\in I^+$, $f(x,y)>0$ and $f(y,x)>0$, so that $x, y$ coexist. 
But this means that if for small but finite $\s$ the TSS invades a point in $I^-$, then with positive 
probability, a mutant trait will appear in $I^+$, and fixate to the coexisting bi-morphic population
on $(x,y)$. This phenomenon is called \emph{evolutionary branching}. From that moment on, the 
PES will continue, most likely initially with a bi-morphic population where the two co-existing traits 
move apart. \index{polymorphic evolution sequence}

The case when $x^*$ is an inflexion point is less interesting, since the TSS will just continue at a smaller pace 
towards higher fitness. 

The case when $x^*$ is a local maximum corresponds to $x^*$ being an evolutionary 
stable condition: If we place an equilibrium population at $x^*$, any mutant will be subcritical and
die out almost surely (in the limit $K\uparrow \infty$). Evolution comes to a halt. We will revisit this 
case later. \index{evolutionary stable condition}

\section{All limits simultaneously}  \label{section6.3}

Clearly, taking the limit $\s\downarrow 0$ after the limits $K\uparrow \infty, \mu_K\downarrow 0$ 
have been taken is somewhat unsatisfactory. Ideally, one would like to have conditions 
on $\mu_K$ and $\s_K$ under which a simultaneous limit 
$K\uparrow \infty, \mu_K\downarrow 0,\s_K\downarrow 0$ can be constructed.
Unfortunately, this turns out to be technically far more demanding, and all problems have not been solved so far. One has, however, the following result on the convergence to the CEAD.

\begin{theorem}[\cite{B14}]\TH(cead3)
Let the functions $b,d,c$ satisfy the same conditions as in Theorem \thv(main_thm). 
Assume further that the trait space is a subset of $\R$ and that the 
mutation kernels $m(x,dh)$ are supported on a finite set. Assume that, for some $\a>0$, 
\begin{align}
\label{conv1}  		K^{- \sfrac  1 2 +\alpha}&\ll \sigma_K\ll 1 \qquad \qquad \text{ and }  \\
\label{conv2}	\qquad\qquad	\exp(-K^{\alpha})&\ll \mu_K\ll \frac{\sigma_K^{1+\alpha}}{K\ln K},	\quad \text{ as } \qquad K\rightarrow \infty.
\end{align} 
Fix $x_0\in\mathcal X$ and let $(N^{K}_0)_{K\geq 0}$ be a sequence of $\mathbb N$-valued random variables such that
$N^{K}_0 K^{-1}$ converges in law, as $K\to\infty$,  to the positive constant $\bar n(x_0)$ and
is bounded in $L^p$, for some $p > 1$. \\[0.5em]
For each $K\geq 0$,
let $\nu^{K}_{t}$ be the  process generated by $\LL^K$ with monomorphic initial state ${N^{K}_0} K^{-1} \delta_{\{x_0\}}$.
Then, for all $T>0$, the sequence  of rescaled processes, $\big(\nu^{K}_{t / (K\mu_K \sigma_K{}^2)}\big)_{0\leq t\leq T}$,
converges in probability, as $K\uparrow \infty$,  on the Skorokhod space $\mathbb D([0,T],\mathcal M(\mathcal X))$ to the measure-valued process $\bar n(x_t)\delta_{x_t}$, 
where $(x_t)_{0\leq t\leq T}$ is given as a solution of the CEAD, 
\begin{equation}\label{CEAD.1}
\frac {d x_t}{dt}=\int h\:[h\:p(x_t)\:\bar n(x_t)\:\partial_1 f(x_t,x_t)]_+ m(x_t,dh),
\end{equation}
with initial condition $x_0$.  
\end{theorem}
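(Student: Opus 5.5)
The plan is to follow the proof of Theorem \ref{main_thm}, but to keep $K$ finite throughout. Instead of passing first to the TSS generator \eqv(tss-mar.2), I work directly with the trait process and track the error terms as functions of $K$. The time scale is $1/(K\mu_K\sigma_K^2)$, so on $[0,T]$ the number of mutation events is of order $\sigma_K^{-2}$ and the number of successful invasions is of order $\sigma_K^{-1}$.

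The first step is to show that, with probability tending to one, the population stays essentially monomorphic on this whole interval. Between invasions the resident sits within $K^{-1/2+\a}$ of $\bar n(x)$, by Lemma \ref{lem:step-1}. That lemma gives a stability time of $\eee^{CK^{2\a}}$, which dominates the time scale because of the lower bound $\mu_K\gg\exp(-K^\a)$.

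The second step is a sharp invasion estimate for a single mutant $x+\sigma_K h$. Its invasion fitness is only $f(x+\sigma_K h,x)\approx\sigma_K h\,\del_1 f(x,x)$, which tends to zero. Coupling with birth--death processes as in Lemma \ref{fixation.1}, using \eqv(yule.4), I show that the fixation probability equals $\frac{[\sigma_K h\,\del_1f(x,x)]_+}{b(x)}(1+o(1))$. For this the relative error from the resident's fluctuations must be small: the perturbation $K^{-1/2+\a}$ has to be $\ll\sigma_K$, which is exactly \eqref{conv1}. Mutants with negative fitness go extinct in time $O(1/\sigma_K)$, by Lemma \ref{prop:BP}(ii).

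The third step, which I expect to be the main obstacle, is the invasion phase itself. Here both types have almost equal fitness, so no step is fast. Growth to $\e K$ takes time of order $\ln K/\sigma_K$. The Lotka--Volterra phase near the degenerate two-type system takes time of order $1/\sigma_K$. The extinction of the old resident again takes time of order $\ln K/\sigma_K$. Each of these phases must finish before the next mutation occurs, and that requirement is $K\mu_K\ln K/\sigma_K\ll 1$, which follows from \eqref{conv2}; the extra $\sigma_K^{\a}$ factor is used to absorb the accumulated error over the $\sigma_K^{-1}$ invasions. For the deterministic phase I would write the two-dimensional Lotka--Volterra system in slow coordinates, i.e. total mass and frequency. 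The frequency then obeys a logistic equation with rate of order $\sigma_K$, and this yields both the time estimate and the fact that the system converges to the mutant monomorphic equilibrium, since coexistence is excluded for small $\sigma_K$ when $\del_1 f\neq0$. Controlling the stochastic fluctuations during this slow phase requires the LLN of Theorem \ref{lil.1} on time intervals of length $\sigma_K^{-1}$. I would handle this with a Gronwall bound that grows exponentially in $\sigma_K t$ only along the slow direction, combined with martingale bounds of size $K^{-1/2}$; the condition $\sigma_K\gg K^{-1/2+\a}$ again makes these negligible.

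Finally, on the good event the support process $X_K(t)$ is a pure jump process with jumps $\sigma_K h$. Its compensator agrees, up to $o(1)$ uniformly on $[0,T]$, with $\int h[h\,p\,\bar n\,\del_1f]_+m(dh)\,dt$, and the bracket of the associated martingale is $O(\sigma_K)$, exactly as in \eqv(tss-mar.10). Aldous' criterion then gives tightness, and uniqueness of the CEAD solution together with the $C^2$ coefficients identifies the limit. Since the resident mass is within $o(1)$ of $\bar n(X_K)$ outside of invasion windows of total length $o(1)$, convergence in probability in Skorokhod space follows.
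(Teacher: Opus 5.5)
Your accounting of mutations during invasion phases is wrong, and this is a genuine gap, not a slip.

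\begin{itemize}
\item \emph{The counting fails.} Condition \eqref{conv1} forces $\alpha<1/2$. By \eqref{conv2}, a single invasion (real time of order $\ln K/\sigma_K$) contains a mutation with probability $K\mu_K\ln K/\sigma_K=o(\sigma_K^{\alpha})$. But there are of order $T/\sigma_K$ invasions before macroscopic time $T$. So the expected number of mutations inside invasion windows is only $o(\sigma_K^{\alpha-1})$, and this may diverge. For $\alpha=1/8$, $\sigma_K=K^{-1/4}$, $\mu_K=\sigma_K^{1+\alpha}/(K\ln^2K)$, all hypotheses hold and there are of order $K^{7/32}/\ln K$ such mutations.
\item \emph{Consequence.} The factor $\sigma_K^{\alpha}$ cannot absorb $\sigma_K^{-1}$ invasions. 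Your requirement that every phase finishes before the next mutation fails with probability tending to one in part of the admissible range.
\item \emph{What is needed instead} (Remark \ref{remark_main_thm}(ii)) is to let mutants appear during invasions and to use the extra factor $\sigma_K$ in their success probability. Each has fitness $O(\sigma_K)$ in whatever environment it meets, so the expected number of successful ones inside invasion windows is $o(\sigma_K^{\alpha})\to0$.
\item \emph{Two arguments your plan then lacks.} First, these stray mutants, together with the order $\sigma_K^{-2}$ unsuccessful mutants born between invasions, must never become numerous or large enough to disturb the resident or the ongoing invasion. Lemma \ref{lem:step-1} only runs up to the first mutation time $\tau$ and while all other traits stay below $K^{-1/2+\alpha}$, so it must be iterated with summable error probabilities. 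Second, after the old resident dies out there must be a time at which the population is monomorphic again, before the next successful mutant appears. Without these, your support process is not a single-trait jump process on the good event, and the compensator computation in your last step is not justified.
\end{itemize}

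There is a second hole between your Steps 2 and 3. The birth--death coupling of Step 2 rests on a resident frozen near $\bar n(x)$ (Lemma \ref{lem:step-1}, or its $M\varepsilon\sigma_K$ analogue). That can only work while the mutant density satisfies $n_y\le\varepsilon\sigma_K$. Write $y=x+\sigma_Kh$. Once $n_y\gtrsim\sigma_K$, the mutant's self-competition $c(y,y)n_y$ is no longer small compared with its advantage $f(y,x)=O(\sigma_K)$. It is compensated only by the resident's decrease, approximately $c(x,y)n_y/c(x,x)$, through the cancellation $c(y,x)c(x,y)/c(x,x)-c(y,y)=O(\sigma_K)$. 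So you need two-sided control of the resident relative to the moving quasi-equilibrium, to precision $o(\sigma_K)$. As written, ``growth to $\varepsilon K$ in time $\ln K/\sigma_K$'' is asserted rather than proved. The paper claims the $M\varepsilon\sigma_K$ resident control only while all mutant densities are below $\varepsilon\sigma_K$.

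Your frequency/total-mass coordinates are the right tool here. For the phase you do treat, they are a legitimate alternative to the stochastic Euler scheme of \cite{B14}:
\begin{itemize}
\item the frequency drift $p(1-p)G(p,m)$ has $G$ and its Lipschitz constant of order $\sigma_K$;
\item the mass relaxes at rate of order one to a target depending on $p$ only at order $\sigma_K$;
\item hence the Gronwall factor is $e^{O(\sigma_K t)}$. The paper's objection concerns Gronwall in the original coordinates.
\end{itemize}
But you must run this comparison already from density $\varepsilon\sigma_K$, with relative rather than absolute error control at small frequency. An absolute Gronwall over the time $\asymp\ln(1/\sigma_K)/\sigma_K$ from $\varepsilon\sigma_K$ to $\varepsilon$ loses a factor $\sigma_K^{-C}$. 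You must also prove the comparison quantitatively yourself. Theorem \ref{lil.1} is qualitative, for fixed parameters and a fixed horizon, and cannot be invoked on intervals of length $\sigma_K^{-1}$.
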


\begin{remark}\label{remark_main_thm}
\begin{enumerate}[label=(\roman*)]
\setlength{\itemsep}{3pt}
\item If $x_t$ reaches a point where $\partial_1 f(x_t,x_t)=0$  then  $\frac {d\:x_t}{dt}=0$  and the 
 process stops.
\item The condition $ \mu_K\ll \frac{\sigma_K^{1+\alpha}}{K \ln K}$ allows mutation events during an invasion phase of a mutant trait,
 but ensures that there is no "successful" mutational event during this phase. More precisely, the probability that there will occur a mutation during a single 
 invasion step will be $\s_K^\a$, but we need to control $O(\s_K^{-1})$ such steps, so that there can be $O(\s_K^{\a-1})$ mutations. But each mutation is successful only with probability of order $\s_K$. Thus, there will be no successful mutations happening during any of the $\s_K^{-1}$ invations.
 
\item The fluctuations of the resident population are of order $K^{- \sfrac  1 2}$, 
	thus $K^{- \sfrac  1 2 +\alpha}\ll \sigma_K$ ensures that the sign
	of the initial growth rate is not influenced by the fluctuations of the population size.
	If a mutant trait $y$ appears in a monomorphic population with trait $x$,  then
	its initial growth rate is $b(y)-d(y)-c(y,x)\langle \nu_t^K,\1\rangle=f(y,x)+o(\s_K)=(y-x)\partial_1 f(x,x)+o(\s_K)$ since $y-x=O(\s_K)$.
\item  $\exp(K^{\alpha})$ is the time the resident population stays with high probability in an $O(\e\s_K)$-neighbourhood of an attractive domain, see Lemma \thv(lem:step-1). 
	Thus the condition $\exp(-K^{\alpha})\ll \mu_K$ 
	ensures that the resident population is still in this neighbourhood when a mutant occurs.
	The sharper control is necessary since we need it to ensure positive invasion fitness.
\item The time scale is  $(K\mu_K \sigma_K{}^2)^{-1}$ since the expected time for a mutation event is $(K \mu_K)^{-1}$, 
	the probability that a mutant invades
	is of order $\s_K$ and one needs $O(\sigma_K^{-1})$ mutant invasions to see an $O(1)$ change of the resident trait value. 
\end{enumerate}
\end{remark}

\section{Structure of the proof of Theorem \thv(cead3)}\label{The main idea}

The evolution of the population will be described as 
a succession of \emph{mutant invasions} as in the case of the PES. The assumptions ensure that 
the population remains monomorphic, i.e. behaves like a TSS.

We first control a single \emph{invasion step}. Namely, we show that there is a timescale that is long enough for exactly one mutant population to fixate and for the 
 resident trait to die out, but sufficiently short, such that no two successful mutant populations can exist during this time.
We say the mutant trait fixates in the population. Note that this does not prevent the appearance of other 
mutant traits that do not invade.

Second, we consider a much longer time scale on which the single invasion steps aggregate and give rise 
to a \emph{macroscopic evolution} that converges to the CEAD. 
\medskip

\paragraph{Single invasion step:} 
We divide the time until a mutant trait has fixated in the population into two phases

\subparagraph{Phase 1.} The first difficulty arises because $\s_K$ tends to zero with 
$K$ is, that the invasion fitness of a mutant will be only of order $\s_K$. That means that the actual growth 
rate of the mutant is strictly positive only if the size of the resident population does not 
deviate from its equilibrium size by more than $O(\s_K)$. Thus, we need to show that it is highly unlikely 
that such fluctuations occur during a time that is long enough to allow the mutant population to grow sufficiently.
Since the natural scale of fluctuations of the resident is $K^{-1/2}$, we must impose  that  $\s_K\gg 
K^{-1/2}$. It turns out that \eqv(conv1) is a good choice\footnote{It is not certain that a weaker condition might 
be enough, but with present techniques, there seems to be no way to handle such a case.}.

 Here we fix a small $\epsilon>0$ and prove the existence of a constant, $M<\infty$, independent of $\epsilon$, such that, as long as all mutant densities are smaller than $\epsilon\sigma_K$, 
the resident density stays in an $M\epsilon\sigma_K$-neighbourhood of  $\bar n(x)$. Note that, 
because  mutations are rare and the population size is large, 
the monomorphic initial population has  time to stabilise in an $M\epsilon\sigma_K$-neighbourhood of this equilibrium $\bar n(x)$
before the first mutation occurs.
(The time of stabilisation is of order $\ln(K)\sigma_K^{-1}$ and the time where the first mutant occurs is of order $1/K\mu_K$).

This allows us to approximate the density of one mutant trait $y_1$ by a branching process with birth 
rate $b(y_1)$ and death rate $d(y_1)-c(y_1,x)\bar n(x)$ such that we can 
compute the probability that the density of the mutant trait $y_1$ reaches $\epsilon 
\sigma_K$, which is of order $\sigma_K$, as well as the time it takes to reach this level or 
to die out. Therefore, the process  needs $O(\sigma_K^{-1})$ mutation events 
until a mutant subpopulation appears, which reaches a size
$\epsilon \sigma_K$. Such a mutant is called \emph{successful mutant} and its trait will 
be the next resident trait.

We prove in this step that there are never too many 
 different mutants alive at the same time. From all this, we deduce that the  
 subpopulation of the successful mutant reaches the density $\epsilon \sigma_K$, before a 
 different successful mutant appears.
Note that we cannot use large deviation results on our time scale as used in \cite{CM11} to prove this step. Instead, 
we use some standard potential theory and coupling arguments to obtain estimates of moderate deviations needed to prove 
that a successful mutant will appear before the resident density exits an $M \e\s_K$-neighbourhood of its equilibrium. 
  \vspace{2mm}
  
\subparagraph{Phase 2}  We prove that if a mutant population with 
 trait $y_s$  reaches the size 
$\epsilon \sigma_K$, it will increase to an  $M\epsilon\sigma_K$-neighbourhood of its equilibrium density
 $\bar n(y_s)$. This is due to the estimate \eqv(yule.4) and the fact that by our assumption on 
 $\s_K$,  
 \be
\left( \frac 1{1+c\s_K}\right)^{\e K\s_K} \approx \eee^{-c\e \s_K^2 K}
\leq  \eee^{-c\e K^{2\a}}\downarrow0,
\ee
so that a mutant population of size $\e \s_K$ grows with overwhelming probability to size $\e$. 
The time it takes to grow to this size is $\ln K \s_K^{-1}$.  This is relatively similar to the case with 
fixed $\s$, but all estimates on probabilities have to be much more precise.

The main difficulty involves the control of the passage from mutant size $\e$ until the mutant approaches the new equilibrium and the resident dies out. Before this was 
done using the law of large numbers and the deterministic Lotka-Volterra system. 
There is no way that this can work in the case $\s_K\downarrow 0$, for two reasons:
\begin {enumerate}[label=(\roman*)]
\item The $K\downarrow 0$, $\s_K\downarrow 0$, so in the limit, the 
Lotka-Volterra system is degenerate and has a one-dimensional invariant manifold of fixpoints. Thus, the new fixpoint with only the mutant population present will never be reached.
\item Since the drift towards the new fixpoint is of order $\s_K$ only, we must expect that the passage to the new equilibrium takes a time of order $\s_K^{-1}$, which diverges. The LLN never 
gives control over such long periods.
\end{enumerate}

One might hope that the stochastic systems behave very much like the deterministic system 
with the explicit $K$-dependent parameters. This could possibly overcome the problem (i).
But this does not solve problem (ii). In fact, any attempt to control the 
deviation of the stochastic process from the deterministic one using some Gronwall-type 
argument will introduce blow-up factors of order $\exp(C_{Lip}T)$, and the 
Lipschitz constant of the system is of order one. Thus, error estimates will explode in times
of order $\s_K^{-1}$.  

The reason that our expectations are still correct lies in the nature of the 
deterministic 
dynamical system. As we have said, if $\s_K=0$, the deterministic system has 
a \emph{stable} invariant manifold of fixed points with a vector field independent of $\s_K$ pointing towards this manifold. 
Thus, any trajectory is attracted to this manifold and approaches it in finite time. The same holds 
for the stochastic system by the LLN.
Switching  on 
a small 
$\s_K$,  the invariant manifold is slightly perturbed, but persists, and now will have only 
two fixpoints, the unstable and the stable one, with  a very weak drift of order $\s_K$ in the
 manifold (see Fig. \thv(fig.3).  Thus, we  expect the stochastic system to stay close to this 
 invariant manifold and to move 
along it with speed of order $\s_K$.

\begin{figure}  \TH(fig.3)
\includegraphics[width=5cm]{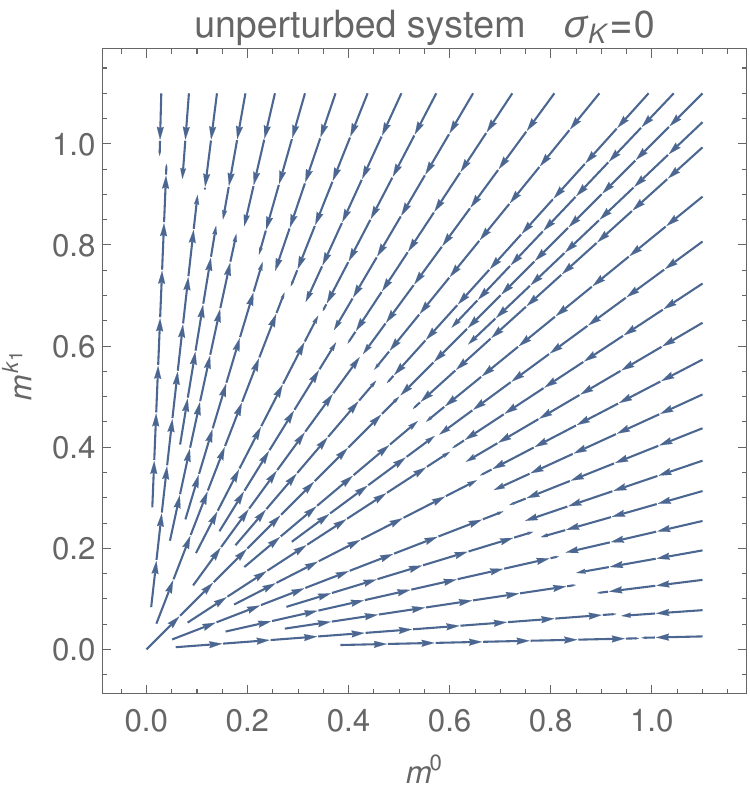} \hspace{4mm}
\includegraphics[width=5cm]{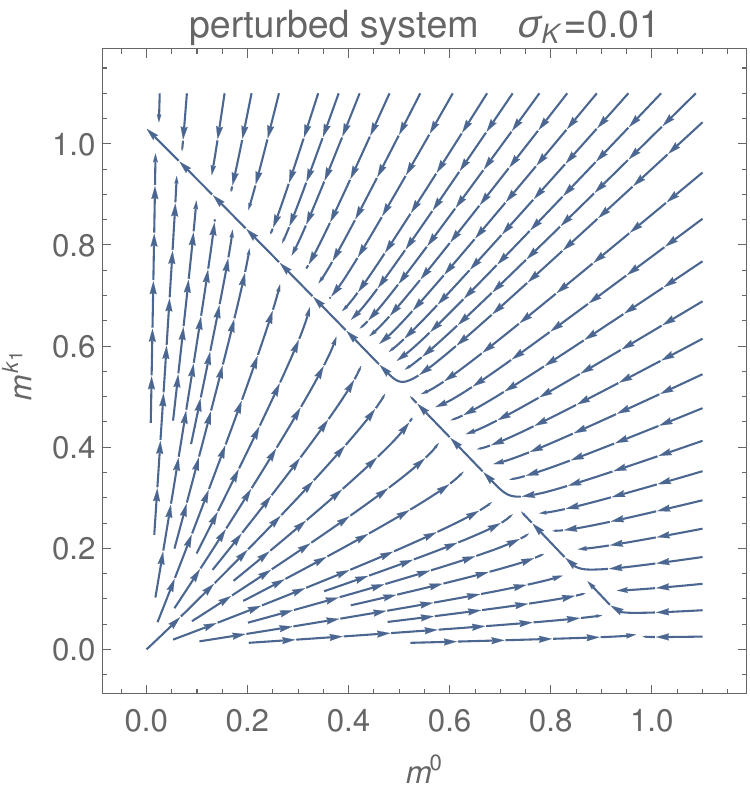}
\caption{
Right: Vector field of the unperturbed system ($\s_K=0$),
Left: Vector field of the perturbed system ($\s_K=0.01$). }
\end{figure}

In \cite{B14}, a rather complicated technique, called 
"rigorous stochastic Euler-Scheme" was developed to prove that this heuristic actually holds. It consists
 of controlling the evolution of the process over small time intervals in a similar spirit via couplings
 as in the first invasion phase. The details are, however, quite painful, and it would be very nice 
 to have a simple proof....

\begin{figure}[ht]
\centering\includegraphics[width=9cm]{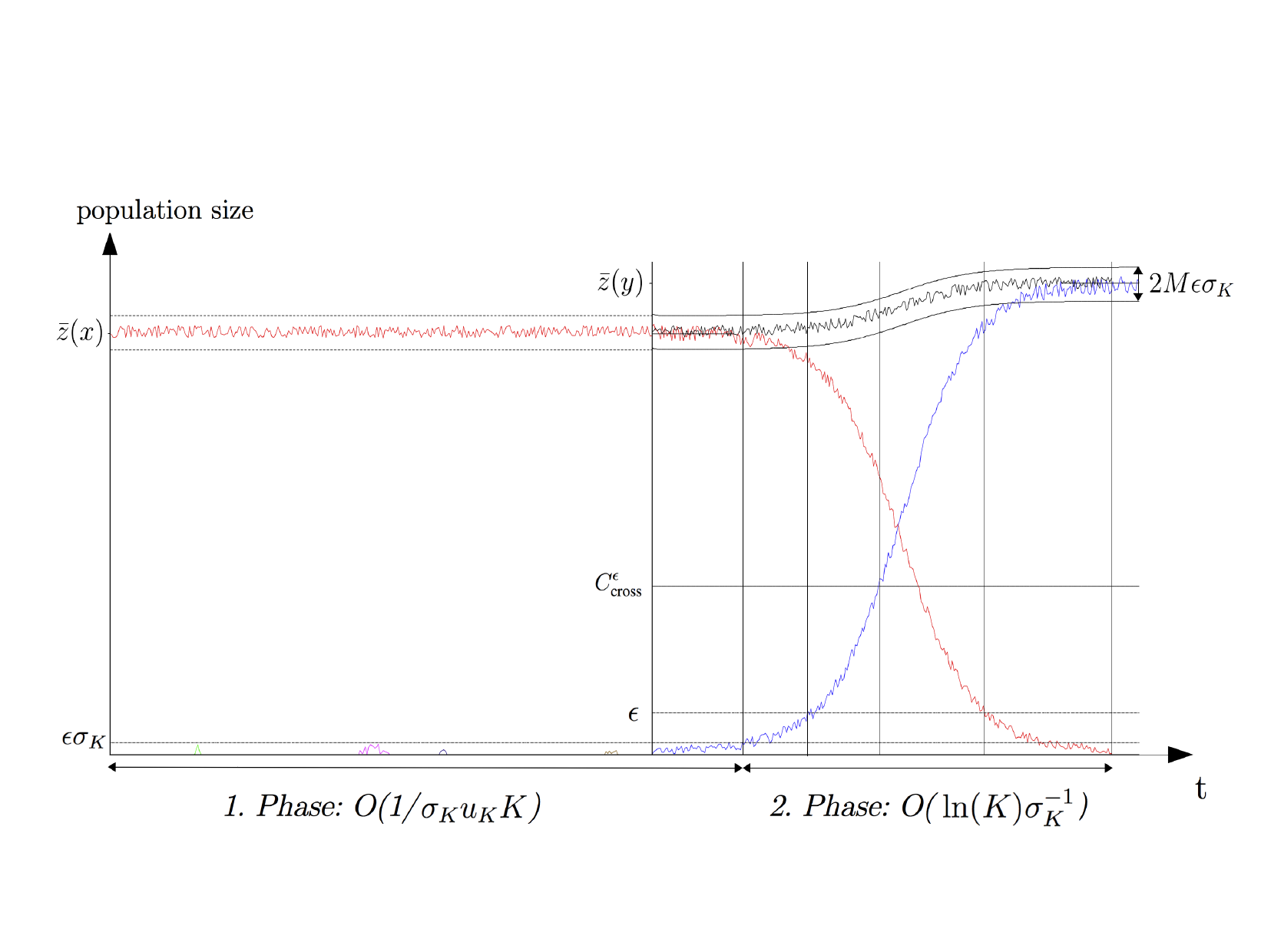}
  \caption{\label{fig}Typical evolution of the population during a mutant invasion. }
\end{figure}
 \begin{figure}[ht]
\centering\includegraphics[width=9cm]{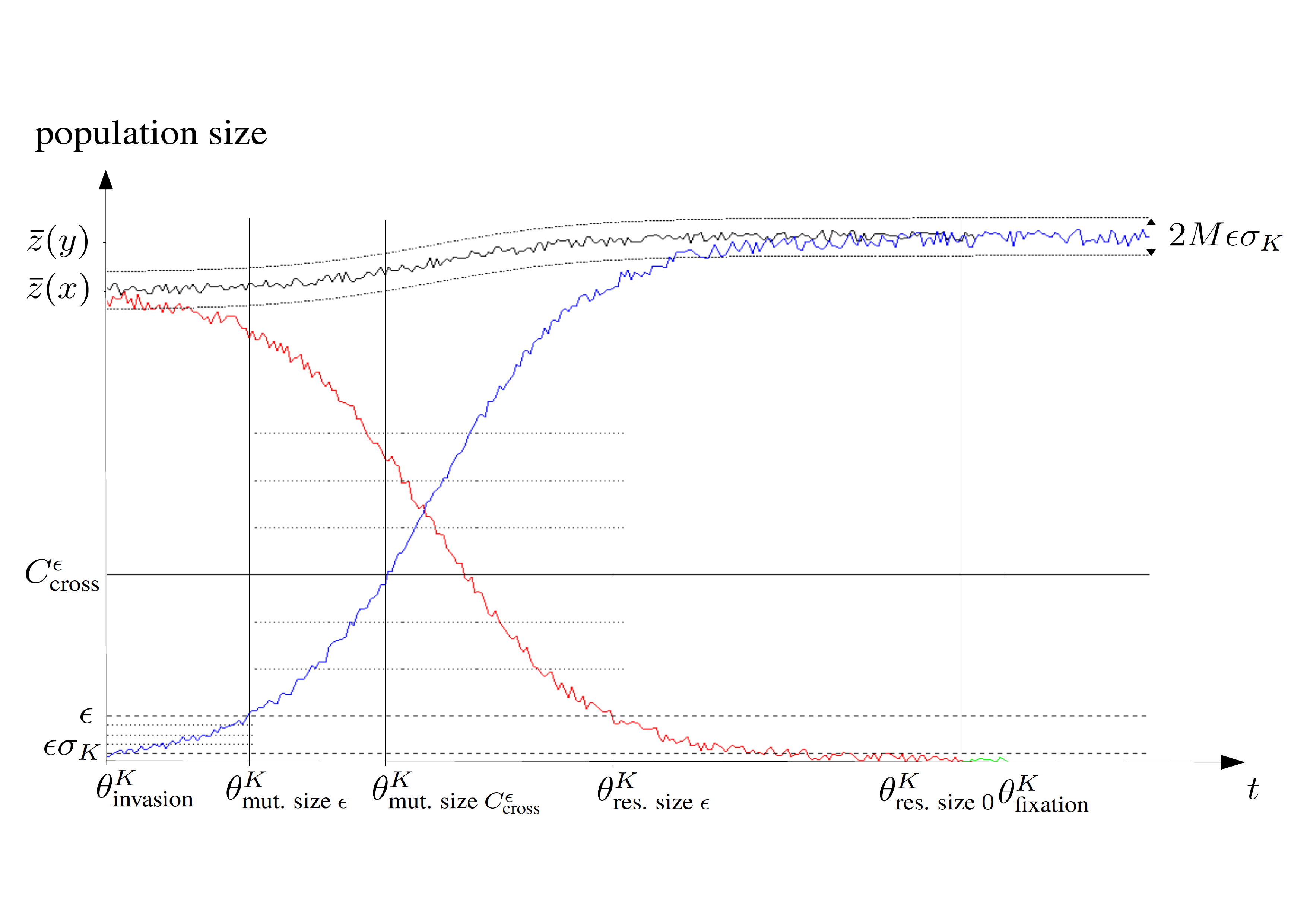}
  \caption{\label{fig2} Evolution of the population after the destiny of the successful mutant has reached the value $\e_K$. }
\end{figure}
With this method, one can prove that the mutant density reaches the $M\e\s_K$-neighbourhood of $\bar n(y_s)$ and
 the resident trait dies out. Note that an unsuccessful mutant may be alive at this time. 
 Therefore, we prove that after the resident trait has died out, there is a time when the population consists only of one trait,
 namely, the one that had fixated before the next successful mutant occurs.


%
\section{Evolutionary branching and beyond}

\index{evolutionary branching!conjectures}

There are rather precise conjectures on how the PES can be described in the joint limit
$K\uparrow \infty, \mu_K\downarrow 0,\s_K\downarrow 0$. 
We have seen that starting in a monomorphic situation, the process converges to the CEAD
on the time scale $1/K\mu_K\s^2_K$. This process comes to a halt at that state when the CEAD 
approaches an evolutionary singularity. Then, there should be a precise longer time scale on which 
evolutionary branching takes place if this fixpoint is unstable. As soon as branching happens,
one should expect that the individual subpopulations again follow, on the faster
time scale, a kind of CEAD, until they all approach a fixpoint. 
Then, branching occurs for all those that are unstable at the larger time scale, and so on. 
Thus, on this second time scale, the process is always at an evolutionary singularity and 
jumps between one such singularity to the next, until a state is reached when 
all populations are in a stable singularity, i.e. when an evolutionary stable condition is reached. 

Unfortunately, all this is not (yet?) proven rigorously. We will, however, 
state below some conjectures that were formulated in an unfinished paper
by N. Champagnat and A. Bovier.

We consider (only) the case when $\XX\subseteq \R$. We also assume the same regularity assumptions as before.
The following is a precise definition of evolutionary branching in this setting.
\index{evolutionary branching!definiton}

\begin{definition}
  \label{dfn:branching-mono}
  Let $\varepsilon>0$ and $x^*$ be a fixpoint  of the
  CEAD, i.e.\ $\partial_1f(x^*,x^*)=0$. We say that
  there is $\varepsilon$-branching at $x^*$ if there exists a time
  $t\geq 0$ such that
  \begin{equation}
    \label{eq:def-br-1}
    \mbox{\textup{supp}}\big(\nu^{K,u,\s_K}_{t}\big)
    \subset[x^*-\varepsilon,x^*+\varepsilon],
  \end{equation}
  and a time $s>t$ such that
  \begin{equation}
    \label{eq:def-br-2}
    \begin{gathered}
      \mbox{\textup{supp}}\big(\nu^{K,u,\s_K}_{t}\big)
      \subset[x^*-2\varepsilon,x^*+2\varepsilon], \\
      \nu^{K,u,\s_K}_s\big((x^*-\varepsilon,x^*+\varepsilon)\big)=0, \\
      \nu^{K,u,\s_K}_s\big([x^*-2\varepsilon,x^*-\varepsilon]\big)>0\quad
      \mbox{\textup{and}} \quad
      \nu^{K,u,\s_K}_s\big([x^*+\varepsilon,x^*+2\varepsilon]\big)>0,
    \end{gathered}
  \end{equation}
  where $\mbox{\textup{supp}}(\nu)$ denotes the support of the
  measure $\nu$.
\end{definition}

\begin{conjecture}
  \label{thm:s2-mono}
  For a fixed $x\in{\XX}$ and $\gamma>0$, assume that
  $\nu^{K,\mu_K,\s_K}_0=N^{K,\mu_K,\s_K}_0\delta_x$ where
  $N^{K,\mu_K,\s_K}_0$ converges in probability to $\gamma\delta_x$
  when $\mu_K\downarrow 0$, $\s_K\downarrow 0$ and
  $K\uparrow+\infty$. Assume that the solution of the
  CEAD  with initial condition $x$ converges when
  time goes to infinity to a steady state $x^*\in{\XX}$. Assume
  also that $\mu_K\downarrow 0$, $\s_K\downarrow 0$ and
  $K\uparrow+\infty$ such that
  \begin{gather}
    K^{-1/2+\alpha} \ll \s_K \ll 1 \quad\mbox{for some $\alpha>0$},
    \label{eq:hyp-s2-mono-1} \\
\exp(-K^\a) \ll \mu_K \ll \frac{\s_K^{1+\a}}{K\log K}.
    \label{eq:hyp-s2-mono-2}
  \end{gather}
\begin{itemize}
\item [(i)]  If
$      \partial_{22}f(x^*,x^*)>\partial_{11}f(x^*,x^*)>0$
       then there is \emph{fast evolutionary branching}: there exists
    $\varepsilon_0>0$ such that for any $\varepsilon<\varepsilon_0$,
    the probability of $\varepsilon$-branching at $x^*$ converges to 1
    when $(K,\mu_K,\s_K)\rightarrow(+\infty,0,0)$.

    Moreover, if $T_\varepsilon$ is the first time $t$ such
    that~(\ref{eq:def-br-1}) holds and $S_\varepsilon$ is the first
    time $s>T_\varepsilon$ such that~(\ref{eq:def-br-2}) holds, then,
    for any $\varepsilon<\varepsilon_0$,
    \be
      \label{eq:cv-time}
      \frac{\mu_K\s_K^2
        K}{\log(1/\s_K)}T^{K,\mu_K,\s_K}_\varepsilon\rightarrow 0
     \ee
     and
     \be
      \label{eq:br-time}
      \frac{\mu_K\s_K^2
        K}{\log(1/\s_K)}S^{K,\mu_K,\s_K}_\varepsilon\rightarrow
      \frac{2}{p(x^*)\bar n(x^*)[\partial_{22}f(x^*,x^*)-\del_{11}f(x^*,x^*)]
        \int m(x^*,dh)h^2},
    \ee
    in probability, as $(K,\mu_K,\s_K)\rightarrow(+\infty,0,0)$.
  \item[(ii)] If
    $  \partial_{22}f(x^*,x^*)>\partial_{11}f(x^*,x^*) $  and 
      $ \partial_{11}f(x^*,x^*)<0$
       evolutionary branching does not occur on the time scale
    $\log(1/\s_K)/\mu_K\s_K^2K$: the process\\
    $$
    \left(\nu^{K,\mu_K,\s_K}_{t\log(1/\s_K)/\mu_K\s_K^2K}\right)_{t\geq 0}
    $$
    converges  to the constant process
    $\bar n(x^*)\delta_{x^*}$.
  \end{itemize}
\end{conjecture}

Why is the timescale $\ln (1/\s_K)/K\mu_K\s^2_K$? For mutants to appear on both sides of $x^*$, the process must approach $x^*$ to a distance of
order $\s_K$. From a distance $\e$, this requites $\e/\s_K$ steps, and in the $n$-th step the distance from $x^*$ is (roughly, assuming all steps 
have length $\s_K$) $\e-n\s_K$. 
Now, for $x$ close to $x^*$, the invasion fitness for a single step
is
\bea\nonumber
f(x+\s_K,x)&\approx& f(x^*+\s_k,x^*)+  (x-x^*) (\del_1f(x^*+\s_K,x^*)+\del_2f(x^*+\s_K,x^*))\\\nonumber
&\approx& \frac{\s_K^2}2\del_{11}f(x^*,x^*)+  (x-x^*) 
\s_K (\del_{11}f(x^*,x^*)+\del_{12}f(x^*,x^*)))\\
&=&\frac12 \s_K^2\del_{11}f(x^*,x*)+ \s_K(x^*-x)[\partial_{22}f(x^*,x^*)-\del_{11}f(x^*,x^*)].\nonumber\\ 
\eea
Then the invasion fitness in the $n$-th step will be roughly 
$\s_K(\e-n\s_K)$, 
and so the time it take to make step number $n$ is of order $1/(K\mu_K\s_K^2) \frac 1{\e/\s_K-n}$.
Summing this over $n$ from one to $\e/\s_K$ gives $\ln(1/\s_K)/K\mu_K\s^2_K$. 
Eq. \eqv(eq:br-time) then follows from the law of large numbers and throwing in the rate at which mutants appear.

Once the population is close enough to $x^*$ that a mutant appears at the other side of $x^*$, 
resident and mutant will coexist (since $f(x^*-h\s,x^*+h'\s)\approx -hh'\s^2\del_{12}f(x^*,x^*)>0$, and
$f(x^*+h'\s,x^*-h'\s)\approx -hh'\s^2\del_{12}f(x^*,x^*)>0$).
 and therefore evolutionary is accomplished. 
 
 The next question is the future fate of the two coexisting traits.  The following conjecture
states that they move apart according to a coupled CEAD.

\begin{conjecture}
  \label{thm:s2-mono-br-state}
  With the assumptions and notation of the previous theorem, in case~(i),
  \begin{equation}
     \label{eq:state-eps-br}
     \nu^{K,\mu_K,\s_K}_{S_\varepsilon}\rightarrow
     \bar n_1(x^\varepsilon_1,x^\varepsilon_2)\delta_{x^\varepsilon_1}+
     \bar n_2(x^\varepsilon_1,x^\varepsilon_2)\delta_{x^\varepsilon_2}
   \end{equation}
   in probability as $(K,\mu_K,\s_K)\rightarrow(+\infty,0,0)$, where
   $x^\varepsilon_1$ and $x^\varepsilon_2$ are such that
   $x^\varepsilon_1\leq x^*-\varepsilon$ and $x^*+\varepsilon\leq
   x^\varepsilon_2$, where at least one inequality is actually an
   equality. They are defined as follows.  Let $\phi(t,x_1,x_2)$
   denote the flow of the ODE in $\R^2$
   \begin{equation}
     \label{eq:CEAD-dimo}
     \begin{cases}
       \frac{dx_1}{dt}=\frac{1}{2}\int hm(x_1,dh)\:
       p(x_1)\bar n_1(x_1,x_2)[h \partial_1 f(x_1;x_1,x_2)]_+, \\
       \frac{dx_2}{dt}=\frac{1}{2}\int hm(x_2,dh)\:
       p(x_2)\bar n_2(x_1,x_2)[h \partial_1 f(x_2;x_1,x_2)]_+,
     \end{cases}
   \end{equation}
Let
   $t_\varepsilon(x_1,x_2)$ be the first time such that
   $\phi(t,x_1,x_2)\not\in(x^*-\varepsilon,x^*+\varepsilon)^2$.  The
   flow $\phi$ is only defined until the first time where
   coexistence of $x_1(t)$ and $x_2(t)$ does not hold; however,
   $t_\varepsilon(x_1,x_2)$ is well defined if
   $(x_1,x_2)\in(x^*-\varepsilon,x^*+\varepsilon)^2$. Then, there
   exists $\varepsilon_1>0$ such that, for any
   $\varepsilon<\varepsilon_1$,
   \begin{equation}
     \label{eq:def-x-eps}
     x^\varepsilon_i=\lim_{\eta\rightarrow
       0}\phi_i(t_\varepsilon(x^*-\eta,x^*+\eta),x^*-\eta,x^*+\eta),
     \quad i=1,2
   \end{equation}
   and
   \begin{equation}
     \label{eq:rapp-x-eps}
     \lim_{\varepsilon\rightarrow
       0}\frac{x^\varepsilon_2-x^*}{x^*-x^\varepsilon_1}
     =\frac{\int_{\R_-}|h|m(x^*,dh)}{\int_{\R_+}|h|m(x^*,dh)}.
   \end{equation}
\end{conjecture}

Equations~(\ref{eq:state-eps-br}),~(\ref{eq:def-x-eps})
and~(\ref{eq:rapp-x-eps}) characterize the state of the population
just after a fast branching. Note that this characterisation involves
the generalization~(\ref{eq:CEAD-dimo}) of the CEAD to a dimorphic
situation. 
This equation should also allow us to describe the evolution of the population
on the first time scale \emph{after} time $S_\varepsilon$.
In fact, we expect that a polymorphic population will evolve according to a multidimensional CEAD 
until it reaches a new singularity on the CEAD timescale.

\begin{conjecture}
  \label{thm:s1-poly}
  For fixed $x_1,\ldots,x_N\in{\XX}$ that coexist,
  $\nu^{K,\mu_K,\s_K}_0$ converges in law to
  \be
  \sum_{i=1}^N\bar n_i(x_1,\ldots,x_n)\delta_{x_i},
  \ee
   as
 $K\rightarrow+\infty$. Assume the usual conditions on $\mu_K$ and $\s_K$.
Then
  \begin{equation}
    \label{eq:cv-s1-poly}
    \nu^{K,\mu_K,\s_K}_{\frac{t}{\s_K^2 \mu_K K}} \Rightarrow
    \sum_{i=1}^{N(t)}\bar n_i(\bx(t))\delta_{x_i(t)},
  \end{equation}
  for the Skorokhod topology on $\DD((0,+\infty),{\MM}({\XX}))$,
  where $N(0)=N$, $x_i(0)=x_i$, for $1\leq i\leq n(0)$, and
  \begin{equation}
    \label{eq:CEAD-poly}
    \frac{dx_i}{dt}=\frac{1}{2}\int_{\XX} hm(x_i,dh)\:
    p(x_i)\bar n_i(\bx(t))[h\partial_1
    f(x_i;\bx(t))]_+,\quad 1\leq i\leq n.
  \end{equation}
%
\end{conjecture}

Note that in the course of time, it is possible that the coexistence of all the $n$ traits gets lost, and some
of the $\bar n_i(\bx(t))$ can be zero. So sub-populations can become extinct, but on the timescale considered, no branching can occur. 

It shows that, on
the first time scale, a branch can go extinct, but no new branch can
appear. Because of the possible extinction of a branch, one has to be
careful in the definition of the limit process, because
$\bar n(x_1,\ldots,x_k)$ is correctly defined only if the traits
$x_1,\ldots,x_k$ coexist. Note that the only way for the number of
branches to decrease is by the extinction of one branch. Two distinct
branches cannot collide on the first time scale. This result also
allows one to describe what happens after the branchings
in Conjecture~\ref{thm:s2-mono}.  For example, in
the case of a fast branching, after time $S_\varepsilon$ (defined in
Conjecture~\ref{thm:s2-mono}), on the first time scale, the evolution of
the population is governed by~(\ref{eq:CEAD-poly}) with $k=2$ and with
initial condition $x_i(0)=x_i^\varepsilon$ for $i=1,2$, defined
by~(\ref{eq:def-x-eps}).

\begin{remark} In a slightly different context, basically a Fisher-Wright model with small mutation steps, Champagnat and Hass  \cite{ChamHass2023} have recently derived the CEAD when only taking the mutation step size, but not the mutation rate to zero with $K$.
\end{remark}
%


\chapter{Stochastic systems with moderately rare mutations}\label{chapter7}

\begin{chapquote}
{Thomas Malthus, \emph{Essay on the Principle of Population}}
{\frakfamily\fraklines {
It is an acknowledged truth in philosophy, \\ 
that a just theory will always be confirmed by experiment.\\
Yet so much friction, and so much minute circumstances\\
 occur in practice, which it is next to impossible for the most\\
  enlarged and penetrating mind to foresee, \\ 
  that on few subjects can any theory be pronounced just, \\ that has not stood the test of experience.}}
\end{chapquote}

In this chapter, we return to the model with a fixed mutation step size but consider slightly higher mutation probabilities than before. In Chapter \ref{chapter5}, the scaling of $\mu_K\ll (K\ln K)^{-1}$ ensured that (with high probability) only one mutant population was present at any given time. In the following, we consider higher probabilities $\mu_K$ that yield multiple mutant populations to be present at the same time, competing to invade the resident population. To nevertheless be able to work with only a finite number of different subpopulations, we restrict ourselves to a discrete trait space, taking the form of a finite directed graph. The limiting dynamics of competing mutant subpopulations and successive resident (possibly coexisting) traits can be captured by an algorithmic description. This chapter is based on the paper \cite{CoqKrautSma21}.

\section{Power law mutation rates}

We already studied a scenario of higher mutation probabilities on a finite trait graph, namely by considering the limit of $\mu\to0$ in the deterministic system. Recall the heuristics from Section \ref{section42}, where a limited radius of mutation was introduced. This idea is now made rigorous for the stochastic model. 

As trait space, consider a finite directed graph $\cG=(V,E)$, where the vertices $v\in V$ are the different attainable traits and directed edges $(v,w)\in E$ mark the possibility of mutation at birth, i.e.\ $m(v,w)>0$ if and only if $(v,w)\in E$. In the following, when we speak of $\ell$th-order neighbours of a trait $v$, we refer to those traits that are reachable by a shortest directed path of length $\ell$ from trait $v$. Moreover, we introduce the directed graph distance $d(v,w)$ as the length of the shortest directed path from $v\in V$ to $w\in V$.

A resident population of size of order $K$ induces neighbouring mutants at a rate of order $K\mu_K$. If, as in Chapter \ref{chapter5}, the scaling of $\mu_K\ll (K\ln K)^{-1}$ applies, this is a rare event, occurring on a time scale slower than $\ln K$, the time scale of mutant growth. However, if this relation is reversed, or even $\mu_K\gg K^{-1}$, then new mutants are induced much faster, and all neighbouring traits of the resident traits are immediately present at a size of order $K\mu_K\gg1$. Iterating this procedure, second-order mutants (at distance 2 to the resident) are produced at a rate $K\mu_K^2$, third-order mutants at a rate $K\mu_K^3$, and so on. In the following, we consider the case where mutants are spread within a radius of $\alpha$ of the resident traits, i.e.\ we choose mutation probabilities of the form $\mu_K=K^{-1/\alpha}$, for some $\alpha\notin\mathbb{N}$, such that $K\mu_K^\ell\gg1$ for $\ell<\alpha$ but $K\mu_K^\ell\ll1$ for $\ell>\alpha$.

Due to the multiple competing mutant traits that can be present at the same time under this scaling of mutation probabilities, we need to study their population size during the exponential growth phase in more detail. To this end, it is convenient to consider the orders of population sizes rather than the absolute population size. With the number of individuals of trait $v$ at time $t$ being equal to $N^K_v(t)=\langle\nu^K_t,\1_v\rangle\cdot K$, let
\be
\beta^K_v(t):=\frac{\ln(1+N^K_v(t\ln K))}{\ln K}\ \Leftrightarrow\ N^K_v(t\ln K)=K^{\beta^K_v(t)}-1.
\ee
Consequently, a single new mutant starts at an order of $\beta^K_v\approx0$ and reaches a size of order $K$ when $\beta^K_v\approx 1$. The time scale of order $\ln K$ is considered since it is the timescale of exponential population growth and exponential growth of $N^K_v$ at rate $f$, i.e.\ $N^K_v(s)=N^K_v(0)\cdot \eee^{fs}$, translates to linear growth of $\beta^K_v$, i.e.\ $\beta^K_v(t)=\beta^K_v(0)+fs$. Note that due to the self-competition of every trait, it is ensured that asymptotically $\lim_{K\to\infty}\beta^K_v(t)\in[0,1]$ for all times $t$.

\section{Results}\label{chapter7results}
The main result of this section is to prove the convergence of the $(\beta^K_v(t))_{v\in V, t\in [0,T]}$ to a limiting $(\beta_v(t))_{v\in V, t\in [0,T]}$, as $K\to\infty$, and to describe this limiting object. Based on this, we can then also state a limit result for the rescaled population sizes $(N^K_v(t\ln K)/K)_{v\in V, t\in [0,T]}$.

The key technique is again to couple the different mutant populations with simpler birth-death processes (with immigration). For those simpler processes, we make use of the following two technical Lemmas.

\begin{lemma}[Champagnat-Méléard-Tran \cite{CMT2021}] \label{lemma_BDPexponents}
Let $Z^K$ be simple birth death processes with birth rate $b$ and death rate $d$, and let $Z^K(0)=\lfloor K^\beta-1\rfloor$, for some $\beta>0$. Then, for all finite times $T>0$,
\be
\left(\frac{\ln(Z^K(t\ln K)+1)}{\ln K}\right)_{t\in[0,T]}\to (\bar{\beta}(t))_{t\in[0,T]}, \text{ as }K\to\infty,
\ee
in probability in $L^\infty([0,T])$, where $\bar{\beta}(t):=(\beta+(b-d)t)\vee 0$.
\end{lemma}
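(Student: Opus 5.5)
The plan is to split according to the sign of $r\equiv b-d$. In each case I will control $\ln(Z^K(t\ln K)+1)/\ln K$ with two estimates: a first–moment bound for the upper deviation, and either a second–moment/martingale bound (supercritical) or an extinction estimate (subcritical) for the lower deviation. Since $\bar\beta$ is continuous and monotone or piecewise linear, it suffices to prove convergence in probability on a finite grid $t_0<t_1<\dots<t_m$ of mesh $\delta$. I will then interpolate by monotonicity arguments, using supremum bounds on each small interval.

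\textbf{Supercritical and critical case ($r\ge 0$).} Here $W_t\equiv e^{-rt}Z^K(t)$ is a martingale. For the linear birth–death process its bracket satisfies $\E[W]_t\le Z^K(0)(b+d)\int_0^t e^{-rs}ds$, in the spirit of \eqv(martingale.21). Doob's $L^2$–inequality then gives
\[
\P\Bigl(\sup_{s\le T\ln K}|W_s-Z^K(0)|>\tfrac12 Z^K(0)\Bigr)\le C\,\frac{Z^K(0)\,T\ln K}{Z^K(0)^2}\to 0 ,
\]
because $Z^K(0)\approx K^\beta$ with $\beta>0$. When $r>0$ the factor $T\ln K$ is not needed. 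On the complementary event we have $Z^K(s)\in[\tfrac12,\tfrac32]K^\beta e^{rs}$ uniformly in $s\le T\ln K$. Taking logarithms and dividing by $\ln K$ gives $\sup_t|\beta^K(t)-(\beta+rt)|\le C/\ln K$. Hence convergence is uniform here, and no grid is needed.

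\textbf{Subcritical case ($r<0$).} The same martingale argument applies up to the time $t^*=\beta/|r|$. For $t\le t^*-\eta$ we have $Z^K(0)e^{rs}\ge K^{|r|\eta}$. I will apply the Doob bound to $W$ stopped at $(t^*-\eta)\ln K$. The relative fluctuation is $\var(W_s)/\E[W_s]^2\le C e^{|r|s}/Z^K(0)\le CK^{-|r|\eta}$, so the same sandwich holds uniformly on $[0,t^*-\eta]$. After that time I only need an upper bound on $\beta^K$, since $\bar\beta$ is already within $|r|\eta$ of $0$ there. Because $\beta^K\ge 0$ trivially, it remains to show that $\sup_{s\ge (t^*-\eta)\ln K}Z^K(s)\le K^{2|r|\eta}$ with high probability. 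To get this I restart at time $(t^*-\eta)\ln K$ with at most $2K^{|r|\eta}$ individuals. Each ancestral line is a subcritical process, so its total progeny has finite mean $1/(1-b/d)$-type bounds. Markov's inequality applied to $\sup Z\le$ total progeny then gives the bound. Letting $\eta\downarrow0$ yields the $L^\infty$ convergence.

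\textbf{Main obstacle.} The delicate point is uniformity in time near $t^*$ in the subcritical case. There the process is of size $K^{o(1)}$, and the multiplicative martingale control breaks down. I expect the total-progeny argument to handle this, since in log scale we only need to rule out excursions to height $K^{\epsilon}$. If that argument proves too crude, I will fall back on the hitting probability \eqv(yule.4) for the embedded birth–death chain. That gives $\P(\text{reach }K^{2|r|\eta}\text{ from }K^{|r|\eta})\le (b/d)^{K^{2|r|\eta}-K^{|r|\eta}}$, which is even stronger.
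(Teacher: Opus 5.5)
Your proposal is correct and follows essentially the paper's route: you use the martingale $e^{-(b-d)t}Z^K(t)$, its bracket, and Doob's maximal inequality, which already gives uniform-in-time control, so your preliminary grid/monotonicity reduction is unnecessary. Your extra step in the subcritical case beyond $t^*=\beta/(d-b)$ is also correct: there the multiplicative Doob bound stops working, and you instead bound $\sup_{s\ge (t^*-\eta)\ln K}Z^K(s)$ by total progeny plus Markov's inequality (or by the ruin estimate \eqv(yule.4)), which fills in a point the paper's three-step sketch leaves implicit.
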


This formalises the exponential growth of $Z^K$ and hence the linear growth of $\beta^K$ at rate $b-d$. The proof of this lemma consists of three main steps. First, the martingale problem for $Z^K$ yields the desired dynamics in expectation, i.e. $\mathbb{E}(Z^K(t))=\eee^{(b-d)t}K^\beta$. Second, the bracket of the martingale induced by the rescaled process $\eee^{-(b-d)t}Z^K(t)=K^\beta+\tilde{M}(t)$ is calculated. Finally, Doob's inequality implies that the probability of diverging from the expected value tends to 0.

\begin{lemma}[Corollary of results in Champagnat-Méléard-Tran [ref] ]\label{lemma_BDPIexponents}
Let $Z^K$ be a birth-death process with immigration, with birth rate $b$, death rate $d$, immigration rate $K^c\eee^{as}$ at time $s\geq0$, $Z^K(0)=\lfloor K^\beta-1\rfloor$, for some $\beta\geq0$. Assume that either $c>0$ or both $c=0$ and $a>0$. Then, for any $\delta>0$ and $T>0$, 
\be
\left(\frac{\ln(Z^K(t\ln K)+1)}{\ln K}\right)_{t\in[\delta,T]}\to (\bar{\beta}(t))_{t\in[\delta,T]}, \text{ as }K\to\infty,
\ee
in probability in $L^\infty([\delta,T])$, where $\bar{\beta}(t):=((\beta\vee c) + (b-d)t) \vee (c + at) \vee 0$.
\end{lemma}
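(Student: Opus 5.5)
We obtain Lemma \ref{lemma_BDPIexponents} by splitting $Z^K$ into two independent pieces. The first, $Z^K_1$, is the descendants of the $\lfloor K^\beta-1\rfloor$ initial individuals; it is a simple birth–death process, and Lemma \ref{lemma_BDPexponents} applies to it directly. The second, $Z^K_2$, is the descendants of immigrants; it starts at $0$, and its immigrants arrive as a Poisson process with intensity $K^c\eee^{as}$. Since $\ln(x+y+1)/\ln K$ differs from $\max\{\ln(x+1),\ln(y+1)\}/\ln K$ by at most $\ln 2/\ln K$, it suffices to prove that $\ln(Z^K_2(t\ln K)+1)/\ln K$ converges uniformly on $[\delta,T]$ to $\gamma(t):=(c+(b-d)t)\vee(c+at)\vee 0$, and then to take the maximum with $(\beta+(b-d)t)\vee 0$. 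Two checks: this maximum equals $\bar\beta(t)$ because $(\beta\vee c)+(b-d)t=\max(\beta+(b-d)t,c+(b-d)t)$; and the case $\beta=0$ is handled trivially, since then $Z^K_1\equiv 0$.

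\textbf{First moment.} Write $r=b-d$. The mean is
\[
\E[Z^K_2(s)]=K^c\int_0^s \eee^{au}\eee^{r(s-u)}\,du .
\]
On the time scale $s=t\ln K$, its logarithm divided by $\ln K$ converges uniformly on $[\delta,T]$ to $c+\max(a,r)t$. This is the Laplace asymptotics of the integral, and it holds with $\delta>0$ because the integral is at least of order $\min(s,1)$ times its maximal integrand value. Markov's inequality, applied on a finite grid of times together with monotone control between grid points, then gives the upper bound $\beta^K_2(t)\le\gamma(t)+\eta$ uniformly on $[\delta,T]$ with high probability. For the interpolation I bound $\sup_{s\in[t_i,t_{i+1}]}Z^K_2(s)$ by a pure-birth-plus-immigration process, or equivalently apply Doob's inequality to the supermartingale $\eee^{-rs}Z^K_2(s)$ minus its compensator.

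\textbf{Lower bound.} There are two regimes. If $c>0$, then by a law of large numbers for the Poisson number of immigrants, at least $K^{c+at-\eta}$ immigrants arrive in the window $[(t-\eta')\ln K,\,t\ln K]$. Their lines of descent are independent birth–death processes, each surviving a time $\eta'\ln K$ with probability bounded below by a power $K^{-C\eta'}$ (or by a constant when $r>0$). Chebyshev's inequality then gives $\beta^K_2(t)\ge c+at-\eta$. For the $c+rt$ branch, the immigrants arriving during $[0,\eta'\ln K]$ number $K^{c-\eta}$, and their total progeny is controlled by Lemma \ref{lemma_BDPexponents} applied with starting exponent $c-\eta$ (the superposition of their independent lines is again a birth–death process). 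The case $c=0$, $a>0$ is where I expect the main obstacle to lie: early on only $O(1)$ immigrants are present and Chebyshev's inequality fails. There I restart the argument at time $\eta'\ln K$, when the immigration rate has already reached $K^{a\eta'}$; this reduces to the case $c=a\eta'>0$ on $[\eta',T]$, which is why $\delta>0$ is needed. Finally, a union bound over a grid of mesh $\eta'$, combined with monotonicity-type interpolation (over a short window the population cannot drop by a factor $K^{\eta}$ except with vanishing probability, by comparison with a pure-death process), upgrades the pointwise lower bound to uniform convergence on $[\delta,T]$.
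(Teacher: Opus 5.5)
Your proof is correct, but it is organised differently from the paper. The paper only sketches the argument as a rerun of Lemma~\ref{lemma_BDPexponents}, with details deferred to Champagnat--M\'el\'eard--Tran: compute $\mathbb{E}[Z^K]$ from the martingale problem including the immigration term, compute the bracket of the martingale attached to $e^{-(b-d)t}Z^K(t)$ minus its immigration compensator, and use Doob's maximal inequality to keep $Z^K$ close to its mean on the logarithmic scale, uniformly in time.

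You instead use the branching property to split off the initial population, which Lemma~\ref{lemma_BDPexponents} handles directly. The immigrant lineages you treat with one-sided arguments. For the upper bound you use a first-moment Markov bound plus a monotone pure-birth domination. For the lower bound you count lineages explicitly (Poisson law of large numbers for arrivals, survival probability at least $K^{-d\eta'}$ over a window of length $\eta'\ln K$, Chebyshev) and add a pure-death sandwich.

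The gain of your route is that it never requires $Z^K$ to concentrate around its mean, and that concentration genuinely fails in part of the parameter range. For $c=0$ and $b-d>a>0$, the quantity $K^{-(b-d)t}Z^K(t\ln K)$ converges to a non-degenerate random variable carried by the lineages of the early immigrants. A second-moment bound around the mean cannot by itself give the lower bound there. Your restart at time $\eta'\ln K$, when the immigration rate is already $K^{a\eta'}$, is exactly the extra device needed, and it also shows where $\delta>0$ and the hypothesis on $(c,a)$ enter. The martingale route is shorter, gives uniform-in-time control in one stroke instead of grids and interpolation, and yields quantitative error bounds whenever many lineages contribute.

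One step needs tidying. The immigrants arriving during $[0,\eta'\ln K]$ do not superpose into a birth--death process started at time $0$ from $K^{c-\eta}$ individuals. They arrive at different times, and when $b<d$ many of their lineages die before time $\eta'\ln K$. Count instead the lineages still alive at time $\eta'\ln K$; by the same survival bound and Chebyshev there are at least $K^{c-\eta-d\eta'}/2$ with high probability. Then apply Lemma~\ref{lemma_BDPexponents} from that time on, via monotone coupling. This costs only $O(\eta')$ in the exponent.
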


This lemma is proved similarly to the previous one, taking into account the incoming mutation. The different terms represent different sources of population growth. Firstly, the initial exponent $\beta\vee c$ (taking into account initial incoming mutation at rate $K^c$), followed by linear growth of the exponent at intrinsic rate $b-d$. Secondly, the exponent is purely due to incoming mutation at rate $K^{c+at}$ at time $t\ln K$. Lastly, the exponent cannot drop below 0 because this already corresponds to a population size of 0.

These two lemmas now allow us to derive upper and lower bounds for the true $\beta^K$, iteratively taking into account incoming mutants from a wider radius of neighbouring traits. For the purpose of the following heuristics, we use the notation of the limiting processes $\beta_v(t)$ directly. These claims hold for the $\beta^K_v(t)$ and large enough $K$. We begin with the lower bound.

Similar to arguments in previous chapters, as long as coexisting resident traits $\textbf{v}\subset V$ are close to their equilibrium size, the total mutant population has a size smaller than $\varepsilon K$, and mutant trait $w\in V$ is alive at time 0 (in the sense of $\liminf_{K\to\infty}\beta^K_w(0)>0$, we can bound it from below with a coupled pure birth death process, absorbing the competition from the resident population into the death rate. With Lemma \ref{lemma_BDPexponents}, this yields
\be
\beta_w(t)\geq\beta_w(0)+(f(w,\textbf{v})-C\varepsilon)t.
\ee
We apply the same estimate for all living traits $u$ that have $w$ as a direct neighbour (i.e.\ $d(u,w)=1$), such that $w$ receives incoming mutants at a rate of at least $K^{\beta_u(0)-1/\alpha}\eee^{(f(u,\textbf{v})-C\varepsilon)s}$ at time $s>0$. Hence Lemma \ref{lemma_BDPIexponents} implies that
\begin{align}
\beta_w(t)\geq\max_{u:d(u,w)=1}\Big\{&\left((\beta_w(0)\vee(\beta_u(0)-\tfrac{1}{\alpha}))+(f(w,\textbf{v})-C\varepsilon)t\right)\notag\\
&\vee \left(\beta_u(0)-\tfrac{1}{\alpha}+(f(u,\textbf{v})-C\varepsilon)t\right)\vee0\Big\}.
\end{align}
Iterating this procedure, taking traits with increasing distance $d(u,w)$ into account, as well as setting 
\be
\tilde{\beta}_w(0):=\max_{u:\beta_u(0)>0}\left\{\beta_u(0)-\tfrac{d(u,w)}{\alpha}\right\}
\ee
as the starting order of population sizes, immediately after the spreading of initial mutants, yields
\be
\beta_w(t)\geq\max_{u:\tilde{\beta}_u(0)>0}\left[\tilde{\beta}_u(0)-\tfrac{d(u,w)}{\alpha}+(f(u,\textbf{v})-C\varepsilon)t\right]\vee0.
\ee

The argument for the upper bound is similar. Instead of no incoming mutants, we start by assuming the highest possible rate of incoming mutants $CK^{1-1/\alpha}$ from neighbours of a size of order $K$. Lemma \ref{lemma_BDPIexponents} (with $c=1-1/\alpha$ and $a=0$) then implies
\be
\beta_w(t)\leq ((\beta_w(0)\vee(1-\tfrac{1}{\alpha}))+(f(w,\textbf{v})+C\varepsilon)t)\vee(1-\frac{1}{\alpha})\vee0.
\ee
We can now, similarly to before, take incoming mutants from neighbours at increasing distance into account, applying the previous upper bounds to those traits, which will decrease the additional terms from $(1-i/\alpha)$ to $(1-(i+1)/\alpha)$ in each step. We finally obtain
\be
\beta_w(t)\leq\max_{u:\tilde{\beta}_u(0)>0}\left[\tilde{\beta}_u(0)-\tfrac{d(u,w)}{\alpha}+(f(u,\textbf{v})+C\varepsilon)t\right]\vee0.
\ee

Considering arbitrarily small values of $\varepsilon$, the upper and lower bounds are the same and provide us with the exact orders of population sizes. Note that these estimates are only valid as long as no new (possibly fitter) mutant traits arise, and we hence need to restart with a new set of traits to take the maximum over whenever a new trait arises.

Overall, these arguments allow us to algorithmically construct the limiting processes $\beta_w$ as follows:
Let $\textbf{v}_0\subset V$ be the initial set of resident traits. For simplicity, we assume that the initial orders of population sizes converge to $\beta_w(0):=\left(1-\frac{d(\textbf{v}_0,w)}{\alpha}\right)_+$. More general initial conditions can be handled by defining a $\tilde{\beta}(0)$ as above.

The increasing sequence of invasion times is denoted by $(s_k)_{k\geq0}$, where $s_0:=0$ and, for $k\geq1$,
\be
s_k:=\inf\{t>s_{k-1}:\exists\ w\in V\backslash\textbf{v}_{k-1}:\beta_w(t)=1\}.
\ee
By $\textbf{v}_k$, we denote the set of coexisting resident traits that emerges from the Lotka-Volterra dynamics involving the previously coexisting resident traits $\textbf{v}_{k-1}$ and the trait $w\in V\backslash\textbf{v}_{k-1}$ that satisfies $\beta_w(s_k)=1$.

For $s_{k-1}\leq t\leq s_k$, for any $w\in V$, $\beta_w(t)$ is defined by
\be\label{beta}
\beta_w(t):=\max_{\substack{u\in V}}\left[\beta_u(s_{k-1})+(t-t_{u,k}\land t)f(u,\textbf{v}_{k-1})-\frac{d(u,w)}{\alpha}\right]\lor 0,
\ee
where, for any $w \in V$,
\be \label{deftwk}
t_{w,k}:=\begin{cases}\inf\{t\geq s_{k-1}:\exists\ u\in V: d(u,w)=1, \beta_u(t)=\frac{1}{\alpha}\}
&\text{if }\beta_w(s_{k-1})=0,\\s_{k-1}&\text{else},\end{cases}
\ee
is the first time in $[s_{k-1},s_k]$ when this trait arises.

The stopping time $T_0$, which terminates the inductive construction of the limiting trajectories, is set to $s_k$ if
\begin{itemize}
\item[(a)] there is more than one $w\in V\backslash\textbf{v}_{k-1}$ such that $\beta_w(s_k)=1$;
\item[(b)] the mutation-free Lotka-Volterra system associated with $\textbf{v}_{k-1}$ and the trait $w\in V\backslash\textbf{v}_{k-1}$ that satisfies $\beta_w(s_k)=1$ does not have a unique globally attractive stable equilibrium;
\item[(c)] there exists $w\in V\backslash\tilde{\textbf{v}}_{k-1}$ such that $\beta_w(s_k)=0$ and 
$\beta_w(s_k-\varepsilon)>0$ for all $\varepsilon>0$ small enough;
\item[(d)] there exists $w\in V\backslash\textbf{v}_{k-1}$ such that $s_k=t_{w,k}$.
\end{itemize}

With these definitions at hand, we can now formulate the main results of this paper. The first describes the convergence of the orders of the population sizes.

\begin{theorem}\label{ThmConv}
Let $\mathcal{G}=(V,E)$ be a finite graph. Suppose that $\alpha\in\mathbb{R}_+\backslash\mathbb{N}$ and $f(w,\textbf{v})\neq0$ for all coexisting traits $\textbf{v}\subset V$ and traits $w\notin \textbf{v}$.
Let $\textbf{v}_0\subset V$ and assume that, for every $w\in V$ ,
\begin{align}\label{cond-init}
\beta^K_w(0)\rightarrow\left(1-\frac{d(\textbf{v}_0,w)}{\alpha}\right)_+, \quad (K\to\infty) \quad \text{in probability}.
\end{align}
Then, for all $T>0$, as $K\to\infty$, the sequence $((\beta^K_w(t),w\in V),t\in[0,T\land T_0])$ converges in probability in 
$\mathbb{D}([0,T\land T_0],\R_+^V)$ to the deterministic, piecewise affine, continuous function $((\beta_w(t),w\in V),t\in[0,T\land T_0])$, which is defined above.
\end{theorem}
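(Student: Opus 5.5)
The proof proceeds by induction over the invasion phases $[s_{k-1},s_k]$, and within each phase splits into a resident part and a mutant part. Fix $k\geq1$. Assume that at time $s_{k-1}$ (up to a small time shift) the following hold: the resident traits $\textbf{v}_{k-1}$ are within $\varepsilon$ of their Lotka-Volterra equilibrium $\bar n_{\textbf{v}_{k-1}}$, and $\beta^K_w(s_{k-1})\to\beta_w(s_{k-1})$ in probability for all $w$.

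Define a stopping time $\theta^K_\varepsilon$ as the first time one of two things happens: the total mutant mass exceeds $\varepsilon K$, or a resident density leaves the $M\varepsilon$-neighbourhood of equilibrium. Up to $\theta^K_\varepsilon\wedge s_k\ln K$, the death rate of each mutant $w\notin\textbf{v}_{k-1}$ lies between $d(w)+\sum_{x}c(w,x)\bar n_x\pm C\varepsilon$. This gives pathwise couplings of $N^K_w$ with birth–death processes with immigration that sandwich it. The upper coupling has rates $b(w)$, $d(w)+\sum c\bar n - C\varepsilon$ and maximal immigration. The lower coupling has the opposite perturbation and immigration coming only from already-controlled neighbours.

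Following the heuristic before the theorem, the sandwich is built iteratively in the graph distance. At distance $0$, use Lemma \ref{lemma_BDPexponents}. At distance $i$, feed the immigration rate $K^{\beta_u-1/\alpha}\eee^{(f(u,\textbf{v}_{k-1})\pm C\varepsilon)s}$ from neighbours $u$ at distance $i-1$ into Lemma \ref{lemma_BDPIexponents}. Since the graph is finite, after $|V|$ iterations both bounds converge to
\[
\max_u\Big[\beta_u(s_{k-1})+(t-t_{u,k}\wedge t)(f(u,\textbf{v}_{k-1})\pm C\varepsilon)-\tfrac{d(u,w)}{\alpha}\Big]\vee0 .
\]
Letting $\varepsilon\downarrow0$ identifies the limit \eqref{beta} uniformly on $[s_{k-1}+\delta,s_k-\delta]$. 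The endpoints are handled by continuity of the limit.

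Two technical points arise here. The appearance times $t_{w,k}$ must be handled carefully. Before a neighbour reaches exponent $1/\alpha$, the expected number of immigrants $K^{\beta_u-1/\alpha}$ is $o(1)$ over the relevant time window, so $\beta^K_w$ stays near $0$. Assumption (d) and $\alpha\notin\mathbb N$ guarantee that no threshold is hit exactly at an invasion time. Traits whose exponent hits $0$ going down can be killed off, which is why (c) is excluded.

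Meanwhile the residents must stay near equilibrium on the whole window of length $O(\ln K)$. This follows from Lemma \ref{ldp.1}, applied with the death-rate perturbation caused by the mutants, which is bounded by $c\varepsilon$. The exit time is $\eee^{VK}\gg\ln K$, so $\theta^K_\varepsilon>s_k\ln K-\delta\ln K$ with high probability. The condition $f(w,\textbf{v})\neq0$ ensures that $C\varepsilon$ perturbations do not change signs of growth rates.

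At time $s_k$, by (a), exactly one mutant $w$ reaches exponent $1$. Its density then crosses $\varepsilon$, and we switch to the law of large numbers (Theorem \ref{lil.1} for finite trait space). On time scale $O(1)$, which is $o(\ln K)$, the system follows $LV(\textbf{v}_{k-1}\cup\{w\})$. By (b) it enters the $\varepsilon$-neighbourhood of the new equilibrium with support $\textbf{v}_k$. Traits leaving the support have negative fitness against $\textbf{v}_k$, and by the decay bound of Lemma \ref{lemma_BDPexponents} their exponent decreases linearly. This matches \eqref{beta}, whose maximum over $u$ just continues from $\beta_u(s_k)$, with exponents of all other traits shifted only by $o(1)$. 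This closes the induction and yields convergence in $\mathbb D([0,T\wedge T_0],\R^V_+)$.

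The main obstacle is the short Lotka-Volterra transition at $s_k$. During it, the rates of all mutants change by $O(1)$, not $O(\varepsilon)$, but only for time $O(1)$. So I would bound its effect on every $\beta^K_u$ by $O(1)/\ln K\to0$, using crude couplings with rates $\bar b$ and $\bar d+\bar c C$. I would also need to check that the deterministic transition is entered uniformly. Here the non-degeneracy conditions (a)–(d) are used to exclude ties.
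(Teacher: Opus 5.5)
Your proposal is correct and follows essentially the same route as the paper. The residents are kept near their Lotka--Volterra equilibrium by the exit-time estimate of Lemma~\ref{ldp.1} while the mutant mass stays below $\varepsilon K$; the mutants are sandwiched between birth--death processes with immigration, built iteratively in graph distance via Lemmas~\ref{lemma_BDPexponents} and~\ref{lemma_BDPIexponents}; letting $\varepsilon\downarrow 0$ identifies \eqref{beta}; and the argument restarts through the law of large numbers at each invasion time $s_k$, with your handling of $t_{w,k}$ and of the $O(1)$ transition spelling out what the paper's sketch leaves implicit.

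One caveat concerns the statement as much as your argument. For $k=1$, the claim that $\beta^K_w$ stays near $0$ until $t_{w,1}$ needs the traits with $\beta_w(0)=0$ to be empty at time $0$. Hypothesis \eqref{cond-init} alone allows a single individual of a fit such $w$, whose lineage survives with probability $f(w,\textbf{v}_0)/b(w)$ and grows from time $0$.
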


The second result describes the convergence of the rescaled population sizes and precisely quantifies the times of invasions.

\begin{proposition} \label{CorEquilibria}
With the same notation and assumptions as in Theorem \ref{ThmConv},
for all $T>0$, as $K\to\infty$, the sequence $((N^K_w(t\log K)/K,w\in V),t\in[0,T\land T_0])$ 
converges 
in the sense of the finite-dimensional distributions
to a deterministic jump process $((N_w(t),w\in V),t\in[0,T\land T_0])$, which is defined as follows:
\begin{itemize}
\item[(i)] For $t\in[0,T_0]$, $N(t)$ jumps between different Lotka-Volterra equilibria according to
\begin{align*}
N_w(t):=\sum_{k\in\N:s_{k+1}\leq T_0}\mathbf{1}_{s_k\leq t<s_{k+1}}\1_{w\in\textbf{v}_k}\bar{n}_w(\textbf{v}_k).
\end{align*}
\item[(ii)]  The invasion times $s_k$ and the times $t_{w,k}$ when new mutants arise 
can be calculated as follows.
We define the increasing sequence $(\tau_\ell,{\ell\geq0})=\{s_k,{k\geq0}\}\cup\{t_{w,k},{w\in V,k\geq0}\}$ of invasion times or appearance times of new mutants, 
and $(M_\ell, \ell\geq0)$ the sets of living traits in the time interval $(\tau_\ell,\tau_{\ell+1}]$. Initially, $\tau_0=s_0=0$ 
and, according to \eqref{cond-init}, $M_0=\{w\in V:d(\textbf{v}_0,w)<\alpha\}=\{w\in V:\beta_w(0)>0\}$. For $s_{k-1}\leq\tau_{\ell-1}<s_k$, $\tau_\ell$ is defined as
\begin{align*}
\tau_\ell:=s_k\land\min\{t_{w,k}:w\in V, t_{w,k}>\tau_{l-1}\}.
\end{align*}

Given $\tau_\ell$ and $M_{\ell-1}$, we set $M_\ell:=(M_{\ell-1}\backslash\{w\in V:\beta_w(\tau_\ell)=0\})\cup\{w\in V:\tau_\ell=t_{w,k}\}$. $\tau_\ell$ is then given by
\begin{align}\label{Dtau-formula}
\tau_\ell-\tau_{\ell-1}=\min_{\substack{w\in M_{\ell-1}:\\f_{w,\textbf{v}_{\ell-1}}>0}}
\frac{\left(1\land\frac{d(w,V\backslash M_{\ell-1})}{\alpha}\right)-\beta_w(\tau_{\ell-1})}{f(w,\textbf{v}_{\ell-1})}.
\end{align}
\end{itemize}
\end{proposition}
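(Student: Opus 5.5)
The plan is to deduce Proposition \ref{CorEquilibria} from Theorem \ref{ThmConv}, treating the two parts separately. Part (ii) is purely about the deterministic limit $\beta$. Part (i) combines Theorem \ref{ThmConv} with the law of large numbers (Theorem \ref{lil.1}, finite-trait version \eqv(LV.1)) and the stability results for Lotka-Volterra equilibria from Chapters \ref{chapter3} and \ref{chapter5}.

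For (ii), I would use the fact that on each interval $(\tau_{\ell-1},\tau_\ell]$ the set of living traits $M_{\ell-1}$ and the resident set are fixed. By \eqref{beta}, each $\beta_w$ is then affine with slope $f(w,\textbf{v})$, until one of two things happens. Either some trait reaches $1$, which is an invasion time $s_k$. Or some living trait $u$ reaches the level $d(u,w)/\alpha$ for a nonliving $w$, which makes $w$ appear by \eqref{deftwk}. In both cases the relevant level for $w$ is $1\wedge d(w,V\backslash M_{\ell-1})/\alpha$. Solving the linear equation $\beta_w(\tau_{\ell-1})+f(w,\textbf{v})(\tau-\tau_{\ell-1})=$ level, and minimising over $w$ with positive fitness, gives \eqref{Dtau-formula}. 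Traits with $f<0$ only decrease, so they can never trigger an event; this is why they are excluded from the minimum. Updating $M_\ell$ records appearances and extinctions (where $\beta_w$ hits $0$). I would check that this induction reproduces exactly the times $s_k$ and $t_{w,k}$ defined above.

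For (i), fix $t\in(s_k,s_{k+1})$ with $s_{k+1}\le T_0$. By Theorem \ref{ThmConv}, every $w\notin\textbf{v}_k$ has $\beta_w(t)<1$, so $N^K_w(t\ln K)/K\to0$. It remains to show that the residents are close to $\bar n(\textbf{v}_k)$. Near $s_k$ the invading trait has reached order $K$; I would take times $s_k\pm\delta$ and argue as follows.
\begin{enumerate}
\item At time $(s_k-\delta)\ln K$, the residents $\textbf{v}_{k-1}$ are $\varepsilon$-close to their equilibrium, by induction.
\item Once the invader crosses $\varepsilon K$, the LLN on an $O(1)$ time window, together with the unique global attractor (excluded otherwise by (b)), brings the system to an $\varepsilon$-neighbourhood of $\bar n(\textbf{v}_k)$. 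The non-surviving traits then decay, since by Theorem \ref{ThmConv} their exponents drop below $1$ within $o(1)$ time on the $\ln K$ scale.
\item For the whole interval up to $s_{k+1}$, Lemma \ref{ldp.1} keeps the residents $\varepsilon$-close. The mutant mass is $o(K)$ there, so it acts as a perturbation of the death rates of size $c\varepsilon$, which that lemma allows.
\end{enumerate}
Letting $\varepsilon\to0$ gives convergence in probability at each fixed $t$, hence finite-dimensional convergence.

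The main obstacle is step 2: gluing the $\ln K$-scale exponent description to the $O(1)$-scale LLN. The invader's passage from $\varepsilon K$ to equilibrium must occupy $o(\ln K)$ time, and during that window the other mutants must stay of order $K^{\beta}$ with $\beta<1$. I would handle this by first applying the LLN to the renormalised system restricted to traits with $\beta=1$, treating all other traits as a perturbation of size $K^{-\eta}$ that vanishes in the LLN bound. Conditions (a), (c) and (d) guarantee that there is a gap $\eta>0$ between the invader's exponent and everyone else's near $s_k$.
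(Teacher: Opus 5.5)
\paragraph{Gap in part (ii).} Your derivation of \eqref{Dtau-formula} rests on two claims, and both are false: that on $(\tau_{\ell-1},\tau_\ell]$ each $\beta_w$ is affine with slope $f(w,\textbf{v})$, and that unfit traits only decrease. Write $g_u(t):=\beta_u(s_{k-1})+(t-t_{u,k}\wedge t)f(u,\textbf{v}_{k-1})$. Then \eqref{beta} reads $\beta_w=\max_u\bigl(g_u-d(u,w)/\alpha\bigr)\vee 0$, and the maximum is often attained at some $u\neq w$. Three examples:
\begin{itemize}
\item The unfit traits $i\le\lfloor\alpha\rfloor$ in the fitness valley of Chapter \ref{escape} have $\beta_i\equiv 1-i/\alpha$. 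Their slope is $0$, because they are carried by the resident.
\item A former resident decreases at its own rate only until it meets its mutation-fed level. After that it stays constant.
\item A fit trait lying downstream of a fitter one is carried at the larger slope.
\end{itemize}
So slopes change strictly inside $(\tau_{\ell-1},\tau_\ell)$. Solving $\beta_w(\tau_{\ell-1})+f(w,\textbf{v})(\tau-\tau_{\ell-1})=$ level does not tell you when $\beta_w$ actually reaches that level, and \eqref{Dtau-formula} does not follow from what you wrote.

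\paragraph{How to repair it.} The formula is still correct, but you need two comparison facts instead.

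(a) If $w\in M_{\ell-1}$ and $f(w,\textbf{v})>0$, then $\beta_w(\tau_{\ell-1}+s)\ge\beta_w(\tau_{\ell-1})+f(w,\textbf{v})s$. To see this, note that $w$'s own line starts weakly above every competing line. At $s_{k-1}$ one has $g_w=\beta_w\ge\beta_u-d(u,w)/\alpha$. At $t_{w,k}$ all $g_u-d(u,w)/\alpha\le 0=g_w$, since no neighbour of $w$ has yet passed $1/\alpha$. Hence a competing line can be the maximiser at $\tau_{\ell-1}$ only if its slope is at least $f(w,\textbf{v})$. Consequently each candidate time in \eqref{Dtau-formula} is an upper bound for $\tau_\ell-\tau_{\ell-1}$. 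At that time either $w$ reaches $1$, or it reaches $d(w,V\setminus M_{\ell-1})/\alpha$, which pushes the last living vertex on a shortest path to $V\setminus M_{\ell-1}$ up to $1/\alpha$.

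(b) Conversely, at the true event time the triggering trait $x$ satisfies $\beta_x(\tau_\ell)=g_u(\tau_\ell)-d(u,x)/\alpha$ for some living $u$. By the triangle inequality for $d$, $g_u$ has then reached $1\wedge d(u,V\setminus M_{\ell-1})/\alpha$. It was below that level at $\tau_{\ell-1}$, so $f(u,\textbf{v})>0$. Since $\beta_u(\tau_{\ell-1})\ge g_u(\tau_{\ell-1})$, the candidate time of $u$ is no later than $\tau_\ell-\tau_{\ell-1}$.

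This, not monotone decrease, is the reason unfit traits drop out of the minimum: whenever they rise, they are carried by a fit source whose own candidate time comes first.

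\paragraph{Part (i).} Your scheme is the same strategy as the one behind Theorem \ref{ThmConv}: induction over phases, residents held near equilibrium while the mutant mass is below $\varepsilon K$, then the law of large numbers and the global attractor once an invader reaches $\varepsilon K$. Two points need adjusting.
\begin{itemize}
\item Lemma \ref{ldp.1} is stated only up to the first mutation time, whereas here mutations occur at rate of order $K\mu_K\to\infty$ throughout. You need a version in which the resident rates are also perturbed by the mutation flux: births lost to mutation and immigration from mutant traits, both of rescaled size $O(\mu_K)$. The death-rate perturbation of size $O(\varepsilon)$ from competition alone is not enough.
\item \eqref{cond-init} only fixes exponents. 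The base case therefore needs either $N^K_w(0)/K\to\bar n_w(\textbf{v}_0)$ for $w\in\textbf{v}_0$, or a separate argument on $[0,s_1)$.
\end{itemize}
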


The algorithmic description of the limiting processes can be nicely visualised by so-called \emph{broken line pictures}, which depict the orders of population sizes over time, with slopes corresponding to the respective traits' fitnesses. This is shown for some examples in the next section.

\section{Examples}
In this section, we present three examples of evolutionary trajectories on finite trait graphs that demonstrate certain behaviour that is observable on the $\ln K$ time scale.

\subsection{Arbitrarily large jumps}
A natural question to ask is whether the radius $\alpha$, in which resident traits spread to mutants, restricts the range of the jumps that the macroscopic population process can take on the trait graph to traits which are at a distance less than $\alpha$. The answer is no, as the first example shows.

Consider the graph $\mathcal{G}$ depicted on Figure \ref{fig:lnK_Ex1}, where $V = 
\{0, 1, 2, 3, 4\}$ and $E = \{[0, 1], [1, 2], [2, 3], [3, 4]\}$. Let $3<\alpha<4$, an initial condition $\beta(0)=(\bar{n}(0), 0, . . . , 0)$, and a fitness landscape 
\begin{align}
	f(3,0), f(4,0)>0,\\
	f(1,0), f(2,0), f(0,4), f(1,4), f(2,4), f(3,4)<0,\\
	 \frac{1}{f(4,0)}+\frac{-1+4/\alpha}{f(3,0)}<\frac{3/\alpha}{f(3,0)}.
\end{align}
In this case, the resident trait initially spreads mutants of all traits smaller than or equal to 3 and thus the population of trait 4 vanishes at time 0. However, Proposition \ref{CorEquilibria} implies that the rescaled macroscopic population jumps from trait 0 to trait 4 in time $s_1= \tfrac{1}{f(4,0)}+\tfrac{-1+4/\alpha}{f(3,0)}$ on the $\ln K$ time scale. The exponents are drawn in Figure \ref{fig:lnK_Ex1}. The last assumption on the fitness landscape ensures that trait 4 fixates before trait 3.

It is easy to generalise this example to construct jumps to any distance $L$ larger than $\alpha$, by assuming larger and larger fitnesses beyond the initial negative fitness region. The condition implying the emergence of trait $L$ is then a little more technical to write, since one has to compute the time for the piecewise affine function $\beta_L(t)$ (with multiple slope-breaks) to reach 1 before the other traits.

\begin{figure}[h]
\begin{center}
\includegraphics[width=.6\textwidth]{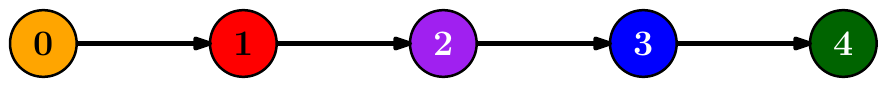}\\[1em]
\includegraphics[width=.8\textwidth]{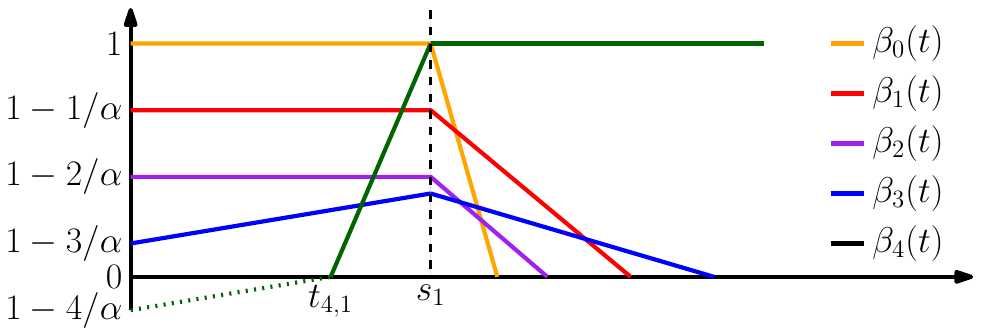}
\end{center}
\caption{Graph $\mathcal{G}$ and exponents $\beta_i(t)$ demonstrating macroscopic jumps on the trait graph of size larger than $\alpha$.}
\label{fig:lnK_Ex1}
\end{figure}

\subsection{Longer path than expected}
If evolution and mutation time scales are separated (i.e.\ in the very small mutation rate regime), mutations occur one at a time, and the number of successive resident traits from the wild type to the type gathering $k$ successively beneficial mutations is $k$. This is not the case if mutations are faster, in which case it is possible to observe even more intermediate resident traits, as the following example shows.

Consider the directed graph $\mathcal{G}$ depicted on Figure \ref{fig:lnK_Ex2}, where $V = \{00, 01, 10, 11\}$ and $E = \{[00, 01], [00, 10], [01, 11], [10, 11]\}$. Let $\alpha > 2$, an initial condition given by $\beta(0)=(\bar{n}(00), 0, 0, 0)$ and a fitness landscape given by
\begin{align}
	&f(01,00), f(10,01) , f(11,10), f(11,01) >0,\\
	&f(00,01), f(00,10), f(00,11), f(01,10), f(01,11), f(10,00), f(10,11), f(11,00) <0,\\
	&f(10,01)>f(11,01).
\end{align}
In this case, in the very small mutation rate regime ($\alpha<1$), the rescaled macroscopic population jumps along 00 - 01 - 11.
In the moderately rare mutations regime (in this case $\alpha>2$, Proposition \ref{CorEquilibria} implies that the rescaled macroscopic population jumps along 00 - 01 - 10 - 11 on the $\ln K$ time scale. More precisely, the exponents are drawn in Figure \ref{fig:lnK_Ex2}. Note that the last condition on the fitness landscape ensures that 11 does not invade 01 before 10 does.

\begin{figure}[h]
\begin{center}
\includegraphics[width=.2\textwidth]{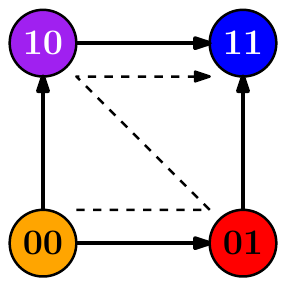}\\[1em]
\includegraphics[width=.8\textwidth]{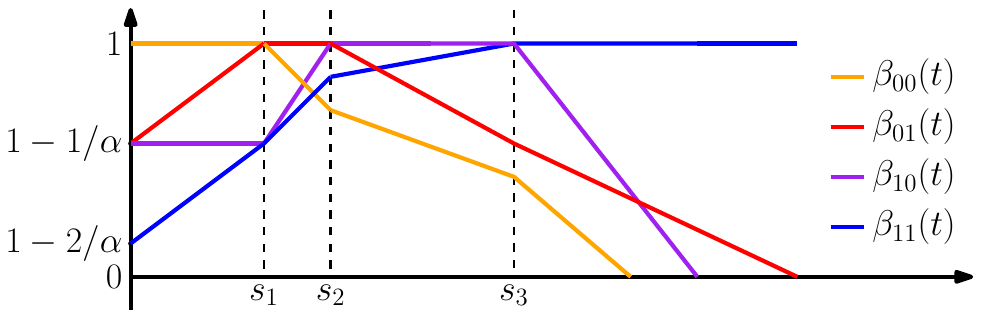}
\end{center}
\caption{Graph $\mathcal{G}$ and exponents $\beta_i(t)$ demonstrating a larger number of intermediate resident traits before reaching 11 than expected.}
\label{fig:lnK_Ex2}
\end{figure}

\subsection{Counter cycle and periodic movement}
Consider the graph $\mathcal{G}$ depicted on Figure \ref{fig:lnK_Ex3}, where $V = \{1, 2, 3\}$ and the edge set is $E = \{[1, 2], [2, 3], [3, 1]\}$. Let $\alpha> 2$, an initial condition $\beta(0)=(\bar{n}(1), 0, 0)$ and a fitness landscape given by
\begin{align}
	f(1,2),f(2,3),f(3,1)>0,\\
	f(1,3),f(2,1),f(3,2)<0.
\end{align}
In this case, Proposition \ref{CorEquilibria} implies that the rescaled macroscopic population jumps along traits 1 - 3 - 2 - 1 (periodic, moving in the clockwise sense) although the mutations are directed counterclockwise. More precisely, the exponents are drawn in Figure \ref{fig:lnK_Ex3}.

If, in addition, the following conditions are satisfied, the period is shorter and shorter, and acceleration takes place, as it is depicted in Figure ?.
\be
f(2,3) > -f(1,3),\quad f(1,2) > -f(3,2),\quad f(3,1) > -f(2,1).
\ee
Note that in the rare mutation regime, with the chosen parameters, there would be no evolution since 2 < 1. Moreover, there are no parameters such that counter-cyclic or accelerating behaviour could arise.

\begin{figure}[h]
\begin{center}
\includegraphics[width=.2\textwidth]{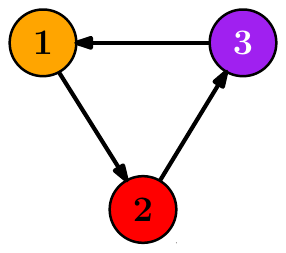}\\[1em]
\includegraphics[width=.8\textwidth]{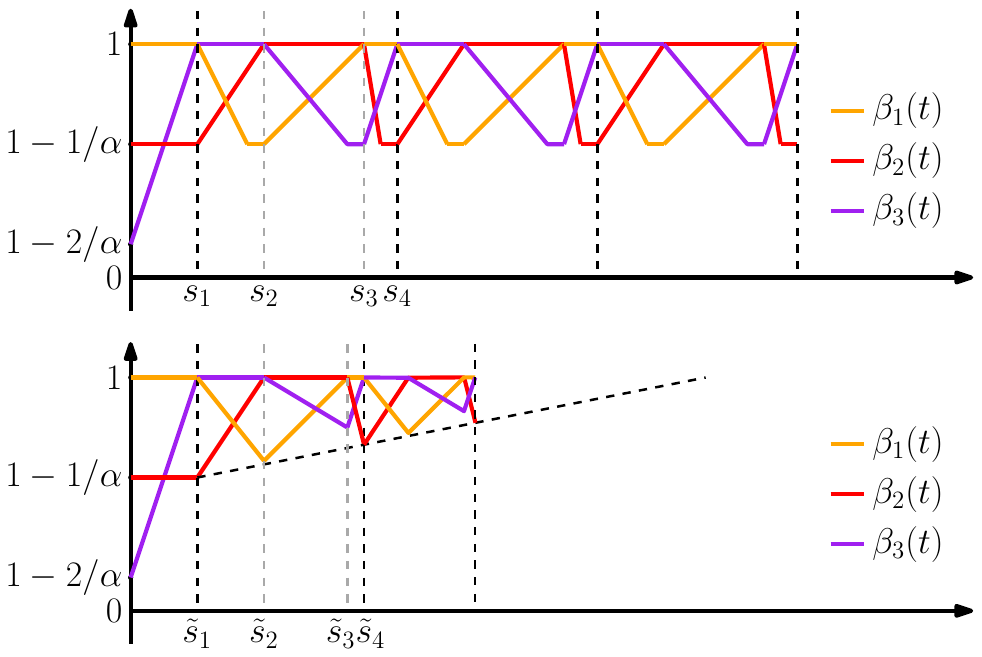}
\end{center}
\caption{Graph $\mathcal{G}$ and exponents $\beta_i(t)$ demonstrating counter cycle movement on the trait space. The second plot satisfies the additional assumptions for accelerated movement.}
\label{fig:lnK_Ex3}
\end{figure}

\newcommand\Ninterval[2]{\llbracket #1,#2\rrbracket}
\newcommand{\xbar}{\bar{z}}
\newcommand{\eps}{\varepsilon}
\newcommand{\ind}[1]{{\mathbf{1}}_{\left[ {#1} \right] }}
\newcommand{\Tinit}{T^{-}}
\newcommand{\Tswap}{ T^{s}}
\newcommand{\xtep}{\tilde x^{\mu_K}}
\def\ep{\varepsilon}
\newcommand{\xep}{x^{\mu_K}}
\newcommand{\xmu}{x^{\mu_K}}
\newcommand{\Dmu}{\Delta^{\mu_K}}
\newcommand{\ymu}{y^{\mu_K}}
\
\chapter{Escape from an evolutionary stable condition 1: a fitness valley}\label{escape}
\begin{chapquote}
{John Maynard Smith, \emph{Natural Selection and the Concept of Protein Space}}
{How often, if ever, has evolution passed through a
non-functional sequence? If so, has this been achieved
by the random walk of genes rendered redundant by
duplication, or by the chance concurrence of two or more
mutations? }
\end{chapquote}

\section{Setting}

In the previous chapters, we have described the evolution of a population 
until the time when it has reached an evolutionary stable condition.
In Chapter \ref{chapter5}, where mutation rates are very small of order $\mu_K\ll1/(K\ln K)$, evolution (on the $1/K\mu_K$ time scale) holds as soon as a resident trait is reached that has no neighbours that can be reached within a single mutation step and has a positive invasion fitness.
In Chapter \ref{chapter7}, where mutation rates are moderately small of order $\mu_K=K^{-1/\alpha}$, the population evolves (on the $\ln K$ time scale) until there are no traits with positive invasion fitness within a distance of $\lfloor\alpha\rfloor$ mutation steps of the - possibly coexisting - resident traits.
However, in both cases, there may be traits with positive invasion fitness at a larger distance that can be reached by a (longer) succession of disadvantageous mutations.
Such scenarios are likely to have occurred many times in 
evolution  \cite{lenski2003evolutionary,cowperthwaite2006bad}. 
 Empirical examples of such phenomena have been found 
in bacteria \cite{schrag1997adaptation,maisnier2002compensatory} and in viruses \cite{o1984vesicular,giachetti1988altered}, for instance.

In this chapter, we give a precise analysis of how an escape from 
an evolutionary stable condition can occur on a longer time scale
in a simple, but illustrative scenario.
This is based on the paper \cite{BovCoqSma2018}. Again, we do not assume small mutation steps in this chapter.

We focus on the scaling of $\mu_K=K^{-1/\alpha}$, which covers the scenario of Chapter \ref{chapter5} for values of $\alpha<1$, and consider the trait space of a linear graph.
First, we demonstrate the implications of Theorem \thv(ThmConv) for evolution to a fit trait within the $\lfloor\alpha\rfloor$ distance, where the $\ln K$ time of mutant growth is the dominating time scale.
Then, as a new result, we consider fit traits that are farther away, but close enough such that mutants occur before the resident trait dies out.
In this case, the fixation time scale is exponentially distributed and dominated by the time needed for the first successful mutant to be born.

\section{A simple model}

We analyse the escape problem in a specific, simple case of our general model, which does, however, capture the key mechanisms.
We chose the trait graph $\mathcal{G}=(V,E)$, where $V=  \{0, 1,\ldots,L\}$.
For simplicity, we will label the traits just by the numbers $j\in V$. 

For simplicity, we assume that
\begin{equation}\label{mut-kernel}
	m(i,j)=\delta_{i+1,j},\quad 0\leq i\leq L-1,\quad m(L,L)=1,
\end{equation}
i.e.\ we allow only mutations to the nearest neighbour to the right, and the final trait $L$ can not mutate.
The results in \cite{BovCoqSma2018} also allow for back-mutations to left neighbours, but we simplify the scenario here, as it does not impact the main ideas.
Biologically, the assumption of nearest neighbour mutations is not unreasonable if we think of mutations that happen on DNA sequences.	
For $i,j\in\N_0$ such that $i\leq j$, we introduce the notation $\llbracket i,j\rrbracket\equiv \{i,i+1,\ldots,j\}$. 

We want to consider the situation when a monomorphic  equilibrium population at $0$ is an evolutionary stable condition, and when $L$ is the closest trait with a positive invasion fitness:

\begin{assumption}[Fitness valley]\label{ass.1}\emph{}\\
\begin{enumerate}
	\item[$\bullet$] All traits are unfit with respect to 0 except $L$:
	\begin{equation} \label{A1} 
		f(i,0)<0, \text{ for } i\in\llbracket 1,L-1\rrbracket  \text{ and } f(L,0)>0.
	\end{equation}
	\item[$\bullet$] All traits are unfit with respect to $L$:
	\begin{equation} \label{A2}
		f(i,L)<0, \text{ for  } i\in\llbracket 0,L-1\rrbracket.
	\end{equation}
	\end{enumerate}
\end{assumption}

\begin{figure}[h!]
	\centering
	\includegraphics[width=.4\textwidth]{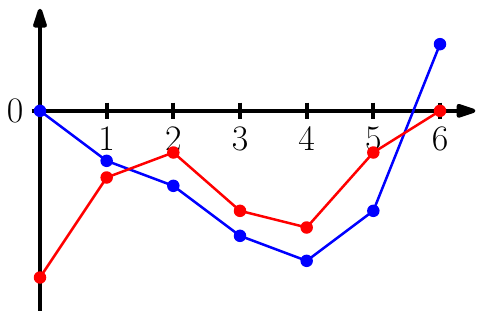}
	\caption {Example of fitness landscape satisfying Assumption \thv(ass.1) with $L=6$. Blue curve: $i\mapsto f(i,0)$, red curve: $i\mapsto f(i,L)$.}
	\label{fitness}
\end{figure}
It is not difficult to show that the model parameters can be chosen so that these assumptions hold.
One possibility is to fix birth and death rates associated with every trait to be $b(i)=1$ and $d(i)=0$, respectively, and set the self-competition of $0$ and $L$ to $c(0,0)=c(L,L)=1$, implying $\bar{n}(0)=\bar{n}(L)=1$.
Assumption \thv(ass.1) will then impose  constraints  on the competition rates $(c(i,0))_{i\in\llbracket 1, L\rrbracket}$ and $(c(i,L))_{i\in\llbracket 0, L-1\rrbracket }$, which must be equal to $(1-f(i,0))_{i\in\llbracket 1, L\rrbracket}$ and $(1-f(i,L))_{i\in\llbracket 0, L-1\rrbracket}$, respectively. 
We can complete the competition matrix by taking symmetric values (except for $c(0,L)$ and $c(L,0)$, which are now fixed, and different) and choosing $c(i,j)=1$ for all pairs $(i,j)\in{\llbracket 1, L-1\rrbracket}^2.$

Under these assumptions on the fitness landscape, as long as $L>\alpha$, all mutants created by the initial population will initially have a negative growth rate and thus tend to die out.
However, if such mutants survive long enough to give rise to further mutants, such that eventually an individual will reach the trait $L$, it will found a population at this trait that, with positive probability, will grow and possibly eliminate the resident population through competition. 
The process that we want to describe in this case can be seen as a manifestation of the phenomenon of \emph{metastability} (see, e.g., the recent monograph \cite {BH15} and references therein).
The initial population appears stable for a long time and makes repeated attempts to send mutants to the trait $L$, which will eventually be reached and take over the entire population. 
As we will see, this leads to several features known from metastable phenomena in other contexts: exponential laws for the transition times, 
a fast realisation of the final ``success run", and the realisation of this run by a ``most likely" realisation.

\section{Results} 

To state the results of this chapter, we again consider the processes $N^K_i(t)=\langle\nu^K_t,\1_i\rangle\cdot K$, i.e.\ the unscaled population sizes of traits $i$ at time $t$.
For $a\geq 0$ and $0\leq i \leq L$,   let  $T_a^{(K,i)}$ denote  the first  time the $i$-population  reaches the  size 
$\lfloor a K\rfloor$, 
\begin{equation}\label{defTepsKM}
 T^{(K,i)}_a := \inf \{ t \geq 0, N^K_i(t)= \lfloor a K \rfloor \}.
\end{equation}

\subsection{Case 1: The LLN regime} \label{res_sto}

The first result concerns the case where one can directly apply Theorem \thv(ThmConv) and Proposition \thv(CorEquilibria), that is, when $L<\alpha$. Adapting the notion of those results, $t_{L,1}=0$, i.e.\ mutants of trait $L$ are present immediately and start growing at an exponential rate $f(L,0)>0$, observable on the $\ln K$ time scale. The time of invasion is in this case
\begin{equation}
	\tau_1=s_1=\frac{L/\alpha}{f(L,0)}.
\end{equation}

The results from Chapter \ref{chapter7}, adapting the perspective from \cite{BovCoqSma2018}, yield the following corollary.

\begin{corollary}
 \label{pro_phase1_mugrand}
Assume that $\mu_K=K^{-1/\alpha}$, $\alpha\notin\mathbb{N}$, and $L < \alpha < \infty$.
 Then there exist two positive constants $\eps_0$ and $c$ such that, for every $0<\eps \leq \eps_0$,
 \begin{equation}\label{eq1th}
  \lim_{K \to \infty} \P \left( (1-c\eps)\frac{1}{\alpha} \frac{L}{f(L,0)}< 
  \frac{T^{(K,L)}_\eps }{\ln K} { <\frac{T^{(K,L)}_{\bar{n}(L)-\eps} }{\ln K}}
 < (1+c\eps)\frac{1}{\alpha} \frac{L}{f(L,0)} \right)=1.
 \end{equation}
 \end{corollary}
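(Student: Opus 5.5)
The plan is to derive the corollary from Theorem \ref{ThmConv} and Proposition \ref{CorEquilibria}, applied to the linear graph $V=\llbracket 0,L\rrbracket$ with the edges $i\to i+1$ given by \eqref{mut-kernel}, the initial resident set $\textbf{v}_0=\{0\}$, and $\alpha>L$, $\alpha\notin\N$.

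\textbf{Step 1: the initial exponents.} First we check that the initial condition satisfies \eqref{cond-init}. Since $d(0,i)=i<\alpha$ for every $i\le L$, we get $\beta_i(0)=1-i/\alpha>0$ for all $i$. So every trait is alive at time $0$, and $t_{i,1}=0$ for all $i$.

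\textbf{Step 2: the explicit limit on $[0,s_1]$.} Formula \eqref{beta} gives
\be
\beta_L(t)=\max_{u\le L}\Big[1-\tfrac{u}{\alpha}+t f(u,0)-\tfrac{L-u}{\alpha}\Big]\vee 0=\max_{u\le L}\Big[1-\tfrac L\alpha+tf(u,0)\Big].
\ee
By Assumption \ref{ass.1}, $f(u,0)<0$ for $1\le u\le L-1$, $f(0,0)=0$, and $f(L,0)>0$. Hence the maximum is attained at $u=L$, and $\beta_L(t)=1-L/\alpha+tf(L,0)$. For $i<L$ the same computation gives $\beta_i(t)=1-i/\alpha$, which stays constant below $1$, because only $u\le i$ contributes. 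Therefore $L$ is the unique trait reaching exponent $1$, at time $s_1=L/(\alpha f(L,0))$. Conditions (a), (c) and (d) in the definition of $T_0$ are excluded, so $T_0\ge s_1$.

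\textbf{Step 3: transfer to the hitting times.} By Theorem \ref{ThmConv}, $\beta^K_L\to\beta_L$ uniformly on $[0,s_1]$ in probability. I would then translate this into the stated bounds.

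For the lower bound on $T^{(K,L)}_\eps/\ln K$: reaching $\eps K$ forces $\beta^K_L\ge 1+\ln\eps/\ln K$. On $[0,s_1(1-c\eps)]$, however, $\beta_L\le 1-c\eps L/\alpha$, which is bounded away from $1$. This gives the lower bound with high probability.

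The middle inequality $T_\eps<T_{\bar n(L)-\eps}$ holds trivially, since the process moves by unit jumps.

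For the upper bound, uniform convergence alone is not sufficient, because $\beta\to1$ only says $N_L=K^{1-o(1)}$. Here I use Proposition \ref{CorEquilibria}. At any fixed time $t\in(s_1,s_1(1+c\eps))$, the rescaled vector $N^K(t\ln K)/K$ converges to $\bar n(L)\delta_L$. This uses (b): the two-type system on $\{0,L\}$ has the unique attracting equilibrium $(0,\bar n(L))$ by \eqref{A1} and \eqref{A2}, via Proposition \ref{pro_coex}. Hence $N_L$ exceeds $(\bar n(L)-\eps)K$ before time $s_1(1+c\eps)\ln K$.

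\textbf{Main obstacle.} The delicate part is the transition between exponent scale and linear scale near $s_1$. The mutant population goes from $\eps K$ to $(\bar n(L)-\eps)K$ in time $O(1)\ll\eps\ln K$, by the law of large numbers (Theorem \ref{lil.1}) applied to the Lotka–Volterra system with mutations of size $O(K^{-1/\alpha})$. If the fixed-time statement of Proposition \ref{CorEquilibria} is not considered sufficient, I would argue this step directly. Once $N_L\ge\eps K$, I would apply Theorem \ref{lil.1} on an $O(1)$ window, together with the perturbation bound of Lemma \ref{ldp.1}. This keeps the $c\eps$ correction uniform and absorbs any constant-order factors into $c$.
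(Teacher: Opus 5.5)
Your proposal is correct and follows the paper's route. The paper only states that the corollary is a consequence of Theorem \ref{ThmConv} and Proposition \ref{CorEquilibria}, with $t_{L,1}=0$ and $s_1=\frac{L/\alpha}{f(L,0)}$. Your Steps 1--3 spell out that derivation: the explicit affine $\beta_L$, the lower bound from uniform convergence of $\beta^K_L$, and the upper bound from fixed-time convergence of $N^K(t\ln K)/K$ to $\bar n(L)\delta_L$. The only point you, like the paper, pass over is the initial condition: from a monomorphic start at $0$, the exponents $1-i/\alpha$ are reached only after an $O(1)$ spreading phase (equivalently, via the $\tilde\beta(0)$ extension in Chapter \ref{chapter7}), which is negligible on the $\ln K$ scale.
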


\begin{remark} 
Since there is no back-mutation, after the $L$ trait invades the resident population, all other subpopulations are unfit and do not get incoming mutations from a fit trait, and will hence decline. One can compute the time until the $L$ population has driven all other traits to extinction.
Let 
\be\label{timetoextinction}
 t(L,\alpha):= \frac{L}{\alpha} \frac{1}{f(L,0)}+ \sup \left\{\left(1 - \frac{i}{\alpha}\right) \frac{1}{|f(i,L)|}, \; 0 \leq i \leq L-1\right\}, 
 \ee
and the time needed for the   populations at all sites  but $L$ to get extinct,
\begin{equation}\label{defT0ttsaufM}
 T^{(K,\Sigma)}_0 := \inf \Big\{ t \geq 0, \sum_{0\leq i \leq L-1}N^K_i(t)= 0 \Big\}.
\end{equation}
Then 
 \begin{equation}\label{eq2th}
    \frac{T^{(K,\Sigma)}_{0}}{\ln K} \to t(L,\alpha), \quad \text{in probability, as $K\to\infty$.}
 \end{equation}

This is because at the time when trait $L$ becomes the resident trait, all other traits $0\leq i\leq L-1$ have populations of order $K\mu_K^i=K^{1-i/\alpha}$ and then start shrinking exponentially at a rate of $f(i,L)<0$, provided they are not replenished from mutants. Thus, trait $0$ dies out after a time of $\ln K\cdot1/|f(0,L)|$ since it does not get any incoming mutants from other traits If $1/|f(0,L)|<(1-1/\a)/|f(1,L)|$,
the $1$-population is shrinking more slowly and will die out at time $\ln K\cdot(1-1/\a)/|f(1,L)|$, and so on. This gives the formula for $t(\a,L)$.
 \end{remark}

\subsection{Case 2: Metastability}
Next, we consider the case of smaller mutation probabilities or larger fitness valleys, namely $L>\alpha$. In this case, no $L$-mutants
appear within a time of order one, and the fixation of the trait $L$ happens on a much larger time scale.
In this section, we are interested in the case where the mutant $L$ population fixates eventually with a probability close to one. In particular, the first $L$-mutant has to be born before the extinction of the population.

We define, for $0 < \rho <1$,
\be
\lambda(\rho):= \sum_{k=0}^\infty \frac{(2k)!}{{ (k-1)! (k+1)!}} \rho^k \left( 1-\rho \right)^{k+1}.
\ee
For $\lfloor \alpha \rfloor +1 \leq i \leq L-1$, set
\be
\rho(i):= b(i)/(b(i )+ d(i) + c(i,0)\bar{n}(0)).
\ee

\begin{theorem} \label{pro_mupetit}
\begin{itemize}
 \item[(i)] 
 Assume that $\mu_K=K^{-1/\alpha}$, $\alpha \notin \N$, and $\alpha < L$.
 Then there exist {two} positive constants $\eps_0$ and $c$, and exponential random variables $E_-(\eps)$ and $E_+(\eps)$ 
 with  parameters   
 \be\Eq(parameter.1)
  (1\pm c\eps){\frac{ \bar{n}(0)b(0)... b(\lfloor \alpha \rfloor)}{|f(1,0)|...|f(\lfloor \alpha \rfloor, 0)|}} \frac{f(L,0)}{b(L)} \prod_{i=\lfloor \alpha \rfloor +1}^{L-1}\lambda(\rho(i))
  \ee
 such that, for every $0<\eps \leq \eps_0$,
 \begin{equation}
  \liminf_{K \to \infty} \P \left( E_-(\eps)\leq 
  T^{(K,L)}_{\bar{n}(L)-\eps }{K{\mu_K}^L} \leq 
  E_+(\eps) \right) \geq 1-c\eps.
 \end{equation}
 \item[(ii)] There exists a positive constant $V$ such that if ${\mu_K}$ satisfies 
 \be
  \eee^{-VK} \ll K{\mu_K} \ll  1 ,
  \ee
 then the same conclusion holds, choosing any $0<\a<1$ in the definition of the exponential parameters.
 \end{itemize}
 \end{theorem}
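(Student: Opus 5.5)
This is a metastability argument in three phases: a quasi-stationary resident with a fixed mutant cloud, a rare successful excursion through the fitness valley, and a short invasion by the trait $L$. Throughout, $\ell\equiv\lfloor\a\rfloor$.

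\textbf{Phase 1 (stable resident and mutant cloud).} By Lemma \thv(lem:step-1) (or Lemma \thv(ldp.1) in case (ii)), the resident at $0$ stays within $\e$ of $\bar n(0)$ for times $\eee^{VK}\gg 1/(K\mu_K^L)$, up to the first time a population at $L$ reaches $\e K$. During this period each trait $i\in\llbracket 1,L-1\rrbracket$ is coupled above and below with birth--death processes with immigration. The birth rate is $b(i)$ and the death rate is $d(i)+c(i,0)\bar n(0)\pm C\e$, so the drift is $f(i,0)\pm C\e<0$. Immigration comes from trait $i-1$ at rate $\mu_K b(i-1)N_{i-1}$.

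For $i\le \ell$ we have $K\mu_K^i\gg1$. A law of large numbers for subcritical processes with large immigration (a variant of Lemma \thv(lemma_BDPIexponents), but at the level of sizes rather than exponents) then gives a quasi-equilibrium
\[
N_i\approx K\mu_K^i\,\bar n(0)\prod_{j=1}^{i}\frac{b(j-1)}{|f(j,0)|}(1\pm c\e).
\]
This holds after a time $O(\ln K)$ and persists on the whole time scale. Its concentration is obtained by the martingale/Doob argument of Chapter \ref{chapter2}. Consequently, mutants at $\ell+1$ are created at rate
\[
\mu_K b(\ell)N_\ell\approx K\mu_K^{\ell+1}\frac{\bar n(0)b(0)\cdots b(\ell)}{|f(1,0)|\cdots|f(\ell,0)|},
\]
which is $\ll 1$ since $\a\notin\N$. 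In case (ii), with $\a<1$, $\ell=0$ and this is simply $K\mu_K\bar n(0)b(0)$.

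\textbf{Phase 2 (the valley excursion).} Each trait-$(\ell+1)$ mutant founds a subcritical family. We compute the probability that this family ever produces, along the chain $\ell+1\to\dots\to L-1\to L$, an $L$-individual. For a subcritical binary branching process at trait $i$ with jump-chain birth probability $\rho(i)$, the expected total number of births in the family is $\sum_k \frac{(2k)!}{(k-1)!(k+1)!}\rho^k(1-\rho)^{k+1}=\lambda(\rho(i))$; this is the Catalan-number count of excursions. Each birth is a mutation with probability $\mu_K$. Hence the number of next-level mutants is approximately Poisson with mean $\mu_K\lambda(\rho(i))\ll1$, and the probability of at least one is $\mu_K\lambda(\rho(i))(1+o(1))$. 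The families are independent given the clock.

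Multiplying along the chain gives the probability $\mu_K^{L-\ell-1}\prod_{i=\ell+1}^{L-1}\lambda(\rho(i))$ that an $L$-individual is born. By Lemma \thv(fixation.1), this individual then reaches $\e K$ with probability $f(L,0)/b(L)$. Multiplying by the creation rate from Phase 1, we obtain a thinned point process of successful attempts with rate $K\mu_K^L$ times the parameter \eqv(parameter.1). Since each excursion lasts $O(\ln K)$ time by Lemma \thv(prop:BP), which is $\ll 1/(K\mu_K^L)$, and the attempts are asymptotically independent, the first success time, rescaled by $K\mu_K^L$, is sandwiched between exponential variables $E_\pm(\e)$. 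Here the $\pm c\e$ comes from the resident perturbation in the coupling rates.

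\textbf{Phase 3 (invasion).} From $\e K$, the LLN (Theorem \thv(lil.1)) together with Assumption \thv(ass.1) brings the $L$-population to within $\e$ of $\bar n(L)$ in time $O(1)$. Reaching $\e K$ from a single individual takes $O(\ln K)$ by Lemma \thv(prop:BP). Both durations are negligible on the scale $1/(K\mu_K^L)$, which yields the bounds on $T^{(K,L)}_{\bar n(L)-\e}$ on an event of probability $\ge 1-c\e$.

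\textbf{Main obstacle.} The hard part is Phase 2. The excursion probability must be exact to first order, with the fluctuating resident competition absorbed into $\rho(i)\pm c\e$, and the attempts must be shown asymptotically independent, so that no two attempts interfere and the failed families do not perturb the resident. I would handle this with monotone couplings to independent branching processes and a Poisson-thinning comparison, bounding the interference probability by a union bound over the $O(K\mu_K^{\ell+1}\cdot 1/(K\mu_K^L))$ attempts.
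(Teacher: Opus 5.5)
Your proposal is correct and follows the paper's proof essentially step by step: quasi-equilibrium control of traits $1,\dots,\lfloor\alpha\rfloor$; subcritical excursions beyond $\lfloor\alpha\rfloor$, where the Catalan count gives success probability $\mu_K\lambda(\rho(i))$ per step; Poisson thinning with fixation probability $f(L,0)/b(L)$; and the law of large numbers for the final invasion, with $\lfloor\alpha\rfloor=0$ in case (ii). Two details need tightening, and the paper leaves both implicit as well. First, the Doob $L^2$ bound alone gives a failure probability of order $T/(K\mu_K^{i})$ over a horizon $T$, which is not small in general for $T\sim 1/(K\mu_K^{L})$, so you need higher-moment or exponential martingale bounds on unit intervals combined with a union bound. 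Second, overlapping excursions do occur many times, but they are harmless once each family is dominated from above and below by independent branching processes, so no union bound over attempts is needed there.
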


In the first case, the typical trajectories of the process are as follows:
Mutant populations of trait $i$, for $1 \leq i \leq \lfloor \alpha \rfloor$, 
reach a size of order $K{\mu_K}^i \gg 1$ in a time of order one due to the high rate of incoming mutations from their left neighbours. Mutant populations of trait $i$, for $\lfloor \alpha \rfloor +1 \leq i \leq L-1$, describe almost surely finite excursions, 
where a proportion of the population of order ${\mu_K}$ produces a mutant of trait $i+1$ during birth events. 
Finally, every $L$-mutant has a probability $f(L,0)/b(L)$ to produce a population which 
outcompetes the resident 0 trait and hence all other populations.

 The term $\lambda(\rho(i))$ is the expected number of birth events before extinction in an excursion of a subcritical
  birth and death process of birthrate $b(i)$ and death rate $d(i)+c(i,0)\bar{n}(0)$.
  Hence ${\mu_K} \lambda(\rho(i))$ is the approximated probability for a trait $i$-population 
  $(\lfloor \alpha \rfloor +1 \leq i \leq L-1)$ to produce a mutant of trait $i+1$,
{and} the overall time scale can be recovered as follows:
  \begin{enumerate}{
\item The last 'large' population is the $\lfloor \alpha \rfloor$-population, which {reaches} a size of order 
{$K{\mu_K}^{\lfloor \alpha \rfloor}$} after a time
   which does not go to infinity with $K$.
   \item The $\lfloor \alpha \rfloor $-population produces an excursion of an  $(\lfloor \alpha \rfloor+1 )$-population at a rate of order 
{$K{\mu_K}^{\lfloor \alpha \rfloor+1}$, which} has a probability of order ${\mu_K}$ to produce an excursion of a  
$(\lfloor \alpha \rfloor+2 )$-population, and so {on,}}
\end{enumerate}
giving a rate of order {$K{\mu_K}^L$} at which mutants of trait $L$ occur.\\

Notice that Theorem \thv(pro_mupetit) implies that, for any mutation rate $\mu_K$ which converges to zero more slowly than 
$\eee^{-VK}/K$, the population will cross 
the fitness valley with a probability tending to 1 as $K\to\infty$. 

The condition  $\alpha \notin \N$ is only a technical assumption which could be dropped, but would 
bring 
more technicalities into the proof. Namely, in this case, 
the population size of trait $\lfloor\alpha \rfloor=\alpha$ would not be divergent with $K$ but would be of order one, and we would have to control its size more carefully.

\begin{remark}
Note that, for $0<\rho<1/2$, one can verify that
\be
\lambda(\rho)= \sum_{k=0}^\infty \frac{(2k)!}{{ (k-1)! (k+1)!}} \rho^k \left( 1-\rho \right)^{k+1}=\rho/(1-2\rho).
\ee
Recalling the definition of $\rho(i)$, this implies
\be
\lambda(\rho(i))=\frac{b(i)}{(b(i)+d(i)+c(i,0)\bar n(0))-2b(i)}=\frac{b(i)}{|f(i,0)|}.
\ee
Plugging this into the exponential parameters in Theorem \thv(pro_mupetit) yields
\bea
&& (1\pm c\eps){\frac{ \bar{n}(0)b(0)... b(\lfloor \alpha \rfloor)}{|f(1,0)|...|f(\lfloor \alpha 
  \rfloor, 0)|}} \frac{f(L,0)}{b(L)} \prod_{i=\lfloor \alpha \rfloor +1}^{L-1}\lambda(\rho(i))
 \\ \nonumber &&=(1\pm c\eps)\frac{\bar n(0)b(0)...b(L-1)}{|f(1,0)|...|f(L-1,0)|}\frac{f(L,0)}{b(L)}=(1\pm 
  c\eps)\bar n(0)b(0)\frac{f(L,0)}{b(L)}\prod_{i=1}^{L-1}\lambda(\rho(i)).
\eea
This expression no longer depends on $\alpha$, which might seem surprising. 
However, there is a simple explanation. Basically, until the fixation of the fit type $L$,
there is an essentially constant population $K \bar n(0)$ while all 
other populations together are very small (essentially finite) at all times.
This means that competition between these is irrelevant, and all the offspring of 
mutants evolve independently. 
So the initial Poisson 
process of incoming type 1 mutants, produced by the resident 0-population at rate 
$K\mu_K\bar n(0)b(0)$, undergoes a thinning with survival probability $\lambda(\rho(i))
\mu_k$, the probability that the offspring of a type $i$ individual produces a type $i+1$ 
mutant before going extinct, in every step, to finally arrive at trait $L$ and fixate with 
probability $f(L,0)/b(L)$. The value of $\alpha$ is most significant in its relation to $L$, 
and determining how the time scale of arriving $L$-mutants, $1/K\mu_K^L=K^{(L/\
\a)-1}$, compares to the time scale of trait $L$ population growth, $\ln K$.
\end{remark}

\section{Proofs}

Corollary \thv(pro_phase1_mugrand), i.e.\ the case of $L<\alpha$, is a consequence
 of the results of Chapter \ref{chapter7}. We turn to the proof of Theorem 
 \thv(pro_mupetit), which covers the case of a large valley, i.e.\ $L>\alpha$.

Similar to the results in previous chapters, a key idea is that, as long as the total 
mutant population size is smaller than $\eps K$, the resident population 
size stays close to its monomorphic equilibrium with a probability close to $1$.
This implies that the mutant populations live in an almost constant environment and 
are subject to an almost constant competitive pressure from other individuals, 
$c(i,0)\bar{n}(0)$. This allows us to couple $i$-population sizes 
($1 \leq i \leq L-1$) with subcritical branching processes with migration $N_i^{(K,-)}$ 
and $N_i^{(K,+)}$ to control their dynamics.

Moreover, after the first growing phase for the $L$-population,
if the sum of the $1$- to $(L-1)$-mutant population sizes stays smaller than $\eps K$ 
whereas the 
$L$-mutant population size exceeds the size 
$\eps K$, the $0$ and $L$ populations will behave as if they were the only ones in competition. 
Consequently, the remaining time needed for the $L$-population
to replace the $0$-population will be of order 1.
Hence, the main work consists of
estimating the time needed for $L$-mutants to arrive and reach the population size $
\lfloor \eps K \rfloor$.
The $i$-populations with $i\leq\alpha$ are of a diverging size due to incoming mutants 
and hence well approximated by the deterministic limit, while for the $i$-populations 
with $i>\a$, the immigration term is not large enough, and the populations are well 
described, at each arrival of a single mutant, by subcritical birth-death processes. Each 
of the iterative excursions of the latter populations has the same probability to 
eventually produce an $L$-mutant, which may generate a large population and invade. 
This can essentially be regarded as a thinning of the Poisson process of mutants 
arriving in $\lfloor\alpha\rfloor+1$ and the waiting time is hence exponential with a 
mean of order $1/(K{\mu_K}^L)$, much larger than the $\ln K$-time it subsequently 
takes the $L$-mutant to grow to a size of $\lfloor \eps K \rfloor$.

\subsection{Population size of traits $i<\alpha$} \label{alpha+1geqM}

To simplify a little bit, we will not formally define the exact couplings but rather take the 
slightly heuristic approach of writing $\pm\eps$ in various places and use generic 
positive finite constants $C$ that might vary from line to line. While the total mutant 
population stays below a threshold of $\lfloor\eps K\rfloor$, the resident $0$-population 
stays close to its equilibrium up to deviations of order $\eps K$. This produces a flow 
of mutants to trait $1$ that is controlled from above and below by two Poisson 
processes with intensities
$\mu_K b(0)K(\bar n(0)\pm \eps C)$. Thus, in time of order one, trait 1 will have a 
population of order $K\mu_K$
and produce a flow of mutants of intensity near $\mu_K^2K$ to trait $2$, and so on, 
until, still in time of order one,   
a flow of order $\mu_K^{\lfloor\alpha\rfloor}K$ reaches trait $\lfloor\alpha\rfloor$. By 
our assumption of $\mu_K=K^{-1/\alpha}$, the intensity of this flow still goes to infinity 
with $K$. To derive the exact transition rate of passing through the valley, we need 
precise control of the population sizes of these traits.

Let us first look at trait 1. Since there is no back mutation and the resident $0$ 
population is essentially of constant size, we can think of the population at $1$ as a 
Markov process with generator
\bea
\Eq(one.1)
(\LL^K h)(N^K_1)&=& [h(N^K_1+1)-h(N^K_1)]\left(b(1)(1-\mu_k) N^K_1+ 
\mu_Kb(0)K(\bar n(0)\pm\eps C)\right)\\\nonumber
&&+[h(N^K_1-1)-h(N^K_1)] N^K_1\left(d(1)+c(1,0)\bar n(0) +c(1,1) N^K_1/K\pm \eps 
C\right).
\eea
With $h(x)=x$, this gives rise, as usual, to the martingale
\bea\Eq(one.2)
M^K_1(t)
&=&N^K_1(t)-N^K_1(0) -\int_0^t(\LL^K\text{Id})(N^K_1(s))\, ds\nonumber\\
&=&N^K_1(t)-N^K_1(0) -\int_0^t\Bigl(f(1,0)N^K_1(s)+b(0)\mu_KK (\bar n(0)\pm\eps C) 
\nonumber\\
&&\quad -(\mu_Kb(1)+c(1,1)N^K_1(s)/K)N^K_1(s)\Bigr)\, ds.
\eea
Since we anticipate that $N^K_1(t)$ will never grow beyond $C \mu_KK$, at least until 
a stopping time, the last line in 
\eqv(one.2) is not relevant. 
Setting $x_1(t)=\E\left[ N^K_1(t)/\mu_KK\right]$, 
we get that 
\be
\Eq(one.3)
x_1(t)-x_1(0)=\int_0^t\Bigl(f(1,0)(1\pm \eps C)  x_1(s)+b(0) (\bar n(0)\pm\eps C)\Bigr)\, 
ds. 
\ee 
This integral equation has the explicit solution 
\be\Eq(one.4)
x_1(t) = \frac {\eee^{f(1,0)(1\pm \eps C) t}-1}{f(1,0)(1\pm\eps C)}  b(0) (\bar 
n(0)\pm\eps C).
\ee
Since $f(1,0)<0$, $\eee^{f(1,0)(1\pm \eps C) t}\leq\eps$ for $t\geq\ln(\eps)/f(1,0)
(1\pm\eps C)$ and hence $x_1(t)$ converges, as $t\uparrow\infty$, to an equilibrium 
\be\Eq(one.5)
\frac { b(0) (\bar n(0)\pm\eps C)}{|f(1,0)(1\pm\eps C)|}.
\ee
Finally, the process stays close to its expectation. To see this, again, we compute the
bracket of $M^K_1$, (recall \eqv(tss-mar.5))
\bea
\Eq(one.6)
[M^K_1]_t&=&
\int_0^t\Bigl( (b(1)+d(1)+c(1,0)\bar n(0) (1\pm \eps C)) N^K_1(s)+b(0)\mu_KK (\bar 
n(0)\pm\eps C) 
\nonumber\\
&&\quad+(\mu_Kb(1)+c(1,1)N^K_1(s)/K)N^K_1(s)\Bigr)\, ds.
\eea
The important fact is that $\E( [M^K_1]_t )\leq C_t\mu_KK $.
To see this, one first derives a bound as in the proof of the law of large numbers, for 
$t\leq T$,
\be
\E\left[\sup_{s\leq t}N^K_1(s)^2\right]\leq ( b(0)\mu_KKT)^2+ CT 
\int_0^tE\left[\sup_{u\leq s}N^K_1(s)^2\right]ds,
\ee
so that by Gronwall's lemma
\be
\E\left[\sup_{s\leq t}(N^K_1(s)/\mu_KK)^2\right]\leq 
b(0)^2 T^2\eee^{CTt}.
\ee
Using this bound and Eq. \eqv(one.6), one obtains the desired bound on the bracket. 
From there, one 
derives, as in the proof of the LLN, that the fluctuations
of the size of the rescaled population of trait one around its expectation, $x_1(t)$,  are 
of order $1/\sqrt {\mu_KK}$, which tends to zero, as long as $\mu_KK\uparrow\infty$, 
i.e.\ as long as the population size at $1$ diverges.

The same argument can now be repeated for traits $2,3,\dots,\lfloor\alpha\rfloor$ 
iteratively. This allows us to control the size of all 
populations that are close to the resident trait 0. We set, for $1 \leq i \leq 
\lfloor\alpha\rfloor$, 
\begin{equation}\label{defspm}
f^\pm(i,0):=(1-{\mu_K})b(i)-d(i)- c(i,0)(\bar n_0\mp\eps C) ,
\end{equation}
\begin{equation}\label{deftepsi}
x_i^{\pm}:=(1\pm\eps C)^{2i}\frac{b(0)... b(i-1) \bar n_0\mu_K^i}{f^\pm(1,0)...f^\pm(i,0)} 
\quad \text{and}
\quad t^{(i)}_\eps := \frac{\ln (\eps)}{f^+(i,0)} .
\end{equation}
Notice that  $f^-(i,0)\leq f^+(i,0)$ and that $f^-(i,0)$ and $f^+(i,0)$ are negative, for $
\eps$ small enough, by Assumption \eqref{A2}.

The following lemma summarises the results.
\begin{lemma} \label{lemme_bounds_ni} 
For every $0 \leq i \leq \lfloor\alpha\rfloor$,
\be\Eq(all.1)
 x_i^-\leq x_i(s), \quad  s\geq t_\eps^{(1)}+...+t_\eps^{(i)} , 
\ee
and
\be\Eq(all.2)
x_i(s) \leq x_i^+, \quad  s\geq 0. 
\ee
\end{lemma}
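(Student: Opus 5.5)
We prove the lemma by induction on $i$, with $x_i(s)$ read as $\E[N^K_i(s)]$ on the event that the resident stays within $\eps C K$ of $K\bar n_0$ and the total mutant mass stays below $\eps K$; from now on we work on this event. The case $i=0$ is the control of the resident provided by Lemma \ref{lem:step-1} (in the $\a=0$ form of its remark). For $i=1$ we argue as in the computation leading to \eqv(one.4). We sandwich the generator \eqv(one.1) between two linear generators. The upper one has immigration rate $\mu_K b(0)K(\bar n_0+\eps C)$ and net growth rate $f^+(1,0)$; in it we drop the nonpositive terms $-\mu_K b(1)N^K_1$ and $-c(1,1)(N^K_1)^2/K$. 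The lower one has immigration rate $\mu_K b(0)K(\bar n_0-\eps C)$ and growth rate $f^-(1,0)$. For the lower one the quadratic term must be absorbed into the $\pm\eps C$ of the death rate, and this is legitimate because $N^K_1/K\leq\eps$ before the stopping time. Taking expectations in the martingale \eqv(one.2) then gives linear differential inequalities for $\E[N^K_1]$. Comparing these with the explicit solution \eqv(one.4) yields the upper bound $\E N^K_1(s)\leq x_1^+$ for all $s\ge0$: the solution is monotone increasing toward its limit whenever the initial value lies below it, and by \eqv(cond-init) the initial size is $o(K\mu_K)$. The same comparison gives the lower bound $\geq (1-\eps)\cdot$limit once $\eee^{f^+(1,0)t}\le\eps$, that is for $t\ge t^{(1)}_\eps$. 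The factor $(1-\eps)$ is absorbed into $(1-\eps C)^{2}$.

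For the inductive step $i-1\to i$ with $i\le\lfloor\a\rfloor$, the only change is that the immigration into trait $i$ is now $\mu_K b(i-1)N^K_{i-1}(s)$, which is random. Taking expectations makes this linear term exact, so $\E N^K_i$ satisfies
\[
\frac{d}{ds}\E N^K_i=f^\pm(i,0)\,\E N^K_i+\mu_K b(i-1)\,\E N^K_{i-1}
\]
up to the $\pm\eps C$ errors, which again come from the resident fluctuation and from the self-competition term bounded through the stopping time. We insert the inductive bounds $x_{i-1}^-\le\E N^K_{i-1}(s)\le x_{i-1}^+$. The upper bound holds for all $s$. The lower bound holds only for $s\ge t^{(1)}_\eps+\dots+t^{(i-1)}_\eps$, so the lower solution is started from that time with initial value $0$. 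Solving the resulting constant-coefficient linear ODE explicitly gives an upper bound $\mu_K b(i-1)x^+_{i-1}/|f^+(i,0)|$ for all $s$. It also gives a lower bound $(1-\eps)\mu_K b(i-1)x^-_{i-1}/|f^-(i,0)|$ after an additional time $t^{(i)}_\eps$. Collecting the $(1\pm\eps C)$ factors produces the exponent $2i$ in \eqv(deftepsi): one factor comes from the immigration intensity and one from the $(1-\eps)$ relaxation. Here we also use $f^-(i,0)\le f^+(i,0)<0$ to order the denominators correctly.

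Since $K\mu_K^i\to\infty$ for $i\le\lfloor\a\rfloor$, the bracket estimate \eqv(one.6) together with Doob's inequality shows that the fluctuations of $N^K_i/(K\mu_K^i)$ are $O((K\mu_K^i)^{-1/2})$. Hence the bounds transfer from the expectations to the process itself with probability tending to one. This is only needed if $x_i(s)$ is meant pathwise.

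The main obstacle is justifying the dropping or absorption of the nonlinear competition terms $c(i,j)N_iN_j/K$ and of the mutation loss $\mu_K b(i)N_i$. Both require a priori bounds: the second moment Gronwall bound derived after \eqv(one.6), and the stopping time at which the total mutant mass reaches $\eps K$. We must also show that this stopping time exceeds the relevant $O(1)$ time horizon with high probability, which follows from the upper bounds $x_i^+=O(K\mu_K^i)\ll\eps K$ themselves via a bootstrap.
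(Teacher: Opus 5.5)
Your argument is essentially the paper's, and it goes through at the paper's (heuristic) level of rigour once two slips are fixed. As in the paper, you linearise the generator up to $\pm\varepsilon C$, solve the linear ODE for the mean explicitly as in \eqref{one.4}, and iterate in $i$ with immigration from trait $i-1$. Your explicit induction is a faithful filling-in of the paper's ``repeated iteratively'': you restart the lower comparison at $t^{(1)}_\varepsilon+\dots+t^{(i-1)}_\varepsilon$ from $0$ and use $t^{(i)}_\varepsilon=\ln\varepsilon/f^+(i,0)\geq \ln\varepsilon/f^-(i,0)$.

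\emph{Initial condition.} Your justification of the upper bound from $s=0$ is wrong as stated. \eqref{cond-init} is the hypothesis of the theorem in Chapter~\ref{chapter7}, not of the present setting. Even there it only gives $N^K_1(0)=K^{1-1/\alpha+o(1)}=K\mu_K\,K^{o(1)}$, which allows $N^K_1(0)\gg K\mu_K$. In that case the mean starts above the upper equilibrium and the bound fails for small $s$. What you actually need, and what holds here, is that the population starts monomorphic at $0$, so $N^K_i(0)=0$ for $i\geq 1$; the explicit solution \eqref{one.4} with $x_1(0)=0$ encodes exactly this. Your comparison solutions then increase monotonically from $0$, which is what the argument uses.

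\emph{Normalisation.} The $x_i^\pm$ in \eqref{deftepsi} carry $\mu_K^i$ but no factor $K$, so $x_i(s)$ must be read as $\E N^K_i(s)/K$, not $\E N^K_i(s)$. With your reading, the base case produces a bound $K$ times larger than $x_1^+$. Correspondingly, your $x_i^+=O(K\mu_K^i)$ should be $O(\mu_K^i)$, to be compared with $\varepsilon$ rather than $\varepsilon K$. The denominators $f^\pm(1,0)\cdots f^\pm(i,0)$ must be read in absolute value, as you implicitly do.

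\emph{Optional tightening.} To make ``expectation on the event'' precise, couple $N^K_i$ pathwise, up to the exit time, between constant-rate birth--death processes with immigration $N_i^{(K,\pm)}$, as the paper's proof overview indicates. Their expectations satisfy your linear ODEs exactly.
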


\subsection{Excursions of traits $\alpha<i<L$}

For $\lfloor \alpha \rfloor +1 \leq i \leq L-1$, the $i$-mutant population sizes are not 
diverging with $k\to\infty$, and the populations will describe finite excursions until a 
successful mutant with trait $L$ is created, i.e.\ a mutant arrives at $L$ and fixates. 
Here again, the key idea will be couplings with birth and death processes without 
competition.

Let us denote by 
$(T^{(k)})_{k\geq1}$  the birth times of mutants of type ($\lfloor \alpha \rfloor +1$), 
descended from an individual of type $\lfloor \alpha \rfloor$. 
From the previous section, we know that the size of the population at $\lfloor\a\rfloor$ 
is very precisely given by
\be
\Eq(alpha.1)
 K \mu_K^{\lfloor\a\rfloor} \frac{\bar n(0)b(0)\dots b(\lfloor\a\rfloor-1)}{|f(1,0)|\dots|
 f(\lfloor\a\rfloor,0)|}
 (1\pm \eps C)\equiv P(\a).
 \ee
Thus the arrival times $T^{(k)}$ of mutants at $\lfloor\a\rfloor+1$ are close to a Poisson 
process
with intensity $P(\a) b(\lfloor\a\rfloor) \mu_K$ and, in particular, the
times between two mutant arrivals are of order $1/(K\mu_K^{\lfloor\a\rfloor+1})\sim 
K^{(\lfloor\a\rfloor+1)/\alpha-1}\uparrow \infty$. 

Once a mutant appears, the population at $\lfloor\a\rfloor+1$ is (up to small errors of 
order $\eps$) extremely well described by a sub-critical binary branching process with 
birth rate $b(\lfloor\a\rfloor+1)$ and death rate
$b(\lfloor\a\rfloor+1)+c(\lfloor\a\rfloor+1,0)\bar n(0)$. We know from Lemma 
\thv(prop:BP)
that the mean time to extinction of this population is finite, so we can be sure that it will 
have disappeared before the next mutant arrives.
To calculate the probability of an $(\lfloor\alpha\rfloor+2)$-mutant being produced 
before extinction, we need a precise estimate of the number of birth events that take 
place during an excursion.

\begin{lemma} \label{lem_taille_excu}
 Let us consider a birth and death process with individual birth rate $b>0$ and 
 individual death rate $d>0$ satisfying $b<d$.
 Let $Z$ denote the total number of births during an excursion of this process initiated 
 with one individual.
 Then, for $k \geq 0$,
 \begin{equation}\Eq(loi_excursion)
\P(Z=k)=\frac{(2k)!}{k! (k+1)!}\left( \frac{b}{d + b} \right)^{k} \left( \frac{d}{d + b} 
\right)^{k+1}.
 \end{equation}
In particular, 
 \begin{equation}\label{esp_excursion}
\E[Z]= \sum_{{k=1}}^\infty \frac{(2k)!}{{(k-1)!}(k+1)!}\left( \frac{b}{d + b} \right)^{k} 
\left( \frac{d}{d + b} \right)^{k+1}.
 \end{equation}
Moreover, $\E[Z]$ is Lipschitz continuous in the parameters $b$ and $d$.
 \end{lemma}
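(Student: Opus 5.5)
The plan is to reduce the excursion to a discrete-time random walk via the embedded jump chain, and then count paths with the ballot theorem.

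\emph{Embedded chain.} Consider the birth-and-death process with individual rates $b,d$. From any state $n\geq1$ the total jump rate is $n(b+d)$. Hence each jump is a birth with probability $p=b/(b+d)$ and a death with probability $q=d/(d+b)$, independently of $n$ and of the past. So the embedded jump chain is a simple random walk on $\N_0$ started at $1$, stepping $+1$ with probability $p$ and $-1$ with probability $q$, and absorbed at $0$. The excursion ends at the first hitting time of $0$. If $Z=k$ births occur, exactly $k+1$ deaths must occur, and the last step is the final death. Thus the event $\{Z=k\}$ is the event that the walk's first passage from $1$ to $0$ takes exactly $2k+1$ steps.

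\emph{Counting paths.} Every such path has probability $p^kq^{k+1}$. The number of lattice paths of length $2k+1$ from $1$ that first hit $0$ at the final step is the number of paths from $1$ to $1$ in $2k$ steps that stay $\geq1$, which is the Catalan number $C_k=\frac{(2k)!}{k!(k+1)!}$. I would justify this by the ballot theorem or the reflection principle: start from $\binom{2k}{k}$ and subtract the $\binom{2k}{k+1}$ paths that touch $0$. This gives \eqv(loi_excursion). Since $b<d$, the walk hits $0$ almost surely, so the probabilities sum to one. One can check this via the Catalan generating function $\sum C_k x^k=(1-\sqrt{1-4x})/(2x)$ at $x=pq$, which gives $q\cdot\frac{1-|q-p|}{2pq}=1$.

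\emph{Mean and Lipschitz continuity.} The formula for $\E[Z]$ is $\sum_k k\,\P(Z=k)$, using $k/k!=1/(k-1)!$. To evaluate it, differentiate the generating function: $\E[Z]=q\,x\,\frac{d}{dx}\sum C_kx^k$ at $x=pq$. This yields the closed form $\E[Z]=p/(q-p)=b/(d-b)$, consistent with the Remark after Theorem \thv(pro_mupetit). As a sanity check, by Wald the expected number of steps is $1/(q-p)$, and the number of births is $(\text{steps}-1)/2$, which again gives $p/(q-p)$. The closed form $b/(d-b)$ is smooth on compact subsets of $\{0<b<d\}$, so $\E[Z]$ is locally Lipschitz there. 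That is the sense needed for the coupling with the perturbed rates $d\pm C\eps$.

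\emph{Main obstacle.} The Catalan path count is the only step that is not bookkeeping; the first-passage and reflection argument must be set up carefully. A secondary point is that the Lipschitz property holds only locally, with a constant that blows up as $d-b\downarrow0$. This is harmless, since the rates here are bounded away from criticality.
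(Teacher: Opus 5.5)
Your proposal is correct and follows the paper's own argument: you pass to the embedded jump chain with birth probability $b/(b+d)$, note that the last event must be a death, and use the reflection principle to count $\binom{2k}{k}-\binom{2k}{k+1}=\frac{(2k)!}{k!(k+1)!}$ admissible paths, each of probability $\left(\frac{b}{b+d}\right)^k\left(\frac{d}{b+d}\right)^{k+1}$. Your additions go beyond the paper, which only says that Lipschitz continuity is straightforward: the normalisation check, the closed form $\E[Z]=b/(d-b)$, and the observation that $\E[Z]$ is only locally Lipschitz (the constant blows up as $d-b\downarrow 0$) correctly pin down the sense in which that claim holds.
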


\begin{proof} 
If $Z=k$, then there must be exactly  $k+1$ death events to bring the process to 
extinction.
Thus, to count the number of ways to distribute the $k$ births on the $2k+1$ events, 
we proceed as 
follows: First, the last event must be a death, so the first $2k$ events bring the 
population from $1$ to 
$1$. There are $(2k)!/(k!k!)$ ways to place $k$ births within $2k$ events to achieve 
this. However, in some of these, the 
population goes to zero on the way. The reflection principle tells us how to count these 
cases:
For any choice that led to a premature death, there is a mirrored "path" (mirrored at the 
first intersection with 0, see Figure \ref{fig:Reflection}) that leads to a population size 
$-1$ at the end, i.e.\ that has $k-1$ births and $k+1$ deaths. 
The number of such arrangements is $(2k)!/(k-1)!(k+1)!$. Subtracting this number from 
all the choices
leads to 
\be
\frac {(2k)!}{k!k!}-\frac {(2k)!}{(k-1)!(k+1)!} =\frac {(2k)!}{k!(k+1)!},
\ee
which is our claim. 
\begin{figure}[h]
\begin{center}
\includegraphics[width=.6\textwidth]{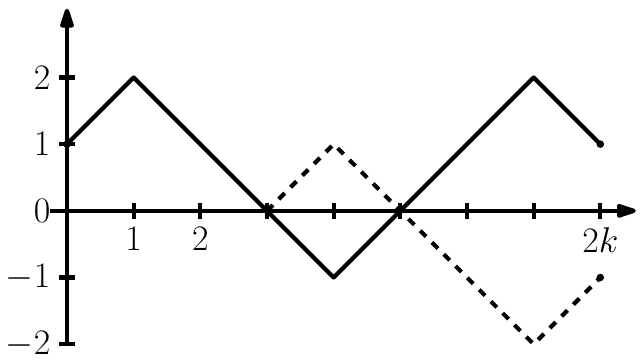}
\end{center}
\caption{Visualisation of the path mirroring for the discrete jump chain of the process 
$Z$.}
\label{fig:Reflection}
\end{figure}

Multiplying by the probabilities of each step gives \eqv(loi_excursion).
Verifying Lipschitz continuity is straightforward.
\end{proof}

It is now easy to  compute the probability that  a mutant that arrives at $\lfloor 
\a\rfloor+1$ produces a
mutant to $\lfloor \a\rfloor+2$ before its offspring dies out: Given that $k$ births 
happen, the probability that exactly one mutant is born
is $k\mu_K(1-\mu_K)^{k-1}$, while the probability that more than one mutant is born is 
of order $\mu_K^2$ and 
can be neglected.
Since the probability that $k$ births happen has been computed above,
the probability of one  mutant being produced is
\be
\mu_K\sum_{k=1}^\infty k  \frac{(2k)!}{k! (k+1)!} (\rho(\lfloor \a\rfloor+1)^k \left(1-
\rho(\lfloor \a\rfloor+1)\right)^{k+1} \equiv \mu_K\l\left(\rho(\lfloor \a\rfloor+1)\right),
\ee
up to a multiplicative error of order $1+O(\mu_K)$,
where 
$\rho(i)=\frac {b(i)}{b(i)+d(i)+c(i,0)\bar n(0)}$.  Conditional on this mutant having 
appeared at $\lfloor\a\rfloor+2$,
there is a again a probability $\mu_K\l\left(\rho(\lfloor \a\rfloor+2)\right)$
that it will give rise to a mutant at $\lfloor\a\rfloor+3$ before the $\lfloor \a\rfloor+2$-
population dies out,
and so on. Finally, 
the probability that from the first mutant at $\lfloor\a\rfloor+1$ a mutant offspring will 
arrive
at $L$ 
is 
\be\Eq(almost-there)
\mu_K^{L-\lfloor\a\rfloor-1}\prod_{i=\lfloor\a\rfloor+1}^{L-1}\l\left(\rho(i)\right).
\ee
This mutant now seeds a supercritical branching process, and this will reach a 
population size of
$\eps K$ before dying out with probability 
$f(L,0) /b(L)$. Thus, any new mutant of type $\lfloor\a\rfloor+1$ leads to the fixation 
of a population at $L$ with probability
\be\Eq(finally-done)
\mu_K^{L-\lfloor\a\rfloor-1}\frac {f(L,0) }{b(L)}  
\prod_{i=\lfloor\a\rfloor+1}^{L-1}\l\left(\rho(i)\right).
\ee
Since the initial flow of mutants from $\lfloor\a\rfloor$ is (very close to) a Poisson 
process with intensity $P(\a) b(\lfloor\a\rfloor) \mu_K$ and the random thinning
is again Poisson, the time of the fixation of the population at $L$ is an exponential 
random variable
with parameter
given by the advertised value  \eqv(parameter.1). 

In the case then $K\mu_K\ll1$, we just have to set $\lfloor \a\rfloor=0$. The condition 
$\mu_K\gg \eee^{-VK}$ ensures that the resident fluctuations are small until the fixation of the mutant.

Once the population at $L$ reaches the level $\eps K$, the deterministic 
approximation sets in and in time 
of order $1$, the new equilibrium is approached to an $\eps$-distance. After an 
additional time of order $\ln K$,
the populations at $0,\dots, L-1$ are extinct.
This concludes the proof of Theorem \thv(pro_mupetit).

\chapter{Escape from an evolutionary stable condition 2: general trait graphs}\label{chapter9}

\begin{chapquote}
{Thomas Malthus, \emph{Essay on the Principle of Population}}
{\frakfamily\fraklines {
I have read some of the speculations on the perfectibility\\
of man and society, with great pleasure. I have been warmed \\
and delighted with the enchanting picture which they hold forth.\\
I ardently wish for such happy improvements.  But I see great, \\
and, to my understanding, unconquerable difficulties in the way to them.}}
\end{chapquote}


\section{Beyond the valley}
In chapter \ref{escape}, we saw how a population can transition through a valley in the fitness landscape on a prolonged time scale $1/K\mu_K^L$, despite the first fit type being at a distance that is initially not reached by mutations. The proposed Assumption \ref{A2} ensured that once the fit mutant trait $L$ is reached, it becomes the new resident type and all other populations go extinct. In this case, on the correct time scale, we would observe the instantaneous transition from trait $0$ to trait $L$ at their respective equilibrium population sizes with no further evolution afterwards. This is what we will refer to as an evolutionary stable condition (ESC), which will be defined precisely below.

Subtle modifications of Assumption \ref{ass.1} can, however, drastically change the observed evolutionary pathway. We give three examples to demonstrate possible evolutionary phenomena before introducing the general setting and results in the following sections.

\subsection{Fast evolution after slow transition}

For the first example, we only slightly change Assumption \ref{ass.1} to make one trait in the valley fit with respect to the new resident trait $L$. We let $2<\alpha<3$, $L=4$, still assume mutation to direct right neighbours only, and impose the fitness landscape
\begin{align}
	f(i,0)<0, \text{ for } i\in\llbracket 1,3\rrbracket  \text{ and } f(4,0)>0,\\
	f(i,4)<0, \text{ for } i\in\{0,1,3\} \text{ and } f(2,4)>0,\\
	f(i,2)<0, \text{ for } i\in\{0,1,3,4\}.
\end{align}

Under these modified assumptions, the dynamics are the same as in Chapter \ref{escape} until trait $L=4$ becomes resident. At this time, all traits other than $4$ and $2$ are unfit and would go extinct. Trait $2$, however, has a positive invasion fitness and initially still has a positive population size of order $K\mu_K^2\gg1$. It can hence grow on the $\ln K$-time scale and reach a macroscopic size to compete with trait $4$. Since every trait is unfit in the presence of $2$, trait $2$ becomes the final resident trait. The populations of $0$ and $1$ go extinct, while $3$ and $4$ survive at a microscopic size due to incoming mutations from $2$. This evolutionary trajectory is summarised as a broken line picture for the orders of population sizes in Figure \ref{fig:BeyondValley}.
\begin{figure}[h]
\begin{center}
\includegraphics[width=.6\textwidth]{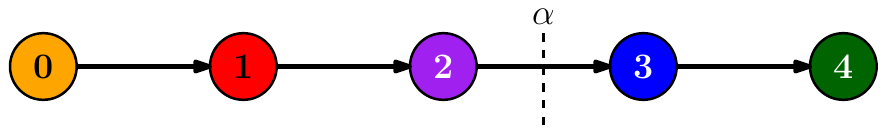}\\[1em]
\includegraphics[width=.8\textwidth]{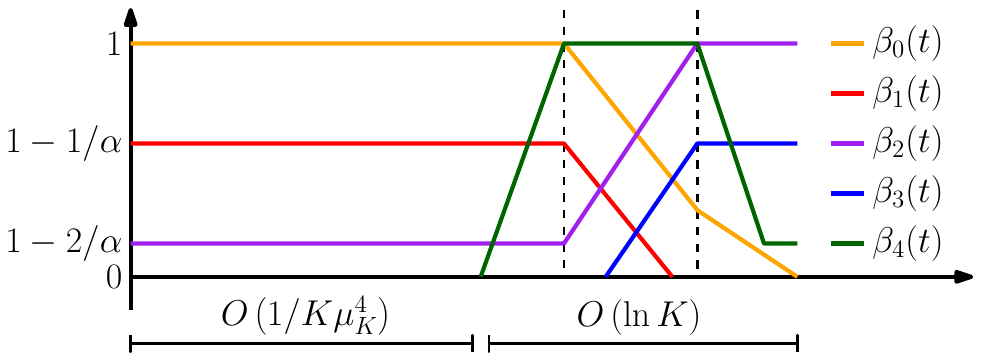}
\end{center}
\caption{Trait graph and broken line picture for orders of population sizes in Example 1.}
\label{fig:BeyondValley}
\end{figure}

Notably, since the transition from resident trait $4$ to $2$ occurs on the $\ln K$-time scale, this time is negligible on the accelerated time scale $1/K\mu_K^4$, on which the transition from $0$ to $4$ occurs. On this time scale, one hence observes a direct transition from the equilibrium states of $0$ to $2$, i.e.\ a jump that is much shorter than the time scale suggests and not expected since $f(2,0)<0$.

\subsection{Longer jumps on shorter time scales}

For the second example, we show that jumps can also be farther than expected on a certain timescale. To achieve this, we add a second trait to the right of the fitness valley. We let $2<\alpha<3$, $L=4$, still assume mutation to direct right neighbours only, and impose the fitness landscape
\begin{align}
	f(i,0)<0, \text{ for } i\in\llbracket 1,3\rrbracket,\\
	f(5,0)>\frac{1}{1-1/\a}f(4,0)>0,\label{fastergrowth}\\
	f(i,5)<0, \text{ for } i\in\llbracket 0,4\rrbracket .
\end{align}

Again, up to the time when mutant trait $L=4$ becomes established, i.e.\ grows to a size that diverges with $K$, the dynamics remain the same as in Chapter \ref{escape}. However, once trait $4$ passes a size of $K^{1/\alpha}$, it starts to foster infinitely many mutants of trait $5$ within a time of order 1, and the trait $5$ population also starts growing on the $\ln K$-time scale. Condition \eqref{fastergrowth} ensures that the time $\ln K/f(5,0)$ it takes trait $5$ to grow to a size of order $K$ is less than the remaining time $\ln K(1-1/\alpha)/f(4,0)$ that it would take trait $4$ to reach the same size. Hence, trait $5$ is the one to compete with trait $0$ and to become the new and final resident trait. The evolutionary trajectory is again summarised as a broken line picture in Figure \ref{fig:FarJump}.
\begin{figure}[h]
\begin{center}
\includegraphics[width=.8\textwidth]{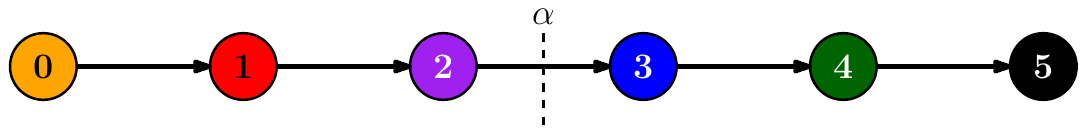}\\[1em]
\includegraphics[width=.8\textwidth]{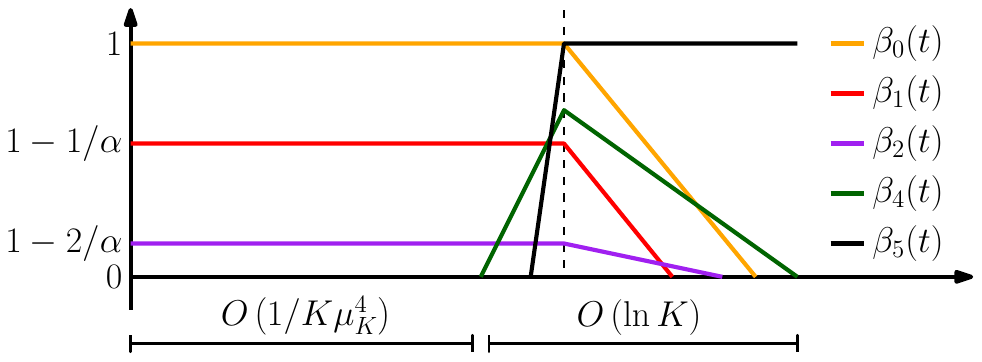}
\end{center}
\caption{Trait graph and broken line picture for orders of population sizes in Example 2.}
\label{fig:FarJump}
\end{figure}

Overall, on the time scale $1/\mu_K^4$, a direct transition between the equilibria of traits $0$ and $5$ is observed.

\subsection{Double valley}

For the final example, we consider the case of a double valley, where two fit traits are accessible at the same distance. We let $1<\alpha<2$, $L=3$, assume mutation to direct right neighbours for positive traits and direct left neighbours for negative traits only (assuming probability 1/2 to mutate left or right from trait $0$), and impose the fitness landscape
\begin{align}
	f(i,0)<0, \text{ for } i\in\{-2,-1,1,2\}  \text{ and } f(\pm3,0)>0,\\
	f(i,\pm3)<0, \text{ for  } i\in\llbracket -2,2\rrbracket.
\end{align}

By similar arguments to Chapter \ref{escape}, on the time scale $1/K\mu_K^3$, successful mutants of trait $3$ arise according to a Poisson process with rate
\begin{equation}
R(3):=\frac{\bar n(0)b(0)b(1)b(2)}{2|f(1,0)||f(2,0)|}\frac{f(3,0)}{b(3)},
\end{equation}
where the factor $1/2$ stems from the mutation kernel of trait $0$. Due to no back mutation, independently of this, successful mutants of trait $-3$ arise according to a Poisson process with rate
\begin{equation}
R(-3):=\frac{\bar n(0)b(0)b(-1)b(-2)}{2|f(-1,0)||f(-2,0)|}\frac{f(-3,0)}{b(-3)}.
\end{equation}

Due to the properties of independent Poisson processes, on the time-scale $1/K\mu_K^3$, a random transition is observed after an exponential time with rate $R(3)+R(-3)$. The new equilibrium state is of trait $\pm3$ with probabilities $R(\pm3)/(R(3)+R(-3))$, respectively. Note that, even if trait $3$ and $-3$ were fit with respect to each other, a transition between them could not be observed on the same time scale since trait $0$, and with it traits $-1$ and $1$, immediately start to decrease in population size and go extinct on the time scale $\ln K$ and a valley of length $6$ is to wide to transition on the considered time scale. Figure \ref{fig:DoubleValley} visualises the corresponding trait graph and two possible evolutionary trajectories as broken line pictures.
\begin{figure}[h]
\begin{center}
\includegraphics[width=.7\textwidth]{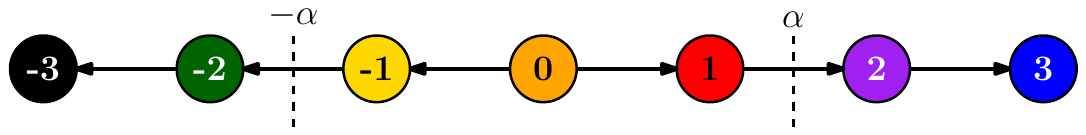}\\[1em]
\includegraphics[width=.49\textwidth]{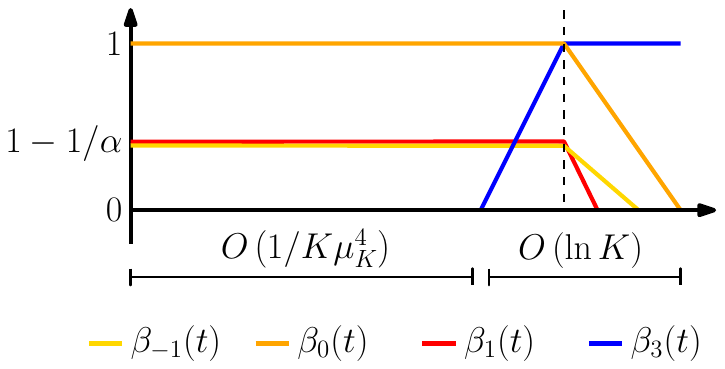}
\includegraphics[width=.49\textwidth]{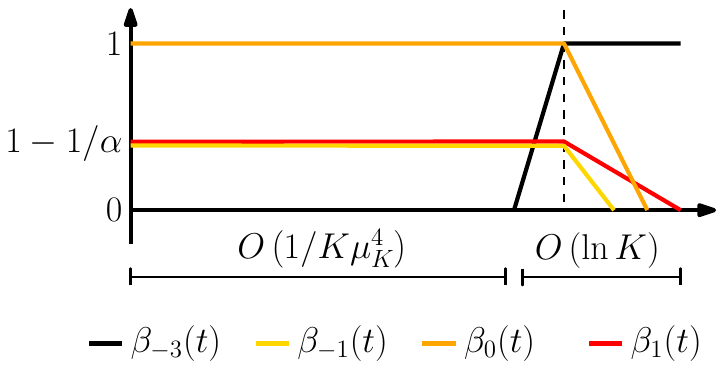}
\end{center}
\caption{Trait graph and broken line pictures for orders of population sizes for two possible evolutionary trajectories in Example 3.}
\label{fig:DoubleValley}
\end{figure}


\section{Setting}
From the previous examples, we have seen that the distance between the resident trait and the closest fit trait, i.e.\ the width of the fitness valley, determines the time scale on which the next transition occurs. However, the next state that is visible on this time scale might not involve the fit trait itself but is rather the final state of the following $\ln K$-dynamics - or even several transitions through smaller fitness valleys. Moreover, if there are multiple fit traits at the same distance, all paths need to be taken into consideration, and the new state is determined at random.

In this chapter, we derive a general description of the evolutionary dynamics on varying time scales for a finite directed trait graph $\cG=(V,E)$. As in Chapter \ref{chapter7}, edges $(v,w)\in E$ mark the possibility of mutation, i.e.\ $m(v,w)>0$, and we consider moderately rare mutations with $\mu_K=K^{-1/\a}$ for some positive $\a\notin\N$. We first introduce some definitions and notation to make precise what is meant by paths on the trait graph, evolutionary stable conditions, and degrees of stability.

\begin{definition}
For a path $\gamma=(\gamma_0,...,\gamma_\ell)$ in $\cG$ (i.e.\ $(\gamma_i,\gamma_{i+1})\in E$, $0\leq i\leq \ell-1$), denote its \textit{length} by $|\gamma|=\ell$. We write $\gamma:\mathbf{v}\to\mathbf{v}'$ as a short notation for all paths $\gamma$ that connect $\mathbf{v}\subset V$ to $\mathbf{v}'\subset V$, i.e.\ that satisfy $\gamma_0\in\mathbf{v}$ and $\gamma_{|\gamma|}\in\mathbf{v}'$.

With this notation, the previously defined graph distance satisfies
\begin{equation}
d(\mathbf{v},\mathbf{v}')=\min_{\gamma:\mathbf{v}\to\mathbf{v}'}|\gamma|.
\end{equation}
Define the \textit{mutation spreading neighbourhood} of a set of vertices $\mathbf{v}$ as $V_\alpha(\mathbf{v})=\{w\in V:d(\mathbf{v},w)<\alpha\}$ and its boundary as $\partial V_\alpha(\mathbf{v})=\{w\in V:d(\mathbf{v},w)=\lfloor\alpha\rfloor\}$.
\end{definition}

Figure \ref{fig:TraitGraph} visualises the mutation spreading neighbourhood (blue) for a set of 
coexisting resident traits $\mathbf{v}$ (pink) and $2<\a<3$. In this case, the closest fit traits are 
at a distance of 4 from the resident traits and marked as $V_{mut}$.

\begin{figure}[h]
\begin{center}
\includegraphics[width=.6\textwidth]{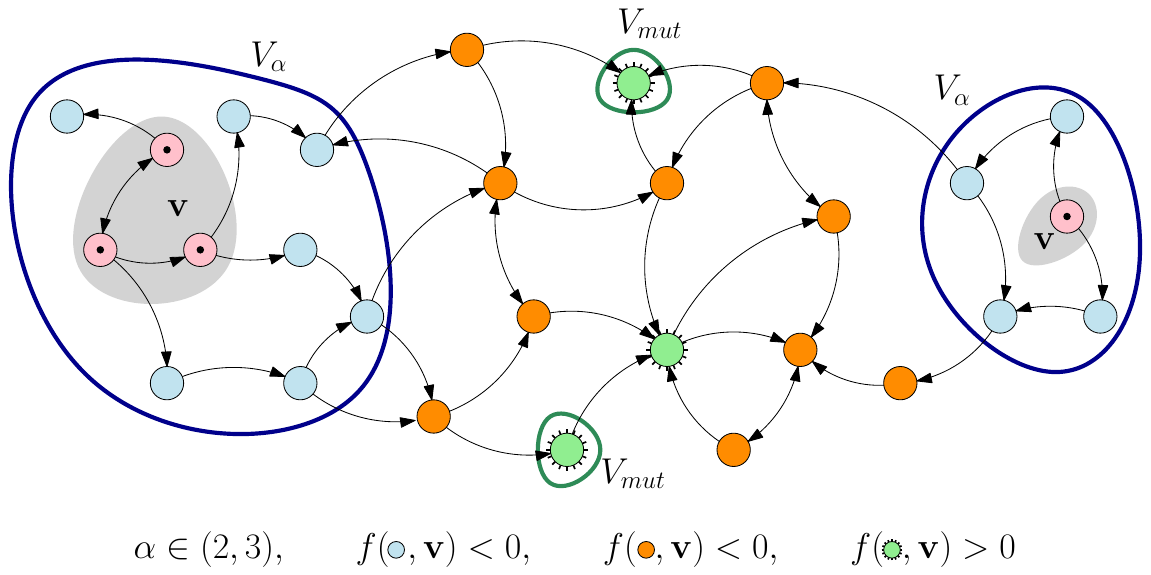}
\end{center}
\caption{Example of a directed trait graph with coexisting resident traits $\mathbf{v}\subset V$ and mutation spreading neighbourhood $V_\a$.}
\label{fig:TraitGraph}
\end{figure}

\begin{definition}
\begin{itemize}
	\item[(i)] A subset $\mathbf{v}\subset V$ and (orders of) population sizes $\beta$ are called an \textit{evolutionary stable condition (ESC)} if the traits $\mathbf{v}$ can coexist at a unique globally asympotically stable equilibrium $\bar{n}(\mathbf{v})$,
	\begin{align}\label{ESCfitness}
		f(w,\mathbf{v})<0,\ \forall w\in V_\alpha(\mathbf{v})\backslash\mathbf{v},
	\end{align}
	and
	\begin{align}\label{ESCsize}
		\beta_w=\left(1-\frac{d(\mathbf{v},w)}{\alpha}\right)_+,\ \forall\ w\in V.
	\end{align}
	\item[(ii)] A subset $\mathbf{v}\subset V$ and population sizes $(\beta^K)_{K\geq0}$ are called an \textit{asymptotic evolutionary stable condition} if the traits $\mathbf{v}$ can coexist at equilibrium $\bar{n}(\mathbf{v})$, \eqref{ESCfitness} is satisfied,
	\begin{align}
	    	\left|\b_w^K-(1-d(\mathbf{v},w)/\alpha)\right|\in O\left(\frac{1}{\ln K}\right), \ \forall w\in V_\a(\mathbf{v}),
	 \end{align}
	 and there exists a $K_0<\infty$ such that $\beta^K_w=0$, for all $K>K_0$ and \mbox{$w\in V\backslash V_\alpha(\mathbf{v})$}.
\end{itemize}
\end{definition}
Note that the last two conditions imply that $N^K_w=K^{\beta^K_w}-1=K^{1-d(\mathbf{v},w)/\a}\cdot O(1)$ for $w\in V_\a(\mathbf{v})$ and $N^K_w=0$ for all $w\in V\backslash V_\alpha(\mathbf{v})$ and $K$ large enough. Essentially, an ESC hence means that traits $\mathbf{v}$ coexist at their equilibrium, all traits that are within their mutation spreading neighbourhood are unfit and at their equilibrium size based on frequently incoming mutants, and no traits outside of the neighbourhood are alive. These are exactly the conditions that describe a final state of evolution on the $\ln K$-time scale, studied in Chapter \ref{chapter7}.

\begin{definition}
	For a subset $\mathbf{v}\subset V$ we define its \textit{stability degree} $L(\mathbf{v})$ by
	\begin{align}
		L(\mathbf{v}):=\begin{cases}\min_{w\in V: f(w,\mathbf{v})>0}d(\mathbf{v},w)&\text{if $\mathbf{v}$ can coexist,}\\
		0&\text{else}.\end{cases}
	\end{align}
	Moreover, set
	\begin{align}
	V_{mut}(\mathbf{v}):=\{w\in V:f(w,\mathbf{v})>0,d(\mathbf{v},w)=L(\mathbf{v})\}.
\end{align}
\end{definition}

A subset $\mathbf{v}$ associated to an ESC satisfies $L(\mathbf{v})>\a$ by definition. On the time scale $1/K\mu_K^L$, one can observe transitions from ESCs of stability degree $L$ to ESCs of stability degree at least $L$. The evolution of the population process reaches a final state, independent of the time scale, once the resident traits satisfy $L(\mathbf{v})=\infty$, i.e.\ there are no fit traits that could invade anymore.


\section{Results}
The main result of this chapter is a general description of the transition time and transition probabilities between ESCs. To describe the time to transition through the fitness valley surrounding an ESC, before triggering the $\ln K$-dynamics and attaining a new ESC, for $L(\mathbf{v})<\infty$, we define the stopping time
\begin{align}
	T^K_\text{fix}:=\inf\left\{t\geq 0:\exists\ w\in V\backslash V_\a(\mathbf{v}):\beta^K_w(t)\geq\frac{1}{\alpha}\right\}.
\end{align}
This is the first time that a new trait reaches a size of order $K^{1/\alpha}$, can thus produce neighbouring mutants within a time of order 1 and influence the subpopulations of other traits. The first part of Theorem \ref{thm:mainESC} below describes the limiting distribution of this stopping time $T^K_\text{fix}$.

For a path $\g:\mathbf{v}\to V_{mut}(\mathbf{v})$ such that $|\g|=L(\mathbf{v})$, i.e.\ a shortest path from $\mathbf{v}$ to one of the closest fit mutants, the rate at which a $w=\g_{L(\mathbf{v})}$ mutant population arises along this path $\g$ and fixates can be derived exactly as in the simplified scenario in Chapter \ref{escape}. With the same notation as in the previous chapter and the generalisation of
\begin{align}
\rho(w,\mathbf{v}):=\frac{b(w)}{b(w)+d(w)+\sum_{v\in\mathbf{v}}c(w,v)\bar{n}_v(\mathbf{v})}\label{def:rho},
\end{align}
for polymorphic resident equilibria, this rate is approximately equal to $R(\mathbf{v},\g)K\mu_K^{L(\mathbf{v})}$, where
\begin{align}
\label{eq:PathRate}
R(\mathbf{v},\g):=&\bar{n}_{\g_0}(\mathbf{v})\left(\prod_{i=1}^{\lfloor\a\rfloor}\frac{b(\g_{i-1})m(\g_{i-1},\g_i)}{|f(\g_i,\mathbf{v})|}\right)
b(\gamma_{\lfloor\a\rfloor})m(\gamma_{\lfloor\a\rfloor},\gamma_{\lfloor\a\rfloor+1})\notag\\
&\times\left(\prod_{j=\lfloor\a\rfloor+1}^{L(\mathbf{v})-1}\lambda(\rho(\g_j,\mathbf{v}))m(\g_j,\g_{j+1})\right)\frac{f(\g_{L(\mathbf{v})},\mathbf{v})}{b(\g_{L(\mathbf{v})})}\notag\\
=&\bar{n}_{\g_0}(\mathbf{v})b(\g_0)m(\g_0,\g_1)\left(\prod_{i=1}^{L(\mathbf{v})-1}\frac{b(\g_{i})m(\g_{i},\g_{i+1})}{|f(\g_i,\mathbf{v})|}\right)
\frac{f(\g_{L(\mathbf{v})},\mathbf{v})}{b(\g_{L(\mathbf{v})})}
\end{align}
Note that, in contrast to the previous chapter, this expression has additional factors of the mutation kernel $m$ since the scenario of this chapter allows for multiple directions of mutations. As a reminder, the first line is the rate at which the first trait in $\g$ outside of $V_\a(\mathbf{v})$ arises, which is related to the equilibrium size of trait $\g_{\lfloor\a\rfloor}$. The first factor in the second line is the probability of producing consecutive mutants during subcritical excursions, and the last factor is the fixation probability of trait $w=\g_{L(\mathbf{v})}$.

Since mutations along different paths correspond to (thinnings of) independent Poisson processes, we can add up the corresponding rates when considering the rate of mutations along several paths. The total rate at which a mutant population of trait $w\in V_{mut}(\mathbf{v})$ arises and fixates collects all shortest paths that end in $w$ and is approximately equal to $R(\mathbf{v},w)\mu_K^{L(\mathbf{v})}$, where
\begin{align}\label{eq:TraitRateFix}
R(\mathbf{v},w):=\sum_{\substack{\gamma:\mathbf{v}\to w\\|\g|=L(\mathbf{v})}}R(\mathbf{v},\g).
\end{align}
Note that only shortest paths of length $L(\mathbf{v})$ need to be considered since the rates along other paths are of order at most $K\mu_K^{L(\mathbf{v})+1}$ and hence negligible in comparison to $K\mu_K^{L(\mathbf{v})}$.

Finally, the total rate at which any mutant population of a trait in $V_{mut}(\mathbf{v})$ arises and fixates, i.e.\ the rate at which the population exits the ESC associated with $\mathbf{v}$, is approximately equal to $R(\mathbf{v})\mu_K^{L(\mathbf{v})}$, where
\begin{align}\label{eq:ExitRate}
R(\mathbf{v}):=\sum_{w\in V_{mut}(\mathbf{v})}R(\mathbf{v},w).
\end{align}
Again by properties of independent Poisson processes, the probability that this population is of trait $w\in V_{mut}(\mathbf{v})$ is proportional to the rate $R(\mathbf{v},w)$.

With these heuristics, we now state the main result of this chapter.
\begin{theorem}\label{thm:mainESC}
	Assume that $\mathbf{v}\subset V$ and $(\beta^K(0))_{K\geq0}$ are an asymptotic ESC.
	Then there exist constants $\eps_0>0$ and $0<c<\infty$ such that, for all $0<\eps<\eps_0$, there exist exponential random variables $E^K_+(\eps)$ and $E^K_-(\eps)$ with parameters $R(\mathbf{v})(1+c\eps)$ and $R(\mathbf{v})(1-c\eps)$, such that
	\begin{align}
	\liminf_{K\to\infty}\P(E^K_-(\eps)\leq T^K_\text{fix} K\mu_K^{L(\mathbf{v})}\leq E^K_+(\eps))\geq 1-c\eps.
	\end{align}
	Moreover, for all $w\in V$, the probability of $w$ being the trait to trigger $T^K_\text{fix}$ is
	\begin{align}\label{FixatingTrait}
	\lim_{K\to\infty}\P\left(\beta^K_w(T^K_\text{fix})=1/\alpha\right)=\begin{cases}R(\mathbf{v},w)/R(\mathbf{v})&\text{if }w\in V_{mut}(\mathbf{v}),\\0&\text{else.}\end{cases}
	\end{align}
\end{theorem}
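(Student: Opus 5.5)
The plan is to generalise the proof of Theorem \ref{pro_mupetit} from the linear valley to an arbitrary finite directed graph, organising everything around the shortest paths $\g:\mathbf{v}\to V_{mut}(\mathbf{v})$. First, I would introduce the stopping time $\theta^K$ at which either the resident traits $\mathbf{v}$ leave an $\eps$-neighbourhood of $\bar n(\mathbf{v})$, or the total non-resident mass exceeds $\eps K$. Up to $\theta^K\wedge T^K_\text{fix}$, the environment felt by every mutant trait $w$ is constant up to $O(\eps)$. Each $N^K_w$ can then be sandwiched between birth--death processes with immigration whose death rates are $d(w)+\sum_{v\in\mathbf{v}}c(w,v)\bar n_v(\mathbf{v})\pm C\eps$. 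Stability of the residents over times of order $1/K\mu_K^{L}$, which is polynomial in $K$, comes from Lemma \ref{ldp.1}, since the exit time is of order $\eee^{VK}$. Because $f(w,\mathbf{v})<0$ on $V_\a(\mathbf{v})\setminus\mathbf{v}$ by \eqref{ESCfitness}, these comparison processes are subcritical.

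Second, I would treat the ``large'' layer $V_\a(\mathbf{v})\setminus\mathbf{v}$. Here I would proceed by induction on $d(\mathbf{v},w)$, exactly as in Lemma \ref{lemme_bounds_ni}. Using the martingale and bracket estimates of \eqv(one.2)--\eqv(one.6), I would show that $N^K_w/(K\mu_K^{d(\mathbf{v},w)})$ stays within a factor $(1\pm C\eps)$ of the linear equilibrium $x_w$, which solves $|f(w,\mathbf{v})|x_w=\sum_{u:(u,w)\in E,\,d(\mathbf{v},u)=d(\mathbf{v},w)-1}b(u)m(u,w)x_u$, with $x_v=\bar n_v(\mathbf{v})$ for $v\in\mathbf{v}$. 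The asymptotic ESC assumption supplies the right initial orders, so no burn-in time on the $\ln K$ scale is needed. Contributions from longer paths carry extra powers of $\mu_K$ and are negligible. Unfolding this recursion along paths gives $x_w$ as a sum over shortest paths of the first-line factors in \eqref{eq:PathRate}. Consequently, the arrivals at a boundary-exterior trait $\g_{\lfloor\a\rfloor+1}$ through the edge $(\g_{\lfloor\a\rfloor},\g_{\lfloor\a\rfloor+1})$ form, up to $1\pm C\eps$, a Poisson process of intensity $K\mu_K^{\lfloor\a\rfloor+1}x_{\g_{\lfloor\a\rfloor}}b(\g_{\lfloor\a\rfloor})m(\g_{\lfloor\a\rfloor},\g_{\lfloor\a\rfloor+1})$.

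Third, I would handle the excursion layer. Each arriving mutant outside $V_\a(\mathbf{v})$ starts a subcritical excursion. By Lemma \ref{lem_taille_excu}, it produces a mutant along a given edge $(\g_j,\g_{j+1})$ with probability $\mu_K\lambda(\rho(\g_j,\mathbf{v}))m(\g_j,\g_{j+1})(1+O(\mu_K)+O(\eps))$. Excursions are $O(1)$ long by Lemma \ref{prop:BP}, while interarrival times diverge, so distinct excursions are asymptotically independent and non-overlapping. The resulting tree of marked excursions is therefore a thinning of the Poisson arrival process. A mutant reaching $w\in V_{mut}(\mathbf{v})$ founds a supercritical process, which reaches size $K^{1/\a}$ with probability $f(w,\mathbf{v})/b(w)+O(\eps)$ (cf.\ Lemma \ref{fixation.1}) in time $O(\ln K)$. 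This time is negligible against $1/K\mu_K^{L}$.

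It remains to check that no other event triggers $T^K_\text{fix}$. A trait $w\notin V_{mut}(\mathbf{v})$ outside $V_\a(\mathbf{v})$ is either unfit, and then only makes finite excursions that never reach $K^{1/\a}$, or it lies at distance $>L(\mathbf{v})$, and then its founding rate is $O(K\mu_K^{L+1})$. Summing over paths and applying Poisson superposition and thinning gives the exponential bounds $E^K_\pm(\eps)$ with parameters $R(\mathbf{v})(1\pm c\eps)$, together with the splitting probabilities \eqref{FixatingTrait}.

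The main obstacle is making the excursion thinning rigorous on a general graph. Branching excursions can spawn several mutant lineages along different edges and can overlap at shared vertices of distinct paths. I would control this by bounding the probability of two simultaneously active excursions, or of two mutations within one excursion, by $O(\mu_K)$ per arrival. Over the $O(\mu_K^{-(L-\lfloor\a\rfloor-1)})$ arrivals needed, these error terms accumulate to $o(1)$ relative to the successful event.
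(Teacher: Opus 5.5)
Your proposal follows the same route as the paper, which obtains the rate along each shortest path $\gamma:\mathbf v\to V_{mut}(\mathbf v)$ by rerunning the linear-valley argument of Theorem~\ref{pro_mupetit}, and then reads off the exponential law and (\ref{FixatingTrait}) from superposition and thinning of Poisson streams, with longer paths contributing only $O(K\mu_K^{L(\mathbf v)+1})$. Two adjustments are needed: first, the second-moment bracket estimates of (\ref{one.2})--(\ref{one.6}) only give a failure probability of order $(\varepsilon^2K\mu_K^{d(\mathbf v,w)})^{-1}$ per unit time, which does not survive a union bound over the polynomial horizon $1/(K\mu_K^{L(\mathbf v)})$, so keeping the layer $V_\alpha(\mathbf v)$ near its equilibria requires exponential bounds (obtainable by the coupling argument of Lemma~\ref{lem:step-1}); second, the overlap of excursions that you single out as the main obstacle needs no control at all (it does occur over this horizon, since an arrival at distance $\lfloor\alpha\rfloor+1$ overlaps another with probability of order $K\mu_K^{\lfloor\alpha\rfloor+1}\gg\mu_K$), because mutants outside $V_\alpha(\mathbf v)$ interact only through competition terms of order $1/K$ per individual and are therefore independent by the branching property.
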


Once some $w\in V_{mut}(\mathbf{v})$ has reached $\beta^K_w\geq1/\alpha$, the $\ln K$-dynamics evolve as described in Chapter \ref{chapter7}, initiated with $\beta^K_w=1/\alpha$ and $\beta^K_u=(1-d(\mathbf{v},u)/\alpha)_+$, for $u\in V\backslash w$. These dynamics are deterministic, and in case they do not terminate early and if they lead to a new ESC, we denote the associated set of resident traits by $\mathbf{v}_\text{ESC} (\mathbf{v},w)$.

Under the assumption that all traits $w\in V_{mut}(\mathbf{v})$ lead to asymptotic ESCs $\mathbf{v}_\text{ESC} (\mathbf{v},w)$, we define the stopping time at which one of these asymptotic ESCs is obtained by
\begin{align}
	\label{eq:T_ESC}
	T^K_\text{ESC} :=\inf\Bigg\{&t\geq T^K_\text{fix}:\exists\ w\in V_{mut}(\mathbf{v}):\nonumber\\
		&\forall u\in V_\alpha(\mathbf{v}_\text{ESC} (\mathbf{v},w)):\left|\beta^K_u(t)-\left(1-\frac{d(\mathbf{v}_\text{ESC} (\mathbf{v},w),u)}{\alpha}\right)\right|<\eps_K,\notag\\
		&\forall u\notin V_\alpha(\mathbf{v}_\text{ESC} (\mathbf{v},w)):\beta^K_u(t)=0\Bigg\},
\end{align}
	where we pick $\eps_K=C/\ln K$ for a large enough $0<C<\infty$. Then this definition is precisely in line with the definition of an asymptotic ESC.

Since the time $T^K_\text{ESC} -T^K_\text{fix}$ is of order $\ln K$, the asymptotics for $T^K_\text{fix}$ translate to $T^K_\text{ESC}$. Moreover, the transition probabilities from one ESC to another can be expressed in terms of the probabilities of traits $w\in V_{mut}(\mathbf{v})$ fixating in the population. For $\mathbf{w}\subset V$ we define
\begin{align}\label{eq:TransitionRate}
	p(\mathbf{v},\mathbf{w}):=\sum_{\substack{w\in V_{mut}(\mathbf{v}):\\\mathbf{v}_\text{ESC} (\mathbf{v},w)=\mathbf{w}}}\frac{R(\mathbf{v},w)}{R(\mathbf{v})}.
\end{align}

We can now state the result on transitions between ESCs as a direct corollary of Theorem \ref{thm:mainESC}.
\begin{corollary}\label{cor:mainESC}
Suppose the same assumptions as in Theorem \ref{thm:mainESC} are satisfied. 
Moreover, assume that, for every $w\in V_{mut}(\mathbf{v})$, the algorithmic description of the
 $\ln K$-dynamics in Chapter \ref{chapter7}, initiated with
	\begin{align}\label{eq:afterFix}
	\beta_u(0)=\begin{cases}\frac{1}{\alpha}&\text{if }u=w\\
	\left(1-\frac{d(\mathbf{v},u)}{\alpha}\right)_+ &\text{else}\end{cases},
    \end{align}
    does not stop early due to one of its termination criteria and reaches an ESC associated 
    to some traits $\mathbf{v}_\text{ESC} (\mathbf{v},w)$ after finitely many steps. Then, $T^K_\text{ESC} -T^K_\text{fix}\in O(\ln K)$ and therefore, with the same constants $\eps_0$ and $c$ and with the same random variables $E^K_+(\eps)$ and $E^K_-(\eps)$ as in Theorem \ref{thm:mainESC}, 
    \begin{align}
		\liminf_{K\to\infty}\P(E^K_-(\eps)\leq T^K_\text{ESC} K\mu_K^{L(\mathbf{v})}\leq E^K_+(\eps))\geq 1-c\eps.
	\end{align}
	Moreover, for all $\mathbf{w}\subset V$,
	\begin{align}
		\lim_{K\to\infty}\P(\{u\in V:\beta^K_u(T^K_\text{ESC} )> 1-\eps_K\}=\mathbf{w})=p(\mathbf{v},\mathbf{w}).
	\end{align}
\end{corollary}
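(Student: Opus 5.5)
The plan is to split the interval $[0,T^K_\text{ESC}]$ at the stopping time $T^K_\text{fix}$. The first piece, $T^K_\text{fix}$, is controlled by Theorem \ref{thm:mainESC}. The second piece, $T^K_\text{ESC}-T^K_\text{fix}$, is a deterministic $\ln K$-phase governed by Theorem \ref{ThmConv}. The key observation is that $\ln K\cdot K\mu_K^{L(\mathbf{v})}=K^{1-L(\mathbf{v})/\a}\ln K\to 0$, because $L(\mathbf{v})>\a$ for an ESC. Hence any time of order $\ln K$ disappears on the time scale $1/K\mu_K^{L(\mathbf{v})}$. The statement then follows by a sandwich argument: the event $\{E^K_-\le T^K_\text{fix}K\mu_K^L\le E^K_+\}$ together with $\{(T^K_\text{ESC}-T^K_\text{fix})K\mu_K^L\le \eta\}$ gives the claim. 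To absorb the extra $\eta$, we slightly enlarge the parameter of $E^K_+$ by replacing $c$ with $2c$. Since the exponential laws are continuous, this costs only $o(1)$ probability.

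First, we condition on the identity of the trait $w\in V_{mut}(\mathbf{v})$ that triggers $T^K_\text{fix}$. By \eqref{FixatingTrait}, this happens with probability $R(\mathbf{v},w)/R(\mathbf{v})$, and with probability tending to one no trait outside $V_{mut}(\mathbf{v})$ triggers it. Next we must check that at time $T^K_\text{fix}$ the state satisfies \eqref{eq:afterFix} asymptotically in probability. The resident and spreading neighbourhood must still be at their asymptotic ESC exponents, which is part of the proof of Theorem \ref{thm:mainESC}: the resident stays near equilibrium and the $i\le\lfloor\a\rfloor$ populations are controlled as in Lemma \ref{lemme_bounds_ni}. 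We also need $\beta^K_w=1/\a$, and all other excursion traits outside $V_\a(\mathbf{v})$ must have $\beta^K$ tending to $0$; they are only finite excursions. One subtlety is that excursions in progress give $\beta=O(1/\ln K)$ rather than exactly $0$. Since the $\ln K$-limit in Theorem \ref{ThmConv} only sees exponents, this is harmless.

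Then, using the strong Markov property at $T^K_\text{fix}$, we apply Theorem \ref{ThmConv} and Proposition \ref{CorEquilibria} with initial condition \eqref{eq:afterFix}. Theorem \ref{ThmConv} is stated for initial conditions of the form \eqref{cond-init}, so the more general initial data must be admitted through the $\tilde\beta(0)$ remark there. By hypothesis the algorithm does not terminate and reaches $\mathbf{v}_\text{ESC}(\mathbf{v},w)$ at a finite deterministic time $s^*$. Hence, in probability, the exponents converge on $[0,s^*+1]$ to the broken-line limit. At time $s^*+\delta$, each $\beta_u$ is within any prescribed distance of $(1-d(\mathbf{v}_\text{ESC},u)/\a)_+$, and the residents are at $\bar n(\mathbf{v}_\text{ESC})$.

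The main obstacle is upgrading this to the sharper conditions in \eqref{eq:T_ESC}. These require accuracy $\eps_K=C/\ln K$ rather than $o(1)$, and exact extinction of traits outside $V_\a(\mathbf{v}_\text{ESC})$. For the extinction part, I would couple the declining traits with subcritical birth–death processes and invoke Lemma \ref{prop:BP}(ii) and \eqref{ej.4}: from size $K^{\delta}$ they die out in time $O(\ln K)$. The relevant traits receive mutant input at a rate that is $o(1)$ per unit time, so they do not reappear. For the $1/\ln K$ accuracy inside $V_\a$, I would use the LLN-type argument of Lemma \ref{lemme_bounds_ni}. It shows that $N^K_u/(K\mu_K^{d(\mathbf{v}_\text{ESC},u)})$ converges to a constant bounded away from $0$ and $\infty$ within time $O(1)$ after stabilisation, and this gives $|\beta^K_u-(1-d/\a)|=O(1/\ln K)$. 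Combining these steps gives $T^K_\text{ESC}-T^K_\text{fix}=O(\ln K)$ with high probability, and the resident set at $T^K_\text{ESC}$ is $\mathbf{v}_\text{ESC}(\mathbf{v},w)$. Summing over $w$ according to \eqref{eq:TransitionRate} yields $p(\mathbf{v},\mathbf{w})$.
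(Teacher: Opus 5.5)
Your proposal is correct and follows the paper's own argument, which is only a brief remark. The gap $T^K_\text{ESC}-T^K_\text{fix}=O(\ln K)$ produced by the $\ln K$-dynamics of Chapter \ref{chapter7} is invisible on the time scale $1/K\mu_K^{L(\mathbf v)}$ because $L(\mathbf v)>\alpha$, and the transition probabilities follow from \eqref{FixatingTrait} because the post-fixation dynamics is deterministic once the fixating trait $w$ is known. You additionally spell out steps the paper leaves implicit: the strong Markov property at $T^K_\text{fix}$, matching the initial data \eqref{eq:afterFix}, and the $C/\ln K$ accuracy and exact extinction required in \eqref{eq:T_ESC}.

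One small deviation concerns your $c\to 2c$ device for absorbing the extra $\eta$. It does not use literally the same $E^K_+(\eps)$ as the statement claims. It replaces the upper variable by a stochastically larger exponential with the smaller rate $R(\mathbf v)(1-2c\eps)$, which is the correct direction, so ``enlarging the parameter'' is a misnomer. Keeping the very same variables would require slack in the coupling behind Theorem \ref{thm:mainESC}, and the paper's phrase ``the asymptotics translate'' tacitly assumes such slack.
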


Based on these results, we can define the so-called \textit{metastability graph}.
\begin{definition}
	As vertices for the general metastability graph \linebreak $\cG_\text{ESC} =\left(\mathcal{V}_\text{ESC} ,\mathcal{E}_\text{ESC} \right)$ we take all sets of resident traits that correspond to an ESC, i.e.\ that have stability degree strictly bigger than $\alpha$, and edges represent possible transitions to other ESCs. More precisely,
	\begin{align}
	\mathcal{V}_\text{ESC} :=&\{\mathbf{v}\subseteq V : L(\mathbf{v})>\alpha\},\\
	\mathcal{E}_\text{ESC} :=&\{(\mathbf{v},\mathbf{w}):\exists w\in V_{mut}(\mathbf{v}) \text{\ s.t.\ } \mathbf{w}=\mathbf{v}_\text{ESC} (\mathbf{v},w)\}.
	\end{align}
\end{definition}

Note that this graph summarises possible transitions between ESCs on various time scales. For any fixed time scale $1/K\mu_K^L$, transitions between ESCs $\mathbf{v}$ and $\mathbf{w}$ with $L(\mathbf{v}>L$ do not occur, while transitions with $L(\mathbf{v}<L$ happen instantaneously. As a consequence, in the latter case transitions out of an ESC with $L(\mathbf{v}=L$ might (visibly) not lead to an ESC corresponding to $\mathbf{v}_\text{ESC}(\mathbf{v},w)$ for some $w\in V_{mut}(\mathbf{v})$ but rather skip ahead through some transitions along shorter valleys until an ESC of stability degree of at least $L$ is reached. This means that, for any fixed time scale $1/K\mu_K^L$, edges in the metastability graph corresponding to transitions out of ESCs of lower stability degree collapse to form the so-called $L$-scale graph $\cG^L$. Again, on this graph, transition rates of multiple paths in $\cG_\text{ESC}$ need to be added up.

An extensive example that demonstrates evolutionary behaviour on varying time scales is provided in the next section.

\begin{itemize}
\item Add more heuristics of the proof/references to techniques from previous chapters?
\end{itemize}


\section{Example}
To simplify, we consider a scenario of constant competition, i.e.\ $c(v,w)\equiv c>0$, for all $v,w\in V$, as in Section \ref{section41}. As a consequence, setting $r(v)=b(v)-d(v)$ as the growth rate without competition, the invasion fitness can be calculated as the difference $f(w,v)=r(w)-r(v)$ and the potential for invasion comes down to whether a mutant has a higher rate $r$ than the current resident trait. We introduce the short notation
\be
F_v(w)=\frac{f(w,v)}{b(w)}
\ee
for the fixation probability of trait $w$ in a resident population of trait $v$.

Figure \ref{fig:GeneralTraitGraph} depicts the directed trait graph and individual growth rates 
$r$ of a specific evolutionary scenario. Fix some $1<\a<2$, an initial condition of 
$N_0^K(0)\approx\bar n(0)K$ and $N_v^K(0)=0$, for all $v\neq0$, and assume that $m(v,
\cdot)$ is uniform between neighbouring traits. In the following, we run through the various 
time scales that can be considered and the evolutionary trajectories that can be observed on 
those time scales.
\begin{figure}[h]
\begin{center}
\includegraphics[width=.9\textwidth]{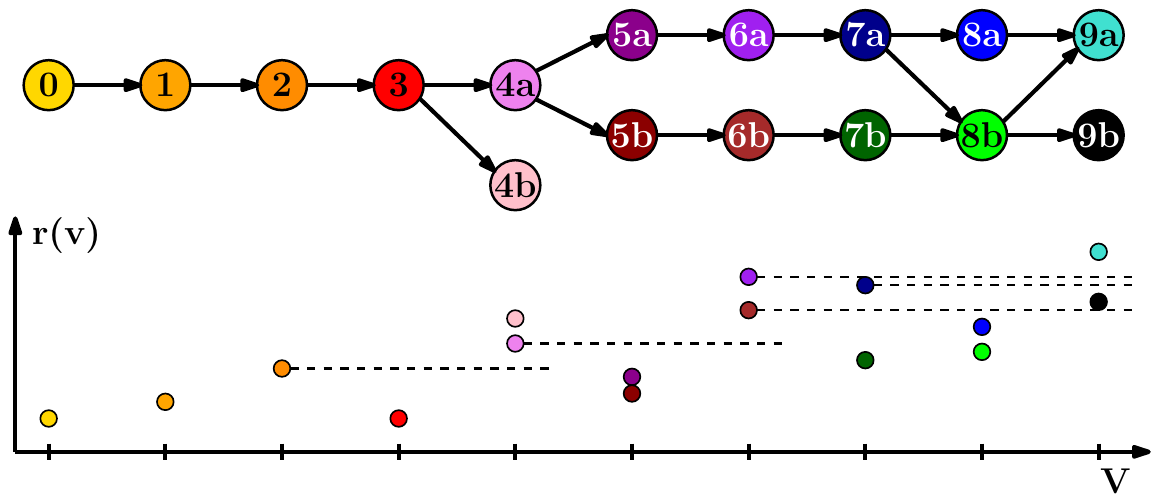}
\end{center}
\caption{...}
\label{fig:GeneralTraitGraph}
\end{figure}

Note that $1/K\mu_K\ll 1$ here, so we do not need to consider this time scale, as mutants of 
trait 1 arrive but no significant growth can be observed.

To start, we hence consider the time scale $\ln K$. Since $\alpha>1$ and $r(0)<r(1)<r(2)$, we 
have $f(1,0),f(2,0),f(2,1)>0$ and the algorithmic construction in Chapter \ref{chapter7} yields 
a trajectory of either
\be
0\to1\to2\ \text{ or }\ 0\to2
\ee
of resident traits, depending on whether $(1-1/\a)/f(1,0)<1/f(2,0)$ or not (i.e.\ whether trait 2 
mutants surpass the trait 1 mutant population before it can invade). Once trait 2 is resident, 
no further evolution can be observed on this time scale since $r(3)<r(2)$ and the next fit 
mutant is at a distance of $4-2>\alpha$. Hence, $\mathbf{v}=\{2\}$ with $\b_2=1$, $
\b_3=1-1/\a$ and all other $\b_v=0$ is an ESC.

Similarly to this, since all fitter traits are at least $2>\alpha$ mutation steps away, $\{4a\}$, $
\{4b\}$, $\{6a\}$, $\{6b\}$, $\{7a\}$, $\{9a\}$, and $\{9b\}$ - together with the appropriate $
\b_v$s, make up the ESCs of this trait graph. Note that, since no two traits apart from 0 and 
2, which have a fitter neighbour 1, have the same growth rate $r$, there are no coexistence 
equilibria that make up an ESC in this case. The resulting metastability graph $
\cG_\text{ESC}$ with all possible transitions is shown in Figure \ref{fig:GeneralESCGraph}.

The next time scale to consider is $1/K\mu_K^2\gg\ln K$. Here, all $\ln K$ dynamics occur 
instantaneously, and the dominating phenomena are transitions through fitness valleys of 
width 2, i.e.\ out of ESCs with stability degree $L(\mathbf{v})=2$. Starting with resident trait 0, 
the population therefore immediately transitions to trait 2 at equilibrium. From there, two 
possible transitions can occur, either to $4a$ or $4b$, at rates
\be
R(\{2\},4a)=\frac{\bar n(2)b(2)b(3)f(4a,2)}{2|f(3,2)|b(4a)}\ \text{ or }\ R(\{2\},4b)=\frac{\bar n(2)b(2)b(3)f(4b,2)}{2|f(3,2)|b(4b)},
\ee
respectively, where the factor 2 in the denominator comes from $m(3,4a)=m(3,4b)=1/2$. Note 
that in this case the fit mutant trait and resulting ESC traits are the same, i.e.\
 $\mathbf{v}_\text{ESC}(w)=w$. The transition probabilities are hence 
\be
p(\{2\},\{4a\})=\frac{R(\{2\},4a)}{R(\{2\})}=\frac{F_2(4a)}{F_2(4a)+F_2(4b)}\ 
\text{ and }\ p(\{2\},\{4b\})=\frac{F_2(4b)}{F_2(4a)+F_2(4b)}.
\ee
In the first case, there are again two options of transitioning to the equilibria of $6a$ 
or $6b$ respectively, with rates
\be
R(\{4a\},6a)=\frac{\bar n(4a)b(4a)b(5a)f(6a,4a)}{2|f(5a,4a)|b(6a)}\ \text{ and }\ 
R(\{4a\},6b)=\frac{\bar n(4a)b(4a)b(5b)f(6b,4a)}{2|f(5b,4a)|b(6b)}
\ee
and probabilities
\begin{align}
&p(\{4a\},\{6a\})=\frac{F_{4a}(6a)/|F_{4a}(5a)|}{F_{4a}(6a)/|F_{4a}(5a)|+F_{4a}(6b)/|F_{4a}
(5b)|}\notag\\
&\text{ and }\ p(\{4a\},\{6b\})=\frac{F_{4a}(6b)/|F_{4a}(5b)|}{F_{4a}(6a)/|F_{4a}(5a)|+F_{4a}
(6b)/|F_{4a}(5b)|}.
\end{align}
Note that the probabilities are slightly more complicated than the previous ones, since here branching in the trait graph occurs in the first mutation step instead of the second, and hence fewer common terms cancel.
The evolutionary trajectory stops once $4b$, $6a$ or $6b$ are reached since all of these $ESC$ have a stability degree larger than 2. Notably, while not connected to this trajectory, $\{7a\}$, $\{9a\}$, and $\{9b\}$ are ESCs of stability degree at least 2 as well, and a transition from $7a$ to $9a$ could be observed on this timescale. While there is also a path on the trait graph from $7a$ to $9b$, this transition will not be made since $r(7a)>r(9b)$ and hence $p(\{7a\},\{9a\})=1$. The possible transitions and stable states for the time scale $1/K\mu_K^2$ are summarised in graph $\cG^2$ in Figure \ref{fig:GeneralESCGraph}.

The final time scale to consider is $1/K\mu_K^3$. Again, all transitions on shorter time scales are instantaneous and hence the population immediately jumps to the equilibrium state of $4b$ with probability $p(\{2\},\{4b\})$, $6a$ with probability $p(\{2\},\{4a\})\cdot p(\{4a\},\{6a\})$, or $6b$ with probability $p(\{2\},\{4a\})\cdot p(\{4a\},\{6b\})$. In the first case, no further evolution can be observed. In the other two cases, the population can still transition to either $9a$ or $9b$. The corresponding transition rates, taking into account that there are two possible paths from $6a$ to $9a$, through $8a$ or $8b$, are
\begin{align}
R(\{6a\},9a)&=\frac{\bar n(6a)b(6a)b(7a)b(8a)f(9a,6a)}{4|f(7a,6a)||f(8a,6a)|b(9a)}
+\frac{\bar n(6a)b(6a)b(7a)b(8b)f(9a,6a)}{4|f(7a,6a)||f(8b,6a)|b(9a)},\\
R(\{6b\},9a)&=\frac{\bar n(6b)b(6b)b(7b)b(8b)f(9a,6b)}{2|f(7b,6b)||f(8b,6b)|b(9a)},\\
R(\{6b\},9b)&=\frac{\bar n(6b)b(6b)b(7b)b(8b)f(9b,6b)}{2|f(7b,6b)||f(8b,6b)|b(9b)}.
\end{align}
The corresponding transition probabilities are $p(\{6a\},\{9a\})=1$ and $p(\{6b\},\{9a\})=R(\{6b\},9a)/R(\{6b\})$, $p(\{6b\},\{9b\})=R(\{6b\},9b)/R(\{6b\})$, where $R(\{6b\})=R(\{6b\},9a)+R(\{6b\},9b)$. The possible transitions and stable states for the time scale $1/K\mu_K^3$ are summarised in graph $\cG^3$ in Figure \ref{fig:GeneralESCGraph}.
\begin{figure}[h]
\begin{center}
\includegraphics[width=.6\textwidth]{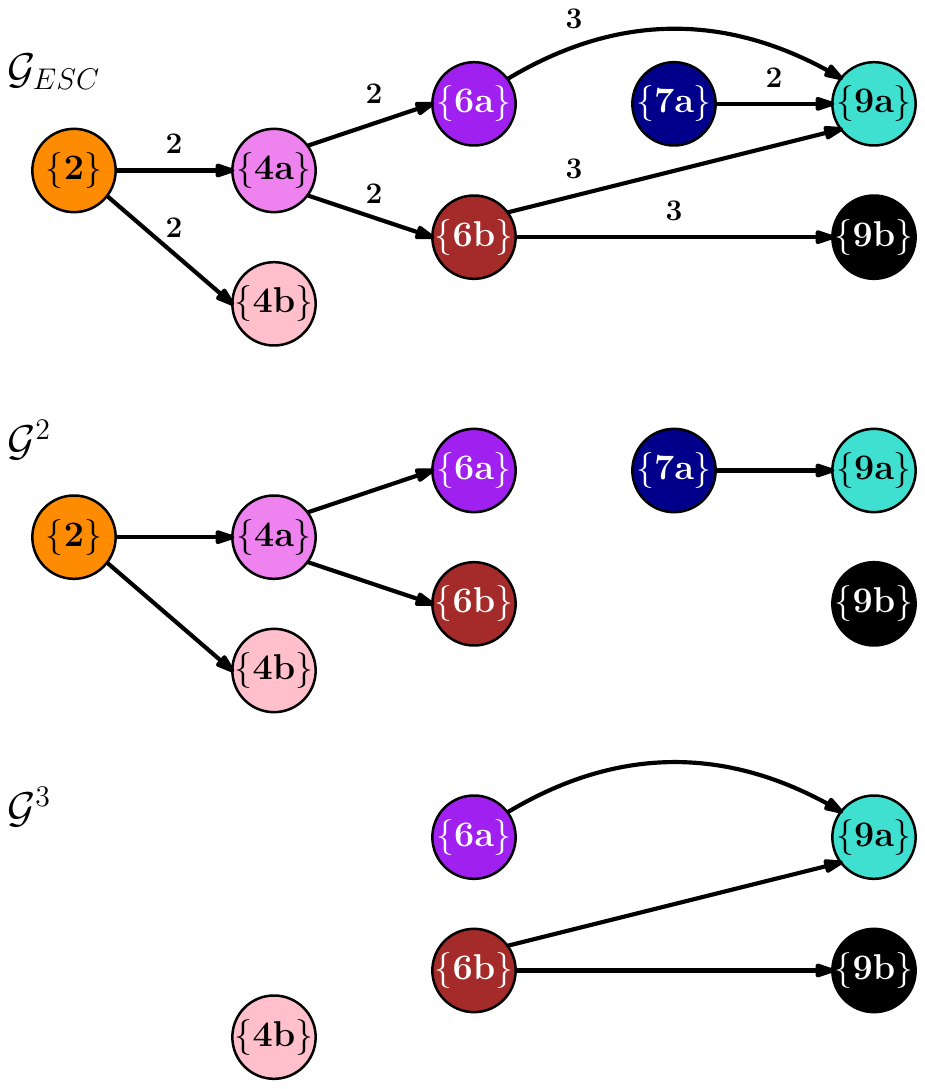}
\end{center}
\caption{Metastability graph $\cG_\text{ESC}$ for the trait graph in Figure \ref{fig:GeneralTraitGraph} and $1<\alpha<2$. Edges correspond to possible metastable transitions and are annotated with the widths of the corresponding fitness valleys, i.e.\ stability degrees of the traits the arrows are rooted in. Graphs $\cG^2$ and $\cG^3$ visualise stable states and transitions observed on time scales $1/K\mu_K^2$ and $1/K\mu_K^3$, respectively.}
\label{fig:GeneralESCGraph}
\end{figure}

\part{Variations on the basic models}
\chapter{Diploid models and genetic diversity}
\def\ff{b}

\begin{chapquote}
{Charles Darwin, \emph{The Origin of Species}}
{The laws governing inheritance are quite unknown; no one can say why the same particularity in different individuals of the same species, and in individuals of different species, is sometimes inherited and sometimes not so; why the child often reverts in certain characters to its grandfather or grandmother or other much more remote ancestor; why a peculiarity is often transmitted form one sex to both sexes, or to one sex alone, more commonly but not exclusively to the like sex.}
\end{chapquote}

\index{diploid!model}\index{diploid!reproduction}
In the previous chapters, we have looked at population models with haploid reproduction schemes, meaning that a single parent gives birth to clonal children (unless a mutation occurs).  In contrast, in \emph{diploidal}
reproduction, children have two parents who contribute genetic material to the children's genotype.
In fact, the bulk of work in the context of adaptive dynamics concerns haploid models. This is one of the
points ofter criticised by biologists who assert that clonal reproduction is insufficient to explain evolution 
and that sexual reproduction, or more generally, the exchange of genetic information 
between individuals, is of crucial importaÏnce to explain the observed genetic variability and the complexity
of populations.
In fact, we have seen so far that in haploid models monomorphism is somewhat the rule and 
genetic diversity is produced only in  exceptional situations, at least under the assumption of very small 
mutation rates. 
 In this chapter, which is based on the papers \cite{BovNeu16} and \cite{BovCoqNeu18},
we explain a way in which the introduction of diploid reproduction schemes can indeed favour diversity.
Mendelian diploid models have been studied for over a century in the context of \emph{population genetics} (see, e.g., \cite{yule06}, \cite{fisher18}, 
 \cite{wright31}, \cite{haldane24a}, and \cite{haldane24b}). Textbook expositions of population genetics are given in,  e.g.,
 \cite{crowkimura}, \cite{nagylaki92}, \cite{ewens04}, and \cite{buerger2000}.  
 Diploid models have been considered in adaptive dynamics already by 
 \cite{KG99} and \cite{M12}. 
 In the context of individual-based models, the first time (to our knowledge) that diploid models were studied, is the
paper by Collet, Méléard, and Metz \cite{CMM13}, where the TSS is derived in a Mendelian diploid model under certain assumptions (that we will discuss below). Some more recent papers are 
\cite{coron}, \cite{coron2}, and \cite{coron3}.

\section{The Mendelian diploid model}

We will consider only a very simple situation with a single locus. \index{heterozygotous}
Each individual $i$ is characterised by a single \emph{heterozygotous} gene that carries two alleles, $u_1^iu_2^i$, from some allele space $\mathcal{U}\in\R$.  These two 
alleles define the genotype of individual $i$, which in turn defines its phenotype, $\phi(u_1^iu_2^i)$, through a function 
$\phi:\mathcal{U}^2\rightarrow\R$. We suppress parental effects, thus $\phi(u_1^iu_2^i)=\phi(u_2^iu_1^i)$. 
We consider a hermaphroditic population where any two individuals can be the 
parents of a 
child. Reproductive events are now more complicated than in the haploid case, 
as they involve 
the selection of two individuals from the population with genotypes $(u_1,u_2)$ 
and $(v_1,v_2)$
and the assignment of the children's genotype. 

The selection of the parents is driven by a parameter $\ff_{u_1u_2}$, called 
\emph{fertility}. For the distribution \index{fertility}
of the parental genotypes on the children, we consider the simple case of equal 
probability of all
mixtures, i.e. the four possible cases $(u_1,v_1), (u_1,v_2), (u_2,v_1), (u_2,v_2)$ have equal probability. 

 The population at time $t$ is represented as a point measure $\nu_t$ on 
 $\mathcal{U}^2$.
%
As before we denote by $\MM^K$ the space of all finite point measures of this type.

The following demographic parameters depend all on the phenotype, but we suppress this from the notation. 
Let us define
\vspace{-6pt}
\begin{center}
	\begin{tabular}{ll}
		  $\ff_{u_1u_2}\in\R_+$ 	& the rate of birth (fertility) of an individual with genotype $u_1u_2$.\\
 		 $d_{u_1u_2}\in\R_+$ 	& the rate of natural death of an individual with genotype $u_1u_2$.\\
		  $\frac{c_{u_1u_2,v_1v_2}}{K}\in\R_+$ & \parbox[t]{1.0\textwidth}{the competition effect felt by an individual with genotype $u_1u_2$\\
		  						 from an individual with genotype $v_1v_2$.}\\
		$m(u,dh)	$				&\parbox[t]{1.0\textwidth}{mutation law of a mutant allelic trait $u+h\in\mathcal{U}$, born from an\\individual with allelic trait $u$.} 
 \end{tabular}
 \end{center}		  
 \vspace{3pt}
%
%
The generator of the process is defined as in \cite{CMM13}.
First we define, for the genotypes $u_1u_2,v_1v_2$ and a point measure $\nu$, the Mendelian reproduction operator:
\bea
&&(A_{u_1u_2,v_1v_2}F)(\nu)\\
&&=\frac{1}{4}\left[F\left(\nu+\frac{\d_{u_1v_1}}{K}\right)+F\left(\nu+\frac{\d_{u_1v_2}}{K}\right)+F\left(\nu+\frac{\d_{u_2v_1}}{K}\right)+F\left(\nu+\frac{\d_{u_2v_2}}{K}\right)\right]-F(\nu),\nonumber
\eea
and the Mendelian reproduction-cum-mutation operator:
\bea
(M_{u_1u_2,v_1v_2}F)(\nu)
&=&\frac{1}{8}\int\Bigg[\left(F\left(\nu+\frac{\d_{hv_1}}{K}\right)+F\left(\nu+\frac{\d_{hv_2}}{K}\right)\right)
m(u_1,dh)\nonumber\\
&+&\left(F\left(\nu+\frac{\d_{hv_1}}{K}\right)+F\left(\nu+\frac{\d_{hv_2}}{K}\right)\right)
m(u_2,dh)\nonumber\\
&+&\left(F\left(\nu+\frac{\d_{u_1h}}{K}\right)+F\left(\nu+\frac{\d_{u_2h}}{K}\right)\right)
m(v_1,dh)\nonumber\\
&+&\left(F\left(\nu+\frac{\d_{u_1h}}{K}\right)+F\left(\nu+\frac{\d_{u_2h}}{K}\right)\right)
m(v_2,dh)\Bigg]
-F(\nu).
\eea
The process $(\nu_t)_{t\geq0}$ is then 
 a $\mathcal{M}^K$-valued Markov process with  generator $L^K$, given for any bounded measurable
  function $F:\mathcal{M}^K\rightarrow\R$ and $\nu\in\mathcal{M}^K$ by:
\bea
&&(L^KF)(\nu)\nonumber\\
&&=\int_{\mathcal U^2}\left(d_{u_1u_2}+\int_{\mathcal U^2}c_{u_1u_2,v_1v_2}\nu(d(v_1v_2))\right)\left(F\left(\nu-\frac{\d_{u_1u_2}}{K}\right)-F(\nu)\right))
K\nu(d(u_1u_2))\nonumber\\
&&+\int_{\mathcal U^2}(1-\mu_K)\ff_{u_1u_2}\left(\int_{\mathcal U^2}\frac{\ff_{v_1v_2}}{\langle\nu,b\rangle}(A_{u_1u_2,v_1v_2}F)(\nu)\nu(d(v_1v_2))\right)K\nu(d(u_1u_2))\nonumber\\
&&+\int_{\mathcal U^2}\mu_K\ff_{u_1u_2}\left(\int_{\mathcal U^2}\frac{\ff_{v_1v_2}}{\langle\nu,b\rangle}(M_{u_1u_2,v_1v_2}F)(\nu)\nu(d(v_1v_2))\right)K\nu(d(u_1u_2)).
\eea
The first non-linear term describes the competition between individuals, just as in the haploid case.
The second and last linear terms describe the birth without and with mutation. There, 
$\ff_{u_1u_2}\frac{\ff_{v_1v_2}}{K\langle\nu,b\rangle}$ is the reproduction rate of an individual with genotype $u_1u_2$ with 
an individual with genotype $v_1v_2$. The interpretation of this term is that the individual $(u_1u_2)$ 
reproduces with rate $b_{u_1u_2}$ and then chooses a mate from the population  that has type $(v_1v_2)$  
with probability $\frac{\ff_{v_1v_2}}{K\langle\nu,b\rangle}$.

Measurability and boundedness assumptions on the parameters, as in the haploid model 
ensure the existence of the process as well as the convergence to a deterministic limit as $K\uparrow\infty$.

Champagnat, Méléard, and Metz \cite{CMM13} have shown that in the limit of large population and rare mutations, 
and under a co-dominance assumption of alleles,  the suitably time-rescaled process converges to the TSS model of adaptive dynamics, 
essentially as in the haploid case. Their assumptions exclude the degenerate, but biologically not excluded 
possibility that the fitter allele is dominant, i.e. that the phenotype of the heterozygotic and homozygotic genotype containing this allele are the same. 
In the papers
\cite{BovNeu16} and \cite{BovCoqNeu18}, precisely this case was studied and it was shown that this leads to interesting new features.
In particular, this setting favours the emergence of coexisting traits.
 
We consider a limited setting where the allele space is the finite set $\UU=\{a,A,B\}$. 
We will see that under rather natural conditions and without any fine-tuning, an initial monomorphic $(aa)$-
type population can evolve under Mendelian reproduction to a bi-morphic population containing both the
$(aa)$ and the $(BB)$ types through mutations $a\to A$ and $A\to B$. 

Let us first look at the first invasion step when an $A$ allele appears in a monomorphic 
$aa$-population. 

We write as before
set $n_{uv}(t)\equiv \langle \nu_t^K,\1_{uv}\rangle$.
The \emph{equilibrium size} of a monomorphic $uu$-population is the non-trivial 
 fixed point of  a
1-dim Lotka-Volterra 
equation and is given by
\be
\label{equi}
\bar  n(u)=\frac{\ff_{uu}-d_{uu}}{c_{uu,uu}}.
\ee
The probability that a mutant allele can fixate is governed by the growth rate of
a single individual $uv$ in an equilibrium $uu$ population. 
To compute this, we need to compute the transition rate in the occupation number representation of our process. We denote the number of individuals of type $uv$ by $X_{uv}$. 
We do this just for the mutation-free system with the two alleles $a,A$. 
Consider the rate of increase by one of the $aa$-population, $r(X_{aa}\to X_{aa}+1)$.
There are four ways for this to happen: an $aa$-individual that chooses to give birth selects an $aa$-individual to mate. This happens at a rate 
\be
X_{aa} b_{aa} b_{aa} X_{aa}/(b_{aa} X_{aa}+b_{aA}X_{aA}+b_{AA}X_{AA}),
\ee
 or,
the $aa$-individual choses an $aA$-mate and then with probability $1/2$ gives birth to a $aa$ child. 
This happens at rate  
\be\frac 12X_{aa} b_{aa} b_{aA} X_{aA}/(b_{aa} X_{aa}+b_{aA}X_{aA}+b_{AA}X_{AA}).
\ee
Alternatively, an  $aA$-individual chose to give birth and selects an $aa$ mate and with probability 
$1/2$ produces an $aa$ child, or it selects an $aA$-mate and produces an $aa$ child with probability $1/4$.
This happens with rates  
\be\frac 12X_{aA} b_{aA} b_{aa} X_{aa}/(b_{aa} X_{aa}+b_{aA}X_{aA}+b_{AA}X_{AA}),
\ee
respectively  
\be
\frac 14X_{aA} b_{aA} b_{aA} X_{aA}/(b_{aa} X_{aa}+b_{aA}X_{aA}+b_{AA}X_{AA}).
\ee
Consequently, 
\be
r(X_{aa}\to X_{aa}+1)
=\frac{\left(\ff_{aa}X_{aa}+\frac{1}{2}\ff_{aA}X_{aA}\right)^2}{b_{aa}X_{aa}+b_{aA}X_{aA}+b_{AA}X_{AA}}.
\ee
Similar reasoning shows that 
\be
r(X_{aA}\to X_{aA}+1)
=\frac{2\left(\ff_{aa}X_{aa}+\frac{1}{2}\ff_{aA}X_{aA}\right)\left(\ff_{AA} X_{AA}+\frac{1}{2}\ff_{aA}X_{aA}\right)}{b_{aa}X_{aa}+b_{aA}X_{aA}+b_{AA}X_{AA}},
\ee
and 
\be\Eq(birth-rates)
r(X_{AA}\to X_{AA}+1)=
\frac{\left(\ff_{AA}X_{AA}+\frac{1}{2}\ff_{aA}X_{aA}\right)^2}{b_{aa}X_{aa}+b_{aA}X_{aA}+b_{AA}X_{AA}}.
\ee
The corresponding death rates are simpler and just as in the haploid case:
\bea\Eq(death-rates)
r(X_{aa}\to X_{aa}-1)
&=&X_{aa}\left(d_{aa}+c_{aa,aa}n_{aa}+c_{aa,aA}n_{aA}+c_{aa,AA} n_{AA}\right),\nonumber\\
r(X_{aA}\to X_{aA}-1)
&=&X_{aA}\left(d_{aA}+c_{aA,aa}n_{aa}+c_{aA,aA}n_{aA}+c_{aA,AA} n_{AA}\right),\nonumber\\
r(X_{AA}\to X_{AA}-1)
&=&X_{AA}\left(d_{AA}+c_{AA,aa}n_{aa}+c_{AA,aA}n_{aA}+c_{AA,AA} n_{AA}\right).
\eea
We see that, if $X_{aA}=1$ and $X_{aa}=K \bar n(a)$, then 
\be
r(X_{aA}\to X_{aA}+1)= \frac {b_{aA}b_{aa} K \bar n(a)+\frac 12 b_{aA}^2}{Kb_{aa} \bar n(a)+b_{aA}}
=b_{aA} +O(1/K),
\ee
while 
\be
r(X_{aA}\to X_{aA}-1)=d_{aA}+c_{aA,aa} \bar n(a) +O(1/K).
\ee
This implies that we can define  invasion fitness as in the haploid model as
\be
f_{uv,uu}=\ff_{uv}-d_{uv}-c_{uv,uu}\bar n(u)
\ee

Note that before the number of individuals with type $aA$ has reached a level $O(K)$, the
probability that two $aA$ individuals mate is $o(K)/K$ and so no $AA$ individuals are 
born. However, once the $aA$  population has reached a threshold $\e K$, $AA$ 
individuals arrive, and their properties are important for the ultimate fate of the 
population. The interesting case for us is when the phenotypes of $aA$ and $AA$ 
coincide. We will see that in this case (in the absence of further mutations), 
in the deterministic limit the approach to the stable fix-point $(0,0,\bar n(A))$ will be 
only algebraic rather then exponential, and, as a consequence, the
$a$ allele survives in the stochastic model for a time of order $K^{1/2}$. 
 


To make the analysis more transparent, we make some simplifying assumptions.

 We assume that the  dominant $A$-allele defines the phenotype of an individual, i.e. 
$AA$ and $Aa$ individuals have the same phenotype.
In particular,  the fertility and the natural death rates are the 
same for $aA$- and $AA$-individuals. We also assume that the competition rates are the same 
for all three genotypes.
To sum up, we make the following Assumptions \textbf{(B)} on the rates: 
\begin{description}
\item[\textbf{(B1)}] $\ff_{aa}= \ff_{aA}= \ff_{AA}\equiv\ff$,\vspace{3pt}	 
\item[\textbf{(B2)}] $d_{AA}= d_{aA}\equiv d$\hspace{9pt} \emph{but} \hspace{9pt}$d_{aa}= d+\D$, $(b-d)>\D>0$,\vspace{3pt}
\item[\textbf{(B3)}]$c_{u_1u_2,v_1v_2}= c,$ \hspace{9pt}  $\forall u_1u_2,v_1v_2\in\{aa,aA,AA\}$.
\end{description}

One checks that, under Assumptions  \textbf{B},
\bea
\Eq(init)
f_{aA,aa}&=&f_{AA,aa}=\D,\\
f_{aa,aA}&=&f_{aa,AA}=-\D.
\eea
Therefore, the $aA$-individuals are as fit as the $AA$-individuals and both are fitter than the $aa$-individuals.
%
Under Assumption \textbf{B}, the birth and death rates simplify to 
\bea
r(X_{aa}\to X_{aa}+1)
&=&\frac{\ff\left(X_{aa}+\frac{1}{2}X_{aA}\right)^2}{X_{aa}+X_{aA}+X_{AA}}\\
r(X_{aA}\to X_{aA}+1)
&=&\frac{2\ff\left(X_{aa}+\frac{1}{2}X_{aA}\right)\left( X_{AA}+\frac{1}{2}X_{aA}\right)}{X_{aa}+X_{aA}+X_{AA}},
\\
r(X_{AA}\to X_{AA}+1)
&=&
\frac{\ff\left(X_{AA}+\frac{1}{2}X_{aA}\right)^2}{X_{aa}+X_{aA}+X_{AA}}.
\eea
The corresponding death rates are
\bea
r(X_{aa}\to X_{aa}-1)
&=&X_{aa}\left(d+\D+c\left(n_{aa}+n_{aA}+ n_{AA}\right)\right),\nonumber\\
r(X_{aA}\to X_{aA}-1)
&=&X_{aA}\left(d+c\left(n_{aa}+n_{aA}+n_{AA}\right)\right),\nonumber\\
r(X_{AA}\to X_{AA}-1)
&=&X_{AA}\left(d+c\left(n_{aa}+n_{aA}+ n_{AA}\right)\right).
\eea

In the sequel, the \emph{sum process}, $\Sigma(t)$,  defined by
\be
\label{sumprocess}
\Sigma(t)=X_{aa}(t)+X_{aA}(t)+X_{AA}(t),
\ee
plays an important role. 
A simple calculation shows that the sum process jumps   with rates
\bea
r(\S\to\S+1)&=&\ff \Sigma,\nonumber\\
r(\S\to\S-1)&=&d\Sigma + c\S^2/K+\D X_{aa}..
\eea

\section{Prolonged survival of the unfit}
\label{sectionmain}
We denote by $(\t_n)_{n\in\N}$, the sequence of times when a mutation occurs in the population. 
We assume $\t_0=0$
and make the Assumption \textbf{(C)} on the mutation rate $\mu_K$:
\be
\label{mu}
K^{-3/2+\a}\ll \mu_K\ll \frac 1{K\ln(K)}.
\ee
The lower bound on the mutation rate is made to allow that a new mutation can occur before the $a$-allele is extinct.

As in the haploid system, the system remains close to an equilibrium before a mutation happens. 
\begin{proposition}[Proposition D.2 in \cite{CMM13}]
\label{propD}
Let $\supp(\nu_0^K)=\{aa\}$ and let $\t_1$ denote the first mutation time. For any sufficiently small $\d>0$, if 
$\langle\nu_0^K,\1_{aa}\rangle$ belongs to the $\d/2$-neighbourhood of $\bar n(a)$ then the time of exit of 
$\langle\nu_t^K,\1_{aa}\rangle$ from the $\d$-neighbourhood of $\bar n(a)$ is bigger than $\eee^{VK}\land\t_1$, for $V>0$, 
with probability converging to 1.
Moreover, there exists a constant $M$, such that, for any sufficiently small $\d>0$, this remains true, if the death rate of an 
individual with genotype $aa$,
is perturbed by an additional random process that is uniformly bounded by $M\d$.
\end{proposition}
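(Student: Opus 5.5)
Before the first mutation time $\t_1$ the support of $\nu^K_t$ stays equal to $\{aa\}$. This is because births without mutation from $aa\times aa$ matings only produce $aa$ children. So on $[0,\t_1)$ the process $X_{aa}(t)=K\langle\nu^K_t,\1_{aa}\rangle$ is a one-dimensional logistic birth-and-death chain. Its birth rate is $\ff X$. Its death rate is $X(d+\D+cX/K)$, plus the random perturbation of size at most $M\d$ per individual allowed in the second part of the statement. Note that the mating normalisation $\ff_{v_1v_2}/\langle\nu,b\rangle$ cancels exactly in the monomorphic case. The plan is therefore to reduce the statement to the one-type version of the exit result quoted above as Lemma \thv(ldp.1) (Proposition A.2 of \cite{CM11}), or equivalently to rerun the birth-and-death argument used in the proof of Lemma \ref{lem:step-1} with $K^{-1/2+\a}$ replaced by $\d$, as in the remark following that lemma.

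First I identify the fluid limit. By the law of large numbers (Theorem \thv(lil.1)), the rescaled chain follows $\dot n=n(\ff-d-\D-cn)$. This ODE has the unique globally attracting positive fixed point $\bar n(a)=(\ff-d-\D)/c>0$, using (B2). Its linearisation has eigenvalue $-(\ff-d-\D)<0$.

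Next I handle the perturbation. If the per-capita death rate is changed by an amount bounded by $M\d$, I sandwich the true chain between two autonomous logistic chains whose death rates are $d+\D\pm M\d$. The sandwich is a monotone coupling on the jump chain, built in the same way as the coupling $Z_n\geq Y_n$ in the proof of Lemma \ref{lem:step-1}. The two comparison chains have equilibria $\bar n(a)\mp M\d/c$. For $M$ small relative to $c$, both equilibria and their $\d/4$-neighbourhoods lie inside the $\d$-neighbourhood of $\bar n(a)$.

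It then suffices to show that each comparison chain, started in the $\d/2$-neighbourhood, leaves the $\d$-neighbourhood only after a time of order $\eee^{VK}$. Inside the annulus between the inner and outer neighbourhoods, the embedded jump chain has up-probability bounded away from $1/2$ in the direction of the equilibrium, by a margin $\ge c'\d$. The standard gambler's-ruin (potential-theoretic) estimate then shows that, starting from the inner boundary, the probability of reaching the outer boundary before returning to the equilibrium is at most $\eee^{-c''\d^2 K}$. By the LLN, the chain enters the inner neighbourhood in time $O(1)$ with probability close to one. The number of returns before exit is geometric with this small parameter, and each excursion takes time of order at least $1/K$. Taking $V<c''\d^2$ gives an exit time larger than $\eee^{VK}$ with probability tending to $1$. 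Stopping at $\t_1$ gives the ``$\land\t_1$'' in the statement.

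The main obstacle is making the coupling with the perturbed death rates rigorous, since the perturbation is an adapted random process rather than a constant. I would handle it exactly as in Lemma \ref{lem:step-1}, by coupling only the jump chains step by step using bounds on the conditional up-probabilities, which remain valid under any adapted perturbation bounded by $M\d$.
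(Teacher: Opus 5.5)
Your proof is correct, but it takes a different route from the one behind the statement in the notes. The notes give no argument of their own. The proposition is imported from \cite{CMM13}, just as its haploid counterpart, Lemma~\ref{ldp.1}, is imported from \cite{CM11}. In that framework the exit time is controlled by the Freidlin--Wentzell theory of exit from an attracting domain, applied to the large deviation principle of the rescaled jump process.

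You share the reduction step. Before $\tau_1$ the $aa$-population is a haploid logistic chain, because the mating normalisation $b_{v_1v_2}/\langle\nu,b\rangle$ cancels. Strictly, non-mutant births occur at rate $(1-\mu_K)bX$, but if each birth of a rate-$bX$ chain is marked as a mutation with probability $\mu_K$, the two processes coincide up to $\tau_1$.

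From there you replace the large deviation machinery with the gambler's-ruin estimate of Lemma~\ref{lem:step-1} in its $\alpha=0$ form, together with a step-by-step monotone coupling of the embedded jump chains. What your route buys:
\begin{itemize}
\item a self-contained, elementary proof with an explicit rate $V$ of order $\delta^2$;
\item coverage of arbitrary adapted random perturbations with no regularity assumption, since you only use bounds on the conditional up-probabilities, plus an upper bound $CK$ on the total jump rate to convert $e^{c''\delta^2K}$ steps into a time $\gg e^{VK}$.
\end{itemize}
What it gives up is generality. It relies on the one-dimensional nearest-neighbour structure, whereas the Freidlin--Wentzell route applies verbatim to polymorphic equilibria in several dimensions, which is why it is the tool used in \cite{CM11}.

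One detail to make explicit: take $M$ strictly below $c/4$ (e.g.\ $M\le c/8$). Then the initial point, which is at distance at most $\delta/2+M\delta/c$ from each comparison equilibrium, stays a fixed fraction of $\delta$ away from the relevant exit boundary, and the gambler's-ruin bound applies from the start.
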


We start the population process when $n_{aa}$ is in a 
$\d/2$-neighbourhood of its equilibrium, $\bar n(a)$, and  there is one individual with genotype $aA$. 
The first theorem says that there is a positive probability that the mutant population fixates in the resident $aa$-population, and 
that under Assumption \textbf{B}, the $a$ allele survives much longer than usual. 
 Define
\bea
\label{stopmut}
\t_{\d}^{mut}&\equiv&\inf\{t>0: n_{AA}(t)+n_{aA}(t)\geq \delta\}, \\
\t_{0}^{mut}&\equiv&\inf\{t>0: n_{AA}(t)+n_{aA}(t)=0\}\\
\t^{hit}_\eta&\equiv&\inf\{t\geq\t_\d^{mut}: n_{aA}(t)\leq \eta\},\\
\t_{sur}&\equiv&\inf\{t\geq\t_\d^{mut}: n_{aA}+x_{aa}(t)(t)=0\}.
\eea
%

\begin{theorem}[\cite{BovNeu16}]
\TH(main1)
Consider the model verifying Assumption \textbf{B}. 
For any $\a>0$ and $1\gg\d>\e>0$, 
 conditional on survival of the mutant, i.e., on  the event $\{\tau_\d^{mut}<\tau_0^{mut}\}$, with probability converging to one as $K\uparrow \infty$, the following statements hold:
\begin{itemize}
\item[(i)] \quad$\t_{\e}^{hit}=\OO(\ln K)$, and
\item[(ii)] \quad$\t_{sur}\geq\OO(K^{1/2-\a})$. 
\end{itemize}
\end{theorem}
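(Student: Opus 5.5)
The proof splits into three phases: an invasion phase for the $aA$ mutants, a deterministic phase in which the $aA$ density drops below $\e$, and a long phase in which the $a$-allele decays only algebraically. Throughout, the sum process $\Sigma$ stays near $K\bar n(A)=K(b-d)/c$ once the $aa$ types are rare. This is because $r(\Sigma\to\Sigma\pm1)$ involves the genotype composition only through the term $\D X_{aa}$.

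\emph{Phase 1 (invasion).} While $n_{aA}+n_{AA}<\d$, Proposition \ref{propD} keeps $n_{aa}$ in a $\d$-neighbourhood of $\bar n(a)$, with its death rate perturbed by at most $M\d$. In this phase $aA\times aA$ matings have probability $O(\d)$, so essentially no $AA$ are born. The rates above then show that $X_{aA}$ can be sandwiched between two binary branching processes. Their birth rate is $b+O(\d)$ and their death rate is $d+c\bar n(a)+O(\d)$, so the growth rate is $f_{aA,aa}\pm C\d=\D\pm C\d>0$. Lemma \ref{prop:BP} and \eqv(yule.5) then give the following. On the event $\{\t^{mut}_\d<\t^{mut}_0\}$, the level $\d K$ is reached in time $O(\ln K)$, with survival probability close to $\D/b$.

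\emph{Phase 2 (deterministic part, giving (i)).} From $\t^{mut}_\d$ on, Theorem \ref{lil.1} (the LLN for the three-type system) applies on any finite time horizon. It is convenient to change coordinates to the total mass $\Sigma/K$ and the $a$-allele frequency
\[
p=\frac{2n_{aa}+n_{aA}}{2(n_{aa}+n_{aA}+n_{AA})}.
\]
Because births are random mating, newborns are in Hardy--Weinberg proportions. One then checks from the limiting ODE that $n_{aa}\approx p^2\Sigma$ and $n_{aA}\approx 2p(1-p)\Sigma$ up to a fast-relaxing error, and that $\dot p\approx-\D p^2(1-p)$ while $\Sigma/K\to\bar n(A)$. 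Hence $p$ decays like $1/(\D t)$. The deterministic flow reaches $n_{aA}\le\e$ in a time $C(\e)<\infty$ independent of $K$. Adding the $O(\ln K)$ from Phase 1 gives (i).

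\emph{Phase 3 (survival, giving (ii)); this is the main obstacle.} The time horizon $K^{1/2-\a}$ diverges, so Gronwall-type LLN bounds are useless. Moreover, the fixed point $(0,0,\bar n(A))$ is not exponentially stable, which rules out the contraction arguments used before. I would instead work with the Lyapunov-like observable $Y=1/p$. Its drift is $\approx\D$, coming from selection on $aa$ homozygotes. Its martingale part has quadratic variation rate of order $p^{-4}\cdot p/K=1/(Kp^3)$, because the $a$-allele count $Kp\Sigma$ changes by $\pm1$ at rate $O(Kp)$. Up to the stopping time $\{p\le K^{-1/2+\a}\}$, and using $p\gtrsim 1/t$, the bracket up to time $T$ is at most $\int_0^T t^3/K\,dt\sim T^4/K$. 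Doob's inequality then gives fluctuations of $Y$ of order $T^2/K^{1/2}$. For $T\le K^{1/2-\a}$ this is $\ll T$, which is the size of the drift. Consequently $Y_t\le C(1+t)$, that is, $p_t\ge c/t$, for all $t\le K^{1/2-\a}$.

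In parallel, I would need a moderate-deviation control of $\Sigma$ near $K\bar n(A)$ (as in Lemma \ref{lem:step-1}) and of the deviation of genotype frequencies from Hardy--Weinberg. Both relax at rate $O(1)$, so they can be handled on short time blocks by the coupling and renewal arguments of Lemma \ref{lem:step-1}. Since $p\ge cK^{-1/2+\a}$ up to time $K^{1/2-\a}$, the number of $a$-alleles $2X_{aa}+X_{aA}\approx 2Kp\,\Sigma/K\ge K^{1/2+\a}$ is still large. In particular it is nonzero, so $\t_{sur}\ge K^{1/2-\a}$. The delicate point is making the $1/p$ martingale estimate rigorous jointly with the Hardy--Weinberg and $\Sigma$ control, which must hold uniformly over $K^{1/2-\a}$ blocks. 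I would handle this by a union bound, using exponentially small failure probabilities per block.
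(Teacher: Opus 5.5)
Your Phases 1 and 2 match the paper's outline: a supercritical branching comparison with growth rate $\Delta$ up to level $\delta$, then the LLN and the deterministic flow of Theorem \ref{dettheo}, which brings $n_{aA}$ below $\epsilon$ in finite time.

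For Phase 3 you take a genuinely different route. The paper only sketches this step and refers to \cite{BovNeu16}. There the full three-dimensional process is kept close to the deterministic solution along the attracting invariant curve by the stochastic Euler scheme of \cite{B14}: short time steps, couplings, and moderate deviations at each step. This makes $X_{aA}$ follow the $1/t$ law of Theorem \ref{dettheo}(iv). Extinction then requires $X_{aA}$ to be of order $K^{1/2}$, so that upward $O(K^{1/2})$ fluctuations of $\Sigma$ can kill it.

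You instead isolate a one-dimensional slow variable. From the rates, the drift of $p=(X_{aa}+X_{aA}/2)/\Sigma$ is exactly $-\Delta X_{aa}(1-p)/(\Sigma-1)$. The per-capita birth rate $b$ and the death rate $d+c\Sigma/K$ are common to all genotypes apart from the extra $\Delta$ for $aa$. So the $\Sigma$-fluctuations that drive the paper's heuristic cancel in $p$, and the threshold $K^{1/2}$ appears directly as genetic drift through the bracket of $1/p$.

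This is much more elementary than the Euler scheme. It also needs only crude bounds $cK\le\Sigma\le CK$, not moderate deviations for $\Sigma$. The price is that it relies on the special structure of Assumption \textbf{B} and gives only a one-sided bound. The Euler scheme, by contrast, is a general tool for slow motion near a non-hyperbolic invariant manifold and tracks the actual trajectory.

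Two points need repair, though neither breaks the argument.

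First, the adult population is not in Hardy--Weinberg proportions up to a fast-relaxing error. Because $aa$ individuals die at the extra rate $\Delta$, $X_{aa}/\Sigma$ relaxes (at rate about $b+\Delta$) to $\frac{b}{b+\Delta}p^2$, not to $p^2$. Hence the drift of $1/p$ is about $b\Delta/(b+\Delta)$ rather than $\Delta$, which reproduces the constant in Theorem \ref{dettheo}(iv). What you actually need is an upper bound on the time average of $X_{aa}/(\Sigma p^2)$. This follows by dominating $X_{aa}$ with an immigration--death process with immigration rate $b\Sigma p^2$; in the relevant range $X_{aa}$ is at least a positive power of $K$, so moderate deviation bounds apply.

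Second, ``using $p\gtrsim 1/t$'' in the bracket estimate is circular unless it is built into the stopping time. It is simpler to stop at $p\le K^{-1/2+\alpha/2}$. Before that time the bracket rate of $1/p$ is at most of order $K^{1/2-3\alpha/2}$. By Doob, the martingale part on $[0,K^{1/2-\alpha}]$ is then $O(K^{1/2-5\alpha/4})=o(K^{1/2-\alpha})$. This gives $p\ge cK^{-1/2+\alpha}\gg K^{-1/2+\alpha/2}$ up to time $K^{1/2-\alpha}$, so the stopping time is not reached and the argument closes.
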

\begin{remark}
As long as there are $aA$-individuals in the population, the smaller $aa$-population does not die out, since the $aA$-
population always gives birth to $aa$-
individuals. For smaller values of the power $\frac12-\a$ in (ii), the natural fluctuations 
of the big $AA$-population are too high: the death rate of $n_{aa}(t)$ can be too large due to the competition felt by 
$n_{AA}(t)$ and could induce the death of the $aa$-population and hence also of the $aA$-population.
\end{remark}
%

\section{The deterministic system}
\label{pardet}

The source of the special behaviour of the diploid model can be traced to the limiting Lotka-Volterra system.
As in the haploid model, we get convergence to these equations in the large population limit.

\begin{proposition}[Proposition 3.2 in \cite{CMM13}]
\label{LPA}
Let $T>0$ and $C\subset\R_+^3$ be compact.
Assume that the initial condition $\frac{1}{K}(X_{aa}^0,X_{aA}^0,X_{AA}^0)$ converges almost surely to a deterministic vector $(x_0,y_0,z_0)\in C$ when $K$ goes to infinity. 
Let $(x(t),y(t),z(t))=\phi(t;(x_0,y_0,z_0))$ denote the solution to
\be
\label{det}
\dot\phi(t;(x_0,y_0,z_0))=\left(
\begin{array}{c}
\tilde b_{aa}(x(t),y(t),z(t))-\tilde d_{aa}(x(t),y(t),z(t))\\
\tilde b_{aA}(x(t),y(t),z(t))-\tilde d_{aA}(x(t),y(t),z(t))\\
\tilde b_{AA}(x(t),y(t),z(t))-\tilde d_{AA}(x(t),y(t),z(t))
\end{array}
\right)=:X(x(t),y(t),z(t)),
\ee
where 
\bea
\tilde b_{aa}(x(t),y(t),z(t))&=&\frac{(\ff_{aa}x(t)+\sfrac{1}{2}\ff_{aA}y(t))^2}{(\ff_{aa}x(t)+\ff_{aA}y(t)+\ff_{AA}z(t))},\\
\tilde d_{aa}(x(t),y(t),z(t))&=&x(t)(d_{aa}+c_{aa,aa}x(t)+c_{aa,aA}y(t)+c_{aa,AA}z(t)),
\eea
and similar expression for the $aA$- and $AA$-type, that can be read off the
rates given in \eqv(birth-rates) and \eqv(death-rates).  Then, for all $T>0$,
\be
\label{inva}
\lim_{K\rightarrow\infty}\sup_{t\in[0,T]}|z_{uv}(t)-\phi_{uv}(t;(x_0,y_0,z_0))|=0, \quad a.s.,
\ee
for all $uv\in\{aa, aA, AA\}$.
\end{proposition}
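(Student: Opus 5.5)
The plan is to treat this as a direct instance of the Kurtz law of large numbers for density-dependent Markov jump processes, following the scheme used for Theorem \thv(lil.1). The process $Z^K(t)=\frac1K(X_{aa},X_{aA},X_{AA})(t)$ is a continuous-time Markov chain on $(\frac1K\N_0)^3$ with jumps $\pm \eee_i/K$. From \eqv(birth-rates) and \eqv(death-rates) the jump rates have the form $K\,\lambda_{\pm,i}(Z^K)$, with $\lambda_{\pm,i}$ independent of $K$. Indeed, the birth rates are homogeneous of degree one: the ratio $(\ff x+\frac12\ff y)^2/(\ff x+\ff y+\ff z)$ scales linearly. Hence the mean drift is exactly $X(x,y,z)$ from \eqv(det), and the mutation-free generator has the form \eqv(generator-K.2). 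For each coordinate $h=\1_{uv}$ I will write the Dynkin martingale $M^{K,uv}_t=z_{uv}(t)-z_{uv}(0)-\int_0^tX_{uv}(Z^K_s)ds$. As in \eqv(martingale.31.1), its bracket is $\frac1K\int_0^t\sum_\pm\lambda_{\pm,uv}(Z^K_s)ds$.

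The first step is moment control. Exactly as in Lemma \thv(total.1) and Lemma \thv(total.7), the total mass $\Sigma/K$ is dominated by a process with birth rate $\ff\Sigma$, so $\E\sup_{t\le T}|Z^K_t|^2\le C_T$ uniformly in $K$. The birth rates are bounded by $\bar b$ times the total mass, since $(\ff x+\frac12\ff y)^2\le \bar b(x+y+z)\cdot(\ff x+\ff y+\ff z)$. Combining this with Doob's inequality gives $\E\sup_{t\le T}|M^{K,uv}_t|^2\le C_T/K$, and by Borel--Cantelli along $K$, $\sup_{t\le T}|M^K_t|\to0$ almost surely. To upgrade to the whole sequence I would use an exponential martingale bound, for instance the Kurtz estimate from Chapter 11 of \cite{EthKur1986}, which gives summable probabilities $\exp(-cK\delta^2)$.

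The second step is Gronwall, and it is where I expect the main obstacle to lie: the vector field $X$ is not globally Lipschitz. The birth terms are rational functions, and the denominator $\ff x+\ff y+\ff z$ can vanish near the origin, while the competition terms are quadratic. I plan to restrict to a compact set $C'$ containing the trajectory $\phi(t;(x_0,y_0,z_0))$, $t\le T$, together with a neighbourhood of it. On $C'$ the quadratic terms are Lipschitz. For the birth terms, I note that $(\ff x+\frac12\ff y)^2/(\ff x+\ff y+\ff z)$ is homogeneous of degree one and continuous on $\R_+^3$, with value $0$ at $0$. Away from the origin it is smooth, hence Lipschitz on $C'\setminus B_r(0)$. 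Near the origin I will rely on the bound $|\tilde b(w)-\tilde b(w')|\le L|w-w'|$; this follows because the gradient of a degree-one homogeneous function is degree zero and bounded on the positive orthant, since the denominator dominates each numerator factor. Thus $X$ is Lipschitz on $C'\cap\R^3_+$ with constant $L$.

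Finally, let $\theta^K$ be the exit time of $Z^K$ from $C'$. Then $|Z^K_{t\wedge\theta^K}-\phi(t\wedge\theta^K)|\le|Z^K_0-\phi_0|+\sup|M^K|+L\int_0^{t}|Z^K_{s\wedge\theta^K}-\phi(s\wedge\theta^K)|ds$, and Gronwall bounds the left side by $(|Z^K_0-\phi_0|+\sup_{t\le T}|M^K_t|)\eee^{LT}$, which tends to $0$ almost surely. This forces $\theta^K>T$ eventually, removing the stopping and giving \eqv(inva).
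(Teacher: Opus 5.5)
Your proposal is correct and follows essentially the route the paper indicates for this cited result (``as in the haploid model'': the density-dependent law of large numbers of Theorem \ref{lil.1} in finite dimension, with an exponential or large-deviation estimate upgrading it to almost sure convergence), and your observation that the degree-one homogeneous birth terms have bounded gradient on $\R_+^3$, hence are globally Lipschitz there, settles the main model-specific point. One small correction: the bound $C_T/K$ from Doob's inequality is not summable in $K$, so the Borel--Cantelli step must rest entirely on an exponential (Freedman/Bernstein-type) inequality; apply it to the martingale stopped at $\theta^K$, where all rates are at most $K$ times a constant depending only on $C'$.
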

The first step to understand the behaviour of the process is to analyse the deterministic system \eqref{det}. 
The vector field \eqref{det} under Assumptions B  is given by
\be
\label{XD}
X(x,y,z)=X_\D(x,y,z)=
\begin{pmatrix}
\ff\frac{(x+\frac{1}{2}y)^2}{x+y+z}-(d+\D+c(x+y+z))x\\
2\ff\frac{(x+\frac{1}{2}y)(z+\frac{1}{2}y)}{x+y+z}-(d+c(x+y+z))y\\
\ff\frac{(z+\frac{1}{2}y)^2}{x+y+z}-(d+c(x+y+z))z
\end{pmatrix},
\ee
which has some particular properties:
\begin{theorem}
\label{dettheo}
Let $\D>0$. Then: 
\begin{itemize}
\item[(i)] the vector field \eqref{XD} has the unstable fixed point $\mathfrak z_{aa}\equiv(\bar n(a),0,0)$ and the stable fixed point $\mathfrak z_{AA}\equiv(0,0,\bar n(A))$,
\item[(ii)] the Jacobian matrix at the unstable fixed point $\mathfrak z_{aa}$, $DX_\D(\mathfrak z_{aa})$, has two negative and one positive eigenvalues,
\item[(iii)] the Jacobian matrix at the stable fixed point $\mathfrak z_{AA}$, $DX_\D(\mathfrak z_{AA})$, has two negative and one zero eigenvalues,
\item[(iv)]  Any solution with initial conditions $|x(0)-\bar x(a)|\leq \e$ and $\d<y(o)\leq \e$, for some $0<\d<\e\ll 1$ will satisfy,  as $t\uparrow \infty$, 
\be
y(t)= \frac 1t\frac {{2(b-d)}{(b+\D)}}{ {bc}{\D}}(1+O(\ln t/t)).
\ee
\end{itemize}
\end{theorem}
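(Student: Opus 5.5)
The plan is to first confirm the fixed-point structure and the linearisations by direct computation, and then to obtain (iv) from a one-dimensional slow reduction in the variable counting $a$-alleles. Two auxiliary quantities drive everything. The first is the total mass $\S=x+y+z$. The second is the $a$-allele mass $u=x+\tfrac12 y$, so that $z+\tfrac12 y=\S-u$.

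Summing the components of \eqref{XD} gives
\be
\dot\S=(\ff-d-c\S)\S-\D x .
\ee
Adding the first component to half the second, the birth terms combine to $\ff u(u+\S-u)/\S=\ff u$, and one gets
\be
\dot u=(\ff-d-c\S)u-\D x .
\ee

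For (i), set each component of \eqref{XD} to zero. With $y=0$, the second equation forces $xz=0$, which leaves exactly the two monomorphic points $(\bar n(a),0,0)$ and $(0,0,\bar n(A))$. Here $\bar n(a)=(\ff-d-\D)/c$ and $\bar n(A)=(\ff-d)/c$, and $\ff-d>\D$ guarantees $\bar n(a)>0$. I would not try to classify interior fixed points in general. Instead, the $(\S,u)$ equations give $\dot u=-\D x(1-u/\S)<0$ at any fixed point with $x>0$ and $u<\S$, so no fixed point of this kind is interior.

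For (ii) and (iii), compute $DX_\D$ at each fixed point. At both points the quadratic birth terms have vanishing derivatives in the directions of the absent genotypes, which makes the matrix block-triangular.
\begin{itemize}
\item At $\mathfrak z_{aa}$ the eigenvalues are $-(\ff-d-\D)$ (the $x$-direction), $\ff-d-c\bar n(a)=\D>0$ (the $y$-direction), and $-(d+c\bar n(a))<0$ (the $z$-direction).
\item At $\mathfrak z_{AA}$ the eigenvalues are $-(d+\D+c\bar n(A))=-(\ff+\D)$ (the $x$-direction), $-(\ff-d)$ (the $z$-direction), and, in the $y$-direction, $\ff\cdot\tfrac12\cdot 2-(d+c\bar n(A))=0$.
\end{itemize}
The zero eigenvalue is the source of the algebraic decay in (iv).

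For (iv), I would carry out a center-manifold reduction at $\mathfrak z_{AA}$, with $u$ as the slow variable. The variables $x$ and $\S-\bar n(A)$ are fast, relaxing at rates $\ff+\D$ and $\ff-d$. On the slow manifold they satisfy, to leading order,
\be
x=\frac{\ff u^2}{\bar n(A)(\ff+\D)}\bigl(1+O(u)\bigr),\qquad \ff-d-c\S=\frac{\D x}{\S}\bigl(1+O(u)\bigr).
\ee
Substituting into the $\dot u$ equation gives
\be
\dot u=-\D x\bigl(1-u/\S\bigr)=-\frac{\D\ff}{\bar n(A)(\ff+\D)}u^2\bigl(1+O(u)\bigr).
\ee
Integrating, $u(t)=\frac{\bar n(A)(\ff+\D)}{\D\ff\, t}\bigl(1+O(\ln t/t)\bigr)$. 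The logarithmic correction comes from integrating the $O(u^3)\sim 1/t^3$ term against the leading $u^{-2}$ in $d(1/u)/dt$. Finally, $y=2(u-x)=2u(1+O(u))$ gives the claimed constant $2(\ff-d)(\ff+\D)/(c\ff\D)$.

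The main obstacle is making this rigorous for the given initial data. That data starts near the unstable point $\mathfrak z_{aa}$, far from the center manifold of $\mathfrak z_{AA}$, so two things must be shown. First, the trajectory actually reaches a small neighbourhood of $\mathfrak z_{AA}$. Second, it then tracks the slow manifold. For the global part I would use the $(\S,u)$ system. The quantity $u/\S$ satisfies
\be
\frac{d}{dt}\frac{u}{\S}=-\D\frac{x}{\S}\Bigl(1-\frac u\S\Bigr)\le 0 ,
\ee
so the allele frequency is monotone, while $\S$ stays trapped near $[\bar n(a),\bar n(A)]$. These two facts yield convergence to $\mathfrak z_{AA}$. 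For the local part, I would first try to avoid invoking a center-manifold theorem. Instead I would define $w=x-\ff u^2/(\S(\ff+\D))$ and $s=\ff-d-c\S-\D x/\S$, show by differential inequalities that both are $O(u^3)$ after a transient, and then close the Riccati-type comparison for $1/u$ with explicit upper and lower bounds.
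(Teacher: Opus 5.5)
Your proposal is correct, and it reaches the result by a genuinely different route from the paper's.

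\textbf{What the paper does.} The paper works in the coordinates $(x,y,\sigma)$ and Taylor-expands the vector field to second order at $\mathfrak z_{AA}$. It discards terms by size to get the effective system $x'\approx-(b+\Delta)x+\frac{bc}{4(b-d)}y^2$, $y'\approx 2bx-\frac{bc}{2(b-d)}y^2$. It then uses the only asymptotically valid bound $x\geq\frac{bc}{4(b-d)(b+\Delta)}y^2$ to get a Riccati inequality for $y$, and argues afterwards that this inequality is an asymptotic equality. Convergence from near $\mathfrak z_{aa}$ is attributed to the perturbation of the fixed-point line $\Gamma_0$ into an attracting invariant curve. The absence of other fixed points is asserted without proof.

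\textbf{What your route buys.} Taking the $a$-allele mass $u=x+\frac12 y$ as slow variable makes the birth terms cancel exactly. With your $s$ one has $\dot\Sigma=s\Sigma$ and $\dot u=su-\Delta x(1-u/\Sigma)$ identically. The $1/t$ law then comes from the clean equation $\frac{d}{dt}(1/u)=\kappa+O(u)$, with $\kappa=\Delta b/(\bar n(A)(b+\Delta))$. In the paper, by contrast, the drift of $y$ is the difference of two $O(y^2)$ terms that cancel up to a fraction $\Delta/(b+\Delta)$, so a relative error in $x$ is amplified by a factor $b/\Delta$. Your identity $\frac{d}{dt}(u/\Sigma)=-\Delta\frac{x}{\Sigma}(1-\frac{u}{\Sigma})$ also gives a rigorous global convergence argument and a classification of fixed points, neither of which the paper supplies. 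What the paper's formulation offers in exchange is the geometric picture of the invariant curve $\Gamma_\Delta$ near $\Gamma_0$, which is what gets reused in the stochastic analysis.

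\textbf{Two small things to tighten.}
\begin{itemize}
\item Your stated reason for triangularity is inaccurate. The $aA$-birth term $2b(x+\frac12y)(z+\frac12y)/\Sigma$ does have nonzero derivatives in absent directions: its $z$-derivative is $2b$ at $\mathfrak z_{aa}$, and its $x$-derivative is $2b$ at $\mathfrak z_{AA}$. The Jacobians are nevertheless triangular, and your eigenvalues are correct.
\item To conclude $u/\Sigma\to 0$ from monotonicity, record the missing steps. The identity $\dot x=b(u/\Sigma)^2\Sigma-(d+\Delta+c\Sigma)x$, together with the a priori upper and lower bounds on $\Sigma$, forces $\liminf x>0$ whenever $u/\Sigma$ stays bounded away from $0$. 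Also $u/\Sigma\leq u(0)/\Sigma(0)<1$ because $y(0)>0$. With these, $\dot{(u/\Sigma)}$ is bounded away from zero, which contradicts $u/\Sigma\geq 0$.
\end{itemize}
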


\begin{remark}
The main peculiarity is the slow $1/t$ decay of the  $aA$ population (and hence of the $a$-allele) as the population approaches the stable fixpoint. This is caused by the zero-eigenvalue of the Jacobian at this fixpoint
and is a result of the dominance assumption of the advantageous $A$ allele. Naturally, if $y(t)\sim 1/t$, then 
$x(t)\sim 1/t^2$, which just reflects that as long as there are $aA$ individuals, they will give birth to $aa$ children. For the stochastic system, this suggests that the $aA$ population will have size $K^{1/2+\a}$ by time $K^{1/2-\a}$. The threat of extinction of the $a$-allele comes from upward fluctuations of the total population.
These are typically of order $K^{1/2}$, and excert an extra death rate of order $K^{-1/2}$ on the $aA$ population. Since the duration of such a fluctuation is $O(K^{-1/2})$, this is not sufficient to drive 
a population of size $K^{1/2+\a}$ to extinction (which would require a time $K^{-1/2 +\a}$). 
Thus the $a$ allele will survive (in the absence of any further mutations) almost a time $K^{1/2}$.
\end{remark}

Before proving the theorem, we consider the special case $\D=0$. 

\begin{lemma} \TH(delta.0)
Consider the vector field $X$ given in \eqv(XD) and let $\D=0$. 
Let $\s^*=\frac {b-d}c$. 
Then $X$ vanished on the lines 
\bea \Eq(solutions.1)
x
&=&\frac{\s^*-y}2+\sqrt {\frac{\s*}2\left(\frac {\s^*}2  -y\right)}\\
z
&=&\frac{\s^*-y}2-\sqrt {\frac{\s*}2\left(\frac {\s^*}2  -y\right)},
\eea
and 
\bea\Eq(solutions.2)
x
&=&\frac{\s^*-y}2-\sqrt {\frac{\s*}2\left(\frac {\s^*}2  -y\right)},\\
z
&=&\frac{\s^*-y}2+\sqrt {\frac{\s*}2\left(\frac {\s^*}2  -y\right)},
\eea
for $y\in [0,\s^*/2]$. 
In particular, this line of fixpoints, $\G_0$, connects the monomorphic fixpoints 
$(\s^*,0,0)$ and $(0,0,\s^*)$. At $y=\s^*/2$,  $x=z=\s^*/4$. 
Moreover, $\G_0$ is attractive, and the Jacobian matrix on each point of the line has the eigenvalues
$-(b-d), 0$, and $-b$. 
\end{lemma}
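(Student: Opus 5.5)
We change to allele-frequency-adapted coordinates. With $\D=0$, write $\s=x+y+z$ for the total mass and $p=(x+\frac12 y)/\s$ for the frequency of the $a$ allele, so that $(z+\frac12 y)/\s=1-p$. Summing the three components of $X$ in \eqv(XD) gives
\[
\dot\s=b\s-d\s-c\s^2,
\]
because the three birth terms add up to $b\s\bigl(p+(1-p)\bigr)^2=b\s$. Hence any zero of $X$ with $\s>0$ has $\s=\s^*=(b-d)/c$. On $\s=\s^*$ the death rate per individual is $d+c\s^*=b$ for every genotype. The fixpoint equations therefore reduce to the Hardy--Weinberg relations
\[
x=\s^*p^2,\qquad y=2\s^*p(1-p),\qquad z=\s^*(1-p)^2 .
\]
We then eliminate $p$. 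From $x+z=\s^*-y$ and $xz=(y/2)^2$, the quantities $x$ and $z$ are the two roots of $t^2-(\s^*-y)t+y^2/4$. The discriminant is $(\s^*-y)^2-y^2=\s^*(\s^*-2y)$, and this gives exactly \eqv(solutions.1)--\eqv(solutions.2) for $y\in[0,\s^*/2]$. The endpoints are $y=0$, which gives $(\s^*,0,0)$ and $(0,0,\s^*)$, and $y=\s^*/2$, which gives $x=z=\s^*/4$. So $\G_0$ is the curve $p\mapsto\s^*(p^2,2p(1-p),(1-p)^2)$, $p\in[0,1]$, joining the two monomorphic fixpoints.

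For the spectrum we would use the coordinates $(\s,p,q)$, with $q=y/\s-2p(1-p)$ measuring the deviation from Hardy--Weinberg proportions. We first compute $\dot p$. The net allele-$a$ production is $b\s p$ from births minus $(d+c\s)(x+\frac12 y)$ from deaths, so $\frac{d}{dt}(\s p)=(b-d-c\s)\s p$. Combined with the equation for $\dot\s$ this gives $\dot p=0$ identically. This is the source of the zero eigenvalue, and the eigenvector is the tangent to $\G_0$.

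Next, the $\s$-equation is autonomous. Its linearisation at $\s^*$ is $b-d-2c\s^*=-(b-d)$, which gives the eigenvalue $-(b-d)$ in the direction of the total mass. For the third direction we compute $\dot y$ at fixed $\s=\s^*$ and fixed $p$:
\[
\dot y=2b\s^*p(1-p)-by=-b\bigl(y-2\s^*p(1-p)\bigr).
\]
So $q$ relaxes at rate $b$ and gives the eigenvalue $-b$. Because $\dot p=0$, the linearisation is block triangular in $(\s,p,q)$ and the eigenvalues are $-(b-d),0,-b$.

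Attractivity follows from the same decomposition. Every trajectory has $p$ constant, $\s(t)\to\s^*$ by the logistic equation, and then $q\to0$ exponentially. So the solution converges to the point of $\G_0$ with the same allele frequency $p(0)$. The main obstacle is verifying the identity $\dot p\equiv0$ cleanly. We would do this by checking that the allele counts $x+\frac12y$ and $z+\frac12y$ each grow at per-capita rate $b-d-c\s$. After that, the only remaining work is the bookkeeping of the coordinate change, which is nonsingular away from $\s=0$.
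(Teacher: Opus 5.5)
Your proof is correct and takes a different route from the paper. The paper works in the original coordinates. It sums the equations to get $\sigma=\sigma^*$, solves the two identical quadratics $x^2+xy+\tfrac14y^2-\sigma^*x=0$ (and likewise for $z$), and rules out choosing the same branch for $x$ and $z$. It then defers the spectrum and attractivity to ``a rather tedious computation'', listing (from \cite{CMM13}) eigenvectors $e_i(v)$ and dual vectors $\beta_i(v)$ in the parametrisation $v=z-x$.

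Your coordinates $(\sigma,p,q)$ replace that computation by structure. Vieta's formulas give the opposite branches for free. The zero eigenvalue comes from the conservation law $\dot p\equiv 0$: the paper's $\beta_2(v)$ is exactly the gradient of the conserved quantity $\sigma^*(z-x)/\sigma=\sigma^*(1-2p)$ along $\Gamma_0$. Likewise $e_1\propto\partial_\sigma$, $e_2\propto\partial_p$ and $e_3\propto\partial_q$.

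One step should be completed. You computed $\dot q$ only on the slice $\sigma=\sigma^*$. That suffices for the eigenvalue, by the triangular structure, but your attractivity sentence (``and then $q\to0$'') needs $\dot q$ off that slice. For general $\sigma>0$ the computation gives $\dot q=2bp(1-p)-b\,y/\sigma=-bq$ exactly, because the term $-(y/\sigma)(\dot\sigma/\sigma)$ turns the death rate $d+c\sigma$ into $b$. So the $\Delta=0$ system is fully decoupled:
\[
\dot\sigma=(b-d)\sigma-c\sigma^2,\qquad \dot p=0,\qquad \dot q=-bq .
\]
Its linearisation is diagonal, and you get more than the lemma states. For every initial condition in the nonnegative orthant with $\sigma(0)>0$ (and $b>d$), the solution converges at exponential rates to the explicit point $\sigma^*\bigl(p_0^2,2p_0(1-p_0),(1-p_0)^2\bigr)$ of $\Gamma_0$.

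The paper's presentation gives in exchange the explicit eigenbasis and dual basis in $(x,y,z)$, which it reuses for the perturbative analysis at $\Delta>0$. These can be read off your coordinates just as easily.
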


\begin{proof} First, by summing the three equations and setting $x=y+z=\s$, we 
get the fixpoint equation 
\be
(b-d)\s-c\s^2=0,
\ee
hence either $\s=0$ (which is the trivial fixpoint $(0,0,0)$) or $\s=\frac {b-d}c\equiv\s^*$. 
The trivial fixpoint is of no interest to us. Inserting $\s=\s^*$, the equations involving the $x$ and $z$ components 
become
\bea
x^2+xy+\frac 14 y^2-x\s^*&=&0,
\\
z^2+zy+\frac 14 y^2-z\s^*&=&0,
\eea
Both equations for $x$ and $z$ are the same, and they admit the two solutions 
given in \eqv(solutions.1) and \eqv(solutions.2).  They are real and non-negative only for $y\in[0,\s^*/2]$. 
If we choose for $x$ and $z$ the same solution, we 
would get
\be
x+z= \s^*-y+ \sqrt {\frac {\s^*}2\left(\frac{\s^*}2-y\right)},
\ee 
which is compatible with $x+y+z=\s^*$ only if $y=\s^*/2$, 
which is the case when both solutions collapse. 
Hence, we must choose one solution for $x$ and the other for $y$. 
Since 
\be
(x+y/2)(z+y/2) =\frac{ (\s^*)^2}4- {\frac{\s*}2\left(\frac {\s^*}2  -y\right)
=\s^*y/2},
\ee
we see 
 that the third equation (for the $y$-component) is then also verified. 
 A rather tedious computation shows that the remaining statements of the lemma.
 
The line of fixpoints can be re-paramtrised conveniently using $v=z-x=-2\sqrt {\frac{\s*}2\left(\frac {\s^*}2  -y\right)}$. The fixpoint line then is given by
\begin{equation}
\Gamma_0(v)=
\begin{pmatrix}
\frac{(v-\s^*)^2}{4\s^*}\\
-\frac{v^2-(\s^*)^2}{2\s^*}\\
\frac{(v+\s^*)^2}{4\s^*}
\end{pmatrix}, \quad v\in [-\s^*,\s^*].
\end{equation}

The Jacobian  of the vector field $X_0$ at each point of the curve $\Gamma_0$, $DX_0(\Gamma_0(v))$, has the three eigenvectors (see \cite{CMM13})
\bea\nonumber
&&e_1(v)=\G_0(v)=
\begin{pmatrix}
\frac{(v-\s^*)^2}{4\s^*}\\
-\frac{v^2-(\s^*)^2}{2\s^*}\\
\frac{(v+\s^*)^2}{4\s^*}
\end{pmatrix},\hspace{2mm}
e_2(v)=\frac {d\G_0(v)}{dv}=
\begin{pmatrix}
\frac{v-\s^*}{2\s^*}\\
-\frac{v}{\s^*}\\
\frac{v+\s^*}{2\s^*}
\end{pmatrix},\hspace{2mm}
\\
&&e_3(v)=\frac {d^2\G_0(v)}{dv^2}=
\frac{1}{2\s^*}
\begin{pmatrix}
1\\
-2\\
1
\end{pmatrix},
\eea
with respective eigenvalues $-(\ff-d)<0$, $0$ and $-\ff<0$.  $DX_0(\Gamma_0(v))^t$ has the three eigenvalues, $-\ff+D, 0$, and $-\ff$, with corresponding eigenvectors
\be
\b_1(v)=\frac{1}{\s^*}
\begin{pmatrix}
1\\1\\1
\end{pmatrix},\hspace{2mm}
\b_2(v)=
\begin{pmatrix}
-\frac{v+\s^*}{\s^*}\\
-\frac{v}{\s^*}\\
-\frac{v-\s^*}{\s^*}
\end{pmatrix},\hspace{2mm}
\b_3(v)=
\begin{pmatrix}
\frac{(v+\s^*)^2}{2\s^*}\\
\frac{v^2-(\s^*)^2}{2\s^*}\\
\frac{(v-\s^*)^2}{2\s^*}
\end{pmatrix},
\ee
which satisfy, for any $i,j\in\{1,2,3\}$ and any $v$,
\be
\label{EVb}
\langle\b_i(v),e_j(v)\rangle=\d_{i,j}.
\ee

This proves the lemma.
\end{proof}

\begin{proof}[Proof of Theorem \ref{dettheo}]
As soon as $\D>0$, the line of fixpoints from the $\D=0$ case collapses to
the two obvious monomorphic fixpoints
\be
\mathfrak {z}_{aa}\equiv (\bar n(a),0,0), \quad \bar n(a) =\frac {b-d-\D}c,
\ee
and 
\be
\mathfrak {z}_{AA}\equiv (0,0,\bar n(A)), \quad \bar n(A) =\frac {b-d}c.
\ee
Note that there is no monomorphic fixpoint for the mixed type, since $x=0$ or $z=0$ implies $y=0$ and $z=0$. 
First, we see that the equation for the total mass is no longer autonomous, but reads
\be
(b-d) \s- c\s^2 -\D x=0,
\ee
so that we get the total mass parametrised by $x$ as
\be
\s^*(x)=\frac {\s^*}2+\sqrt{ {\s^*}^2/4-\D x/c}.
\ee
Note that this has the correct values at the monomorphic fixpoints when $x=0$ resp. $x=\frac{b-d-\D}c$.
One can show that these are, apart from the trivial fixpoint zero, the only fixpoints in the positive quadrant.
The line of fixpoints is then transformed into an invariant line connecting the two fixpoints. This line is 
still attractive and thus a neighbourhood of the $aa$-fixpoint will flow towards the stable $AA$ fixpoint.



The Jacobian matrix at the fixed point $\mathfrak z_{aa}$ is given by
\be
\label{DXaa}
DX_\D((\bar n(a),0,0))=
\begin{pmatrix}
-\ff+d+\D & -\ff+d+\D & -2\ff+d+\D\\
0 & \D & 2\ff\\
0 &  0 & -\ff+\D
\end{pmatrix}.
\ee
The matrix has the three eigenvalues $\lambda_1=-(\ff-d-\D)$, $\lambda_2=\D$ and $\lambda_3=-(\ff-\D)$. 
As long as $\bar n(a)>0$, namely $(b-d-\D)>0$,  we have  $\lambda_1,\lambda_3<0$, whereas $\lambda_2>0$. Thus the fixed point $\mathfrak z_{aa}$ is unstable.\\
The Jacobian matrix at the fixed point $\mathfrak z_{AA}$ is given by
\be
DX_\D((0,0,\bar n(A)))=
\begin{pmatrix}
 -\ff-\D &  0 &  0\\
2\ff & 0 & 0\\
-2\ff+d & -\ff+d & -\ff+d
\end{pmatrix},
\ee
with eigenvalues $\lambda_1=-\ff-\D<0$, $\lambda_2=0$ and $\lambda_3=-(\ff-d)<0$. 
The fact that one of the eigenvalues is zero is due to the fact that the phenotypes of $aA$ and $AA$ 
are equal. If the fitness of $AA$ is assumed larger than that of $aA$, as in \cite{CMM13}, all three eigenvalues would be negative.
Because of the zero eigenvalue, $\mathfrak z_{AA}$ is a non-hyperbolic equilibrium point of the system and linearisation fails to determine its stability properties. 

We will follow a pedestrian approach and expand the vector fields to quadratic order around the fixpoint. 
For this it will be very convenient to chose the coordinates $ x, y$, and $\s\equiv (x+y+z)$. 
The fixpoint in these coordinates reads as before $(0,0,\bar n(A)$. However, the vector field simplifies to 
\be
\label{sigma.1}
\wt X(x,y,\s)=
\begin{pmatrix}
\ff\frac{(x+\frac{1}{2}y)^2}{\s}-(d+\D+c\s) x\\
2\ff\frac{(x+\frac{1}{2}y)(\s-x-\frac{1}{2}y)}{\s}-(d+c\s) y\\
(\ff-d)\s -\D x-c\s^2
\end{pmatrix}.
\ee
Calculating the derivatives and second derivatives of the three components,
we arrive at the following approximations of the 
three differential equations near the fixpoint (we set  $\d_\s=\s-\bar n(A)$).
\bea\Eq(sigma.2)
x'(t)&\sim&-(b+\D)x+\frac{bc}{b-d}x^2+\frac{bc}{4(b-d)}y^2+\frac {bc}{b-d} xy-cx\d_\s,\\
y'(t)&\sim&2bx-\frac{2bc}{b-d}x^2-\frac {bc}{2(b-d)}y^2-\frac{2bc}{b-d}xy-cy\d_\s,\\
\d_\s'(t)&\sim&-(b-d)\d_\s-\D x-c\d_\s^2.
\eea
Looking at the equation for $x$, we see that the linear term $-(b-d) x$ dominates all 
other terms except possibly the term proportional to $y^2$.
Next,  we see that $\d_\s(t)$  approaches $\frac {\D}{b-d}x(t)$ (in fact, given $x$, 
$\d_s(t)\sim \eee^{-(b-d)t}\int_0^t \D x(s)\eee^{(b-d)s}ds$). 
Therefore, also in the second equation, the
leading order terms are those proportional to $x$ and $y^2$.

This implies that the terms $x^2, xy, y\d_\s, x\d_\s$ in these equations can be neglected, leaving us with 
\bea
x'(t)&\sim&-(b+\D)x+\frac{bc}{4(b-d)}y^2,\\\Eq(sigma.2.1)
y'(t)&\sim&2bx-\frac {bc}{2(b-d)}y^2.
\eea
as effective equations for the asymptotic of $x$ and $y$.
From the first equation we see that $x$ cannot drop below 
$ \frac{bc}{4(b-d)(b+\D)}y^2$. Hence 
\be
\Eq(sigma.5)
y'(t)>\frac{b^2c}{2(b-d)(b+\D)}y^2 -\frac {bc}{2(b-d)}y^2
=-\frac{bc}{2(b-d)}\frac{\D}{b+\D} y^2.
\ee
This is readily solved and yields (provided $\D>0$)
\be
\Eq(sigma.6)
y(t)\geq \frac {1}{t \frac{bc}{2(b-d)}\frac{\D}{b+\D}+1/y_0}.
\ee

Using that Eq. \eqv(sigma.2) has the explicit solution 
\be\Eq(sigma.7)
x(t)=\frac{bc}{4(b-d)}\eee^{-(b+\D)t}\int_0^t \eee^{s(b+\D)s} y^2(s) ds +C\eee^{-(b+\D)t},
\ee
and the bound \eqv(sigma.6), we see that indeed  
\be\Eq(sigma.7.1)
x(t)\sim \frac{bc}{4(b-d)(b+\D)} y^2(t).
\ee 
Thus, the differential inequality  \eqv(sigma.5) is  effectively an  (asymptotic) equality, and hence the same for 
 \eqv(sigma.6) .
Moreover $x(t)\sim t^{-2}$. 
\end{proof}

There is also a biological explanation for the behaviour of $z_{aA}(t)$ described in Theorem \ref{dettheo} (iv). 
Since the $A$-allele is the fittest and dominant one, and because of the phenotypic viewpoint, the $aA$-population is as fit 
as the $AA$-population and both die with the same rate. The $aA$-population only decreases because of the 
disadvantage in reproduction due to the less fit, decreasing $aa$-population.
Observe that Theorem \ref{dettheo} (i)+(ii) also holds in the model of \cite{CMM13} (cf. Proposition 3.3 therein) but the 
Jacobian matrix of their fixed point $\mathfrak z_{AA}$ has three negative eigenvalues and thus they get the exponential
decay of $z_{aA}(t)$.\\

\begin{figure}[t]
	\centering
\includegraphics[width=0.5\textwidth]{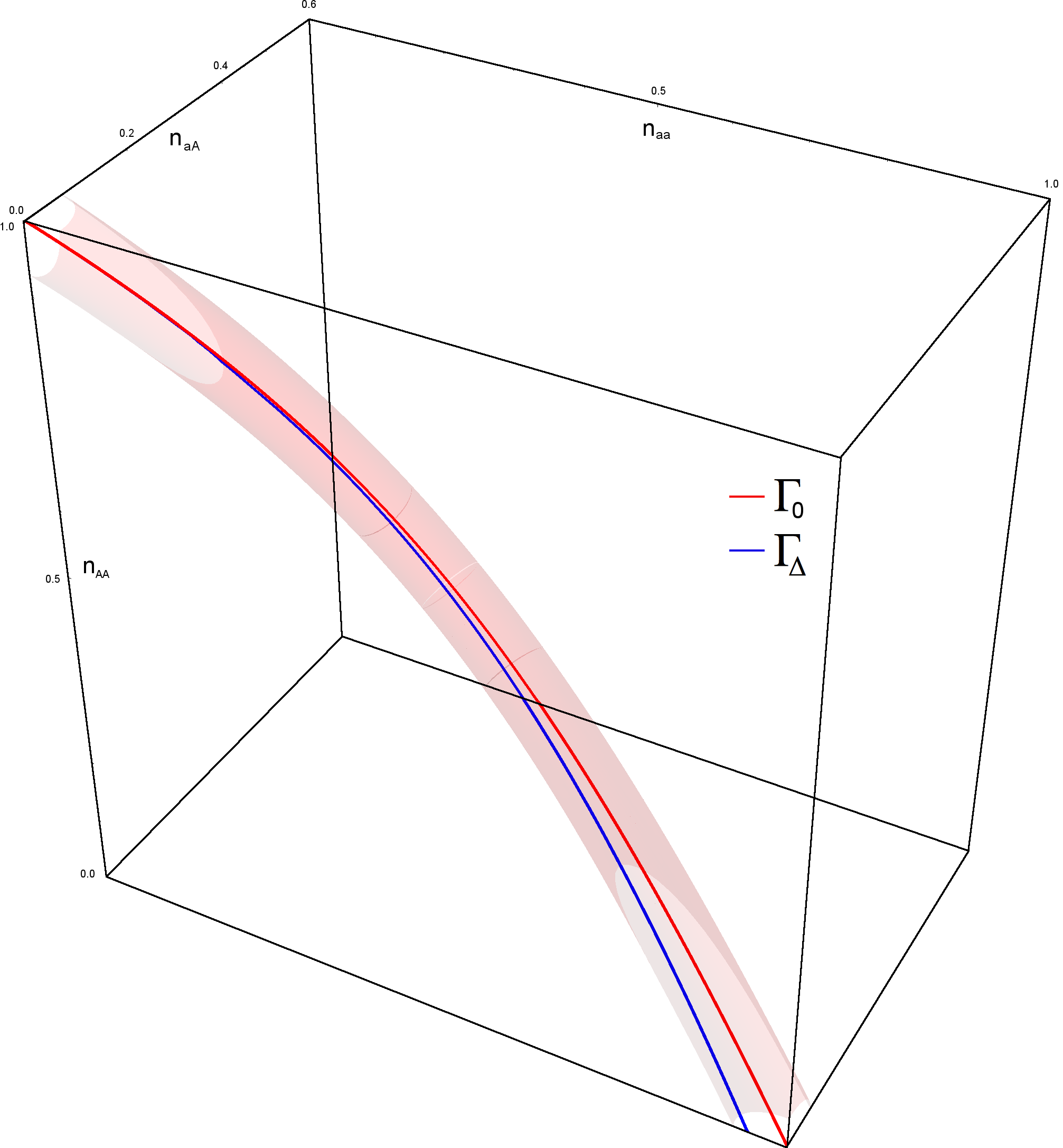}
	\caption{The curves $\Gamma_0$, $\Gamma_\D$ and the tube $\mathcal{V}$ in the perturbed vector field $X_\D$ \tiny{(Simulation by Loren Coquille)}}
	\label{fig3}
\end{figure}

\subsection{Outline of the proof}
The analysis of the stochastic system follows the scheme used in the haploid 
model. After the appearance of the first mutant $A$-allele, the $Aa$ population can be controlled as
a supercritical branching process until it reaches a level $\e$. At that time, the $AA$ population has also increased to a level $\e^2$. From that time on, the system follows the deterministic system until, in finite time, 
an $\e$-neighbourhood of the $AA$-fixpoint is reached. The main novelty -- and difficulty -- 
is to show that the final phase, i.e. the time to extinction of the $a$-allele, is 
much longer than the usual $\ln K$. In fact, if the stochastic system follows the deterministic one further on, 
we should expect that the $aA$ population takes a time of order $K^{1/2}$ to drop to the size
$K^{1/2}$. As argued earlier, 
once the population is down to that size, we may expect that it will die out due to upward fluctuations of the total 
population size.
Namely, the natural fluctuations of the total population are of order $K^{1/2}$ and have a duration of order $1$.  During such a fluctuation, the $aA$ proves can be approximated as a subcritical 
branching process with death rate $d\sim b+O(K^{-1/2})$. Using the computation of the mean time
to extinction in Chapter 2, to wit \eqv(ej.4) with $k=K$, $j=K^{1/2+\a}$ and $d=b+\sqrt K$, 
\be
e^K_{K^{1/2+\a}} \sim dK^{1/2+\a}.\ee

 Showing that, up to that time, the heuristics is correct, is technically very difficult and we will 
not give the details here. The methods used in \cite{BovNeu16} are essentially the stochastic Euler scheme
 developed in \cite{B14}, see  Section \ref{section6.3}.

\begin{figure}[h]
\begin{minipage}[t]{0.49\textwidth}
	\centering
  \includegraphics[width=1\textwidth]{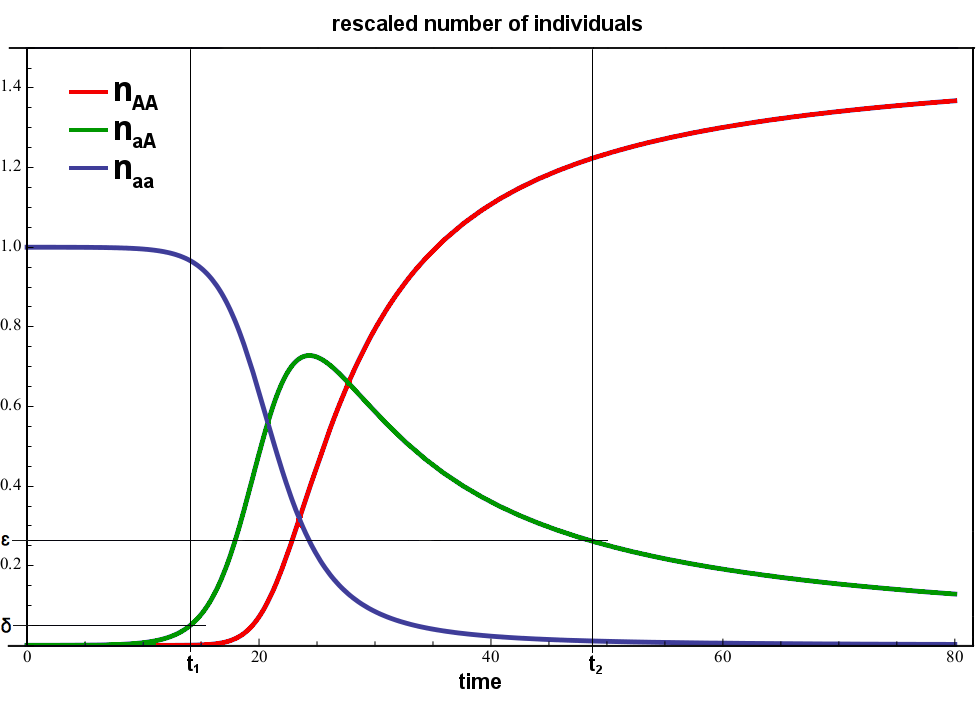}
	\caption{ $A$ fittest type and dominant}
	\label{fig1}
\end{minipage}
\begin{minipage}[t]{0.5\textwidth}
	\centering
  \includegraphics[width=1\textwidth]{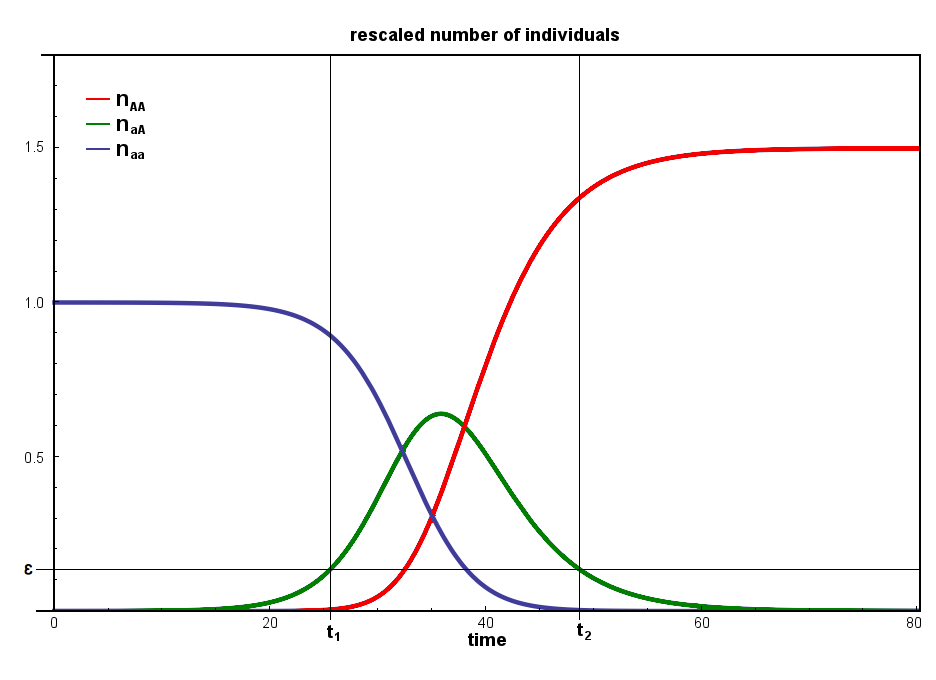}
	\caption{$a$ and $A$ co-dominant}
	\label{fig5}
\end{minipage}
\end{figure}

\section{Recovery of the original allele}

We will now show that the prolonged survival of
the $a$ allele after the invasion of the $A$ allele can indeed lead to a recovery of the $aa$-type. 
To do this, we assume that a new mutant allele (on the same gene), $B$, that on the one hand has 
a higher fitness than the $AA$-phenotype but that (for simplicity) has no competition with the
$aa$-type. This should lead to a suppression of the $AA$-phenotype and hence allow the $aa$-type to recover. 

The possible genotypes after this mutation are  $aa,aA,AA,aB,AB$, and $BB$, so that 
even for the deterministic system we have now to deal with a $6$-dimensional dynamical system 
whose analysis is far from simple. 

Following \cite{BovCoqNeu18}, we look at the following particular scenario.

\subsection{Model setup}\label{sec-model}

Let $\mathcal G=\{aa,aA,AA,aB,AB,BB\}$ be the genotype space.
We assume that the $B$ allele is the fittest and the most dominant one. That means the ascending order of dominance (in the Mendelian sense) is given by $a<A<B$, i.e.
\begin{enumerate}
	\item phenotype $a$ consists of the genotype $aa$,
	\item phenotype $A$ consists of the genotypes $aA,AA$,
	\item phenotype $B$ consists of the genotypes $aB,AB,BB$.\\
\end{enumerate}	

 Fertilities are the same for all genotypes and natural death rates are the same 
within the three different phenotypes. Moreover, we assume that there can be no reproduction between $a$ and $B$ phenotypes.\\
To summarise, we make the following Assumptions \textbf{D} on the rates: 

\begin{itemize}
	\item[\textbf{(D1)}] \emph{Fertilities.} For all  $ i\in\mathcal G$, and some $\ff>0$
	 \begin{align}\ff_{i}\equiv  \ff.\end{align}
	 
	\item[\textbf{(D2)}] \emph{Natural death rates.} The difference in fitness of the three phenotypes is realised by choosing 
	 a slightly higher natural death-rate of the $a$-phenotype and a slightly lower death-rate for the $B$-phenotype. For some $0<\D<\min(d, b-d)$,
	\begin{align}
	d_{aa}&= d+\D,\\
	d_{AA}\equiv  d_{aA}&=d,\\
	d_{aB}\equiv  d_{AB} \equiv  d_{BB}&=d-\D.
	\end{align}
	
	\item[\textbf{(D3)}] \emph{Competition rates.} We require that phenotypes $a$ and $B$ do not compete with each other. 
	Moreover, we introduce a parameter $\eta\geq0$ which lowers the competition between $BB$ and $aA$.
	For some $0\leq\eta<c$, 
	\begin{equation*}\left(c_{i,j}\right)_{\{i,j\}\in\mathcal G\times\mathcal G}=:
	\begin{tabular}{c|c|c|c}
	& $aa$ & $aA \quad AA$&$aB\quad AB\quad BB$\\
	\hline
	$aa$ & $c$ &$c\quad c$&$0 \quad 0 \quad 0$\\
	\hline
	$aA$ & $c$ &$c\quad c$&$c \quad c \quad c-\eta$\\
	$AA$ & $c$ &$c\quad c$&$c \quad c \quad c$\\
	\hline
	$aB$ & $0$ &$c\quad c$&$c \quad c \quad c$\\
	$AB$ & $0$ &$c\quad c$&$c\quad c \quad c$\\
	$BB$ & $0$ &$c-\eta \quad c$&$c \quad c\quad c$\\
	\hline
	\end{tabular}
	\end{equation*}
	\begin{remark} If $\eta>0$, the competition does not depend only on the phenotype, and can be interpreted as a refinement of a phenotypic competition for resources: the strength (or ability to get resources of an individual not only depends on its phenotype but also on its genotype. Genetically, it makes sense to assume that (positive) competition rates are decreasing in the "genetic distance" between two individuals. 
	\end{remark}
	
		\item[\textbf{(D4)}] \emph{Reproductive compatibility.} We require that phenotypes $a$ and $B$ do not reproduce with each other. 
	\begin{equation*}\left(R_{i}(j)\right)_{\{i,j\}\in\mathcal G\times\mathcal G}\equiv  
	\begin{tabular}{c|c|c|c}
	& $aa$ & $aA \quad AA$&$aB\quad AB\quad BB$\\
	\hline
	$aa$ & $1$ &$1\quad 1$&$0 \quad 0 \quad 0$\\
	\hline
	$aA$ & $1$ &$1\quad 1$&$1 \quad 1 \quad 1$\\
	$AA$ & $1$ &$1\quad 1$&$1 \quad 1 \quad 1$\\
	\hline
	$aB$ & $0$ &$1\quad 1$&$1 \quad 1 \quad 1$\\
	$AB$ & $0$ &$1\quad 1$&$1\quad 1 \quad 1$\\
	$BB$ & $0$ &$1 \quad 1$&$1 \quad 1\quad 1$\\
	\hline
	\end{tabular}
	\end{equation*}
	
\end{itemize}

Under Assumptions  \textbf{D}, the invasion fitnesses satisfy
\begin{align}
\label{unfit.1}
f_{AB,AA}&=\ff-(d-\D)-c\bar z_{AA}=\ff-d+\Delta-c\frac{\ff-d}{c}=\D,\\
f_{aa,BB}&=\ff-d-\D,\\
c\bar z_B&=\ff-d+\Delta
\end{align}
Therefore,  the mutant $AB$ has a  positive invasion fitness in the population $AA$, as well as $aa$ in the $BB$ population (due to the absence of competition between them).

Since we assume that there is no recombination between phenotypes $a$ and $B$,
\begin{enumerate}
	\item the pool of possible partners for the phenotype $a$ consists of phenotypes $a$ and $A$; 
	the total population of this pool is denoted by 
	\begin{equation}
	\Sigma_3:=X_{aa}+X_{aA}+X_{AA},
	\end{equation}
	\item the pool of possible partners for the phenotype $A$ consists of all phenotypes $a$;
	the total population of this pool is denoted 
	\begin{equation}
	\Sigma_6:=X_{aa}+X_{aA}+X_{AA}+X_{aB}+X_{AB}+X_{BB},
	\end{equation}
	\item the pool of possible partners for the phenotype $B$ consists of phenotypes $A$ and $B$;
	the total population of this pool is denoted by 
	\begin{equation}
	\Sigma_5:=X_{aA}+X_{AA}+X_{aB}+X_{AB}+X_{BB}.
	\end{equation}
\end{enumerate} 
Computing the reproduction rates with the Mendelian rules as described in \cite{BovNeu16} 
leads to the following (time-dependent) birth-rates (we abbreviate $b_i=r(X_i\to X_i+1)$): 

\begin{align}\label{birthratesaa}
b_{aa}=&\ff\frac{X_{aa} \left(X_{aa}+\frac12X_{aA}\right)}{\S_3}+\ff\frac{\frac12X_{aB} \left(\frac12X_{aA}+\frac12X_{aB}\right)}{\S_5}\nonumber\\
&
+\ff\frac{\frac12X_{aA} \left(X_{aa}+\frac12X_{aA}+\frac12X_{aB}\right)}{\S_6},\\[0.5cm]\label{birthratesaA}
b_{aA}=&\ff\frac{X_{aa} \left(\sfrac12X_{aA}+X_{AA}\right)}{\S_3}+\ff\frac{\frac12X_{aA} \left(\frac12X_{aB}+\frac12X_{AB}\right)+\frac12X_{aB} \left(X_{AA}+X_{AB}\right)}{\S_5}\nonumber\\[0.2cm]
&+\ff\frac{\left(\frac12X_{aA}+X_{AA}\right) \left(X_{aa}+X_{aA}+\frac12X_{aB}\right)+\frac14X_{aA} X_{AB}}{\S_6},\\[0.5cm]\label{birthratesAA}
b_{AA}=&\ff\frac{\frac12X_{AB} \left(\frac12X_{aA}+X_{AA}+\frac12X_{AB}\right)}{\S_5}+\ff\frac{\left(\frac12X_{aA}+X_{AA}\right) \left(\frac12X_{aA}+X_{AA}+\frac12X_{AB}\right)}{\S_6},\\[0.5cm]\label{birthratesaB}
b_{aB}=&\ff\frac{\left(\frac12X_{aA}+X_{aB}\right) \left(\frac12X_{aB}+\frac12X_{AB}+X_{BB}\right)}{\S_5}+\ff\frac{\frac12X_{aA} \left(\frac12X_{aB}+\frac12X_{AB}+X_{BB}\right)}{\S_6},\\[0.5cm]\label{birthratesAB}
b_{AB}=&\ff\frac{\left(\frac12X_{aA}+X_{AA}+X_{AB}\right) \left(\frac12X_{aB}+\frac12X_{AB}+X_{BB}\right)}{\S_5}\nonumber\\
&
+\ff\frac{\left(\frac12X_{aA}+X_{AA}\right) \left(\frac12X_{aB}+\frac12X_{AB}+ X_{BB}\right)}{\S_6},\\[0.5cm]\label{birthratesBB}
b_{BB}=&\ff\frac{\frac14\left(X_{aB}+X_{AB}+2 X_{BB}\right)^2}{\S_5}.
\end{align}

The death rates are the sum of the natural and competition death rates ($d_i=r(X_i\to X_i-1)$):
\begin{align}\label{death-ratesaa}
d_{aa}&=X_{aa}(d+\D+c\S_3),\\[0cm]\label{death-ratesaA}
d_{aA}&=X_{aA}(d+c \S_6-\eta X_{BB}),\\[0cm]\label{death-ratesAA}
d_{AA}&=X_{AA}(d+c\S_6),\\[0cm]\label{death-ratesaB}
d_{aB}&=X_{aB}(d-\D+c\S_5),\\[0cm]\label{death-ratesAB}
d_{AB}&=X_{AB}(d-\D+c\S_5),\\[0cm]\label{death-ratesBB}
d_{BB}&=X_{BB}(d-\D+(c-\eta)X_{aA}+c(X_{AA}+X_{aB}+X_{AB}+X_{BB})).
\end{align}

\begin{figure}[h!]
	\includegraphics[width=.7\textwidth]{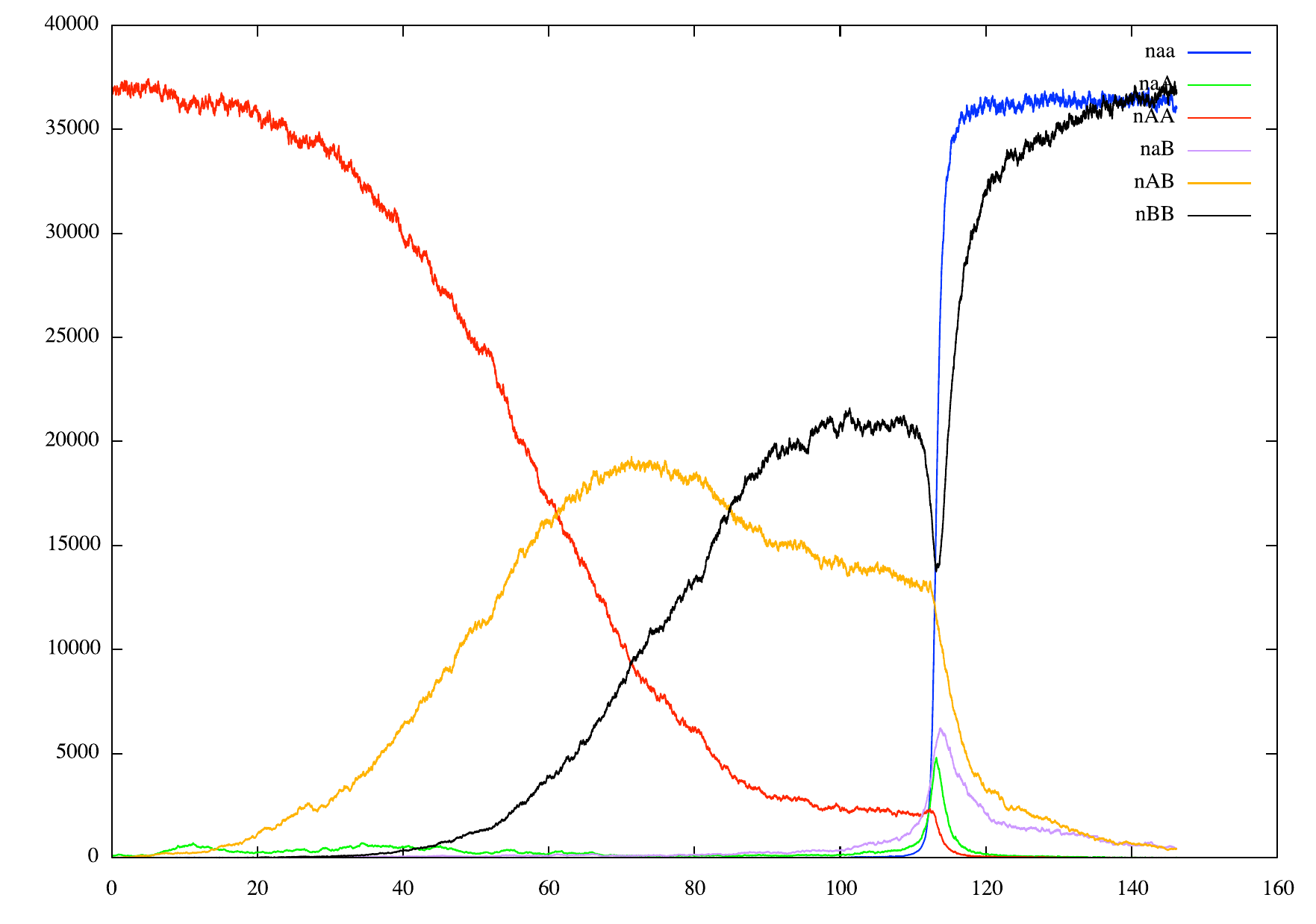}
	\caption{Simulation of the stochastic system for $\ff=6,D=0.7$, $\Delta=0.1$, $c=1, \eta=0.02$, $\e=0.014$ and $K=7000$.}
	\label{pic-stoch}
\end{figure}

The deterministic limit of this process is now a six-dimensional dynamical system
\be\Eq(dyn-syst)
\frac d{dt}x(t)=X(x(t)),
\ee
 driven by the vector field
\begin{align}\Eq(six.1)
&X(x_1,\dots,x_6)=\\\nonumber
&\left(\begin{array}{l}
\ff\frac{x_1(x_1+\sfrac 12 x_2)}{\s_3}    +\ff\frac{\sfrac 14x_4(x_2+x_4)}{\s_5}  +
\ff\frac{\sfrac 12x_2(x_1+\sfrac 12 x_2+\sfrac 12 x_4)}{\s_6}  
-x_1(d+\D+c \s_3)\\
\ff\frac{x_1(x_3+\sfrac 12 x_2)}{\s_3}    +\ff\frac{(\sfrac 14x_2(x_4+x_5)+\sfrac 12x_4(x_3+x_5)}{\s_5}  +
\ff\frac{(\sfrac 12x_2+x_3)(x_1+x_2+\sfrac 12 x_4)+\sfrac 12 x_2x_5)}{\s_6}  
-x_2(d+c \s_6-\eta x_6)\\
\ff\frac{\sfrac 12x_5(\sfrac 12x_2+x_3+\sfrac 12 x_5)}{\s_5}    +\ff\frac{(\sfrac 12x_2+x_3)(\sfrac12 x_2+x_3+\sfrac12 x_5)}{\s_6}
-x_3(d+c \s_6)\\
\ff\frac{\sfrac 12(x_2+x_4)(\sfrac 12 x_4+\sfrac 12 x_5+x_6)}{\s_5}    +\ff\frac{\sfrac 12x_2(\sfrac 12x_4+\sfrac 12x_5+x_6)}{\s_5}  +
\ff\frac{(\sfrac 12x_2+x_3)(x_1+x_2+\sfrac 12 x_4)+\sfrac 12 x_2x_5)}{\s_6}  
-x_4(d-\D+c \s_5)
\\
\ff\frac{(\sfrac 12x_2+x_1+x_4)(\sfrac 12x_4+\sfrac 12 x_5+x_6)}{\s_5}    +\ff\frac{(\sfrac 12x_2+x_3)(\sfrac 12x_4+x_5)+\sfrac 12x_5+x_6)}{\s_6}  
-x_5(d-\D+c \s_5)\\
\ff\frac{\sfrac 14(x_4+x_5+2x_6)^2}{\s_5}   
-x_6(d-\D+c \s_5-\eta x_2)\\
\end{array}
\right),
\end{align}

We  consider the dynamical system \eqref{dyn-syst} starting with the initial condition
\be
\label{init-condAA}
x_1(0)=O(\e^2), x_2(0)=)(\e), x_3(0) \sim \bar z(A), x_4(0)=0,x_5(0)=O(\e^3), x_6(0)=0,
\ee
for small $\e>0$. This is the state of the system when a $B$-mutant has just fixated. 

Clearly, analysing the behaviour of the six-dimensional non-linear system is far from easy. 
Numerical simulations for particular choices of the parameters are, on the other hand, easy.
Figure \thv(bild.1) shows a rather typical example that confirms our expectations.

%

%
%

\begin{figure}[t]\label{bild.1}
	\begin{center}\raisebox{1.5cm}{\includegraphics[width=.495\textwidth]{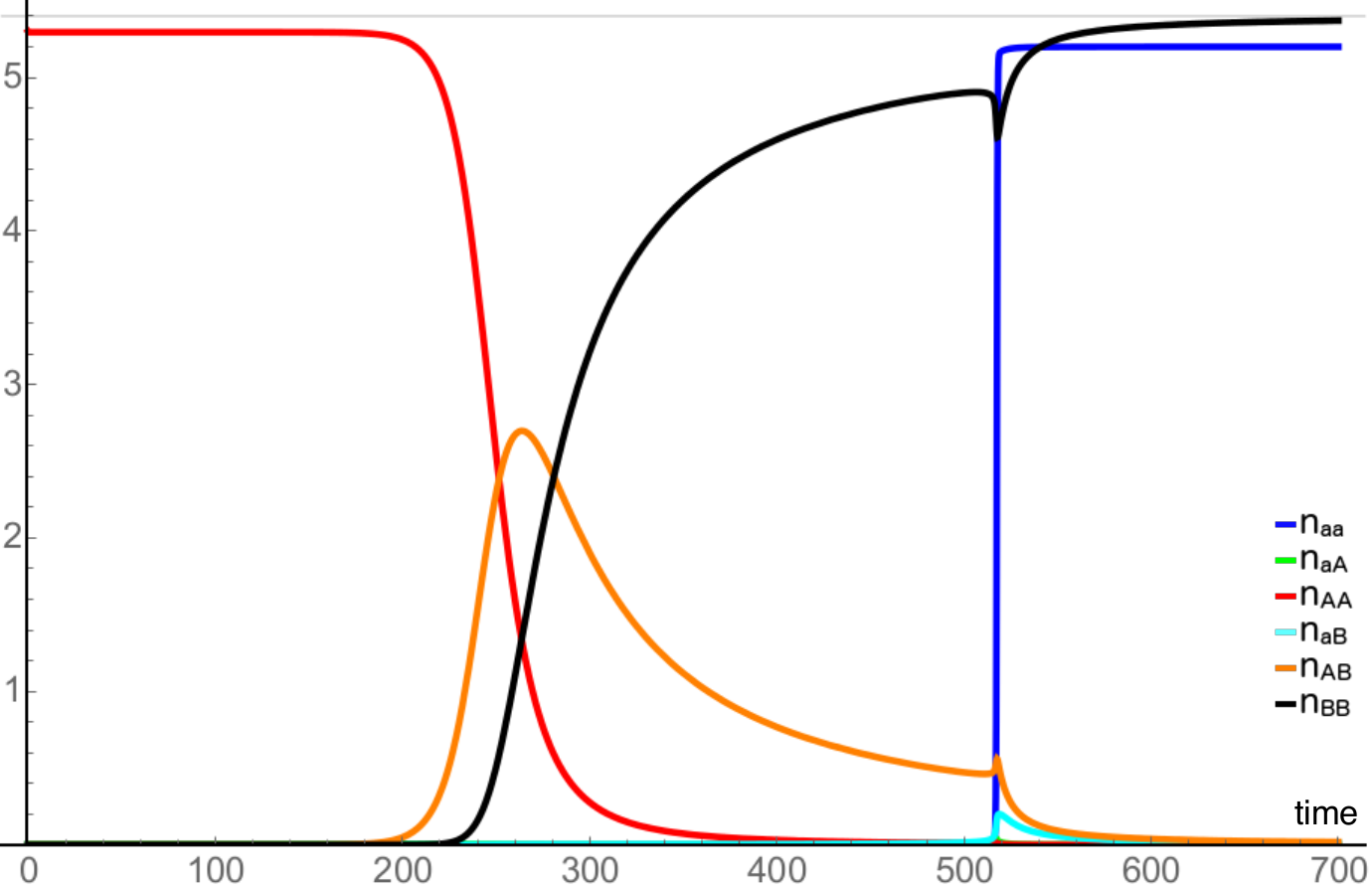}}\hspace{4mm}
		\includegraphics[width=0.45\textwidth]{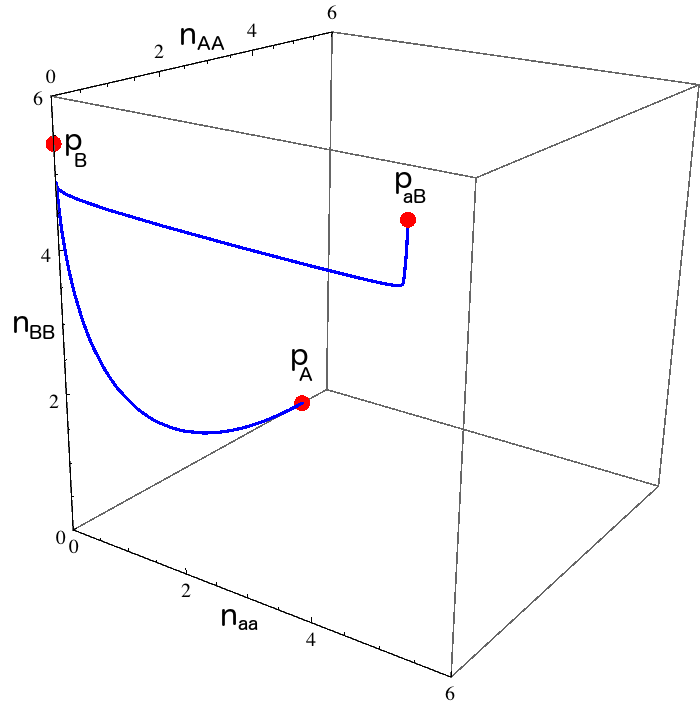}
		\caption{General qualitative behaviour of $\{x(t)\}$ and projection of the dynamical system on the coordinates $aa, AA$ and $BB$.  $\eta=0.02$ for both pictures. }
		\label{pic-general}
	\end{center}
\end{figure}

The first step in the analysis of the dynamical system is to find the equilibrium points. In dimension six, this is far from trivial, and in fact we do not know all of them. However, one  can easily identify the three relevant fixpoints:
\begin{align}
	p_A&=(0,0,\bar z_A,0,0,0),\\
	p_B&=(0,0,0,0,0,\bar z_B),\\
	p_{aB}&=(\bar z_a,0,0,0,0,\bar z_B),
\end{align}
The fact that the bi-morphic fixpoint $p_{aB}$ is easily identified is due to the assumption that the competition 
between the $a$ and the $B$  phenotypes is zero and that there is no mating between the $a$ and the $B$ phenotypes.This is clearly a simplifying assumption.

The main result is that starting with initial conditions \eqv(init-condAA), that is close to $p_A$ (with small coordinates in directions $aa$, $aA$ and $AB$), and under minimal assumptions on the parameters, the system gets very close to  $p_B$ before finally converging to $p_{aB}$, see Figure \ref{pic-general}.

\begin{theorem}\label{main-thm}
	Consider the dynamical system \eqref{dyn-syst} started with initial conditions \eqref{init-condAA}. Suppose the following Assumptions \textbf{E} on the parameters hold:
	\begin{enumerate}
		\item[\textbf{(E1)}] 	 $\Delta$ is sufficiently small, 
		\item[\textbf{(E2)}] 	 $\ff$ is sufficiently large, 
		\item[\textbf{(E3)}]   $0\leq\eta<c/2$.
	\end{enumerate}
	Then the system converges to the fixed point $p_{aB}$.
	More precisely, for any fixed $\delta>0$, as $\e\to0$, 	it reaches a $\delta$-neighbourhood of $p_{aB}$ in a time of order $\OO(\e ^{-1/(1+\eta\bar z_B-\D)})$. \\
	Moreover:  
	\begin{enumerate}
		\item \label{eta0decay} for $\eta=0$,  the amount of  allele $a$ in the population decays to $\OO(\e^{1+\D/(1+\D)})$ before reaching $\OO(1)$,
	
		\item \label{etalarge} for $\eta>\frac{4\D}{\bar z_B}$, the amount of $a$ allele in the population is  bounded below by $\OO(\e)$ for all $t>0$.
	
	\end{enumerate}
\end{theorem}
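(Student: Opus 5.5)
The plan is to decompose the trajectory into successive phases and to use the small parameters $\e$ and $\D$ to reduce, in each phase, to lower-dimensional subsystems. These subsystems are either linear or of the already-analysed type of Theorem \ref{dettheo}.

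\textbf{Phase 1} (linear instability of $p_A$). Near $p_A$, the Jacobian of \eqv(six.1) is block-triangular. The $A$-block $(aa,aA,AA)$ reproduces the structure of Theorem \ref{dettheo}: eigenvalues $-(\ff+\D)$, $0$, $-(\ff-d)$. The $B$-block $(aB,AB,BB)$ has the unstable direction given by $AB$ with rate $f_{AB,AA}=\D$. Starting from \eqv(init-condAA), the $AB$ population therefore grows like $\e^3\eee^{\D t}$. Meanwhile the $aA$ population decays only algebraically, as $y(t)\sim 1/t$, and $aa$ stays slaved to it, $x\sim c\,y^2$, by the quadratic-expansion argument of the proof of Theorem \ref{dettheo}(iv). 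The $B$ alleles reach order one after a time $O(\D^{-1}\ln(1/\e))$, and this time fixes the residual $a$-content at that moment.

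\textbf{Phase 2} (replacement of $AA$ by $BB$). Once $B$-alleles are macroscopic, I project onto the $A/B$ subsystem with $a$-alleles set to zero. Because $\D$ is small (E1), this subsystem is a small perturbation of a one-locus diploid system with $B$ dominant. It is therefore exactly of the form \eqv(XD) with $A,B$ in place of $a,A$. By Lemma \thv(delta.0) and its perturbation, the flow follows the invariant curve $\G_\D$ and reaches a neighbourhood of $p_B$. Again $AB$ then decays like $1/t$ and $AA$ like $1/t^2$.

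During this phase I track the small $a$-coordinates $(x_1,x_2,x_4)$ by linearising along this heteroclinic path, treating them as a non-autonomous linear system. Near $p_B$, the linearisation in these directions is as follows. The $aa$ growth rate is $\ff-d-\D-c\s_3$, which is positive once $\s_3\to0$. However, $aa$ is fed only through matings inside $\S_3$, and as $\S_3\to 0$ the effective rates require care. The $aA$ direction has rate about $-\D+\eta\bar z_B$, because of the reduced competition with $BB$. The $aB$ direction is close to neutral.

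The key computation is the $2\times2$ or $3\times3$ linear system for $(x_1,x_2,x_4)$ near $p_B$. I would use it to show that the leading eigenvalue gives growth of $aa$. Combining this with the algebraic decay of the $A$-remnants should give the growth exponent producing the time $\e^{-1/(1+\eta\bar z_B-\D)}$. Large $\ff$ (E2) is used to make the mating terms dominate, so that the $aa$ fraction mirrors $x_2^2/\s_3$ as in \eqv(sigma.7.1). The dichotomy in (1)–(2) follows from the sign of $\eta\bar z_B-4\D$ in the exponent balance between $a$-loss in Phase 1/2 and regrowth. For $\eta=0$ the minimum $a$-content is $\e\cdot\e^{\D/(1+\D)}$; for $\eta>4\D/\bar z_B$ the $aA$ direction is never contracting enough to lose more than a constant factor.

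\textbf{Phase 3} (convergence to $p_{aB}$). Once $aa$ is of order one, the system is near the decoupled pair formed by the $aa$ logistic equation and the $BB$ logistic equation, which do not interact by (D3)–(D4). Convergence to $p_{aB}$ follows from hyperbolicity there; here (E3) guarantees that the remaining eigenvalues at $p_{aB}$ are negative.

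The main obstacle is Phase 2. There, a six-dimensional non-hyperbolic flow (with zero eigenvalues at both $p_A$ and $p_B$) must be controlled while small $a$-coordinates evolve over times polynomial in $1/\e$, where Gronwall arguments lose all precision. I would try to handle this with Lyapunov-type differential inequalities for ratios such as $x_1/x_2^2$ and $x_2/(x_5+x_3)$, in the spirit of \eqv(sigma.5)–\eqv(sigma.6), giving matching upper and lower power-law bounds.
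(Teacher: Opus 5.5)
\textbf{There is a genuine gap in Phase 2.} Your phase structure agrees with the paper's: $AB$ grows at rate $\Delta$, then $AA$ is replaced by $BB$ along the analogue of Theorem~\ref{dettheo}, with $n_{AB}\sim 1/t$ and $n_{AA}\sim 1/t^2$. What is missing is the step that produces the time $\epsilon^{-1/(1+\eta\bar z_B-\Delta)}$ and statements (1)--(2), and the route you propose for it cannot work. You do mention treating the $a$-coordinates as a non-autonomous system, but you then look for growth in ``the leading eigenvalue'' of a linearisation at $p_B$, and there is none:
\begin{itemize}
\item At $p_B$ an $aA$ individual is essentially not produced, since matings with $BB$ yield $aB$ and $AB$. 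It dies at rate $\approx b+\Delta-\eta\bar z_B$. Meanwhile $aB$ is neutral to first order, with birth and death rates both $\approx b$. So the $(x_2,x_4)$-block has eigenvalues $\approx -(b+\Delta-\eta\bar z_B)$ and $0$.
\item The $aa$-equation is not linearisable at all. Its birth term $b\,x_1(x_1+\frac12 x_2)/\sigma_3$ depends on allele frequencies inside the vanishing pool $\sigma_3$.
\item Consequently $b-d-\Delta-c\sigma_3$ is not the growth rate of $aa$ along the trajectory. While $n_{AA}\gg n_{aA}$, most offspring of $aa$ are $aA$, and $aa$ shrinks.
\end{itemize}
The number $\eta\bar z_B-\Delta$ that you attach to the ``$aA$ direction'' is really the drift, per $aA$ individual, of the whole heterozygote pool $\Sigma_{aA,aB}=n_{aA}+n_{aB}$. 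Tracking this pool is the missing idea:
\begin{itemize}
\item $\Sigma_{aA,aB}$ has birth rate $\approx b\,\Sigma_{aA,aB}$ and death rate $\approx b\,\Sigma_{aA,aB}-(\eta\bar z_B-\Delta)n_{aA}$, so $\dot\Sigma_{aA,aB}\approx(\eta\bar z_B-\Delta)n_{aA}$.
\item Newborn $aA$ individuals get their $A$ allele from $AB$, so $n_{aA}\approx C\,\Sigma_{aA,aB}\,n_{AB}$.
\item The non-autonomous input $n_{AB}\sim 1/t$ then turns this into a power law $\Sigma_{aA,aB}\sim\epsilon\,t^{\gamma}$ with $\gamma\propto\eta\bar z_B-\Delta$.
\end{itemize}
The sign of $\eta\bar z_B-\Delta$ gives the dichotomy (1)/(2); your ``exponent balance'' does not derive it. Moreover, $aa$ can only take off once the $a$-frequency in its mating pool is of order one, i.e.\ once $n_{aA}\sim\epsilon\,t^{\gamma-1}$ catches up with $n_{AA}\sim t^{-2}$. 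That crossing is what yields the time $\epsilon^{-1/(1+\gamma)}$.

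\textbf{Your last phase also fails as stated.} $p_{aB}$ is not hyperbolic: $DX(p_{aB})$ has a double zero eigenvalue. For instance, moving mass from $BB$ to the phenotypically identical $aB$ is a neutral direction. So convergence cannot be read off from the linearisation. As at the $AA$ fixpoint in Theorem~\ref{dettheo}, one needs the quadratic approximation, and this is where $\eta$ small and $b$ large are used. Nor is (E3) what makes the nonzero eigenvalues negative; they are negative for every $\eta\geq 0$.

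In the paper, (E2) and (E3) are used in the phase you skip, between $n_{aA}\sim n_{AA}$ and $aa$ becoming of order one:
\begin{itemize}
\item Large $b$ makes $aa$ grow exponentially as soon as the factor $(n_{aa}+\frac12 n_{aA})/\Sigma_3$ is merely of order one.
\item One must show that the regrowing $aa$ population, which regenerates $aA$, $AA$ and $AB$ through matings, cannot drive these to macroscopic size.
\item (E3) is what prevents $aA$ from acquiring positive fitness against the macroscopic $BB$ population in that phase.
\end{itemize}
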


\begin{remark}
		For $\eta$ large, we prove that the fixed point $p_{aB}$ is unstable.
		We observe numerically that the system is attracted to a fixed point where all the 6 populations coexist, but we do not prove this.
\end{remark}

\subsection{Heuristics of the proof of Theorem \thv(main-thm)}\label{subsec-heuristics}

Let us now briefly discuss the linear stability of the relevant fixed points and explain the 
 heuristics of the proof of Theorem \ref{main-thm}.

The Jacobian matrix  $DX$ of the vector field $X$ defined in \eqv(six.1) can be explicitly computed at $p_A$ and $p_{aB}$ and the situation is as follows:
\begin{itemize}
	\item [(i)] \hspace{1mm}The eigenvalues of $DX(p_A)$ are $0, \Delta>0$ and $-(\ff-d), -(\ff+\Delta), -(\ff-\Delta)$ (double), which are all strictly negative under Assumptions (C). The fixed point $p_A$ is thus unstable.
	\item[(ii)] \hspace{1mm} The eigenvalues of $DX(p_{aB})$ are $0$ (double), and $ -(2\ff-d), -(\ff-d+\Delta), 
	-(\ff-d-\Delta), -((\ff-d)(5\ff-4d)+\ff\Delta)/(4(\ff-d)+\eta\bar z_B)$ which are strictly negative under 
	Assumptions (C).  Analysing the quadratic approximation of the vector field exhibits that it is, in fact, stable. 
	\end{itemize}
	
	It turns out that $DX(p_B)$ is singular but as the invasion fitness  $f_{aa,BB}>0$ (see \eqref{unfit.1}), this implies that a small perturbation in the first coordinate will be amplified, and thus implies the instability of the fixed point $p_B$. Thus both   fixpoints $p_a$ and $p_B$ are unstable.

A numerical solution of the system is provided on Figure \ref{detsys}. 

	\begin{enumerate}
				\item[\textbf{Phase 1.}] Time period:  until $n_{AB}=\e_0$.\\
				  The mutant population, consisting of all individuals of phenotype $B$, first grows up to $\e_0$ exponentially fast {with rate $\Delta$} without perturbing	the behaviour of the 3-system $(aa,aA,AA)$. The rate of growth corresponds to the invasion fitness of $AB$ in the resident population $AA$, see \eqref{unfit.1}. Following \cite{BovNeu16}, $AA$ stays close to $\bar z_A$, while $aA$ and $aa$ continue to decay like $1/t$ and $1/t^2$ respectively.
				The duration $T_1$ of this phase is such that $\OO(\e^3)e^{t\D}=\OO(1)\Leftrightarrow T_1=\OO(|\log\e|)$.\\
				
				\item[\textbf{Phase 2.}] Time period:  until $n_{aA}=\OO(n_{AA})$.\\
				The evolution  is a perturbation of an effective 3-system $(AA,AB,BB)$ which behaves exactly the $(aa,aA,AA)$ system, since the parameters satisfy the same hypotheses (slightly lower death rate for phenotype $B$  than for phenotype $A$, and constant competition parameters). 
				A comparison result  
				 shows that this 3-system is almost unperturbed until $n_{aA}=\OO(n_{AA})$. 	If that happens in a time $T_2$ diverging with $\e$ (which we ensure throughout the calculation), we thus know that $BB$ approaches $\bar z_B$, while 	$n_{AB}\propto1/t$ and
				$n_{AA}\propto1/t^2$.\\
				The important fact in this phase is that the amount of  allele $a$ in the population decays 
				for $\eta$ small while it increases for large enough $\eta$. Indeed, let us derive some bounds on $\Sigma_{aA,aB}=n_{aA}+n_{aB}$.
				The population $\Sigma_{aA,aB}$ reproduces by taking the dominant allele in a population of order $\OO(1)$ and the allele $a$ in itself. Thus its birth rate satisfies $b_{\Sigma_{aA,aB}}
				\approx b \Sigma_{aA,aB}$. We can compute its death rate exactly and use that $n_{BB}\approx\Sigma_5\approx\bar z_B$: 
				\begin{align}\nonumber
				d_{\Sigma_{aA,aB}}&=\Sigma_{aA,aB}(D-\D+c\S_5)-\eta n_{aA}n_{BB}+\D n_{aA}\\
				&\approx \ff\Sigma_{aA,aB}-n_{aA}(\eta\bar z_{B}-\D),\\ \nonumber
				\dot{\Sigma}_{aA,aB}&\approx n_{aA}(\eta\bar z_{B}-\D)\\
				&=\OO({\Sigma}_{aA,aB}\cdot n_{AB})(\eta\bar z_{B}-\D). \label{sigmadot}
				\end{align}
				The last equality comes from the fact that $aA$ newborns have mainly their $a$ allele coming from ${\Sigma}_{aA,aB}$ and their $A$ allele coming from $AB$. 
			   Using the $1/t$ decay of $AB$ we get:
			   \begin{align}
			   	\dot{\Sigma}_{aA,aB}\approx \frac{\OO({\Sigma}_{aA,aB})}{\OO(1)+\OO(1)t}(\eta\bar z_{B}-\D)
				\end{align} 
			    As ${\Sigma}_{aA,aB}(T_1)=\OO(\e)$ we deduce that
			   ${\Sigma}_{aA,aB}(t)={\OO(\e)}(\OO(1)+\OO(1)t)^{\OO(\eta\bar z_{B}-\D)}$,
			   and thus
			   $n_{aA}=\OO(n_{AB}\cdot{\Sigma}_{aA,aB} )={{\OO(\e)}(\OO(1)+\OO(1)t)^{\OO(\eta\bar z_{B}-\D)}}/{(\OO(1)+\OO(1)t)}$.
			   By solving $n_{aA}=\OO(n_{AA})=\OO(n_{AB}^2)$ we get the order of magnitude of $T_2=\OO\left(\e^{-1/(1+\eta\bar z_{B}-\D)}\right)$.
			   Note that for $\eta=0$,  ${\Sigma}_{aA,aB}(T_2)=\OO(\e ^{1+{\D}/{(1-\D)}})$. 
			   Moreover, \eqref{sigmadot} implies that for $\eta>\Delta/\bar z_B$, we have $\dot{\Sigma}_{aA,aB}>0$, which proves points \ref{eta0decay} and \ref{etalarge} of Theorem \ref{main-thm}.\\

				\item[\textbf{Phase 3.}] Time period:  until ${aa}$ reaches equilibrium.\\
				The fact that $n_{aA}=\OO(n_{AA})$ has a crucial effect on the birth rate of $aa$ 
				(see \eqref{birthratesaa}) since the term
				 $(n_{aa}+\frac12n_{aA})/({n_{aa}+n_{aA}+n_{AA}})$ becomes of order 
				 $\OO(1)$. As long as $AA$ stays smaller than $\OO(\e)$, we get a lower bound
				  on $n_{aa}$ which  grows exponentially fast  since $\ff$ is chosen large enough 
				  (Assumption C2):
				\begin{align}
				b_{aa}&\geq bn_{aa}\OO(1),\\
				d_{aa}&\leq n_{aa}(d+\D+\OO(\e)),\\
				\dot{n}_{aa}&\geq n_{aa}(\ff\OO(1)-d-\D-\OO(\e)).
				\end{align} 
				As $aa$ grows, it makes ${\Sigma}_{aA,aB}$ grow, and thus $AA$ and $AB$ as 
				well. We have to show  that this could not prevent $aa$ from reaching 
				equilibrium. We do not give a detailed argument here, but essentially, the 
				presence of the macroscopic $BB$ population prevents all the non-$aa$ 
				populations to grow too much. 
				Note that if $\eta$ is too large, then $aA$ could get a positive fitness and grow to
				 a macroscopic level. That is why we have to impose Assumption C3, which will 
				 become clearer heuristically in the next phase.
				We recall that $aa$ does not compete with $BB$ and thus it grows exponentially 
				fast  with rate ${\ff-(d+\Delta)}$ until an $\e_0$-neighbourhood of the  fixed point 
				where $aa$ and $BB$ coexist. The rate of growth corresponds to the invasion 
				fitness of $aa$ in the resident population $BB$, see \eqv(unfit.1). Note that, due to 
				Assumption C2, this rate is much larger than the invasion rate of $BB$ into $AA$. 
				That is why the fourth phase looks very steep on Figure \ref{detsys}, see the 
				stretched version on Figure \ref{zoom-in}. This phase lasts a time 
				$T_3=\OO(|\log\e |)$.\\
				
				\item[\textbf{Phase 4.}] 
				The Jacobian matrix of the field \eqref{dyn-syst} at the fixed point $p_{aB}$ has two zero, and 4 negative eigenvalues.
				 $p_{aB}$ is thus a non-hyperbolic equilibrium point of the system and linearisation fails to determine its stability properties. As in the case of the fixation of $AA$, one needs to analyse the quadratic 
				 approximation to show stability of this fixpoint. This is shown in 
				 \cite{BovCoqNeu18} to be true for $\eta $ small and $b$ large. 			 
				For higher values of $\eta$, numerical solutions show that the system converges to a fixed point where the 6 populations co-exist, but we do not prove this.
	\end{enumerate}

\begin{figure}[t]
	\includegraphics[width=.9\textwidth]{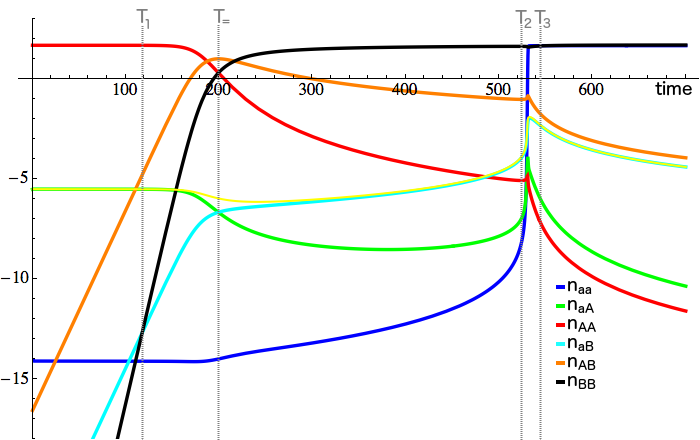}
	\caption{Numerical solution of the deterministic system for $\eta=0.02$, logplot. 
	}
	\label{detsys}
\end{figure}

The details of the proof are quite tedious and will not be given here. They can be found in 
the original paper \cite{BovCoqNeu18}.

\chapter{Phenotypic plasticity}
\newcommand{\frakn}{{n}}

\begin{chapquote}
{Thomas Malthus, \emph{Essay on the Principle of Population}}
{\frakfamily\fraklines {
The theory, on which the truth of this position depends,\\
appears to me so extremely clear; that I feel at a loss \\
to conjecture what part of it can be denied.}}
\end{chapquote}

\begin{chapquote}
{James D. Watson, \emph{Avoid boring people: Lessons from a Life in Science}}
{Avoid boring people!}
\end{chapquote}

So far, we have considered models where 
individuals are described by their phenotype, and where mutations occur at birth only. This means that we essentially equated phenotype and genotype. However, this is not always appropriate, since phenotypes may change without changes of the phenotype 
during the life cycle of an individual. 
 Phenotypic switches without mutations are certainly relevant in many if not most, biological systems (see e.g. \cite{bethedging,plasticity} and references therein). An interesting and relevant example that drew our attention to this are phenotype switches of cancer cells under treatment that lead to therapy escape. See, e.g. \cite{Baar2016stochastic,Holzel:2013ys} and references therein. A somewhat extreme form of phenotypic switches goes by the name of \emph{dormancy} and has received considerable attention recently, see \cite{Blath-Lennon} for a recent review.
This chapter is based on the paper \cite{BaaBov2018} with Martina Baar. \index{dormancy}

\section{The microscopic  model}\label{sec-micro-model}
Individuals 
are no longer only described by their phenotype, but by both 
their \emph{genotype} and their phenotype. Moreover, an individual of a given 
phenotype can express several phenotypes, and it can change 
its phenotype during its lifetime. The term \emph{phenotypic plasticity} is often 
used to describe a situation where the expression of the phenotype depends on the
environment in a deterministic way, whereas \emph{bet hedging} describes stochastic switches 
that "hedge" against future changes in the environment. 
\index{phenotypic plasticity}
\index{bet hedging}
\index{genotype}

Here, we take a broader look at a large class of models. By expanding the techniques of
 \cite{CM11} we prove  that  the microscopic process converges  on the evolutionary time scale to a generalisation of the Polymorphic Evolution Sequences (PES) 



Let   $\mathcal X$ be a finite set of the form  $\mathcal 
X=\mathcal G \times \mathcal P$, where $\mathcal G$ is the set of  
genotypes and
 $\mathcal P$ is the set of phenotypes. We call $\mathcal X$ 
the \emph{trait space} of the population. As usual, we introduce the \emph{carrying capacity} $K$. 
We model the time evolution of a population as an $\mathcal 
M^K(\mathcal X)$-valued, 
continuous time Markov  process $(\nu^K_t)_{t\geq 0}$. 
To account for the basic mechanisms of evolution and 
the phenotypic plasticity, we introduce the following parameters:
\begin{enumerate}
\setlength{\itemsep}{6pt}
\item $b(p)\in\mathbb R_+$ is the \textit{rate of birth} of an individual with phenotype $p\in\mathcal P$.
\item $d(p)\in\mathbb R_+$ is the \textit{rate of natural death} of an individual with with phenotype $p\in\mathcal P$.
\item$c(p,\tilde p)K^{-1}\in\mathbb R_+$ is the \textit{competition kernel}
		which models the competitive pressure an individual with phenotype  $p\in\mathcal P$ feels from an individual with phenotype  $\tilde p\in\mathcal P$ and is inversely proportional to the carrying capacity $K$.
\item $ s_{\text{nat.}}^g(p,\tilde p)\in\R_+$ 
	is the \emph{natural switch kernel} which models the natural switching
	 from  phenotype $p$  to $\tilde p$ of individuals with genotype $g$. 
\item $ s_{\text{ind.}}^g(p, \tilde p) (\hat p)K^{-1}\in\mathbb R_+$ is the \emph{induced switch kernel} which models the switching
	 from  phenotype $p$  to $\tilde p$ of individuals with genotype $g$ induced by an individual with phenotype $\hat p$.
\item$\mu_K m(g)$ is the \textit{probability that a mutation occurs at birth} from an individual with genotype $g\in\mathcal G$, where $\mu_K$ is a scaling parameter.
\item$M((g,p),(\tilde g, \tilde p))$ is the \textit{mutation law},  i.e.\ if a mutant is born from  an individual 
	 with trait $(g,p)$, then the mutant's trait is $(\tilde g,\tilde p)$ with probability  $M((g,p),(\tilde g,\tilde p))$. 
\end{enumerate}  
Note that most of the parameters depend on the phenotype only, and that we explicitly allow that individuals with different genotypes can express the same phenotype and conversely that individuals with the same genotype can express different phenotypes.

We assume that phenotypic switches can be enhanced by the environment, i.e. induced switching
 can only strengthen natural switching, and that phenotypic switches happen at a rate of order one, 
 i.e. they occur 
with positive probability during the lifetime of an individual. One might also be interested in cases where these rates are much smaller, i.e
tending to zero with $K$, but we do not consider this here.

\begin{assumption} \label{switch}  We assume that $ s_{\text{ind.}}^g(p, \tilde p) (\hat p)K^{-1}=0$, for all $\hat p\in \mathcal X$, whenever $ s_{\text{nat.}}^g(p,\tilde p)=0$, i.e. depending on the environment, the total switching rate can be larger or smaller but \emph{not} zero or non-zero.   
\end{assumption}
%
%
The generator  $\LL^K$ of the  population process $ \nu^K$ is a then given by 
\bea\nonumber
\left({\LL^K}\phi\right)(\nu )&=&\sum_{(g,p)\in\mathcal G\times \mathcal P}
		\left(\phi\left(\nu +\tfrac{\delta_{(g,p)}}K\right)-\phi(\nu )\right) 
			(1-\mu_K m(g))b(p)K\nu (g,p)   \\\nonumber
&+& \sum_{(g,p)\in\mathcal G\times \mathcal P} \sum_{(\tilde g, \tilde p)\in\mathcal G\times \mathcal P}
		\left(\phi\left(\nu +\tfrac{\delta_{(\tilde g,\tilde  p)}}K\right)-\phi(\nu )\right)  
		\mu_K m(g) M\big((g,p),  (\tilde g,\tilde p)\big) b(p) K\nu (g,p)\\\nonumber
&+& \sum_{(g,p)\in\mathcal G\times \mathcal P}
		\left(\phi\left(\nu -\tfrac{\delta_{(g,p)}}K\right)-\phi(\nu )\right) 
			\biggl(d(p)+\sum_{\tilde p\in\mathcal P} c(p,\tilde p)\nu  (\tilde p) 
					\biggr)K
						\nu(g,p)\\\nonumber
&+& \sum_{(g,p)\in\mathcal G\times \mathcal P}\:\sum_{\tilde p\in\mathcal P}
	\left(\phi\left(\nu +\tfrac{\delta_{(g,\tilde p)}}K-\tfrac{\delta_{(g,p)}}K\right)-\phi(\nu )\right)\nonumber\\
	&&\quad\quad \times
	\biggl(s_{\text{nat.}}^g(p,\tilde p)+\sum_{\hat p\in \mathcal P} s^g_{\text{ind.}}(p, \tilde p)(\hat p)\nu(\hat p)\biggr)\:
	 K\nu (g,p).
\eea
The only difference from the standard model is the presence of the fourth term that corresponds to the phenotypic switches. However, this term changes the dynamics substantially. In particular, the system of differential equations which arises in the large population limit without mutation ($\mu_K=0$) is not a generalised Lotka-Volterra system anymore, i.e.\ has \emph{not} the form $\dot{\frakn}= \frakn f(\frakn)$, 
where $f$ is linear in $\frakn$ (cf.\ Thm.\ \ref{det-limit} and Def.\ \ref{LVS}). 
%

\subsection{The Law of Large Numbers.}
If the  mutation rate is independent of $K$ and the initial conditions converge to a deterministic limit, then the sequence of rescaled processes, $(\nu_{}^K)_{K\geq 1}$, converges {in probability} as 
$K\uparrow  \infty$ to 
the solution of a system of ODEs. This is just as in the standard model.

\begin{theorem}\label{det-limit} Let  $\mu_K\equiv  {0}$.
	Suppose that the initial conditions converge 
	in probability to a deterministic
	limit, i.e.\ $\nu^K_0\xrightarrow{p}\nu_0$ as $K\uparrow \infty$ , where $\nu^{}_0$ is a finite measure on $\mathcal X$. Assume further that $\lim_{K\uparrow\infty}=0$. 
	Then, for every $T>0$, exists a deterministic function 
	$\xi\in C([0,T],\mathcal M_F(\mathcal X))$ such that for all $\eta>0$, 
	\be
	\lim_{K\uparrow \infty}\mathbb P\Big[\sup_{t\in[0,T]}\big|\big| \nu^K_t-\xi_t  
	\big|\big|^{}_{\text{TV}}>\eta\Big]=0 ,\qquad 
	\ee
	where $||\,.\,||{}_{\text{TV}}$ is the total variation norm.
	Moreover, let $\frakn$ be the
	unique solution to the  dynamical system 
\begin{align}
	\quad\dot{\frakn}_{(g,p)}(t)
	\:=&\;
	\frakn_{(g,p)}(t)
      	  \:\Biggl(\bigl(b(p)-d(p)
		 -\sum_{(\tilde g, \tilde p)\in \mathcal G\times\mathcal P}  c(p,\tilde p) \frakn_{ (\tilde g,\tilde p)}(t)\\\nonumber&\quad\quad
		-\sum_{\tilde p\in\mathcal P} \biggl( s_{\text{nat.}}^g(p,\tilde p)+\sum_{(\hat g,\hat p)\in\mathcal G\times \mathcal P} s^g_{\text{ind.}}(p, \tilde p)(\hat p) \frakn_{ (\hat g,\hat p)}(t)\biggr)
		\Biggr) \nonumber\\\nonumber
	&+\quad\sum_{\tilde p\in\mathcal P}  \;\;\frakn_{(g,\tilde p)}(t)\:\biggl( s_{\text{nat.}}^g(\tilde p, p)+\sum_{(\hat g,\hat p)\in\mathcal G\times \mathcal P} s^g_{\text{ind.}}(\tilde p,  p)(\hat p) \frakn_{ (\hat g,\hat p)}(t)\biggr)
	\end{align}
 with initial condition $\frakn_x(0)=\nu_0(x)$,  for all $x\in \mathcal X$.\\[0.5em]			
Then, $\xi$ is given as $\xi_t=\sum_{x\in \mathcal X}\frakn_x(t)\delta_x$.
\end{theorem}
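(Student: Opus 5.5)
We follow the scheme of the proof of Theorem \thv(lil.1) for the law of large numbers. Because $\mathcal X$ is finite, the process reduces to a Markov jump process on $(\R_+)^{\mathcal X}$. Its coordinates are $\langle\nu^K_t,\1_x\rangle$, and each jump has size $\pm K^{-1}$ in one coordinate, or a simultaneous $\mp K^{-1}$, $\pm K^{-1}$ in two coordinates for a switch. The total variation norm is equivalent to the $\ell^1$ norm on $\R^{\mathcal X}$. The proof therefore reduces to a Kurtz-type statement for density-dependent jump processes, which we carry out with the martingale tools of Chapter \ref{chapter2}.

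\emph{Step 1: moment bound.} Switching does not change $\langle\nu,\1\rangle$, and with $\mu_K=0$ the drift of $\langle\nu^K,\1\rangle$ is at most $\max_p(b(p)-d(p))\langle\nu^K,\1\rangle$. The argument of Lemma \thv(total.1) and Lemma \thv(total.7) therefore applies verbatim and gives $\sup_K\E[\sup_{t\le T}\langle\nu^K_t,\1\rangle^2]<\infty$. This needs a uniform second-moment bound on the initial condition. If only convergence in probability is assumed, we instead localise with the stopping time $\tau_R=\inf\{t:\langle\nu^K_t,\1\rangle>R\}$ and argue on $\{t\le\tau_R\}$.

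\emph{Step 2: semimartingale decomposition.} For each $x=(g,p)$ we write
\be
\langle\nu^K_t,\1_x\rangle=\langle\nu^K_0,\1_x\rangle+\int_0^t F_x(\nu^K_s)\,ds+M^{K,x}_t,
\ee
where $F$ is exactly the vector field on the right side of the ODE in the theorem. The switch terms come from the fourth line of the generator, which contributes $-1$ to coordinate $(g,p)$ and $+1$ to coordinate $(g,\tilde p)$. As in \eqv(martingale.31.1), the bracket of $M^{K,x}$ is $K^{-1}$ times an integral of rates that are quadratic in the total mass. By Step 1 and Doob's inequality, $\E[\sup_{t\le T\wedge\tau_R}|M^{K,x}_t|^2]\le C_{T,R}/K$.

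\emph{Step 3: Gronwall comparison.} $F$ is locally Lipschitz on $\R_+^{\mathcal X}$, since its entries are at most quadratic polynomials. The ODE has a unique solution, and that solution stays bounded on $[0,T]$: it is nonnegative and its total mass obeys the same bound \eqv(tot.2) as in Lemma \thv(lln.7). Choose $R$ larger than $\sup_{t\le T}\|\frakn(t)\|_1+1$, and let $L_R$ be the Lipschitz constant of $F$ on the ball of radius $R$. For $t\le\tau_R$ we get
\be
\|\nu^K_t-\xi_t\|\le\|\nu^K_0-\nu_0\|+L_R\int_0^t\|\nu^K_s-\xi_s\|\,ds+\sum_x\sup_{s\le t}|M^{K,x}_s|,
\ee
so by Gronwall $\sup_{t\le T\wedge\tau_R}\|\nu^K_t-\xi_t\|\le e^{L_RT}\big(\|\nu^K_0-\nu_0\|+\sum_x\sup|M^{K,x}|\big)$. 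The right side tends to $0$ in probability. On the event where this quantity is below $1$, the process cannot reach the level $R$, so $\tau_R>T$. This gives the stated uniform convergence in probability, and no tightness argument is needed.

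The only real obstacle is the localisation: the induced-switch and competition terms make $F$ non-globally Lipschitz, and the hypotheses of the theorem may not provide uniform moments. The stopping argument above handles this. The rest is bookkeeping of the switch terms, which must be checked to reproduce the gain and loss terms of the ODE.
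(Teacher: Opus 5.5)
Your argument is correct, but it takes a different route from the one the paper points to. The paper gives no separate proof. It only says the result holds ``just as in the standard model'', i.e.\ by the proof of Theorem~\ref{lil.1}. That proof consists of:
\begin{itemize}
\item a uniform second-moment bound;
\item identification of limit points via the $O(1/K)$ bracket;
\item uniqueness of the limit equation by Gronwall in total variation;
\item tightness via the Aldous criterion.
\end{itemize}

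You instead use that $\mathcal X$ is finite. The drift $F$ is then a quadratic, hence locally Lipschitz, vector field on $\mathbb R^{\mathcal X}$ with the $\ell^1$ (i.e.\ total variation) norm. You compare the process pathwise with the ODE solution by Gronwall, localised at $\tau_R$. This is Kurtz's argument (cf.\ \cite[Ch.~11.2]{EthKur1986}).

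Your route gains several things:
\begin{itemize}
\item It needs neither tightness nor identification of limit points.
\item Uniform-in-time convergence in probability comes out directly, rather than via convergence in law to a deterministic continuous limit. You also get the quantitative bound $e^{L_RT}\bigl(\|\nu^K_0-\nu_0\|+\text{a term of order }K^{-1/2}\text{ in }L^2\bigr)$.
\item The localisation removes the assumption $\sup_K\mathbb E[\langle\nu^K_0,\mathbf 1\rangle^2]<\infty$. Theorem~\ref{lil.1} needs this assumption, but the present statement does not provide it.
\end{itemize}
In particular, your Step~1 is superfluous. The bound in Step~2 already holds because the bracket of $M^{K,x}$ up to $T\wedge\tau_R$ is deterministically at most $C_RT/K$. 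The paper's route, on the other hand, carries over to a general Polish trait space with the weak topology, where a pathwise comparison in total variation is not available.

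Two points you assert without proof should be justified:
\begin{itemize}
\item Nonnegativity of the ODE solution. The field is quasi-positive: $F_x(n)\ge 0$ whenever $n\ge 0$ and $n_x=0$, since $s_{\mathrm{nat}},s_{\mathrm{ind}}\ge 0$.
\item The total-mass bound used to choose $R$. It holds because the gain and loss switching terms cancel in $\frac{d}{dt}\sum_x n_x$.
\end{itemize}
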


\section{The interplay between rare mutations and fast switches.}\label{sec-PES}
 As in Chapter \ref{chapter5}, 
we place ourselves under the assumptions 
\begin{equation}\label{conditions}
\forall V>0, \qquad \exp(-VK)\ll \mu_K \ll \frac{1}{K\ln K}, \qquad \text{as } K\uparrow\infty.
\end{equation}

The key element in the proof of the convergence to the PES 
 is a precise analysis of how a mutant population fixates. 
A crucial assumption  is that  the 
Lotka-Volterra systems that describe the large population limit 
always have a unique stable 
fixed point $ \bar \frakn$. 

  Given a population in a stable equilibrium  that populates a certain set
  of traits, say $M\subset \mathcal X$, the invasion fitness $f(x,M)$ is the growth rate of a population consisting of a single individual 
  with trait $x\not\in M$ in the presence of the equilibrium population $ \bar\frakn$ on $M$. 
%
Since switches between phenotypes associated with the same genotype 
happen at times of order one, the growth rate of the initial mutant 
phenotype alone does not determine the probability of fixation. 
We have to adapt the notion of invasion fitness to the case where fast phenotypic switches are present.

\subsection{ Lotka-Volterra systems with phenotypic plasticity.}

Assume that the initial condition is supported on $d$ traits,
\be
(\mathbf g,\mathbf p)=((g_1,p_1),\ldots, 
(g_d,p_d))\in(\mathcal G\times\mathcal P)^d.
\ee 
Since $\mu\equiv 0$, no new genotype can appear in the population process $\nu^K$. 
Moreover, not every genotype can express every phenotype.

For all $g\in \mathcal G$, let $X^g$ be  a  stationary discrete-time Markov chain with state space  
$\mathcal P$ and  transition probabilities  (Here we may take $s^g=s_{nat}^g$)
\be
\P\left(^g_i=\tilde p \:|\: X^g_{i-1}=p\right)=\frac{s^g(p,\tilde p)}{\sum_{\hat p \in \PP}s^g(p, \hat p)}, \quad\text{ if }\sum_{\hat p \in \mathcal P}s^g(p, \hat p)>0
\ee
and 
\be
\P\left(X^g_i= p \:|\: X^g_{i-1}=p\right)=1, \quad\text{ if }\sum_{\hat p \in \PP}s^g(p, \hat p)=0.
\ee
The Markov chains $\{X^g, g\in\mathcal G\}$ contain only partial 
information on the switching behaviour of the process $\nu^K$, but this is the key information needed later.

In the sequel, we work under the following simplifying assumption:
\begin{assumption}\label{recurrence} For all $g\in \mathcal G$, all communicating classes of $X^g$ are recurrent. 
\end{assumption}

 We denote the communicating class 
associated with $(g,p)\in\mathcal G\times \mathcal P$ by $[p]_g$. This is 
the communicating class of $X^g$ which contains $p$, i.e.\ $p$ can be 
seen as a representative of the class.
\begin{figure}[h!]
	\centering
	 \includegraphics[width=0.5\textwidth]{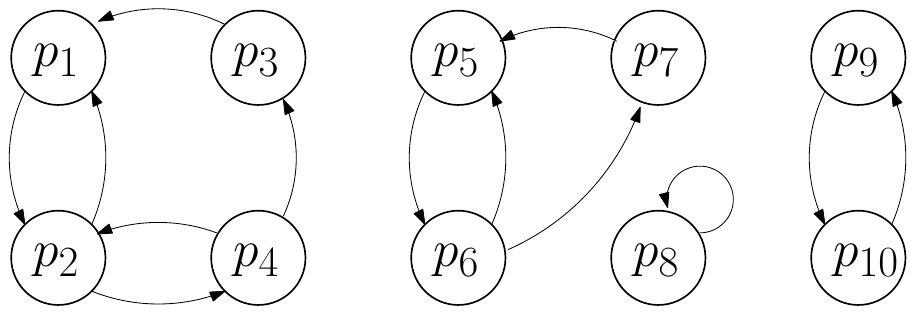}
	 \caption{\small Example of a Markov chain $X^g$. Here, $\mathcal 
	 P=\{p_1,\ldots,p_{10}\}$ and 
	 $X^g$ has four communicating classes:
	 $\{p_1,p_2,p_3,p_4\},\{p_5,p_6,p_7\},\{p_8\},\{p_9,p_{10}\}$. The 
	 class $\{p_8\}$ has only one element, i.e.\ $\sum_{i=1}^{10}
	 s^g(p_8,p_i)=0$ in this example. }	  
\end{figure}
By Assumption \ref{switch},
this ensures that if we start with a large enough population consisting only 
of individuals carrying the same trait $(g,p)$, then, after a time of order $1$, all 
phenotypes in the class $[p]_g$ will be present in the population, but 
none of the other classes. 
Thus, $[p]_{g}$ is the set of phenotypes which are reachable in the 
Markov chain $X^g$ with $X^g_0=p$ and
\emph{the set of traits which can appear in the population process $
\nu^K$} is given by
\be
\mathcal X_{(\mathbf g,\mathbf p)}\equiv \bigcup_{i=1}^d\{g_i\}\times 
[p_i]_{g_i}.
\ee
With this notation, $\xi$ is given by $\xi(t)=\sum_{x\in \mathcal X_{(\mathbf g,\mathbf p)}}  {\frakn}_{x} (t)\delta_{x}$, where $ {\frakn}$ is the solution of the \emph{competitive Lotka-Volterra system with phenotypic plasticity} defined below. 
\index{Lotka-Volterra system with phenotypic plasticity}

\begin{definition} \label{LVS}
For any $ (\mathbf g,\mathbf p)\in(\mathcal G\times\mathcal P)^d$, we denote by $LVS(d,  (\mathbf g,\mathbf p))$
the  \emph{competitive Lotka-Volterra system with phenotypic plasticity}. This is an $|\mathcal X_{(\mathbf g,\mathbf p)}|$-dimensional system of ODEs given 
 by  
\begin{align}\label{deterministic-system-initial}
\dot{  {\frakn}}_{(g,p)}
=\:& \frakn_{(g,p)} 
	\biggl( b(p)-d(p)
	-\!\!\sum_{(\tilde g,\tilde p)\in \mathcal X_{(\mathbf g,\mathbf p)}}\!\!c(p,\tilde p)\frakn_{(\tilde g,\tilde p)}
\\
&	-\!\! \sum_{\tilde p \in[p]_{g}}
		\biggl( s_{\text{nat.}}^g(p,\tilde p)+\!\!\sum_{(\hat g,\hat p)\in\mathcal X_{(\mathbf g,\mathbf p)}} 
				\!\!s^g_{\text{ind.}}(p, \tilde p)(\hat p) \frakn_{ (\hat g,\hat p)}(t)
		\biggr)
	 \biggr)\nonumber\\\nonumber
    &+\!\! \sum_{\tilde p \in[p]_{g}}\frakn_{(g,\tilde p)}
    		\biggl( s_{\text{nat.}}^g(\tilde p, p) +\!\!\sum_{(\hat g,\hat p)\in\mathcal X_{(\mathbf g,\mathbf p)}}
				 \!\!s^g_{\text{ind.}}(\tilde p,  p)(\hat p) \frakn_{ (\hat g,\hat p)}(t)
		\biggr), \quad (g, p)\in\mathcal X_{(\mathbf g,\mathbf p)}.
\end{align}
\end{definition} 

We introduce the notation of coexisting traits in this context. \index{coexisting traits}
\begin{definition} 
For any $d\geq 2$, we say that the distinct traits  $(g_1,p_1),\ldots ,(g_d, p_d)$ \emph{coexist} if the system $LVS(d,  (\mathbf g,\mathbf p))$ has a unique non-trivial equilibrium $\bar{\frakn} (\mathbf g,\mathbf p)\in (0,\infty)^{|\mathcal X_{(\mathbf g,\mathbf p)}|}$ which is locally strictly stable,  that is, all eigenvalues of the Jacobian matrix of the system $LVS(d,  (\mathbf g,\mathbf p))$ at $\bar{\frakn} (\mathbf g,\mathbf p)$ have strictly negative real parts. 
\end{definition}

Note that if $(g_1,p_1),\ldots ,(g_d, p_d)$ coexist, then all traits of $\XX_{(\mathbf g,\mathbf p)}$ coexist and the equilibrium $\bar{\frakn} (\mathbf g,\mathbf p)$ is asymptotically stable.
We will prove later that if these traits  coexist, then the \emph{invasion probability} of a mutant trait $(\tilde g, \tilde p)$ which appears in the resident population $\XX_{(\mathbf g,\mathbf p)}$ close to  $\bar{\frakn} (\mathbf g,\mathbf p)$ is given by the function
\be\label{invasion probability} 
1-q_{{(\mathbf g,\mathbf p)}}(\tilde g,\tilde p),
\ee
 where $q_{(\mathbf g,\mathbf p)}(\tilde g,\tilde p)$ is given as  follows: 
Let us denote the elements of $[\tilde p]_{\tilde g}$ by $\tilde p_1,\tilde p_2,\ldots,\tilde p_{ |[\tilde p]_{\tilde g}|}$. 
Then,  $q_{(\mathbf g,\mathbf p)}(\tilde g,\tilde p)$ is the  $\tilde p$-component of the smallest nonnegative solution of 
\be\mathbf 
 u(\mathbf y)=\mathbf 0,
 \ee
 where $\mathbf  u$ is a map from $\R^{ |[\tilde p]_{\tilde g}|}$ to $\R^{ |[\tilde p]_{\tilde g}|} $ defined, for all $i\in \{1,\ldots ,|[\tilde p]_{\tilde g}|\}$, by
\begin{align}
u_i (\mathbf y)\equiv& 
	\; b(\tilde p_i)\:y_i^2
	+\sum_{j=1}^{ |[\tilde p]_{\tilde g}|}
		 \biggl( s_{\text{nat.}}^{\tilde g}(\tilde p_i, \tilde p_j) +\!\!\sum_{(g, p)\in\mathcal X_{(\mathbf g,\mathbf p)}}
				 \!\!s^{\tilde g}_{\text{ind.}}(\tilde p_i, \tilde p_j)( p) \bar \frakn_{ ( g, p)}
		\biggr)\: y_j
	\\\nonumber&
	+d(\tilde p_i) 
	+ \sum_{(g,p)\in\XX_{(\mathbf g,\mathbf p)}} 
		c(\tilde p_i, p)\bar{\frakn}_{(g,p)}(\mathbf g,\mathbf p) \\\nonumber
	&-\:\Bigg( b(\tilde p_i)+\sum_{j=1}^{ |[\tilde p]_{\tilde g}|}
		 \biggl( s_{\text{nat.}}^{\tilde g}(\tilde p_i, \tilde p_j) +\!\!\sum_{(g, p)\in\mathcal X_{(\mathbf g,\mathbf p)}}
				 \!\!s^{\tilde g}_{\text{ind.}}(\tilde p_i, \tilde p_j)( p) \bar \frakn_{ (g, p)}
		\biggr)\\\nonumber
		&+d(\tilde p_i) +\!\! \sum_{(g,p)\in\XX_{(\mathbf g,\mathbf p)}} \!\!c(\tilde p_i, p)\bar{\frakn}_{(g,p)}(\mathbf g,\mathbf p) \Bigg)\:y_i.
\end{align}
In fact, $(1-q_{{(\mathbf g,\mathbf p)}}(\tilde g,\tilde p))$ is the (asymptotic as $K\uparrow\infty$) probability that a single mutant reaches a population size of order $K$ in a resident population with traits $\mathcal X_{(\mathbf g,\mathbf p)}$. This is shown by approximating the mutant population with multi-type branching processes.
 The function $(1-q_{(\mathbf g,\mathbf p)}(\tilde g,\tilde p))$ plays the same role as the function
$[f(y;\mathbf x)]_+ / b(y)$ in the standard case.

Note that $\mathbf u$ is the analog to the function $g(s)-s$ in the standard branching process theory 
(see Chapter 2, Section 6), for a multi-state branching process.

The behaviour of the LV system can be very complex. For simplicity, we make assumptions that exclude 
  cycles, unstable equilibria, or chaotic dynamics in the  
deterministic system.

\begin{assumption}\label{conv_to_fixedpoint-2}
For any given traits $(g_1,p_1),\ldots ,(g_d, p_d)\in \mathcal G\times \mathcal P$ 
that coexist and, for any mutant trait $(\tilde g,\tilde p)\in\mathcal X\setminus \mathcal X_{(\mathbf g,\mathbf p)}$  
such that $ 1-q_{{(\mathbf g,\mathbf p)}}(\tilde g,\tilde p)>0$, there exists a neighbourhood $U\subset \mathbb R^{|\mathcal X_{(\mathbf g,\mathbf p)}|+ |[\tilde p]_{\tilde g}|}$ of  $(\bar{\frakn}(\mathbf g,\mathbf p),0,\ldots,0)$
 such that all solutions of  $LVS(d+1,  ((\mathbf g,\mathbf p),(\tilde g,\tilde p)))$ 
with initial condition in $U\cap (0,\infty)^{|\mathcal X_{(\mathbf g,\mathbf p)}|+ |[\tilde p]_{\tilde g}|}$
converge as $t \uparrow  \infty$ to a unique locally strictly stable
 equilibrium in $\mathbb R^{|\mathcal X_{(\mathbf g,\mathbf p)}|+ |[\tilde p]_{\tilde g}|}$ denoted by $\frakn^{*}((\mathbf g,\mathbf p),(\tilde g,\tilde p))$.
\end{assumption}

We write $\frakn^{*}$ and not $\bar{\frakn}$ to emphasise that 
some components of $\frakn^{*}$ can be zero. We use the 
shorthand notation
$((\mathbf g,\mathbf p),(\tilde g,\tilde p))$, for $((g_1,p_1),\ldots ,(g_d, p_d),(\tilde g,\tilde p))$. 
Assumption \ref{conv_to_fixedpoint}  does not have to hold for all traits in $\mathcal X\setminus \mathcal X_{(\mathbf g,\mathbf p)}$, but  only for those traits $(\tilde g,\tilde p)$ which can appear in the resident population by mutation, i.e.\ only if $\sum_{(g,p)\in\mathcal X_{(\mathbf g,\mathbf p)}}m(g)M((g,p),(\tilde g,\tilde p))$ is positive.

\subsection{Convergence to the generalised PES. } \label{sec-Thm}
The following theorem establishes the convergence to a PES. 

\begin{theorem}\label{PESwP} Suppose that Assumptions \ref{switch}, \ref{recurrence} and  \ref{conv_to_fixedpoint-2} hold  and that the mutation rates satisfy \eqv(conditions).
Assume that $(g_1,p_1),\ldots, (g_d,p_d)$ are coexisting traits and that the initial conditions have support $\mathcal X_{(\mathbf g,\mathbf p)}$ and 
converge almost surely to $\bar{\frakn} (\mathbf g,\mathbf p)$. 
Then, the sequence of the rescaled processes $(\nu^{K}_{ t/ K\mu_K})_{t\geq 0}$, generated by $ {\LL}^K$ with 
initial state $\nu_0^K$, converges in the sense of finite-dimensional distributions to the measure-valued pure 
jump process $\Lambda$ defined as follows:
$\Lambda_0=\sum_{(g,p)\in\mathcal X_{(\mathbf g,\mathbf p)}} \bar{\frakn}_{(g,p)} (\mathbf g,\mathbf p) \delta_{(g,p)}$
 and the process $\:\Lambda\:$ jumps, for all $(\hat g,\hat p)\in \mathcal X_{(\mathbf g,\mathbf p)}$,  from
 \be 
 \sum_{(g,p)\in\mathcal X_{(\mathbf g,\mathbf p)}} \bar{\frakn}_{(g,p)} (\mathbf g,\mathbf p) \delta_{(g,p)}
\quad \text{ to } \quad
  \sum_{(g,p)\in\mathcal X_{((\mathbf g,\mathbf p),(\tilde g,\tilde p))}} \frakn^*_{(g,p)} ((\mathbf g,\mathbf p),(\tilde g,\tilde p)) \delta_{(g,p)}
  \ee
  with an infinitesimal rate
  \be
  m(\hat g)b(\hat p) \bar{\frakn}_{(\hat g,\hat p)} (\mathbf g,\mathbf p) (1-q_{{(\mathbf g,\mathbf p)}}(\tilde g,\tilde p))M((\hat g,\hat p), (\tilde g,\tilde p)).
  \ee
\end{theorem}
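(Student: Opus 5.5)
The plan is to adapt the proof of Theorem \ref{PES.0} (Champagnat–Méléard) to the present setting. The outline is the same: stability of the resident, branching approximation of the mutant, LLN to the new equilibrium, and extinction of the losers. The only change is that the scalar branching process is replaced by a multi-type one, indexed by the communicating class $[\tilde p]_{\tilde g}$. Markov property and induction over successive mutation events reduce the claim to one step. Started near $\bar{\frakn}(\mathbf g,\mathbf p)$, we must show three things.
\begin{itemize}
\item[(a)] The first mutation time, multiplied by $K\mu_K$, is asymptotically exponential with rate $\sum_{(\hat g,\hat p)} m(\hat g)b(\hat p)\bar{\frakn}_{(\hat g,\hat p)}$.
\item[(b)] The mutant trait is $(\tilde g,\tilde p)$ with probability proportional to $M((\hat g,\hat p),(\tilde g,\tilde p))$, and it invades with probability $1-q_{(\mathbf g,\mathbf p)}(\tilde g,\tilde p)$.
\item[(c)] On invasion, the population reaches an $\varepsilon$-neighbourhood of $\frakn^*$ within time $O(\ln K)$, and the losing traits die out within another $O(\ln K)$. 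Since $\mu_K\ll 1/(K\ln K)$, this happens before the next mutation.
\end{itemize}
A thinning argument, as in Lemma \thv(mutant.1), then gives the jump rates, and finite-dimensional convergence follows.

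\textbf{Step 1: control of the resident.} We need the analogue of Lemma \thv(ldp.1). The mutation-free resident process is a density-dependent Markov chain on $\mathcal X_{(\mathbf g,\mathbf p)}$ with Lipschitz rates, since the switching terms are linear or quadratic. By Theorem \ref{det-limit} its fluid limit is $LVS(d,(\mathbf g,\mathbf p))$, and by the coexistence assumption it has a locally strictly stable equilibrium. The Freidlin–Wentzell exit estimates (Dupuis–Ellis LDP for such chains) then give an exit time from an $\varepsilon$-neighbourhood larger than $\eee^{VK}$. The lower bound $\mu_K\gg \eee^{-VK}$ ensures that a mutation happens before this exit. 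The same estimate must hold when the rates are perturbed by $O(\varepsilon)$, to account for the small mutant population. Assumption \ref{switch} is used here to guarantee that no new phenotypes become reachable through induced switches, so the support stays $\mathcal X_{(\mathbf g,\mathbf p)}$.

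\textbf{Step 2: invasion phase (main obstacle).} While the total mutant mass stays below $\varepsilon$, we couple the mutant vector $(N_{(\tilde g,\tilde p_i)})_i$ from above and below with two continuous-time multi-type branching processes $Z^{\pm}$. Each individual of type $i$:
\begin{itemize}
\item gives birth at rate $b(\tilde p_i)$;
\item dies at rate $d(\tilde p_i)+\sum c(\tilde p_i,p)\bar{\frakn}_{(g,p)}\pm C\varepsilon$;
\item switches to type $j$ at rate $s^{\tilde g}_{\text{nat.}}(\tilde p_i,\tilde p_j)+\sum s^{\tilde g}_{\text{ind.}}(\tilde p_i,\tilde p_j)(p)\bar{\frakn}_{(g,p)}\pm C\varepsilon$.
\end{itemize}
The coupling must be monotone, and this is the delicate point. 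Switching is not an increment of the total but a move between types, so domination has to be set up coordinatewise. I would instead couple the total size together with a type label, working individual by individual in a Poisson-point-process construction. Within a communicating class the switching is irreducible (Assumption \ref{recurrence}), so $Z^\pm$ is positively regular.

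Standard multi-type theory then gives the following. The extinction probability vector is the smallest nonnegative root of the fixed-point equation of the offspring generating function. Multiplying that equation by the total jump rate turns it into exactly $\mathbf u(\mathbf y)=\mathbf 0$, so the extinction probability started from type $\tilde p$ is $q^\pm$. Moreover, on survival, growth is exponential at the Perron root, so the level $\varepsilon K$ is reached in time $O(\ln K)$.

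It then remains to show that $q^\pm\to q_{(\mathbf g,\mathbf p)}(\tilde g,\tilde p)$ as $\varepsilon\to0$. This follows from continuity of the smallest root in the parameters. In the supercritical case this uses that the Jacobian is nondegenerate there; in the subcritical case both roots are the vector $\mathbf 1$.

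\textbf{Step 3: after invasion.} Once the mutant mass reaches $\varepsilon$, all phenotypes in $[\tilde p]_{\tilde g}$ are present at order $\varepsilon$, because irreducible switching mixes them within time $O(1)$. Theorem \ref{det-limit} on a finite time horizon, combined with Assumption \ref{conv_to_fixedpoint-2}, brings the process to within $\varepsilon$ of $\frakn^*$. Traits whose component of $\frakn^*$ is zero form, near $\frakn^*$, again a subcritical multi-type branching process, by the local strict stability of $\frakn^*$. Lemma \thv(prop:BP) in its multi-type form then gives extinction within $O(\ln K)$. After that, Step 1 applies afresh with the new support, which completes the induction.
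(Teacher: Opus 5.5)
Your architecture is the same as the paper's. You control the resident by exit-time estimates, approximate the mutant class $[\tilde p]_{\tilde g}$ by a multi-type branching process whose extinction vector is the smallest root of $\mathbf u$, use Theorem~\ref{det-limit} with Assumption~\ref{conv_to_fixedpoint-2} for the passage to $n^*$, and kill the losing classes by subcritical comparison, all within $O(\ln K)\ll 1/(K\mu_K)$.

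\textbf{The gap.} It sits in Step~2, at exactly the point you flag and then leave open: the bounding processes $Z^\pm$ you write down cannot be coupled monotonically to the mutant population. Write $S_{ij}=S_{(\mathbf g,\mathbf p)}(\tilde p_i,\tilde p_j)$. The true switching rate $i\to j$ of a mutant depends on the current state and is only known to lie somewhere in $[S_{ij}-C\varepsilon,S_{ij}+C\varepsilon]$. Whenever one of two coupled individuals switches and its partner does not, one coordinate goes up and another goes down. Hence no fixed switching rate gives domination in either direction. For instance, $Z^+$ with rate $S_{ij}+C\varepsilon$ can lose $i$-individuals that the true process keeps. With rate $S_{ij}-C\varepsilon$ it misses $j$-individuals that the true process gains, and symmetrically for $Z^-$. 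Your fallback, an individual-by-individual coupling with type labels, is the right framework. But it runs into the same problem at the first mismatched switch, and the total size alone is not autonomous, because its rates depend on the phenotypic composition. You need a rule for mismatches.

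\textbf{The missing device.} Give \emph{both} bounding processes the same reduced switching rate $S_{ij}-C\varepsilon$. This is the idea behind the paper's processes $X^{1,\varepsilon},X^{2,\varepsilon}$, whose off-diagonal entries are the common rates $S_{ij}-\bar s_{\text{ind.}}M\varepsilon$. The reduced rate lies below the true rate as long as the resident stays in its $M\varepsilon$-neighbourhood and the mutant mass is below $\varepsilon$.
\begin{enumerate}
\item Every switch in a bounding process is then matched by a switch of its partner in the true process.
\item Only true switches can be unmatched, at rate at most $2C\varepsilon$ per target phenotype.
\item In the lower process, an unmatched switch kills the partner.
\item In the upper process, the partner keeps its phenotype and a new individual is created that follows the true one.
\end{enumerate}
All $O(\varepsilon)$ errors thus move into death and reproduction terms, and $Z^-\le N\le Z^+$ holds coordinatewise. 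Both generators remain $O(\varepsilon)$-perturbations of $\mathbf A_{(\tilde g,\tilde p)}$ and are irreducible for small $\varepsilon$. Your continuity argument for $q^\pm\to q_{(\mathbf g,\mathbf p)}(\tilde g,\tilde p)$ and the $O(\ln K)$ growth bound then apply as you describe.

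The same construction is needed in Step~3, where the losing classes must be dominated from above. Those upper processes stay subcritical because strict stability of $n^*$ gives a strictly negative Perron root for each zero class, with a margin that survives an $O(\varepsilon)$ perturbation.
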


 We call $\Lambda$  a \emph{Polymorphic Evolution Sequence with Phenotypic Plasticity (PESP)}.

The strategy of the proof is similar to 
the one given in \cite{CM11}.
The population is either in a stable phase or in an invasion phase. Until the first mutant appears, the 
population is in a stable phase, i.e.\ the population stays close to a given equilibrium.
From the first mutation event until the population reaches again a stable state, the population is in an
 invasion phase. In fact, the mutant either survives and the population quickly reaches a new stable state (where 
 the mutant trait is present) or the mutant goes extinct and the population is again in the old stable state. After 
 this, the population is again in a stable phase until the next mutation, etc..

Note that the invasion phases are short $(O(\ln(K)))$ compared to the stable phase $(O(1/\mu_K K))$. Since we study the process on the time scale $1/K\mu_K$, the limit process is a pure jump process which jumps from one stable state to another.
\\[0.5em]
\emph{The stable phase:} Fix $\epsilon> 0$.  Let $\mathcal X_{(\mathbf g,\mathbf p)}$ be the support of the initial conditions. 
For large $K$, the population process 
 $\nu^K$ is, with high probability, still in a small neighbourhood of the equilibrium $\bar{\frakn}(\mathbf g,\mathbf 
 p)$ when the first mutant appears. 
In fact, using large deviation results on the problem of exit from a domain (cf.\ \cite{FW84}), we obtain that 
there exists a constant $M>0$ such that the first time  $\nu^K$ leaves the $M\epsilon$-neighbourhood of 
$\bar{\frakn}(\mathbf g,\mathbf p)$  is bigger than $\exp(V K)$, for some $V > 0$, with high probability.
Thus, until this stopping time, mutations born from individuals with trait $x\in\mathcal X_{(\mathbf g,\mathbf p)}$  appear with a rate which is close to 
\be
 \mu_K m(x)b(x) K\bar{\frakn}_x(\mathbf g,\mathbf p).
\ee 
The 
condition $ 1/( {K\mu_K})\ll \exp (V K)$, for all $V>0$, in  (\ref{Conv_Cond})
  ensures that the first mutation appears before this exit time.
%

\emph{The invasion phase:} We divide the invasion of a given mutant trait 
$(\tilde g,\tilde p)$ into \emph{three steps}.
In the first step, from a mutation event until the mutant population goes extinct or the mutant density reaches the value $\e$, the number of mutant individuals is
small 
As in Chapter \ref{chapter5}, 
 the resident population stays close to its equilibrium density $\bar{\frakn}(\mathbf g,\mathbf p)$ during this step. Using similar arguments as  Champagnat et al.\ \cite{Cha06,CM11}, we prove that
 the mutant population  is well approximated  by a $|[\tilde p]_{\tilde g}|$-type
branching process $Z$, as long as the mutant population 
has less than $\epsilon K$ individuals. More precisely, let  us denote the elements of $[\tilde p]_{\tilde g}$ by $\tilde p_1,\ldots,\tilde p_{|[\tilde p]_{\tilde g}|} $, then,
for each  $1\leq i\leq |[\tilde p]_{\tilde g}|$, each individual in $Z$ (carrying trait $(\tilde g,\tilde p_i)$) undergoes
\begin{enumerate}
\setlength{\itemsep}{6pt}
 \item {birth (without mutation) with rate $b(\tilde p_i)$,}
\item {death with rate $ d(\tilde p_i)  + \sum_{(g,p)\in\mathcal X_{(\mathbf g,\mathbf p)}}c(\tilde p_i, p) 
\bar{\frakn}_{(g,p)}(\mathbf g,\mathbf p)$ and}
\item{ switch to  $\tilde p_j$ with rate $s_{\text{nat.}}^{\tilde g}(\tilde p_i, \tilde p_j) +\!\!\sum_{( g, p)\in\mathcal X_{(\mathbf g,\mathbf p)}}
				 \!\!s^{\tilde g}_{\text{ind.}}(\tilde p_i, \tilde p_j)( p) \bar \frakn_{ ( g, p)}$, for all $1\leq j\leq |[\tilde p]_{\tilde g}|$.}
\end{enumerate} 
This continuous-time multi-type branching process is supercritical if and only if the largest eigenvalue of its infinitesimal generator
 $\lambda_{\text{max}}$  is larger than zero. 
Hence, 
the mutant invades with positive probability if and only if $\lambda_{\text{max}}>0$. Moreover, the probability that the density of the mutant's genotype, $\nu^K(\tilde g)$, reaches $\epsilon$ at some time $t_1$ is close to the probability that the multi-type branching process reaches the total mass $\epsilon K$, which converges as $K\uparrow \infty$ to $(1-q_{{(\mathbf g,\mathbf p)}}(\tilde g,\tilde p))$. 

In the second step, we obtain as a consequence of  Theorem \ref{det-limit} that once the mutant density has reached $\epsilon$, for large $K$, the stochastic process $\nu^K$ can be approximated on any finite time interval by the solution of $LVS(d+1,  ((\mathbf g,\mathbf p),(\tilde g,\tilde p)))$
with a given initial state. By Assumption \ref{conv_to_fixedpoint}, this solution reaches the $\epsilon$-neighbourhood of its new equilibrium $\frakn^*((\mathbf g,\mathbf p),(\tilde g,\tilde p))$ in finite time.  Therefore, for large $K$, the stochastic process $\nu^K$ 
also reaches with high probability
the $\epsilon$-neighbourhood of $\frakn^*((\mathbf g,\mathbf p),(\tilde g,\tilde p))$ at some finite ($K$ independent) time $t_2$.

In the third step, one uses similar arguments as in the first. Since $\frakn^*((\mathbf g,\mathbf p),(\tilde g,\tilde p))$ is a
strongly locally stable equilibrium (Ass. \ref{conv_to_fixedpoint}), the 
stochastic process $\nu^K_t$ stays close $\frakn^*((\mathbf g,\mathbf p),
(\tilde g,\tilde p))$ and we can approximate the densities of the traits $
(g,p)\in\mathcal X_{((\mathbf g,\mathbf p),(\tilde g ,\tilde p))}$
 with $n_{(g,p)}^*((\mathbf g,\mathbf p),(\tilde g,\tilde p)) = 0$ by $|[p]_g|$-
 type branching 
 processes which are subcritical and therefore become extinct, a.s.. 
 \\[0.5em]
The duration of the first and third steps is 
proportional to $\ln(K)$, whereas 
the time of the second step is bounded.
Thus, the second inequality in \eqv(Conv_Cond) guarantees that, with 
high probability, 
the three steps of invasion are completed before a new mutation occurs.
 After the last step, the process is again back in a stable phase, but with a 
 possibly different resident population, until the next mutation happens.

 \section{Multi-type branching processes and the invasion of the mutant}
 
 The main novel aspect in the proof is the treatment of the invasion phase. Due to the presence of phenotypic 
 switches, the invading mutant has to be approximated by a multi-type branching process. 
   We collect some properties about multi-type continuous-time branching 
processes. Most of these can be found in \cite{AN} or \cite{SEW}. The limit 
theorems we need in the sequel were first obtained  by Kesten and Stigum 
\cite{KesSte1,KesSte2,KesSte3} in the discrete-time case and by Athreya \cite{A_MTBP} in the 
continuous-time case. 

Let $Z(t)$ be a  $k$-dimensional continuous-time branching process. Assume that $Z(t)$ is 
non-singular and that the first moments exist.
Then, the  so-called \emph{mean matrix}  $\mathbf M(t)$ of  $Z(t)$ is the $k\times k$ matrix with elements 
\be
 m_{ij}(t)\equiv \mathbb E[Z_j(t)|Z(0)=\mathbf e_i ], \quad 1\leq i,j\leq k,
\ee
where $\mathbf e_i$ is the $i$-th unit vector in $\mathbb R^k$.
It is well known (cf.\  \cite{AN} p.\ 202) that there exists a matrix $\mathbf A$, called the infinitesimal generator of the semigroup, such that for all $t\in\R_+$,
\be\label{exp_dar}
\mathbf M(t)\equiv \exp(\mathbf A t).
\ee
If we denote by $\M$ the transition matrix of the corresponding discrete-time processes and by $\mathbf R$ the diagonal matrix 
whose diagonal entries are the branching rates of the different types, we can represent the generator $\mathbf A$ as
(see \eqv(disc-gen) in Chapter \ref{chapter2})
\be\Eq(inf.gen)
\mathbf A= \mathbf R ( \mathbf M- \mathbf I).
\ee

Under the basic assumption of \emph{positive regularity}, i.e.  that there exists a time $t_0$ such that 
$\mathbf M(t_0)$ has strictly  positive entries, the Perron-Frobenius theory asserts that
\begin{enumerate}
\setlength{\itemsep}{6pt}
\item the largest eigenvalue of $\mathbf M(t_0)$ is real-valued and  strictly positive,
\item the algebraic and geometric multiplicities of this eigenvalue are both one, and 
\item the corresponding eigenvector has strictly positive components.
\end{enumerate}
 By (\ref{exp_dar}), the eigenvalues of $\mathbf M(t)$ are given by $
 \exp({\lambda_i t})$, where 
  $\{\l_i; i=1,\ldots, k\}$ are the eigenvalues of  $\mathbf A$, and both 
  matrices have the same eigenvectors, which implies that the left and right 
  eigenvectors  
 $\mathbf u$ and $\mathbf v$ of  $\l_{\text{max}}(\mathbf A)$ can be chosen with strictly positive 
 components and satisfying 
 \be
\textstyle \sum_{i=1}^k v_iu_i=1\quad \text{and} \quad\sum_{i=1}^k u_i=1.
\ee
 The process $Z$ is called supercritical, critical, or subcritical according as $\l_{\text{max}}(\mathbf A)$ is larger, equal, or smaller than zero. 
 
Observe that the following properties are equivalent  (cf.\  \cite{SEW} p. 95-99 and \cite{DISS_CTMTBP}):\\[0.5em]
$Z$ is irreducible  \; $\Leftrightarrow$\; $\mathbf M$ is irreducible \; $\Leftrightarrow$\; 
 $\mathbf A$ is irreducible \; $\Leftrightarrow$\; 
$\mathbf M(t)$ is irreducible for all $t > 0$ \; $\Leftrightarrow$\; 
 $\mathbf M(t)>0$ for all $t>0$.\\[0.5em]
In particular, irreducible implies positive regular. 
Note that a matrix is irreducible if it is not similar via a permutation to a block upper triangular matrix and that a Markov chain is irreducible if and only if the mean matrix is irreducible.

The following lemma is an extension of Theorem 4 of \cite{Cha06} for 
multi-type branching processes.
\begin{lemma}\label{expansion of Thm 4}
Let $(Z(t))_{t\geq0}$ be a non-singular and irreducible $k$-dimensional 
continuous-time Markov branching process and $\mathbf q$ the 
extinction vector of $Z$, i.e.
\be 
q_i\equiv \P\left(Z(t)=0, \text{ for some }t\geq0 | Z(0)=\mathbf e_i\right), 
\quad \text{ for $1\leq i\leq k$}.\ee Furthermore, let $(t_N)_{N\geq 1}$ be 
a sequence of positive numbers such that $\ln(N)\ll t_N$, define $T_{\rho}
\equiv\inf \{t\geq 0: \sum_{i=1}^k Z_i(t)=\rho\}$ and
assume that, for all $i,j\in\{1,\ldots, k\}$ and $t\in[0,\infty)$,  
\be
\mathbb E\left[Z_j(t) \ln(Z_j(t))| Z(0)=\mathbf e_i\right]<\infty.
\ee
 \begin{enumerate}
 \item{If $Z$  is \emph{subcritical}, i.e.\ $\l_{\text{max}}(\mathbf A)<0$, 
 then, for any $\epsilon>0$,
 \be\label{suredeath}
 \lim_{N\uparrow \infty}\mathbb P\left(T^{}_0\leq t_N\wedge T^{}_{\lceil \e 
 N\rceil }\:\middle |\: Z (0)=\mathbf e_i \right )=1, \qquad \text{ for all }i \in 
 \{1, \ldots, k\},
 \ee
 and 
  \be\label{(i)2}
 \lim_{N\uparrow \infty}\,\inf_{ \mathbf x\in \partial B_{\e N}  }\mathbb P\left(T^{}_0\leq 
 t_N\,\middle|\, Z (0)= \mathbf x\right)=1,\quad\text { where }  \partial B_{\e N}
 \equiv\{ \mathbf x\in \mathbb N_0^k:  \textstyle\sum_{i=1}^{k} x_i =\lceil \e N\rceil \}.
 \ee
Moreover, for 
$\bar u=  \frac{\max_{1\leq i\leq k}u_i}{\min_{1\leq j\leq k}u_j}$ and, for any $\e>0$,
\be\label{(i)3}
\quad\lim_{N\uparrow \infty }
\sup_{\mathbf x \in B_{\e^2 N} }\mathbb P\left( T^{}_{\lceil \e N\rceil }\leq T^{}_0\:\middle|\:  Z (0)= \mathbf x\right)\leq\bar u \e,\quad \text {where }  B_{\e^2 N}\equiv\{ \mathbf x\in \mathbb N_0^k: \textstyle\sum_{i=1}^{k} x_i \leq\lceil \e^2N\rceil \}.\ee
}
  \item{If $Z$  is \emph{supercritical}, i.e.\ $\l_{\text{max}}(\mathbf A)>0$, then, for any $\epsilon>0$ (small enough),
 \be\label{(ii)1}
 \lim_{N\uparrow \infty}\mathbb P\left(T^{}_0\leq t_N\wedge T^{}_{\lceil \e N\rceil }\:\middle|\: Z (0)=\mathbf e_i \right)=q_i, \qquad \text{ for all }i \in \{1, \ldots, k\}
 \ee
 and 
  \be\label{(ii)2}
 \lim_{N\uparrow \infty}\mathbb P\left(T^{}_{\lceil \e N\rceil }\leq t_N\:\middle|\: Z (0)=\mathbf e_i \right)=1-q_i,\qquad\text{ for all }i \in \{1, \ldots, k\}.
 \ee
Moreover, conditionally on survival, the proportions of the different types present in the population converge
almost surely,  as $t \uparrow  \infty$, to the corresponding ratios of the components
of the eigenvector:  for  all $ i=1,\ldots,k,$,
\begin{equation}
\label{limit-ratio}
\lim_{t \uparrow  \infty}\frac{Z_{i}(t)}{\sum_{j=1}^{k}Z_j(t)}=\frac{v_{i}}{\sum_{j=1}^{k}v_j},
\quad \text{a.s. on } \{T_0=\infty\}.
\end{equation}
 }\end{enumerate}
\end{lemma}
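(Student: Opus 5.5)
The plan is to reduce everything to the single-type arguments of Theorem 4 in \cite{Cha06} by projecting the multi-type process onto the scalar quantity $W(t)\equiv\langle \mathbf u, Z(t)\rangle=\sum_i u_iZ_i(t)$, where $\mathbf u$ is the positive left Perron eigenvector of $\mathbf A$. Since $\mathbf u^t\mathbf M(t)=\eee^{\l_{\max}t}\mathbf u^t$, the process $\eee^{-\l_{\max}t}W(t)$ is a nonnegative martingale. Moreover, $\min_j u_j\sum_iZ_i\le W\le\max_j u_j\sum_i Z_i$, so hitting times of levels for the total mass can be compared with hitting times of $W$ up to the factor $\bar u$.

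\emph{Subcritical case.} For \eqref{(i)3} we apply optional stopping to the supermartingale $W$ (note $\l_{\max}<0$) at $T_0\wedge T_{\lceil\e N\rceil}$. This gives
\[
\P(T_{\lceil \e N\rceil}<T_0)\,\min_j u_j\,\e N\le W(0)\le \max_j u_j\,\e^2N,
\]
which yields the bound $\bar u\e$. For \eqref{suredeath} and \eqref{(i)2} we use $\E[\sum_jZ_j(t)\mid Z(0)=\mathbf x]\le C|\mathbf x|\eee^{\l_{\max}t}$. By Markov's inequality, $\P(Z(t_N)\ne0)\le C\e N\eee^{\l_{\max}t_N}\to0$, because $t_N\gg\ln N$. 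This gives \eqref{(i)2} uniformly on $\partial B_{\e N}$, and also \eqref{suredeath}, since reaching $\lceil \e N\rceil$ from a single individual has probability $O(1/N)$ by the same supermartingale bound.

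\emph{Supercritical case.} Irreducibility gives positive regularity. The Kesten--Stigum/Athreya theorem, which uses the $Z\ln Z$ moment assumption, then gives $\eee^{-\l_{\max}t}Z(t)\to \mathcal W\mathbf v$ almost surely, with $\{\mathcal W>0\}=\{T_0=\infty\}$ up to null sets. This immediately implies \eqref{limit-ratio}. For \eqref{(ii)1}--\eqref{(ii)2} we split by extinction.
\begin{itemize}
\item On $\{T_0<\infty\}$, which has probability $q_i$, the extinction time is a.s. finite and hence $\le t_N$ for large $N$. We also need $\P(T_{\lceil\e N\rceil}<T_0<\infty)\to0$, since the maximum of the total population before extinction is a.s. finite.
\item On survival, $\sum_jZ_j(t)\sim\mathcal W\eee^{\l_{\max}t}\sum_j v_j$. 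So $T_{\lceil\e N\rceil}\approx \l_{\max}^{-1}\ln N+O(1)\ll t_N$ a.s., which gives \eqref{(ii)2}.
\end{itemize}
Combining the two cases yields \eqref{(ii)1}.

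The main obstacle is making the supercritical time estimate precise. One must show $\P(\mathcal W\in(0,\delta))\to0$ as $\delta\downarrow0$, which follows from the continuity of the law of $\mathcal W$ on $(0,\infty)$ given by Athreya. One must also control overshoot, since the process jumps by $\pm1$ so $T_{\lceil \e N\rceil}$ is well defined. Continuous-time non-lattice issues do not arise, because only almost sure convergence is used.
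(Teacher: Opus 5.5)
Your proposal is correct, and for \eqref{(i)3} and the supercritical part it follows the paper. The paper also applies Doob's optional stopping to $e^{-\lambda_{\max}t}\sum_i u_iZ_i(t)$ at $T_{\lceil\epsilon N\rceil}\wedge T_0$. It also uses Athreya--Ney's Theorem 5.7.2 with the $Z\ln Z$ condition to get $e^{-\lambda_{\max}t}Z(t)\to W\mathbf v$ a.s.\ and $\mathbb{P}(W=0\mid Z(0)=\mathbf e_i)=q_i$.

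The real difference is in \eqref{suredeath} and \eqref{(i)2}. There the paper quotes Sevastyanov's sharp asymptotics $1-q_i(t)\sim Cu_ie^{\lambda_{\max}t}$, which itself needs the $Z\ln Z$ moment. It combines this with $\mathbb{P}(Z(t)=0\mid Z(0)=\mathbf x)=\prod_i q_i(t)^{x_i}$ to get \eqref{(i)2}. It gets \eqref{suredeath} rather loosely, from non-explosion and an auxiliary sequence $s_N\to\infty$. Only the upper bound $1-q_i(t)=O(e^{\lambda_{\max}t})$ is actually used, and that is exactly your Markov bound $\mathbb{P}(Z(t_N)\neq 0\mid Z(0)=\mathbf x)\le \bar u\,|\mathbf x|\,e^{\lambda_{\max}t_N}$. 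Together with your supermartingale estimate $\mathbb{P}(T_{\lceil\epsilon N\rceil}<T_0\mid Z(0)=\mathbf e_i)=O(1/N)$, your route is more elementary and quantitative. It also shows that the $Z\ln Z$ hypothesis is needed only in the supercritical case.

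In the supercritical case, your pathwise argument with dominated convergence replaces the paper's in-probability argument with two sequences $s^1_N,s^2_N$. It also makes explicit the step $\mathbb{P}(T_{\lceil\epsilon N\rceil}\le t_N,\ T_0<\infty)\to 0$. The paper skips this when it passes from $\mathbb{P}(T_0=\infty,\ T_{\lceil\epsilon N\rceil}\ge t_N)\to0$ to \eqref{(ii)2}.

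Three small points.
\begin{itemize}
\item With $m_{ij}(t)=\mathbb{E}[Z_j(t)\mid Z(0)=\mathbf e_i]$ one has $\mathbb{E}[\langle \mathbf u, Z(t)\rangle\mid Z(0)=\mathbf x]=\mathbf x^t\mathbf M(t)\mathbf u$. So the martingale property needs the positive \emph{right} eigenvector, $\mathbf M(t)\mathbf u=e^{\lambda_{\max}t}\mathbf u$, not $\mathbf u^t\mathbf M(t)=e^{\lambda_{\max}t}\mathbf u^t$ as you wrote. The paper's ``left/right'' labelling has the same swap, so nothing else changes.
\item The ``main obstacle'' you name is not one. $\mathbb{P}(W\in(0,\delta))\to0$ holds by countable additivity alone, with no continuity of the law, and your almost-sure argument never uses it.
\item The unit-step property you invoke to make $T_{\lceil\epsilon N\rceil}$ well defined is not among the lemma's hypotheses for a general multi-type branching process. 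It does hold for the birth/death/switch process to which the lemma is applied, and the paper relies on it just as implicitly.
\end{itemize}
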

\begin{remark} In the critical case, when $\l_{\text{max}}(\mathbf A)=0$, 
\eqref{suredeath} still holds, i.e. a mutant will not fixate. But Eq. \eqref{(i)2} does not hold, 
since the time to extinction is now much larger. This leads to complications that 
would obfuscate the picture.  Assumption 3 on the existence of a unique strictly stable equilibrium  
excludes the occurrence of these critical situations.
\end{remark}
\begin{proof}
We start with the proof of
(i).  Since $Z(t)$ is in this case a subcritical irreducible continuous-time branching process and $\mathbb E[Z_j(t) \ln(Z_j(t))| Z(0)=\mathbf e_i]<\infty$, we obtain by applying Satz 6.2.7 of \cite{SEW} the existence of a constant $C>0$ such that
 \be\label{Lim_Extinct}
 \lim_{t \uparrow  \infty} \frac{1-q_i(t)}{\eee^{\l_{\text{max}}(\mathbf A) t}}=C  u_i,
 \ee 
 where $q_i(t)\equiv\mathbb P\left(Z(t)=0 \: |\: Z(0)=\mathbf e_i\right)$.  Moreover, we have a non-explosion condition. Thus, for all $\epsilon>0$, either $T^{}_{\lceil \e N\rceil }$ equals infinity or it converges to infinity as $N \uparrow \infty$. Putting both together, there exists a sequence $s_N$ with $\lim_{N \uparrow  \infty} s_N=+ \infty$ such that  
 \be
 \lim_{N\uparrow \infty}\mathbb P\left(T^{}_0\leq t_N\wedge T^{}_{\lceil \e N\rceil }\middle | Z (0)=\mathbf e_i \right )\geq 
  \lim_{N\uparrow \infty}\mathbb P\left(T^{}_0\leq s_N\:\middle |\: Z (0)=\mathbf e_i \right )= \lim_{N\uparrow \infty} q_i(s_N)=1.
 \ee
The branching
property implies that,  for all $\mathbf x\in \mathbb N^k$, $\mathbb P(Z(t) = 0\:|\: Z(0)= \mathbf x)= \prod_{i=1}^{k}(q_i(t))^{x_i}$ (cf.\ \cite{DISS_CTMTBP} p.\ 25).
So, we get
  \be\label{for (i) 2} 
  \inf_{\mathbf x\in\partial B_{\e N} }\!\mathbb P\left(T^{}_0\leq t_N\:\middle |\: Z (0)=\mathbf x \right )= 
  \inf_{\mathbf x\in \partial B_{\e N} }\!\mathbb P\left(Z(t_N)=0\:\middle |\: Z (0)=\mathbf x \right )=
  \inf_{\mathbf x\in \partial B_{\e N} } \prod_{i=1}^{k}(q_i(t_N))^{x_i}.
\ee
For all $i\in\{1,\ldots, k\}$, $1\geq (q_i(t_N))^{x_i}\geq(q_i(t_N))^{\lceil\e N\rceil} $ and by (\ref{Lim_Extinct}) we have $1-q_i(t_N)=O(\eee^{{\l_{\text{max}}(\mathbf A) t_N}})$. Moreover,  
for any sequence $ (w_N)_{N\geq 1}$ such that  $\lim_{N\uparrow \infty}w_N=0$, 
\be 
\lim_{N\uparrow \infty} \{1+\frac{w_N}{N} \}^N=1.
\ee 
This implies that, for all $t_N$ with $t_N\gg\ln(N)$ and $C>0$, since  $
\lim_{N\uparrow \infty} C \eee^{\l_{\text{max}}(\mathbf A) t_k}\lceil\e 
N\rceil=0$,
\be\lim_{N\uparrow \infty }(1-C\eee^{\l_{\text{max}}(\mathbf A) 
t_k})^{\lceil\e N\rceil}=1.\ee 
Thus, taking the limit $N\uparrow  \infty$ in (\ref{for (i) 2}), we obtain the 
desired equation (\ref{(i)2}). To prove the  inequality (\ref{(i)3}) we use the 
fact that 
$(\sum_{i=1}^k u_iZ_i(t))\eee^{-\lambda_{\text{max}}t}$ is a martingale (cf.\ \cite{A_MTBP}, Prop.\  2).  
By applying Doob's stopping theorem to the stopping time $T_{\lceil\e N\rceil}\wedge T_0$ we obtain, for all $\mathbf x\in B_{\e^2 N}$, that
\be\textstyle
\mathbb E\left[\left(\sum_{i=1}^k u_iZ_i(T_{\lceil\e N\rceil})\right)\eee^{-\lambda_{\text{max}}(\mathbf A)T_{\lceil\e N\rceil}}\1_{\{T_{\lceil\e N\rceil}< T_0\}}^{}\middle| Z(0)=\mathbf x\right]=\sum_{i=1}^ku_i x_i.
\ee
Therefore, since $\lambda_{\text{max}}(\mathbf A)<0$ in the subcritical case, 
\be
\mathbb E\left[\min_{1\leq i\leq k} u_i\lceil\e N\rceil \1_{\{T_{\lceil\e N\rceil}< T_0\}}^{}\middle| Z(0)=\mathbf x\right]\leq\max_{1\leq i\leq k} u_i\lceil \e^2 N\rceil, \quad \text{  for all $\mathbf x\in B_{\e^2 N}$},
\ee
which implies (\ref{(i)3}).

Let us continue by proving
(ii). Since $Z(t)$ is supercritical in this case,
applying Theorem 5.7.2 of \cite{AN} yields that
\be\label{Thm 5.7.2} 
\lim_{t \uparrow  \infty} Z(t) \eee^{-\l_{\text{max}}(\mathbf A) t}=W \mathbf v, \hbox{\rm a.s.}, 
\ee
 where $W$ is a nonnegative random variable. Since we
assume that,
for all $i\in\{1,\ldots, k\}$, \\
 $\mathbb E[Z_j(t) \ln(Z_j(t))| Z(0)=\mathbf e_i]<\infty$, we get that 
 \be
\mathbb P\left(W=0| Z(0)=\mathbf e_i \right)=q_i, \quad \mathbb E[W| Z(0)=\mathbf e_i ]=u_i,
\ee
and $W$ has an absolutely continuous distribution on $(0,\infty)$.
 All components of $\mathbf v$ are strictly positive and $W>0$, a.s., on 
 the event $\{\o:T_0(\o)=\infty\}$. Hence, we have 
\be 
Z(t)=O \left(\eee^{\l_{\text{max}}(\mathbf A) t} \right) \quad\text{ a.s.\quad  on }
\{T_0=\infty\}.
\ee
This implies that  $\lim_{N\uparrow\infty}\mathbb P( Z(t_N)< \lceil \e N\rceil, 
T_0=\infty)=0$ and thus
\be
\lim_{N\uparrow \infty}\mathbb P\left(T_0=\infty,\: T_{\lceil \e N\rceil }\geq 
t_N\right)=0.
\ee
Note that we used that $t_N\gg\ln(N)$.
Since $\mathbb P\left(T_0=\infty| Z(0)=\mathbf e_i \right)=1-q_i $, we deduce (\ref{(ii)2}).
On the other hand, there exist two sequences $s^1_N$ and $s^2_N$, which  converge to infinity as $N\uparrow  \infty$,  such that,
 $\lim_{N\uparrow\infty}\mathbb P(s_N^1\leq t_N\wedge T_{\lceil \e N\rceil }\leq s_N^2)=1$. This yields (\ref{(ii)1}), because, for all $i\in\{1,\ldots k\}$ and $l=1,2$, we have that  
$\lim_{N\uparrow \infty}\mathbb P(T_0<s_N^l|Z_0=\mathbf e_i)=q_i$. 
Note that equation (\ref{limit-ratio}) is a simple consequence of (\ref{Thm 5.7.2}).
\end{proof}
Using these properties about multi-type branching processes, we can compute the invasion probability. 

\noindent{\bf The first invasion step.} Let us introduce the following stopping times
\bea
\theta^{K, M\epsilon}_{\text{exit}}&=&\inf\Big\{t\geq 0: ||\nu_t^K-\textstyle\sum_{ x\in\mathcal X_{(\mathbf g,\mathbf p)}}\bar{\frakn}_{x}(\mathbf g,\mathbf p)\delta_{x}||_{TV} >M \epsilon\Big\}\\
\tilde \theta^{K}_\epsilon&=&\inf\Big\{t\geq 0: \:\nu_t^K(\tilde g) \geq \epsilon\:\Big\}\\
\tilde\theta^{K}_{0}&=&\inf\Big\{t\geq 0: \:\nu_t^K(\tilde g)= 0\:\Big\}
\eea
Until $\tilde\theta^{K}_{\epsilon}$ the mutant population $\nu_t^K(\tilde g)$ influences only the death and switching rates of the resident population, and this perturbation is uniformly bounded by $(\bar c + \bar s_{\text{ind.}})\epsilon$. Thus, by applying Theorem \ref{det-limit} (ii), we obtain
\be\label{exit_domain_apply}
\lim_{K\uparrow  \infty}
\mathbb P\left [  \theta^{K, M \epsilon}_{\text{exit}} <\eee^{KV}\!\wedge \tau^{}_{\text{mut.}}\wedge \tilde\theta^{K}_{\epsilon}\right]=0.
\ee
On the time interval  $[0,\theta^{K,M \epsilon}_{\text{exit}} \wedge \tau^{}_{\text{mut.}}\wedge \tilde\theta^{K}_{\epsilon}]$, the resident population can be approximated by $\textstyle\sum_{ x\in\mathcal X_{(\mathbf g,\mathbf p)}}\bar{\frakn}_{x}(\mathbf g,\mathbf p)\delta_{x}$ and no further mutant appears. This allows us to approximate $\nu_t^K(\tilde g)$ by multi-type branching processes.

For a rigorous analysis, one needs to construct two coupled processes that control the populations from above 
and below. Here, we consider just one approximating process.

Let $k\equiv |[\tilde p]_{\tilde g}|$. We construct the  $(\mathbb N_0)^{k}$- valued  multi type process $X(t)$.

For each  $1\leq i\leq k$, each individual in $X(t)$, with trait $(\tilde g,\tilde p_i)$ undergoes

\begin{enumerate}
 \item birth (without mutation) with rate \:$b(\tilde p_i)$.
\item death with rate \:$D_{(\mathbf g,\mathbf p)}(\tilde p_i)$, where
 \be
 D_{(\mathbf g,\mathbf p)}(\tilde p_i)\equiv  d(\tilde p_i) + \sum_{(g,p)\in\mathcal X_{(\mathbf g,\mathbf p)}} c(\tilde p_i, p) 
\bar{\frakn}_{(g,p)}(\mathbf g,\mathbf p),
\ee, 
\item{ switch to $\tilde p_j$ with rate $S_{(\mathbf g,\mathbf p)}(\tilde p_i,\tilde p_j)$,  for all $j\neq i$,\\
where $ 
S_{(\mathbf g,\mathbf p)}(\tilde p_i,\tilde p_j)\equiv s_{\text{nat.}}^{\tilde g}(\tilde p_i, \tilde p_j) +\sum_{( g, p)\in\mathcal X_{(\mathbf g,\mathbf p)}}
				 s^{\tilde g}_{\text{ind.}}(\tilde p_i, \tilde p_j)( p) \bar \frakn_{ ( g, p)}$. }
\end{enumerate} 
Moreover, 
There exists a $K_0>1$ such that,  for all ${\tilde p}_i\in [\tilde p]_{\tilde g}$ and for all $K\geq K_0$,
\be\label{compare}
\forall \:0\:\leq t\:\leq \theta^{K,\epsilon}_{\text{exit}} \wedge \tau^{}_{\text{mut.}}\wedge \tilde\theta^{K}_{\epsilon}:\qquad
 X_i(t) = \nu_t^K(\tilde g,\tilde p_i) K +O(\e). 
\ee
Fixation of the mutant will be determined by the behaviour of the process $X$.

Define, for $ i=1,\ldots,k$,
\be\label{apparent-fitness} 
f_{(\mathbf g, \mathbf p)}	{(\tilde g, \tilde p_i)}	\equiv b(\tilde p_i) -D_{(\mathbf g,\mathbf p)}(\tilde p_i) -\textstyle\sum_{j\neq i}S_{(\mathbf g,\mathbf p)}(\tilde p_i,\tilde p_j).
\ee 

($f_{(\mathbf g, \mathbf p)}{(\tilde g, \tilde p_i)}$ would be the invasion fitness of phenotype $\tilde p_i$ if there was no switch back from the
other phenotypes to $\tilde p_i$.)
 Then, by Equation (\ref{inf.gen}), 
 the generators are given by the following matrices:  

\be \mathbf A ({X^{l,\epsilon}}) =\begin{pmatrix}
f^{l,\e}_{(\mathbf g, \mathbf p)}(\tilde g, \tilde p_1)&	S_{(\mathbf g,\mathbf p)}(\tilde p_1,\tilde p_2)\!-\! \bar s_{\text{ind.}} M \epsilon&	\ldots	&	S_{(\mathbf g,\mathbf p)}(\tilde p_1,\tilde p_k)\!- \!\bar s_{\text{ind.}} M \epsilon\\[0.3em]
S_{(\mathbf g,\mathbf p)}(\tilde p_2,\tilde p_1)\!-\! \bar s_{\text{ind.}} M \epsilon	&	f^{l,\e}_{(\mathbf g, \mathbf p)}(\tilde g, \tilde p_2)&			&		\\[0.3em]
\vdots	&			&	\ddots	&		\vdots		\\[0.3em]
S_{(\mathbf g,\mathbf p)}(\tilde p_k,\tilde p_1)\!-\! \bar s_{\text{ind.}} M \epsilon&		&		\ldots	&f^{l,\e}_{(\mathbf g, \mathbf p)}(\tilde g, \tilde p_k)
\end{pmatrix} \ee
for $l\in\{1,2\}$, where 
$
f^{1,\e}_{(\mathbf g, \mathbf p)}(\tilde g, \tilde p_i)
\equiv  f_{(\mathbf g, \mathbf p)}	{(\tilde g, \tilde p_i)}	- \epsilon(1+ \bar c M + (k-1)\bar s_{\text{ind.}} M )$ and $
f^{2,\e}_{(\mathbf g, \mathbf p)}(\tilde g, \tilde p_i)\equiv   f_{(\mathbf g, \mathbf p)}	{(\tilde g, \tilde p_i)}	+ \epsilon(1+ \bar c M+3(k-1)\bar s_{\text{ind.}} M )$.

We prove in the following that 
the number of mutant individuals grow with positive probability to $\epsilon K$ before dying out if and only if $\l_{\text{max}}$ of $\mathbf A_{(\tilde g,\tilde p)}$ is strictly positive. Thus, $\l_{\text{max}}(\mathbf A_{(\tilde g,\tilde p)})$ is an appropriate generalisation of the invasion fitness of the class $[\tilde p]_{\tilde g}$:
\begin{equation}
\label{invasion-fitness-case1}
F_{[\tilde p]_{\tilde g}}(\mathbf g, \mathbf p)\equiv \l_{\text{max}}(\mathbf A_{(\tilde g,\tilde p)}).
\end{equation}
 The process $X$ is non-singular and irreducible and satisfies the conditions of Lemma \ref{expansion of Thm 4}.
Let $\mathbf q (X)$  denote the extinction probability vector  of  $ X$, i.e. 
\be 
\mathbf q (X)\equiv(q_1 (X),\ldots, q_k (X)),  \quad\text{where } q_i(X)\equiv\mathbb P\left(X(t)=0, \text{ for some }t\; \middle| X(0)=\mathbf e_i\right).
 \ee
Observe  that  $\mathbf q (X)=(1,\ldots, 1)$ if $X$ is not supercritical. To characterise  $\mathbf q (X^{l,\e})$ in the supercritical case,  one computes the infinitesimal generating function
 \be
  \mathbf u:[0,1]^k \to \mathbb R^k, 
 \ee 
 defined, for all $1\leq i\leq k$, by
 \bea\mathbf u_i(\mathbf y )\!&\equiv\!&\nonumber
b(\tilde p_i)y_i^2+\sum_{j\neq i} \{ S_{(\mathbf g,\mathbf p)}(\tilde p_i,\tilde p_j) \}\: y_j+D_{(\mathbf g,\mathbf p)}(\tilde p_i)\\&&
-\Big( b(\tilde p_i) +\sum_{j\neq i} S_{(\mathbf g,\mathbf p)}(\tilde p_i,\tilde p_j)+D_{(\mathbf g,\mathbf p)}(\tilde p_i)\Big)\:y_i.
\eea  
Moreover,
 the extinction vector of a multi-type branching process is  given as the unique root of the
  generating function in the unit cube (cf.\ \cite{AN} p.\ 205 or \cite{SEW} Chap.\ 5). Thus,  in the supercritical case,
 $\mathbf q (X)$  is   the unique solution of 
\be \label{eq_extinct1}
\mathbf u(\mathbf y )=0 \qquad \text{for $\mathbf y \in[0,1)^k$ }
\ee
From this we can read of the fixation probabilities in the theorem and this explains the result.

\chapter{Evolution in a changing environment}

\begin{chapquote}
{J.A.J. Metz, S.A.H. Geritz, G. Meszéna, F.J.A. Jacobs, J.S. van Heerwaarden, \emph{
Adaptive Dynamics,
a geometrical study of the consequences of
nearly faithful reproduction}}
{That there is no sign yet that evolution on this earth is going to freeze has two causes.
The easy one is that the physical configuration of the world keeps changing. But it usually does so
relatively slowly.}
\end{chapquote}

The models considered in the previous chapters have in common that they study conditions of constant environmental parameters. There are, however, many applications where this assumption is broken, such as parameters changing with seasons, where competition for fewer resources is higher in winter and birth rates are higher in spring. Another scenario in applications related to disease modelling is treatment that is applied in phases, where drug administration is followed by a treatment break that gives both healthy cells and pathogens a chance to recover.
In this chapter, we consider a variation of the previous model that takes into account such changes in the population's environment.

\section{Setting}

As in Chapters \ref{chapter7}, \ref{escape}, and \ref{chapter9}, we consider a finite trait space given by a directed graph $G=(V,E)$ with vertices representing the traits and individual can carry and edges marking possible transitions through single mutations. We consider moderately rare mutations with mutation probability $\mu_K=K^{-1/\alpha}$, for some $\alpha\in\R_+\backslash\N$. We introduce a periodically changing environment with a finite number $\ell\in\N$ of phases. For each phase $i=1,\cdots,\ell$ and all traits $v,w\in V$, the system parameters impacting a population's growth rate may vary and are now denoted by 
\begin{align}
b^i(v),\ d^i(v), \text{ and } c^i(v,w).
\end{align}
For simplicity, we assume that the mutation probability $\mu_K$ and mutant law $m(v,w)$ do not vary, although this would not qualitatively impact the results as long as orders of magnitude are preserved.

From the previous chapters, we know that for a new mutant, reaching a macroscopic population size through exponential growth takes a time of order $\ln K$.
For an established resident population, it takes a time of order $1$ to reach a small neighbourhood of its new equilibrium after an environmental change. 
In order for environmental changes to happen slowly enough that the resident populations can adapt, but fast enough that they influence the growth of mutants, we choose
\begin{align}
	1\ll\lambda_K\ll\ln K
\end{align}
as an intermediate time-scale for the length of the $\ell$ phases. For each $i=1,\ldots,\ell$, we assume that the $i$-th phase has length $T_i\lambda_K$, where $T_i>0$. To refer to the endpoints of the phases, we define $T^\Sigma_j:=\sum_{i=1}^{j}T_i$.

Building on this, we define the time-dependent birth-, death-, and competition rates as the periodic extension of
\begin{align}
	b_t^K(v)=\sum_{i=1}^{\ell}\1_{t\in [T^\Sigma_{i-1}\lambda_K,T^\Sigma_i\lambda_K)}b^i(v),
\end{align}
and analogously for $d_t^K(v)$ and $c^K_t(v,w)$. The dynamics of the Markov process $(N^K_v(t))_{v\in V, t\geq0}$ are then summarised by its infinitesimal generator
\begin{align}\label{eq:time_dep_generator}
	\left(\cL_t^K\phi\right)(N)=&\sum_{v\in V}(\phi(N+\d_v)-\phi(N))\left(N_vb_t^K(v)(1-\mu_K)+\sum_{w\in V}N_wb_t^K(w)\mu_Km(w,v)\right)\notag\\
	&+\sum_{v\in V}(\phi(N-\d_v)-\phi(N))N_v\left(d_t^K(v)+\sum_{w\in V}\frac{c^K_t(v,w)}{K}N_w\right),
\end{align}
where $\phi:\N_0^V\to\R$ is measurable and bounded and $\d_v$ denotes the unit vector at $v\in V$.

In this chapter, we only consider the case of monomorphic resident populations, denoting their (rescaled) equilibrium sizes and corresponding invasion fitnesses during an $i$-phase by
\begin{align}
\bar{n}^i(v):=\frac{b^i(v)-d^i(v)}{c^i(v,v)}\ \text{ and }\ f^i(w,v):=b^i(w)-d^i(w)-c^i(w,v)\bar{n}^i(v).
\end{align}
Moreover, we define the time-dependent equilibrium size and fitness by the periodic extensions of
\begin{align}
	\bar{n}^K_t(v):=\sum_{i=1}^{\ell}\1_{t\in [T^\Sigma_{i-1}\lambda_K,T^\Sigma_i\lambda_K)}\bar{n}^i(v)\quad\text{and}\quad f^K_t(w,v):=\sum_{i=1}^{\ell}\1_{t\in [T^\Sigma_{i-1}\lambda_K,T^\Sigma_i\lambda_K)}f^i(w,v),
\end{align}
and the average fitness by
\begin{align}
	f^{\text{av}}(w,v):=\frac{\sum_{i=1}^\ell T_i f^i(w,v)}{T^\S_\ell}.
\end{align}

With this notation, we can now study the evolution of the population and derive scaling limit results corresponding to those of Chapter \ref{chapter7} - describing the dynamics on the $\ln K$-time scale for reaching traits within the mutation spreading neighbourhood - and Chapter \ref{escape} - describing the transition of wider valleys in the fitness landscape on a $1/K\mu_K^L$-time scale.


\section{Within the mutation neighbourhood}

As it was the case in the previous chapters, the analysis of the limiting dynamics is split into approximations for the resident and the mutant populations. Larger sections of the arguments in Chapter \ref{chapter7} for the dynamics on the $\ln K$-time scale transfer directly to the scenario of a changing environment, so we focus on pointing out the differences. We refer to \cite{EsserKraut2} for details.

Previously, to ensure that - as long as the mutant populations stay below a certain small $\eps K$-threshold - the resident population also only deviates from its equilibrium state by an amount of order $\eps K$, a large deviation result was applied, which guarantees for the stochastic process to stay close to an attractive equilibrium for an exponential time in $K$. In this case, to bound the probability of failure (i.e.\ deviating too far from the equilibrium), we need to concatenate these results for an order of $\ln K/\l_K$ many $\lambda_K$-phases that are necessary to observe mutant growth on the $\ln K$-time-scale. During each phase, the equilibrium size changes and a short re-equilibration time of order 1 is needed before the resident population is again close to its equilibrium, and a stability result can be applied.

By conditioning on not deviating too much during the previous phases, we can write the overall probability of failure as the sum of the probabilities to deviate during specific phases. We hence need the latter probabilities to converge to 0 faster than $\l_K/\ln K$, such that the sum of $\ln K/\l_K$ such terms still converges to 0.
Neither Lemma \ref{ldp.1} (a large deviation principle for coexisting resident traits) nor Lemma \ref{lem:step-1} (using a potential theoretic approach for monomorphic resident populations) specify the exact speed of convergence. However, a careful revision of the arguments for Lemma \ref{lem:step-1} does yield the desired rate of convergence of $o(\lambda_K/\ln K)$ for a single parameter phase, once the resident population has gotten sufficiently close to its equilibrium. One can combine this with a revised version of the standard convergence result to the deterministic system of \cite[Ch.\ 11.2]{EthKur1986} to address the short re-equilibriation times. This revised version proves convergence in probability instead of almost surely, but can again quantify the convergence speed and bound the probability of failure in $o(\lambda_K/\ln K)$ in return. Overall, concatenating these two results for $O(\ln K/\lambda_K)$ many phases yields a vanishing probability for the resident population to stray from its respective equilibria.

With these bounds on the resident population, one can couple the mutant populations to simpler birth-death processes (with immigration) to estimate their growth. In \cite{CMT2021}, one finds a collection of general results on the growth of birth-death processes (with immigration), which are used to study bounding processes coupled to the processes $N^K_V(t)$. These results, however, only cover processes with constant parameters. One can argue that equivalent results apply for changing parameters, with the time-average fitness $f^\text{av}_{w,v}$ as the mutants' growth rate since $\lambda_K\ll\ln K$, i.e.\ the parameter fluctuations occur on a faster time-scale than the growth of the mutants. This requires a careful rerun of the proofs in \cite{CMT2021} to keep track of the error stemming from this averaging approximation and to ensure that the changing parameters do not disrupt the extinction of old and emergence of new types.

Finally, based on these results for the coupled processes, one can derive the limiting piecewise affine growth of the orders of population sizes
\begin{align}
\beta^K_v(s)=\frac{\ln(N_v^K(s\ln K)+1)}{\ln K}
\end{align}
of the mutant populations as in Chapter \ref{chapter7}. The algorithmic description of the limiting processes $(\beta_v)_{v\in V}$ is exactly the same as in Section \ref{chapter7results}, only replacing $f(w,v)$ by $f^{\text{av}}(w,v)$ and replacing termination criterion (b) with criteria
\begin{itemize}
	\item[(b1)] there exists an $i\in\{1,\ldots,\ell\}$ such that either $f^i(v_{k-1},v_k)\geq 0$ or $f^i(v_k,v_{k-1})=0$;
	\item[(b2)] there exists an $i\in\{1,\ldots,\ell\}$ such that  $b^i(v_k)-d^i(v_k)\leq 0$.
\end{itemize}
The former criterion (b) ensured that there had to be a unique globally attractive stable equilibrium to the Lotka-Volterra dynamics of the former resident and newly macroscopic mutant traits. The first part of criterion (b1) is a sufficient condition to ensure the principle of \textit{invasion implies fixation}, i.e.\ any mutant trait that reaches a macroscopic population size replaces the former resident trait and there is no coexistence. The condition is not necessary, and there are other possible (more complicated) scenarios where the invading mutant replaces the resident population. The second part ensures that the $\ve K$-threshold is reached by the mutant at a time when invasion can take place in finite time, i.e.\ the comparison to the deterministic Lotka-Volterra system is possible. Criterion (b2) ensures that the new resident possesses a strictly positive monomorphic equilibrium $\bar{n}^i(v_k)$ in all phases.

The first result on the convergence of the orders of population sizes is directly equivalent to Theorem \ref{ThmConv}.
	
	\begin{theorem}[Convergence of $\b$, Theorem 2.4 in \cite{EsserKraut2}]
		\label{thm:conv_beta}
		Let a finite graph $G=(V,E)$ and $\a\in\R_{>0}\setminus\N$ be given and consider the model defined by \eqref{eq:time_dep_generator}. Let $v_0\in V$ and
		assume that, for every $w\in V$, 
		\begin{align}
		\b^K_w(0)\to \b_w(0)=\left(1-\frac{d(v_0,w)}{\alpha}\right)_+
		\end{align}
		in probability, as $K\to\infty$.
		Then, for all fixed $0\leq T\leq T_0$, the following convergence holds in probability, with respect to the $L^\infty([0,T],\R^V_{\geq 0})$ norm
		\begin{align}
			(\b_w^K(t), w\in V)_{t\in [0,T]}\overset{K\to\infty}{\longrightarrow}(\b_w(t), w\in V)_{t\in [0,T]},
		\end{align}
		 where $\b_w$ are the deterministic, piecewise affine, continuous functions defined above.
	\end{theorem}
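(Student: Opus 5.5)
The proof will mirror that of Theorem \ref{ThmConv}. It proceeds by induction over the successive intervals $[s_{k-1},s_k]$ of the algorithmic construction. On each interval we control two things: the resident trait $v_{k-1}$, and the mutant exponents $\beta^K_w$ through couplings with birth--death processes with immigration.

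\textbf{Resident control.} We will show that, as long as the total mutant mass stays below $\varepsilon K$, the resident population $N^K_{v_{k-1}}$ stays in an $M\varepsilon K$-neighbourhood of $K\bar n^K_t(v_{k-1})$. The exception is a re-equilibration window of length $O(1)$ after each phase switch. This must hold with probability $1-o(1)$ over the whole interval, which spans $O(\ln K/\lambda_K)$ phases.

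For a single phase we proceed in two steps. During the $O(1)$ window we use a quantitative version of the Ethier--Kurtz convergence to the one-dimensional logistic ODE (\cite[Ch.~11.2]{EthKur1986}). Afterwards we rerun the coupling and potential-theoretic argument of Lemma \ref{lem:step-1}, now keeping track of the explicit failure probability. That argument gives an exit probability of order $\lambda_K e^{-cK^{2\alpha}}$ (or $e^{-cK}$ with an $\varepsilon$-neighbourhood), which is $o(\lambda_K/\ln K)$. A union bound over all phases, conditioning each phase on the success of the previous ones, then yields the global resident control.

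\textbf{Mutant control by averaging.} Given the resident control, each $N^K_w$ is sandwiched between birth--death processes with immigration. Their rates are piecewise constant and periodic in time with period $T^\Sigma_\ell\lambda_K$. The per-capita growth rate is $f^K_t(w,v_{k-1})\pm C\varepsilon$, and the immigration rates come from the neighbours $u$ with $d(u,w)=1$. The key new ingredient is a time-dependent analogue of Lemmas \ref{lemma_BDPexponents} and \ref{lemma_BDPIexponents}: the exponent $\ln(Z^K(t\ln K)+1)/\ln K$ converges uniformly to the piecewise affine function with slope $f^{\mathrm{av}}$.

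To prove this analogue, we use that the mean of a branching process with time-dependent rates is $\exp\bigl(\int_0^t r(s)\,ds\bigr)$. Since $\int_0^{t\ln K} f^K_s\,ds = f^{\mathrm{av}}\,t\ln K + O(\lambda_K)$ and $\lambda_K\ll\ln K$, the error is $o(1)$ on the exponent scale. For the fluctuations, we rescale by $\exp\bigl(-\int_0^t r\bigr)$ to obtain a martingale, bound its bracket exactly as in \cite{CMT2021}, and apply Doob's inequality. Immigration is handled by the same Duhamel formula, and the boundary cases (extinction at exponent $0$, and the first appearance times $t_{w,k}$) are treated as in \cite{CMT2021}.

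\textbf{Main obstacle.} The delicate point is extinction and emergence for small populations, that is, exponents near $0$. A population of size $O(1)$ may have positive average fitness and still go extinct during an unfavourable phase of length $\lambda_K$, because $f^i<0$ there. We therefore need to show that, for exponents bounded away from $0$, populations of size $K^{\delta}\gg e^{C\lambda_K}$ survive a phase. This requires $\lambda_K\ll \ln K$, which holds. Near $0$, the sandwiching by $\pm\varepsilon$ in the exponent suffices, because the limit $\beta_w$ is continuous.

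\textbf{Invasion and fixation.} Once a mutant reaches $\varepsilon K$, criteria (b1) and (b2) guarantee invasion-implies-fixation in every phase. The phase-wise Lotka--Volterra comparison then brings the new resident to its equilibrium in time $O(1)$, possibly spanning a phase change. Criterion (b1) rules out coexistence, so the limit is again monomorphic. We restart the induction with the new $\beta(s_k)$, and letting $\varepsilon\downarrow0$ yields the statement up to $T_0$.
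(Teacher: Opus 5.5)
Your proposal takes essentially the same route as the paper's sketch. The resident is controlled phase by phase: a quantitative Ethier--Kurtz step covers the $O(1)$ re-equilibration, and a rerun of Lemma~\ref{lem:step-1} gives failure probability $o(\lambda_K/\ln K)$ per phase, summed over $O(\ln K/\lambda_K)$ phases. The mutants are then coupled to time-dependent birth--death processes with immigration whose exponents grow at the averaged rate $f^{\mathrm{av}}$, obtained by rerunning \cite{CMT2021} with an $O(\lambda_K)=o(\ln K)$ averaging error, and the Chapter~\ref{chapter7} induction finishes the argument.

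One small overstatement: (b1)--(b2) do not give invasion-implies-fixation in \emph{every} phase, since the new trait may have $f^i(v_k,v_{k-1})<0$ in some phases, where the two-trait system is bistable. What they do guarantee is that the old resident cannot re-invade in any phase, and that the $\varepsilon K$-crossing happens in a strictly favourable phase; the Lotka--Volterra comparison should be carried out in that phase.
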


	The second result on the ``visible'' evolution of the population process, i.e.\ the progression of macroscopic traits that dominate the whole system, needs to be phrased slightly differently than Proposition \ref{CorEquilibria}. Since the equilibrium size of the resident population changes with every $\lambda_K$-phase, one cannot expect convergence of the absolute population sizes on the $\ln K$-time scale. Instead, the result is phrased in terms of the support of the macroscopic population.
	
	\begin{corollary}[Sequence of resident traits, Corollary 2.5 in \cite{EsserKraut2}]
		\label{cor:seq_residents}
		Let
		\begin{align}
			\omega^K_\ve(t):=\sum_{w\in V: N^K_w(t)>\ve K}\d_w\quad\text{ and }\quad\omega(s):=\sum_{k\in\N_0} \1_{s_k\leq s<s_{k+1}}\d_{v_k}	
		\end{align}
		denote point measures having support on the macroscopic traits. Then, under the assumptions of Theorem \ref{thm:conv_beta}, there exists an $\ve_0>0$ such that, for all $0<\ve<\ve_0$ and all $p\in[1,\infty)$, the following convergence holds in probability, with respect to the $L^p([0,T],\mathcal{M}(V))$ norm
		\begin{align}
			\left(\omega^K_\ve(s\ln K)\right)_{s\in [0,T]}\overset{K\to\infty}{\longrightarrow} 	\left(\omega(s)\right)_{s\in [0,T]},
		\end{align}
		where $\mathcal{M}(V)$ denotes the set of finite, non-negative point measures on $V$ equipped with the weak topology.
	\end{corollary}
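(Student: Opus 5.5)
We deduce the corollary from Theorem \ref{thm:conv_beta} together with an argument that controls the population sizes of the resident traits. The plan is to show that, outside small neighbourhoods of the invasion times $s_k$, the set of macroscopic traits coincides with $\{v_k\}$ with probability tending to one. The $L^p$ convergence then follows by dominated convergence, since $\omega^K_\ve$ and $\omega$ are uniformly bounded in total mass by $|V|$.

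\textbf{Step 1: non-resident traits are not macroscopic.} Fix $\delta>0$ and consider $s\in[s_{k}+\delta,s_{k+1}-\delta]$. By the construction of the limiting exponents, $\b_w(s)\leq 1-\eta$ for every $w\neq v_k$, for some $\eta=\eta(\delta)>0$. Theorem \ref{thm:conv_beta} gives uniform convergence in probability of $\b^K$ on $[0,T]$. Hence, with probability tending to one, $N^K_w(s\ln K)\leq K^{1-\eta/2}<\ve K$ for all such $s$ and all $w\neq v_k$. So no non-resident trait appears in $\omega^K_\ve$.

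\textbf{Step 2: the resident stays macroscopic.} We must also show $N^K_{v_k}(s\ln K)>\ve K$ on the same intervals. Convergence of $\b^K_{v_k}$ to $1$ alone is not sufficient for this. Instead we reuse the resident control underlying the proof of Theorem \ref{thm:conv_beta}. Concatenating the stability estimates over the $O(\ln K/\l_K)$ phases shows that, while the mutants stay below $\ve' K$, the resident remains within $M\ve'$ of $\bar n^i(v_k)$ in each phase $i$, apart from re-equilibration windows of length $O(1)$ after each environmental switch. During these windows the deterministic flow keeps the resident bounded away from zero, since $\min_i \bar n^i(v_k)>0$ by criterion (b2). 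Choosing $\ve_0<\tfrac12\min_{k,i}\bar n^i(v_k)$ then ensures $N^K_{v_k}>\ve K$ throughout. At the start of the interval, after time $s_k+\delta$, the invasion and fixation phase (order $1$ plus $\ln K$-decay of the former resident, per criterion (b1)) is already completed.

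\textbf{Step 3: the exceptional sets and the main obstacle.} The remaining set $\bigcup_k[s_k-\delta,s_k+\delta]$ has Lebesgue measure at most $2\delta\cdot\#\{k:s_k\leq T\}$. On this set the difference $\|\omega^K_\ve-\omega\|$ is bounded by $2|V|$, so its contribution to the $L^p$ norm is $O(\delta^{1/p})$. Letting $K\to\infty$ and then $\delta\to0$ gives the result.

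The main obstacle is Step 2: showing uniformly over the many $\l_K$-phases that the resident never dips below $\ve K$. This requires the $o(\l_K/\ln K)$ per-phase failure bounds described in the text before Theorem \ref{thm:conv_beta}, applied across switches including those that occur near invasion times.
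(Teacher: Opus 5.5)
Your proposal is correct and follows the route the paper indicates. Non-resident traits are excluded through the uniform convergence of $\beta^K$ in Theorem \ref{thm:conv_beta}, which gives $\beta_w\le 1-\eta$ away from the invasion times. The resident is kept above $\varepsilon K$ by the concatenated per-phase stability and re-equilibration estimates from that theorem's proof, and the $\delta$-neighbourhoods of the $s_k$ are absorbed by the $L^p$ norm.

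One small imprecision: by time $(s_k+\delta)\ln K$ the former resident need not be extinct, since its decay takes order $\ln K/|f^{\text{av}}(v_{k-1},v_k)|$. You only need it below $\varepsilon' K$, which your Step 1 already supplies.
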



\section{Crossing a fitness valley}

As before, the evolution on the $\ln K$-time scale stops once an evolutionary stable condition is achieved. As in Chapter \ref{escape}, we study the crossing of a fitness valley of length $L$. By this, we mean to start initially with a monomorphic wild-type population of trait $0$, near its equilibrium $\bar{n}^1(0) K$, and wait until mutants have transitioned through several unfit intermediate traits to eventually produce a mutant of trait $L$
that forms a subpopulation of macroscopic order $K$ and replaces the wild-typ as the resident trait. To depict this situation, we fix the initial condition as follows.
\begin{assumption}[Initial condition]
      \label{Ass:InitialCond}
    \begin{enumerate}[label=(\roman*)]
        \item $N^K_0(0)=\lfloor\bar{n}^1(0)K\rfloor$,
        \item $N_v^K(0)=0$ , for all $v\in\{1,...,L\}$.
    \end{enumerate}
\end{assumption}
Moreover, we introduce the following stopping time that marks the time when the $L$-trait has taken over the population,
\begin{align}
    T^{(K,\varepsilon)}_{\text{inv}}
    =\inf\left\{t\geq0: \left|\frac{N^K_L(t)}{K}-\bar{n}_t^K(L)\right|<\varepsilon
    \text{\ \ and\ \ }\frac{1}{K}\sum_{j=0}^{L-1}N^K_j(t)<\varepsilon\right\}.
\end{align}

To ensure that an $L$-mutant subpopulation can fixate and invade in a phase when it is fit with respect to the resident 0-trait, we make the following assumptions.
\begin{assumption}[Guaranteed invasion]
      \label{Ass:InvFix}
    \begin{enumerate}[label=(\roman*)]
        \item $f^i(0,L)<0$, whenever $f^i(L,0)>0$,
        \item $f^i(L,0)\neq 0$, for all $i=1,\ldots,\ell$.
    \end{enumerate}
\end{assumption}
Note that while the first part of the assumption prevents coexistence, the second part is only technical and avoids the situation of critical branching process approximations.

To define a fitness valley, one might now simply require the average fitness to be negative for all intermediate traits in $\{1,L-1\}$. However, a negative average fitness only prevents long-term growth on the $\ln K$-time scale, as studied in the previous section. On the $\lambda_K$-time scale of environmental changes, there might still be phases $i$ of positive invasion fitness $f^i(v,0)>0$, for some trait $v\in\{1,L-1\}$, which would allow for temporary growth to a mesoscopic size of this mutant subpopulation. Such a short-term growth significantly complicates the study of a fitness valley transition since the intermediate traits no longer only exhibit subcritical excursions. For simplicity, we therefore restrict ourselves to the scenario of a \emph{strict fitness valley} in the sense that the traits within the valley are unfit in every phase (cf.\ Assumption \ref{Ass1:strictFV}). An extension, allowing exactly one trait to have positive fitness in one phase, is presented in \cite{EsserKraut3}. These conditions are referred to as a $\emph{pit stop}$.

\begin{assumption}[Strict fitness valley]
    \label{Ass1:strictFV}
    \begin{enumerate}[label=(\roman*)]
        \item $\bar{n}^i(0)>0$, for all $i=1,\ldots,\ell$,
        \item $f^i(w,0)<0$, for all $w\in\{1,L-1\}$ and all $i=1,\ldots,\ell$,
        \item $f^\text{av}(L,0)>0$.
    \end{enumerate}
\end{assumption}

Since the subcritical excursions of intermediate unfit traits only last for a time of order 1, as elaborated on in Chapter \ref{escape}, any crossing of the fitness valley must occur within the same parameter phase. This makes it a very rare but itself a fast event, and we can hence treat it phase by phase and define the phase-dependent crossing rates, for $i=1\ldots\ell$,
\begin{align}\label{eq:crossrate_i}
    R^i_L:=\bar{n}^i(0) \left(\prod_{v=1}^{\lfloor\alpha\rfloor}\frac{b^i(v-1)}{|f^i(v,0)|}\right) b^i(\lfloor\alpha\rfloor) \left(\prod_{w=\lfloor\alpha\rfloor+1}^{L-1}\lambda(\rho^i(w))\right) \frac{\left(f^i(L,0)\right)_+}{b^i(L)},
\end{align}
where
\begin{align}
    \rho^i(w):=\frac{b^i(w)}{b^i(w)+d^i(w)+c^i(w,0)\bar{n}^i(0)}
    \quad\text{and}\quad
    \lambda(\rho^i(w))=\frac{\rho^i(w)}{1-2\rho^i(w)}=\frac{b^i(w)}{|f^i(w,0)|}.
\end{align}
These are the same crossing rates as in Chapter \ref{escape}, simply with phase-dependent instead of constant parameters. We hence refer to this chapter for a derivation of these expressions.

Notably, $R^i_L$ describes the rate at which $L$-mutants occur in phase $i$ and foster a population that initially grows during this $i$-phase. It does however not account for the chance of the population going extinct (due to negative fitness) in a successive phase.
To conclude the effective rate at which an $L$-mutant occurs and not only initially survives but invades the population - i.e.\ reaches a size of order $K$ and out-competes the current resident trait - we need to consider the growth dynamics of an $L$-population over the course of many phases. During an $i$-phase, the $L$-population grows approximately at exponential rate $f^i(L,0)$. Hence, starting with a size of order 1 at time $T\lambda_K$, after a time $S\lambda_K$ the population would have grown to a size of order
\begin{align}\label{eq:wsize}
    \eee^{\int_{T\lambda_K}^{(T+S)\lambda_K}f_t(L,0)\dd t}=\eee^{\lambda_K\int_{T}^{(T+S)}f_{u\lambda_K}(L,0)\dd u}.
\end{align}
To guarantee survival, this order of the population size needs to stay larger than 1 (and in fact almost sure extinction can be shown in the case where it drops below 1), i.e.\ one needs
\begin{align}
    \int_{T}^{(T+S)}f_{u\lambda_K}(L,0)\dd u>0.
\end{align}
Since  by assumption $f^\text{av}_{L,0}>0$, this can only fail within the first cycle of phases. We therefore introduce the set of possible arrival times of successful $L$-mutants of
\begin{align}
    A:=\left\{t\geq 0:\int_t^{t+s} f^K_{u\lambda_K}(L,0)\dd u>0\ \forall s\in(0,T^\Sigma_\ell]\right\}.
\end{align}
Finally, the \textit{effective crossing rate}, i.e.\ the rate at which $L$-mutants occur, initially survive, and grow to a population size of order $K$, can be calculated by averaging the phase-dependent rates over a full cycle of phases and taking the above set $A$ into account, which yields a rate of $K\mu_K^LR^\text{eff}_L$, where
\begin{align}
   R^\text{eff}_L:=
    \frac{1}{T^\S_\ell}\int_0^{T^\Sigma_\ell}\left(\sum_{i=1}^\ell R^i_L\1_{t\in[T^\Sigma_{i-1},T^\Sigma_i)}\right) \1_{t\in A} \dd t.
\end{align}
Since this is an exponential rate of order $K\mu_K^L$, the crossing event itself occurs on a time scale of order $1/K\mu_K^L$. The exponential growth of the $L$-mutant from a population size of order 1 to a size of order $K$ occurs within a $\ln K$-time, and the Lotka-Volterra dynamics of the $L$-mutant taking over the resident population plays out in a time of order 1 once both populations are of the same order. Both of these events are negligible on the $1/K\mu_K^L$ time scale.

Using the above notation, we can describe the crossing times of a strict fitness valley as follows.
\begin{theorem}
    \label{Thm:Main_1}
    Suppose that Assumptions \ref{Ass:InitialCond}, \ref{Ass:InvFix}, and \ref{Ass1:strictFV} are satisfied. Then there exist $\ve_0>0$ and $c\in(0,\infty)$ such that, for all $0<\ve<\ve_0$, there are exponential random variables $E^{(K,\pm)}(\ve)$ with parameters $(1\pm c\ve)R^\text{eff}_L$ such that
    \begin{align}
        \liminf_{K\to\infty}\P\left(E^{(K,-)}(\ve)\leq T^{(K,\ve)}_\text{inv} K\mu_K^L\leq E^{(K,+)}(\ve)\right)\geq 1-c\ve.
    \end{align}
\end{theorem}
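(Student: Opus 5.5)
The plan follows the scheme of the proof of Theorem \thv(pro_mupetit), doing all estimates phase by phase and then averaging over one cycle of length $T^\Sigma_\ell\lambda_K$. The work splits into three parts, treated in order.

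\textbf{Resident control.} As long as the total mutant population stays below $\ve K$, the $0$-population stays within $O(\ve K)$ of $\bar n^K_t(0)K$, outside re-equilibration windows of length $O(1)$ at the start of each phase. This must hold for a time of order $1/K\mu_K^L$, i.e.\ over $O(1/(K\mu_K^L\lambda_K))$ phases. I would concatenate the revised version of Lemma \ref{lem:step-1} with the quantified deterministic approximation of \cite[Ch.\ 11.2]{EthKur1986}. The point is that within one phase the probability of a large deviation is exponentially small in $K^{2\a}$, not merely $o(\lambda_K/\ln K)$. A union bound over polynomially many phases therefore still vanishes, which is the main place where the longer time scale enters. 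The re-equilibration windows have total relative length $O(1/\lambda_K)\to0$, so they contribute only $o(1)$ to the rate.

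\textbf{Valley traits.} Given the resident control, I would rerun the construction of Section \ref{alpha+1geqM} within each phase $i$. Traits $1,\dots,\lfloor\a\rfloor$ relax in time $O(1)\ll\lambda_K$ to sizes $K\mu_K^v\prod b^i(\cdot)/|f^i(\cdot,0)|(1\pm C\ve)$, by Lemma \ref{lemme_bounds_ni} applied with phase-$i$ parameters. The strict valley assumption, Assumption \ref{Ass1:strictFV}(ii), keeps every $f^i(v,0)$ negative, so Lemma \ref{lemme_bounds_ni} applies in each phase. Traits $\lfloor\a\rfloor+1,\dots,L-1$ then perform subcritical excursions of duration $O(1)$. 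Their mutant-producing probabilities follow from Lemma \ref{lem_taille_excu}, which yields $\lambda(\rho^i(w))$. Excursions straddling a phase boundary occur with relative frequency $O(1/\lambda_K)$ and can be neglected. The result is that the arrival process of $L$-mutants is sandwiched between inhomogeneous Poisson processes with intensity $K\mu_K^L R^i_L b^i(L)/(f^i(L,0))_+$, adjusted to the appropriate phase, up to factors $(1\pm c\ve)$.

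\textbf{Fate of an $L$-mutant (main obstacle).} A single $L$-mutant born at time $t\lambda_K$ evolves as a birth--death process with piecewise-constant rates. I would couple it to such processes with rates shifted by $\pm C\ve$ and prove three things. First, if it survives the initial $O(1)$ period, which has probability $(f^i(L,0))_+/b^i(L)$, its size is about $\exp(\lambda_K\int f_{u\lambda_K}(L,0)du)$. Second, if $t\in A$, this size stays $\ge e^{c\lambda_K}\to\infty$ from time $O(1)$ onwards, so later extinction has vanishing probability; to bound the total extinction probability I would use the martingale $e^{-\int f}Z$ from the proof of Lemma \ref{expansion of Thm 4}. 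Third, if $t\notin A$, then within the first cycle the integrated growth becomes negative for an interval of length $\gg1$ on the $\lambda_K$-scale. In that case the expected size is $e^{-c\lambda_K}$ at some time, and extinction is almost sure. The delicate points are the boundary cases where $t$ is near $\partial A$; these have Lebesgue measure of order $\ve$ and are absorbed into $c\ve$. After survival, the size reaches $\ve K$ within time $O(\ln K)$ by Lemma \ref{lemma_BDPIexponents}, averaged over phases as in the previous section. Invasion then follows through Assumption \ref{Ass:InvFix} and the Lotka--Volterra comparison within one phase where $f^i(L,0)>0$.

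\textbf{Conclusion.} Thinning the Poisson arrivals by these survival probabilities gives a Poisson process of successful mutants. Its intensity is $K\mu_K^L$ times a $\lambda_K$-periodic function, and that function averages over a cycle to $R^{\rm eff}_L(1\pm c\ve)$. Since the period $\lambda_K$ is much shorter than $1/K\mu_K^L$, the first point, rescaled by $K\mu_K^L$, is sandwiched by exponentials with parameters $(1\pm c\ve)R^{\rm eff}_L$. The remaining $O(\ln K)$ time from arrival to $T^{(K,\ve)}_{\rm inv}$ is negligible on this scale.
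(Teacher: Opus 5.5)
Your proposal is correct and follows the same route as the paper's derivation preceding the theorem. The ingredients match: phase-wise crossing rates $R^i_L$ obtained as in Chapter~\ref{escape} (legitimate because excursions through the strict valley last $O(1)\ll\lambda_K$), thinning of the $L$-arrivals by the survival probability restricted to arrival times in $A$, averaging over one cycle, and negligibility of the $O(\ln K)$ growth and $O(1)$ Lotka--Volterra phases on the scale $1/K\mu_K^L$.

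Your point that per-phase failure probabilities must now be super-polynomially small, because there are $O(1/(K\mu_K^L\lambda_K))$ phases instead of $O(\ln K/\lambda_K)$, is the right refinement. It applies equally to the traits $1,\dots,\lfloor\alpha\rfloor$, where Lemma~\ref{lemme_bounds_ni} only controls expectations over finite horizons. There you also need exponential concentration (available since these populations have diverging size $K^{1-v/\alpha}$), or an argument that uses only the branching structure of the mutant clusters.
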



\backmatter
  \bibliography{Library-Biomaths}\label{abbrv}

\begin{thebibliography}{100}

\bibitem{Aldous78}
{\sc D.~Aldous}, {\em Stopping times and tightness}, Ann. Probability, 6
  (1978), pp.~335--340.

\bibitem{arnold}
{\sc V.~I. Arnold}, {\em Ordinary differential equations}, Springer Textbook,
  Springer-Verlag, Berlin, 1992.

\bibitem{A_MTBP}
{\sc K.~B. Athreya}, {\em Some results on multitype continuous time {M}arkov
  branching processes}, Ann. Math. Statist., 39 (1968), pp.~347--357.

\bibitem{AN}
{\sc K.~B. Athreya and P.~E. Ney}, {\em Branching processes}, Springer-Verlag,
  New York-Heidelberg, 1972.
\newblock Die Grundlehren der mathematischen Wissenschaften, Band 196.

\bibitem{BaaBov2018}
{\sc M.~Baar and A.~Bovier}, {\em The polymorphic evolution sequence for
  populations with phenotypic plasticity}, Electron. J. Probab., 23 (2018),
  pp.~1--27.

\bibitem{B14}
{\sc M.~Baar, A.~Bovier, and N.~Champagnat}, {\em {From stochastic,
  individual-based models to the canonical equation of adaptive dynamics in one
  step}}, Ann. Appl. Probab., 27 (2017), pp.~1093--1170.

\bibitem{Baar2016stochastic}
{\sc M.~Baar, L.~Coquille, H.~Mayer, M.~H{\"{o}}lzel, M.~Rogava,
  T.~T{\"{u}}ting, and A.~Bovier}, {\em {A stochastic model for immunotherapy
  of cancer}}, Scientific Reports, 6 (2016), p.~24169.

\bibitem{BansayeMel2015}
{\sc V.~Bansaye and S.~M\'{e}l\'{e}ard}, {\em Stochastic models for structured
  populations}, vol.~1 of Mathematical Biosciences Institute Lecture Series.
  Stochastics in Biological Systems, Springer, Cham; MBI Mathematical
  Biosciences Institute, Ohio State University, Columbus, OH, 2015.
\newblock Scaling limits and long time behavior.

\bibitem{bethedging}
{\sc H.~J.~E. Beaumont, J.~Gallie, C.~Kost, G.~C. Ferguson, and P.~B. Rainey},
  {\em Experimental evolution of bet hedging}, Nature, 462 (2009), pp.~90--93.

\bibitem{BeBruShi16}
{\sc J.~Berestycki, E.~Brunet, and Z.~Shi}, {\em The number of accessible paths
  in the hypercube}, Bernoulli, 22 (2016), pp.~653--680.

\bibitem{BeBruShi17}
\leavevmode\vrule height 2pt depth -1.6pt width 23pt, {\em Accessibility
  percolation with backsteps}, ALEA Lat. Am. J. Probab. Math. Stat., 14 (2017),
  pp.~45--62.

\bibitem{BolPac1}
{\sc B.~Bolker and S.~W. Pacala}, {\em Using moment equations to understand
  stochastically driven spatial pattern formation in ecological systems},
  Theor. Popul. Biol., 52 (1997), pp.~179--197.

\bibitem{BolPac2}
{\sc B.~M. Bolker and S.~W. Pacala}, {\em {Spatial moment equations for plant
  competition: understanding spatial strategies and the advantages of short
  dispersal}}, Am. Nat., 153 (1999), pp.~575--602.

\bibitem{BovCoqNeu18}
{\sc A.~Bovier, L.~Coquille, and R.~Neukirch}, {\em The recovery of a recessive
  allele in a {M}endelian diploid model}, J. Math. Biol., 77 (2018),
  pp.~971--1033.

\bibitem{BovCoqSma2018}
{\sc A.~Bovier, L.~Coquille, and C.~Smadi}, {\em Crossing a fitness valley as a
  metastable transition in a stochastic population model}, Ann. Appl. Probab.,
  29 (2019), pp.~3541--3589.

\bibitem{BH15}
{\sc A.~Bovier and F.~den Hollander}, {\em Metastability: A potential-theoretic
  approach}, vol.~351 of Grundlehren der Mathematischen Wissenschaften
  [Fundamental Principles of Mathematical Sciences], Springer, Cham, 2015.

\bibitem{BovWang2013}
{\sc A.~Bovier and S.-D. Wang}, {\em Trait substitution trees on two time
  scales analysis}, Markov. Proc. Rel. Fields, 19 (2013), pp.~607--642.

\bibitem{bramson_monograph}
{\sc M.~Bramson}, {\em Convergence of solutions of the {K}olmogorov equation to
  travelling waves}, Mem. Amer. Math. Soc., 44 (1983), pp.~iv+190.

\bibitem{buerger2000}
{\sc R.~B\"{u}rger}, {\em The mathematical theory of selection, recombination,
  and mutation}, Wiley Series in Mathematical and Computational Biology, John
  Wiley \& Sons, Ltd., Chichester, 2000.

\bibitem{Cha06}
{\sc N.~Champagnat}, {\em {A microscopic interpretation for adaptive dynamics
  trait substitution sequence models}}, Stochastic Processes Appl., 116 (2006),
  pp.~1127--1160.

\bibitem{C_CEAD}
{\sc N.~Champagnat, R.~Ferri{\`{e}}re, and G.~{Ben Arous}}, {\em The canonical
  equation of adaptive dynamics: A mathematical view}, Selection, 2 (2001),
  pp.~73--83.

\bibitem{C_ME}
{\sc N.~Champagnat, R.~Ferri{\`{e}}re, and S.~M{\'{e}}l{\'{e}}ard}, {\em {From
  individual stochastic processes to macroscopic models in adaptive
  evolution}}, Stochastic Models, 24 (2008), pp.~2--44.

\bibitem{ChamHass2023}
{\sc N.~Champagnat and V.~Hass}, {\em Convergence of population processes with
  small and frequent mutations to the canonical equation of adaptive dynamics},
  Ann. Appl. Probab., 35 (2025), pp.~1--63.

\bibitem{CM11}
{\sc N.~Champagnat and S.~M{\'{e}}l{\'{e}}ard}, {\em {Polymorphic evolution
  sequence and evolutionary branching}}, Probab. Theor. Rel. Fields, 151
  (2011), pp.~45--94.

\bibitem{CMT2021}
{\sc N.~Champagnat, S.~M\'{e}l\'{e}ard, and V.~C. Tran}, {\em Stochastic
  analysis of emergence of evolutionary cyclic behavior in population dynamics
  with transfer}, Ann. Appl. Probab., 31 (2021), pp.~1820--1867.

\bibitem{CMM13}
{\sc P.~Collet, S.~M{\'{e}}l{\'{e}}ard, and J.~A.~J. Metz}, {\em {A rigorous
  model study of the adaptive dynamics of Mendelian diploids}}, J. Math. Biol.,
  67 (2013), pp.~569--607.

\bibitem{CoqKrautSma21}
{\sc L.~Coquille, A.~Kraut, and C.~Smadi}, {\em Stochastic individual-based
  models with power law mutation rate on a general finite trait space},
  Electron. J. Probab., 26 (2021), pp.~Paper No. 123, 37.

\bibitem{coron2}
{\sc C.~Coron}, {\em Stochastic modeling of density-dependent diploid
  populations and the extinction vortex}, Adv. Appl. Probab., 46 (2014),
  pp.~446--477.

\bibitem{coron3}
\leavevmode\vrule height 2pt depth -1.6pt width 23pt, {\em {Slow-fast
  stochastic diffusion dynamics and quasi-stationarity for diploid populations
  with varying size}}, J. Math. Biol., 72(1-2) (2016), pp.~171--202.

\bibitem{coron}
{\sc C.~Coron, S.~M{\'e}l{\'e}ard, E.~Porcher, and A.~Robert}, {\em Quantifying
  the mutational meltdown in diploid populations}, Am. Nat., 181 (2013),
  pp.~623--636.

\bibitem{cowperthwaite2006bad}
{\sc M.~C. Cowperthwaite, J.~J. Bull, and L.~A. Meyers}, {\em From bad to good:
  Fitness reversals and the ascent of deleterious mutations}, PLoS Comp. Biol.,
  2 (2006), p.~e141.

\bibitem{crowkimura}
{\sc J.~F. Crow, M.~Kimura, and Others}, {\em {An introduction to population
  genetics theory}}, New York, Evanston and London: Harper \& Row, Publishers,
  1970.

\bibitem{Darwin1859}
{\sc C.~Darwin}, {\em The Origin of Species}, John Murray, London, 1859.

\bibitem{erasmus1804}
{\sc E.~Darwin}, {\em Temple of Nature; or, The Origin of Society}, John W.
  Butler, and Bonsal and Niles, Baltimore, 1804.

\bibitem{Dawson2017}
{\sc D.~A. Dawson}, {\em Introductory Lectures on Stochastic Population
  Systems}, arXiv 1705.03781, 2017.

\bibitem{deVries1901}
{\sc H.~de~Vries}, {\em {Die Mutationstheorie. Versuche und Beobachtungen über
  die Entstehung von Arten im Pflanzenreich}}, Veit \& Comp., Leipzig, 1901-03.

\bibitem{DeAngelisGrimm}
{\sc D.~L. DeAngelis and V.~Grimm}, {\em Individual-based models in ecology
  after four decades}, F1000Prime Reports, 6 (2014).

\bibitem{DL96}
{\sc U.~Dieckmann and R.~Law}, {\em {The dynamical theory of coevolution: a
  derivation from stochastic ecological processes}}, J. Math. Biol., 34 (1996),
  pp.~579--612.

\bibitem{DieLaw}
\leavevmode\vrule height 2pt depth -1.6pt width 23pt, {\em {Moment
  approximations of individual-based models}}, in The geometry of ecological
  interactions: simplifying spatial complexity, U.~Dieckmann, R.~Law, and
  J.~A.~J. Metz, eds., Cambridge University Press, 2000, pp.~252--270.

\bibitem{Doering2003}
{\sc C.~R. Doering, C.~Mueller, and P.~Smereka}, {\em Interacting particles,
  the stochastic {F}isher-{K}olmogorov-{P}etrovsky-{P}iscounov equation, and
  duality}, Phys. A, 325 (2003), pp.~243--259.
\newblock Stochastic systems: from randomness to complexity (Erice, 2002).

\bibitem{DuEll}
{\sc P.~Dupuis and R.~S. Ellis}, {\em A weak convergence approach to the theory
  of large deviations}, Wiley Series in Probability and Statistics: Probability
  and Statistics, John Wiley \& Sons, Inc., New York, 1997.
\newblock A Wiley-Interscience Publication.

\bibitem{EsserKraut3}
{\sc M.~Esser and A.~Kraut}, {\em Crossing a fitness valley in a changing
  environment: With and without pit stop}.
\newblock arXiv:2503.19766, 2025.

\bibitem{EsserKraut2}
{\sc M.~Esser and A.~Kraut}, {\em Effective growth rates in a periodically
  changing environment: from mutation to invasion}, Stochastic Process. Appl.,
  184 (2025), pp.~Paper No. 104598, 36.

\bibitem{Etheridge2004}
{\sc A.~M. Etheridge}, {\em Survival and extinction in a locally regulated
  population}, Ann. Appl. Probab., 14 (2004), pp.~188--214.

\bibitem{EthKur1986}
{\sc S.~N. Ethier and T.~G. Kurtz}, {\em Markov processes}, Wiley Series in
  Probability and Mathematical Statistics: Probability and Mathematical
  Statistics, John Wiley \& Sons, Inc., New York, 1986.

\bibitem{EthNor77}
{\sc S.~N. Ethier and M.~F. Norman}, {\em Error estimate for the diffusion
  approximation of the {Wright--Fisher model}}, Proc. Natl. Acad. Sci., 74
  (1977), pp.~5096--8.

\bibitem{ewens04}
{\sc W.~J. Ewens}, {\em Mathematical Population Genetics. {I} {Theoretical
  Introduction}}, vol.~27 of Interdisciplinary Applied Mathematics,
  Springer-Verlag, New York, second~ed., 2004.

\bibitem{fisher18}
{\sc R.~A. Fisher}, {\em The correlation between relatives on the supposition
  of {M}endelian inheritance}, Trans. Roy. Soc. Edinb., 42 (1918),
  pp.~399--433.

\bibitem{fisher1937}
\leavevmode\vrule height 2pt depth -1.6pt width 23pt, {\em The wave of advance
  of an advantageous gene}, Ann. Eugenics, 7 (1937), pp.~355--369.

\bibitem{FM04}
{\sc N.~Fournier and S.~M{\'{e}}l{\'{e}}ard}, {\em {A microscopic probabilistic
  description of a locally regulated population and macroscopic
  approximations}}, Ann. Appl. Probab., 14 (2004), pp.~1880--1919.

\bibitem{franklin1953}
{\sc R.~E. Franklin and G.~R.G.}, {\em Molecular configuration in sodium
  thymonucleate}, Nature, 171 (1953), pp.~740--741.

\bibitem{FW84}
{\sc M.~I. Freidlin and A.~D. Wentzell}, {\em Random perturbations of dynamical
  systems}, vol.~260 of Grundlehren der Mathematischen Wissenschaften
  [Fundamental Principles of Mathematical Sciences], Springer, Heidelberg,
  third~ed., 2012.

\bibitem{plasticity}
{\sc G.~Fusco and A.~Minelli}, {\em Phenotypic plasticity in development and
  evolution: facts and concepts}, Phil. Trans. Royal Soc. London B: Biol. Sci.,
  365 (2010), pp.~547--556.

\bibitem{giachetti1988altered}
{\sc C.~Giachetti and J.~J. Holland}, {\em Altered replicase specificity is
  responsible for resistance to defective interfering particle interference of
  an sdi-mutant of vesicular stomatitis virus}, Jour. Virol., 62 (1988),
  pp.~3614--3621.

\bibitem{Gillespie1976}
{\sc D.~T. Gillespie}, {\em A general method for numerically simulating the
  stochastic time evolution of coupled chemical reactions}, J. Computational
  Phys., 22 (1976), pp.~403--434.

\bibitem{gillespie1984molecular}
{\sc J.~H. Gillespie}, {\em Molecular evolution over the mutational landscape},
  Evolution, 38 (1984), pp.~1116--1129.

\bibitem{Grimm2005}
{\sc V.~Grimm and S.~F. Railsback}, {\em Individual-based modeling and
  ecology}, Princeton Series in Theoretical and Computational Biology,
  Princeton University Press, Princeton, NJ, 2005.

\bibitem{haldane24a}
{\sc J.~B. Haldane}, {\em {A mathematical theory of natural and artificial
  selection. {P}art {I}}}, Trans. Camb. Phil. Soc., 23 (1924), pp.~19--41.

\bibitem{haldane24b}
\leavevmode\vrule height 2pt depth -1.6pt width 23pt, {\em {A mathematical
  theory of natural and artificial selection. {P}art {II}}}, Trans. Camb. Phil.
  Soc., Biol, Sci., 1 (1924), pp.~158--163.

\bibitem{Hardy1908}
{\sc G.~H. Hardy}, {\em Mendelian proportions in a mixed population}, Science,
  28 (1908), pp.~49--50.

\bibitem{HS90}
{\sc J.~Hofbauer and K.~Sigmund}, {\em {Adaptive dynamics and evolutionary
  stability}}, Appl Math Lett, 3 (1990), pp.~75--79.

\bibitem{hofbauersigmund}
\leavevmode\vrule height 2pt depth -1.6pt width 23pt, {\em Evolutionary Games
  and Population Dynamics}, Cambridge University Press, Cambridge, 1998.

\bibitem{Holzel:2013ys}
{\sc M.~H{\"{o}}lzel, A.~Bovier, and T.~T{\"{u}}ting}, {\em {Plasticity of
  tumour and immune cells: a source of heterogeneity and a cause for therapy
  resistance?}}, Nat. Rev. Cancer, 13 (2013), pp.~365--376.

\bibitem{houchmandzadeh}
{\sc B.~Houchmandzadeh and M.~Vallade}, {\em Fisher waves: An individual-based
  stochastic model}, Phys Rev E., 96 (2017), p.~012414.

\bibitem{JabinRaoul}
{\sc P.-E. Jabin and G.~Raoul}, {\em On selection dynamics for competitive
  interactions}, J. Math. Biol., 63 (2011), pp.~493--517.

\bibitem{Jain07}
{\sc K.~Jain}, {\em Evolutionary dynamics of the most populated genotype on
  rugged fitness landscapes}, Phys. Rev. E, 76 (2007), p.~031922.

\bibitem{JainKrug05}
{\sc K.~Jain and J.~Krug}, {\em Evolutionary trajectories in rugged fitness
  landscapes}, J. Statist. Mech., P04008 (2005).

\bibitem{JainKrug07}
\leavevmode\vrule height 2pt depth -1.6pt width 23pt, {\em Deterministic and
  stochastic regimes of asexual evolution on rugged fitness landscapes},
  Genetics, 175 (2007), pp.~1275--1288.

\bibitem{lamarck}
{\sc C.~d.~L. Jean-Baptiste Pierre Antoine~de Monet}, {\em Philosophie
  Zoologique}, Dentu, Paris, 1809.

\bibitem{KaLe87}
{\sc S.~Kauffman and S.~Levin}, {\em Towards a general theory of adaptive walks
  on rugged landscapes}, J. Theoret. Biol., 128 (1987), pp.~11--45.

\bibitem{Kau92}
{\sc S.~A. Kauffman}, {\em The origins of order: Self-organization and
  selection in evolution}, in Spin glasses and biology, World Scientific, 1992,
  pp.~61--100.

\bibitem{KesSte2}
{\sc H.~Kesten and B.~P. Stigum}, {\em Additional limit theorems for
  indecomposable multidimensional {G}alton-{W}atson processes}, Ann. Math.
  Statist., 37 (1966), pp.~1463--1481.

\bibitem{KesSte1}
\leavevmode\vrule height 2pt depth -1.6pt width 23pt, {\em A limit theorem for
  multidimensional {G}alton-{W}atson processes}, Ann. Math. Statist., 37
  (1966), pp.~1211--1223.

\bibitem{KesSte3}
\leavevmode\vrule height 2pt depth -1.6pt width 23pt, {\em {Limit theorems for
  decomposable multi-dimensional {G}alton-{W}atson processes}}, J. Math. Anal.
  Appl., 17 (1967), pp.~309--338.

\bibitem{khalil}
{\sc H.~K. Khalil}, {\em Nonlinear systems}, Macmillan Publishing Company, New
  York, 1992.

\bibitem{Kimura55}
{\sc M.~Kimura}, {\em Solution of a process of random genetic drift with a
  continuous model}, Proc. Natl. Acad. Sci., 41 (1955), pp.~144--150.

\bibitem{KG99}
{\sc É.~Kisdi and S.~A.~H. Geritz}, {\em Adaptive dynamics in allele space:
  Evolution of genetic polymorphism by small mutations in a heterogeneous
  environment}, Evolution, 53 (1999), pp.~993--1008.

\bibitem{kpp}
{\sc A.~N. Kolmogorov, I.~G. Petrovsky, and N.~S. Piscounov}, {\em Etude de
  l'\'equation de la diffusion avec croissance de la quantit\'e de mati\`ere et
  son application \`a un probl\`eme biologique}, Moscow Univ. Math. Bull., 1
  (1937), pp.~1--25.

\bibitem{kraut2018}
{\sc A.~Kraut and A.~Bovier}, {\em From adaptive dynamics to adaptive walks},
  J. Math. Biol., 79 (2019), pp.~1699--1747.

\bibitem{Krug2021}
{\sc J.~Krug}, {\em Accessibility percolation in random fitness landscapes}, in
  {Probabilistic Structures in Evolution}, E.~Baake and A.~Wakolbinger, eds.,
  EMS Press, Berlin, 2021, pp.~1--22.

\bibitem{KrugKarl03}
{\sc J.~Krug and C.~Karl}, {\em Punctuated evolution for the quasispecies
  model}, Physica A, 318 (2003), pp.~137--143.

\bibitem{Kurtz1970}
{\sc T.~G. Kurtz}, {\em Solutions of ordinary differential equations as limits
  of pure jump {M}arkov processes}, J. Appl. Probability, 7 (1970), pp.~49--58.

\bibitem{Kurtz1971}
{\sc T.~G. Kurtz}, {\em Limit theorems for sequences of jump {M}arkov processes
  approximating ordinary differential processes}, J. Appl. Probability, 8
  (1971), pp.~344--356.

\bibitem{kurtz1978}
{\sc T.~G. Kurtz}, {\em Strong approximation theorems for density dependent
  {M}arkov chains}, Stochastic Process. Appl., 6 (1977/78), pp.~223--240.

\bibitem{Blath-Lennon}
{\sc J.~Lennon, F.~den Hollander, M.~Wilke-Berenguer, and J.~Blath}, {\em
  Principles of seed banks and the emergence of complexity from dormancy}, Nat.
  Comm., 12 (2021), p.~4807.

\bibitem{lenski2003evolutionary}
{\sc R.~E. Lenski, C.~Ofria, R.~T. Pennock, and C.~Adami}, {\em The
  evolutionary origin of complex features}, Nature, 423 (2003), pp.~139--144.

\bibitem{lotka1912}
{\sc A.~J. Lotka}, {\em Quantitative studies in epidemiology}, Nature, 88
  (1912), pp.~497--498.

\bibitem{maisnier2002compensatory}
{\sc S.~Maisnier-Patin, O.~G. Berg, L.~Liljas, and D.~I. Andersson}, {\em
  Compensatory adaptation to the deleterious effect of antibiotic resistance in
  salmonella typhimurium}, Mol. Microbiol., 46 (2002), pp.~355--366.

\bibitem{malthus1798}
{\sc T.~Malthus}, {\em An Essay on the Principle of Population as it Affects
  the Future Improvement of Society, with Remarks on the Speculations of Mr.
  Goodwin, M. Condorcet and Other Writers}, J. Johnson in St Paul's
  Church-yard, London, 1798.

\bibitem{Maynard62}
{\sc J.~Maynard~Smith}, in {The Scientist Speculates: an Anthology of
  Partly-Baked Ideas}, I.~Good, ed., Basic Books, New York, 1962, pp.~252--256.

\bibitem{Maynard70}
\leavevmode\vrule height 2pt depth -1.6pt width 23pt, {\em Natural selection
  and the concept of a protein space}, Nature, 225 (1970), pp.~563--564.

\bibitem{Meleard2016}
{\sc S.~M\'{e}l\'{e}ard}, {\em Mod\`eles al\'{e}atoires en ecologie et
  evolution}, vol.~77 of Math\'{e}matiques \& Applications (Berlin)
  [Mathematics \& Applications], Springer-Verlag, Berlin, 2016.

\bibitem{mendel1865}
{\sc G.~Mendel}, {\em {Versuche \"uber Pflanzen-Hybriden}}, Verhandlungen des
  naturforschenden Vereines in Br\"unn, IV (1865), pp.~3--47.

\bibitem{mendel1869}
\leavevmode\vrule height 2pt depth -1.6pt width 23pt, {\em {\"Uber einige aus
  k\"unstlicher Befruchtung gewonnenen Hieracium-Bastarde}}, Verhandlungen des
  naturforschenden Vereines in Br\"unn, VIII (1869), pp.~26--31.

\bibitem{metz2012}
{\sc J.~A. Metz}, {\em Adaptive dynamics}, in Encyclopedia of Theoretical
  Ecology, Cambridge University Press, Cambridge, 2012.

\bibitem{M12}
{\sc J.~A.~J. Metz}, {\em {Invasion fitness, canonical equations, and global
  invasion criteria for Mendelian populations}}, in Elements of Adaptive
  Dynamics, U.~Dieckmann and J.~A.~J. Metz, eds., Cambridge University Press,
  2012.

\bibitem{MG96}
{\sc J.~A.~J. Metz, S.~A.~H. Geritz, G.~Mesz\'{e}na, F.~J.~A. Jacobs, and J.~S.
  van Heerwaarden}, {\em Adaptive dynamics, a geometrical study of the
  consequences of nearly faithful reproduction}, in Stochastic and spatial
  structures of dynamical systems ({A}msterdam, 1995), vol.~45 of Konink.
  Nederl. Akad. Wetensch. Verh. Afd. Natuurk. Eerste Reeks, North-Holland,
  Amsterdam, 1996, pp.~183--231.

\bibitem{Moran1958}
{\sc P.~A.~P. Moran}, {\em Random processes in genetics}, Math. Proc. Cambridge
  Phil. Soc., 54 (1958), pp.~60--71.

\bibitem{mukharjee2016}
{\sc S.~Mukherjee}, {\em The Gene}, Bodley, London, 2016.

\bibitem{nagylaki92}
{\sc T.~Nagylaki}, {\em Introduction to theoretical population genetics},
  vol.~21 of Biomathematics, Springer-Verlag, Berlin, 1992.

\bibitem{NeidKrug11}
{\sc J.~Neidhart and J.~Krug}, {\em Adaptive walks and extreme value theory},
  Phys. Rev. Lett., 107 (2011), p.~178102.

\bibitem{BovNeu16}
{\sc R.~Neukirch and A.~Bovier}, {\em {Survival of a recessive allele in a
  Mendelian diploid model}}, J. Math. Biol., 75 (2017), pp.~145--198.

\bibitem{NoKr15}
{\sc S.~Nowak and J.~Krug}, {\em Analysis of adaptive walks on {NK} fitness
  landscapes with different interaction schemes}, J. Stat. Mech. Theory Exp.,
  2015 (2015), p.~P06014.

\bibitem{o1984vesicular}
{\sc P.~J. O'Hara, S.~T. Nichol, F.~M. Horodyski, and J.~J. Holland}, {\em
  Vesicular stomatitis virus defective interfering particles can contain
  extensive genomic sequence rearrangements and base substitutions}, Cell, 36
  (1984), pp.~915--924.

\bibitem{Orr03}
{\sc H.~A. Orr}, {\em A minimum on the mean number of steps taken in adaptive
  walks}, J. Theoret. Biol., 220 (2003), pp.~241--247.

\bibitem{DISS_CTMTBP}
{\sc S.~P{\'e}nisson}, {\em Conditional limit theorems for multitype branching
  processes and illustration in epidemiological risk analysis}, PhD thesis,
  Universit\"at Potsdam, Potsdam (Germany), 2010.

\bibitem{Rogers1}
{\sc L.~C.~G. Rogers and D.~Williams}, {\em Diffusions, {M}arkov processes, and
  martingales. {V}ol. 1}, Cambridge Mathematical Library, Cambridge University
  Press, Cambridge, 2000.
\newblock Foundations, Reprint of the second (1994) edition.

\bibitem{Rogers2}
\leavevmode\vrule height 2pt depth -1.6pt width 23pt, {\em Diffusions, {M}arkov
  processes, and martingales. {V}ol. 2}, Cambridge Mathematical Library,
  Cambridge University Press, Cambridge, 2000.
\newblock It\^{o} calculus, Reprint of the second (1994) edition.

\bibitem{SchKr14}
{\sc B.~Schmiegelt and J.~Krug}, {\em Evolutionary accessibility of modular
  fitness landscapes}, J. Stat. Phys., 154 (2014), pp.~334--355.

\bibitem{schrag1997adaptation}
{\sc S.~J. Schrag, V.~Perrot, and B.~R. Levin}, {\em Adaptation to the fitness
  costs of antibiotic resistance in escherichia coli}, Proc. Royal Soc. London
  B: Biological Sciences, 264 (1997), pp.~1287--1291.

\bibitem{SEW}
{\sc B.~A. Sewastjanow}, {\em Verzweigungsprozesse}, R. Oldenbourg Verlag,
  Munich-Vienna, 1975.

\bibitem{smale76}
{\sc S.~J. Smale}, {\em On the differential equations of species in
  competition}, J. Math. Biol., 3 (1976), pp.~5--7.

\bibitem{uchiyama}
{\sc K.~Uchiyama}, {\em The behavior of solutions of some nonlinear diffusion
  equations for large time}, J. Math. Kyoto Univ., 18 (1978), pp.~453--508.

\bibitem{volterra1928}
{\sc V.~Volterra}, {\em Variations and fluctuations of the number of
  individuals in animal species living together}, J. Conseil Int. Explor. Mer.,
  3 (1928), pp.~1--51.

\bibitem{Wang-Theses}
{\sc S.-D. Wang}, {\em Multi-scale analysis of adaptive population dynamics},
  ph.d. thesis, Bonn University, 2011.

\bibitem{crickwatson1953}
{\sc J.~D. Watson and F.~H.~C. Crick}, {\em A structure for deoxyribose nucleic
  acid}, Nature, 171 (1953), pp.~737--738.

\bibitem{Weinberg1908}
{\sc W.~Weinberg}, {\em {Über den Nachweis der Vererbung beim Menschen}},
  {Jahreshefte des Vereins für vaterländische Naturkunde in Württemberg}, 64
  (1908), pp.~368--382.

\bibitem{wilkins1953}
{\sc M.~H. Wilkins, A.~R. Stokes, and H.~R. Wilson}, {\em Molecular structure
  of deoxypentose nucleic acids}, Nature, 171 (1953), pp.~738--40.

\bibitem{wright31}
{\sc S.~G. Wright}, {\em {Evolution in {M}endelian populations}}, Genetics, 16
  (1931), pp.~97--157.

\bibitem{yule06}
{\sc G.~U. Yule}, {\em {On the theory of inheritance of quantitative compound
  characters on the basis of Mendel's laws: a preliminary note}}, Spottiswoode
  \& Company, Limited, 1907.

\bibitem{Zee1993}
{\sc M.~L. Zeeman}, {\em {Hopf bifurcations in competitive three-dimensional
  {L}otka-{V}olterra systems}}, Dynam. Stability Systems, 8 (1993),
  pp.~189--217.

\end{thebibliography}
\bibliographystyle{siam}

\printindex
\end{document}